\DeclareMathOperator*{\argmin}{argmin}
\numberwithin{equation}{section}
\def \mcF{\mathcal{F}}
\def \mcH{\mathcal{H}}
\def \mcX{\mathcal{Z}}
\def \mcB{\mathcal{B}}
\def \mcN{\mathcal{N}}
\def \mcV{\mathcal{V}}
\newcommand{\Argmin}{\mathop{\mathrm{argmin}}}
\def \bfx{\mathbf{z}}
\def \bfg{\mathbf{g}}
\def \bfX{\mathbf{Z}}
\def \bfZ{\mathbf{Z}}
\def \bfB{\mathbf{B}}
\def \bfY{\mathbf{Y}}
\def \bff{\mathbf{f}}
\def \bfP{\mathbf{P}}
\def \bfepsilon{\bm{\epsilon}}
\def \ev{\mathbb{E}}
\def \pr{\mathbb{P}}
\def \bfW{\mathbf{W}}
\def \bfw{\mathbf{w}}
\def \bfPsi{\mathbf{\Psi}}
\def \bft{\mathbf{t}}
\def \bfv{\mathbf{v}}
\def \bfu{\mathbf{u}}
\def \bfP{\mathbf{P}}
\def \bfR{\mathbf{R}}
\def \cid{\xrightarrow[\text{}]{\text{$\mathcal{D}$}}}
\newcommand{\vo}{\vec{o}\@ifnextchar{^}{\,}{}}
\newcommand{\floor}[1]{\left\lfloor #1 \right\rfloor}
\theoremstyle{plain}
\newtheorem{theorem}{\indent Theorem}
\newtheorem*{theorem*}{\indent Theorem}
\newtheorem{corollary}{\indent Corollary}
\newtheorem{Assumption}{Assumption}
\newtheorem{Condition}{Condition}
\newtheorem{lemma}{\indent Lemma}
\newtheorem{proposition}{\indent Proposition}
\newtheorem{Remark}{Remark}
\theoremstyle{definition}
\DeclareSymbolFont{largesymbolsA}{U}{txexa}{m}{n}
\DeclareMathSymbol{\varprod}{\mathop}{largesymbolsA}{16}
\NewDocumentCommand{\ceil}{s O{} m}{%
  \IfBooleanTF{#1} % starred
    {\left\lceil#3\right\rceil} % \ceil*[..]{..}
    {#2\lceil#3#2\rceil} % \ceil[..]{..}
}
\DeclareFontFamily{U}{mathx}{\hyphenchar\font45}
\DeclareFontShape{U}{mathx}{m}{n}{
      <5> <6> <7> <8> <9> <10>
      <10.95> <12> <14.4> <17.28> <20.74> <24.88>
      mathx10
      }{}
\DeclareSymbolFont{mathx}{U}{mathx}{m}{n}
\DeclareMathSymbol{\bigtimes}{1}{mathx}{"91}
\def\theequation{\arabic{section}.\arabic{equation}}
\begin{document}
\title{Statistical Inference on Partially Linear Panel Model under Unobserved Linearity}

\runtitle{Statistical Inference under Unobserved Linearity}

	\begin{aug}
		\author{\fnms{Ruiqi} \snm{Liu}\thanksref{m1}\ead[label=e4]{liuruiq@iu.edu}},
		\author{\fnms{Ben} \snm{Boukai}\thanksref{m1}\ead[label=e1]{bboukai@iupui.edu}},
		\and
		\author{\fnms{Zuofeng} \snm{Shang}\thanksref{m2,t2}
			\ead[label=e3]{shangzf@iu.edu}}
         \thankstext{t1}{Corresponding author. Email: liuruiq@iu.edu }
         \runauthor{Liu et al.}
         \thankstext{m1}{Department of Mathematical Sciences, Indiana University - Purdue University Indianapolis
, IN 46202, USA.}
          \thankstext{m2}{Department of Mathematical Sciences, New Jersey Institute of Technology, NJ 07102, USA.}
          \thankstext{t2}{Sponsored by NSF DMS-1764280 and NSF DMS-1821157}
	\end{aug}
%\begin{aug}
%  \author{\fnms{Ruiqi}  \snm{Liu}\corref{}\thanksref{t1}\ead[label=e1]{first@somewhere.com}},
%    \address{\printead{e1}}
%  \author{\fnms{Ben}  \snm{Boukai}\corref{}\thanksref{t1}\ead[label=e2]{second@somewhere.com}}
%  \and
%  \author{\fnms{Zuofeng}  \snm{Shang}\corref{}\thanksref{t1}},
%%  \ead[label=e3]{third@somewhere.com}%
% % \ead[label=u1,url]{http://www.foo.com}}
%
%  \thankstext{t1}{Department of Mathematical Sciences, Indiana University-–Purdue University Indianapolis, IN 46202.}
%  \runauthor{Liu et al.}
%
%%  \affiliation{Some University and Another University}
%
%
%
% % \address{Address of the Third author,\\
%%          \printead{e3,u1}}
%
%\end{aug}

\begin{center}
\textit{This version}  \today
\end{center}
\maketitle

 \begin{center}
\textbf{Abstract}
\end{center}
A new statistical procedure,  based on a modified spline basis, is proposed to identify the linear components in the panel data model with fixed effects. Under some mild assumptions, the proposed procedure is shown to consistently estimate the underlying regression function,  correctly select the linear components, and effectively conduct the statistical inference. When compared to existing methods for detection of linearity in the panel model, our approach is demonstrated to be theoretically justified as well as practically convenient. We provide a computational algorithm that implements the proposed procedure along with a path-based solution method for linearity detection, which avoids the burden of selecting the tuning parameter for the penalty term. Monte Carlo simulations are conducted to examine the finite sample performance of our proposed procedure with detailed findings that confirm our theoretical results in the paper. Applications to Aggregate Production and Environmental Kuznets Curve data also illustrate the necessity for detecting linearity in the partially linear panel model.

\noindent \textbf{Keywords:} Semiparametric model, Panel data, Fixed effects, Linearity detection, Penalized estimation,  Partially linear regression

\noindent \textbf{Keywords:} C01, C14, C33
\begin{center}
\textbf{\newpage }
\end{center}

\section{Introduction}\label{sec:introduction}
Panel models have attracted much attention from economists and econometricians, especially for their flexibility in modeling homogeneity while preserving individual-level heterogeneity. With the rapid increase in availability of panel data in the past two decades or so, panel models in  both parametric and nonparametric frameworks have been well studied in the literature; see  \cite{rwc00}, \cite{hcl08}, \cite{f17}, \cite{sz15}, \cite{Hsiao1997}, \cite{lr15}, \cite{ht08}, \cite{Hsiao1997}, and \cite{h14}. Still, either framework, parametric or nonparametric,  is not fully satisfactory in modeling panel data, as each has its own advantages and drawbacks. In light of its simplicity and interpretability,  the parametric model becomes  a prominent  tool for panel data analysis; see \cite{bg83}, \cite{dj00}, \cite{kt04}, \cite{s57}, \cite{g64}, \cite{bcs04}. However, when compared to the nonparametric model, it appears to be more sensitive to model misspecification, which is often the case in empirical applications. Based on fewer model assumptions, the nonparametric model can lead to a more robust estimator, especially when dealing with relatively large panel data sets. On the other hand, with a larger dimension of input data, a purely nonparametric model is usually not preferred in empirical applications due to the infamous ``Curse of Dimensionality'' issue and the poor model interpretability.  To address these noted drawbacks and make the
best use of the apparent advantages, the partially linear panel model strikes a balance between parametric and nonparametric frameworks.  For instance, \cite{hcl08} studied both nonparametric and partially linear
panel models with fixed effects and proposed a kernel estimator with a corresponding linearity specification test.
Combining the works by \cite{hcl08} and \cite{mbT09}, \cite{ll15} proposed a two-step estimator in partially linear panel; \cite{bd02} considered the problem of estimating a partially linear fixed effects panel model with possible endogeneity and lagged dependent variables in the linear part; \cite{sz16} proposed  estimation and specification testing procedures for partially linear dynamic panel model with fixed effects, with either exogenous or endogenous variables or both in the linear part and the lagged dependent variables, together with some other exogenous variables entering  nonparametrically in the model.  Following \cite{sz16}, \cite{sz15} extended their work to the panel model with interactive fixed effects.

In practice, however, when considering the partially linear model, the researchers need to consider the following two questions: (a) which variables should be included in the model? (b) what is the functional form of each variable?  Various statistical variable selection techniques, such as \cite{ll09}, \cite{x09}, \cite{hhw10}, are available to  address the first question.  Nevertheless, in the context of economic modeling, one would prefer to select the dependent variables also by economic theory, as relying on purely statistical variable selection procedures may fit a model which is lacking in its economic justification and interpretability, see \cite{bvw18}, \cite{bpmv15}, \cite{kp09}, \cite{d08}. Even though the economic theory can explain which variables should be included in the model, it fails to specify the functional forms of the variables. Therefore,  the second question is of more practical importance than the first one. Misspecification of the functional forms of the regressors can either (a) result in inconsistent estimation if fitting nonlinear functions by linear forms, (b)  or reduce the model interpretability and estimation efficiency if the linear functions are estimated nonparametrically. Thus, correct specification of the linear components, if any,  is essential to improve estimation and model explainability. However, to the best of our knowledge, all the linearity detection methods advocated in the literature on partially linear panel model, are all based on specification tests; see \cite{hcl08}, \cite{sz16} and \cite{sz15}. One primary drawback to this approach is that the test statistics is often difficult to construct and may be deficient in its power when the number of dependent variables is large, which may lead to incorrect model specification. Under cross-sectional data settings,  \cite{zcl11} propose a smoothing-spline-type estimator which is able to estimate the underlying regression function and discover the linear regressors simultaneously.  However, how to conduct valid statistical inference using the approach in \cite{zcl11} is still unknown.

The main purpose of this paper is to propose a unified statistical procedure capable of simultaneously estimating underlying regression function, detecting linear components, and conducting inference in the partially linear panel. This paper is organized as follows.  In Section \ref{sec:model},  we mathematically formulate the linearity detection problem in the partially linear panel model. In Section \ref{sec:estimation}, we  propose a penalized estimator for linearity detection, and  provide {the corresponding computational} algorithm.  The asymptotic properties of the proposed estimator and the corresponding linearity detection procedure are established in Section \ref{sec:asympotics} for both short and large panels. In Section \ref{sec:choice:turning:parameter}, we {discuss} how to determine the tuning parameters involved in the proposed procedure. Section \ref{sec:simulation} carries out a set of Monte Carlo simulations to investigate the finite sample performance of our {proposed}  method. Applications to two real-world datasets are provided in Section \ref{sec:empirical:study}. {Technical details and proofs of the main theorems and auxiliary results are deferred in the Appendix. Throughout this paper, we use the following notation}. 

\noindent\textbf{Notation:} Define $\otimes$ as the tensor product operator. For positive real number $m$, let $\floor{m}$ be the largest integer that is strictly less than $m$ and $\ceil{m}=\floor{m}+1$. Denote $(x)_+=\max(x,0)$ for  $x\in \mathbb{R}$.

%%%%%%
\section{Partially Linear Panel Model with Unknown Structure}\label{sec:model}
Suppose that the observations $\{(Y_{it}, \bfX_{it}), i=1,\ldots, N, t=1,\ldots, T\}$ are generated from the following model
\begin{equation}\label{eq:model}
	Y_{it}=f_0(\bfX_{it})+\alpha_i^0+\epsilon_{it}, 
\end{equation}
where $Y_{it}$ is the response variable,
$\bfX_{it}=(Z_{it1}, Z_{it2}, \ldots, Z_{itp})^\top\in \mcX:=[0, 1]^p$ are explanatory variables, both observed
for individual $i$ at time period $t$, $\alpha_i^0\in \mathbb{R}$ are
unobservable individual-level fixed effects, $\epsilon_{it}\in \mathbb{R}$ is unobservable errors.
Assume that the unknown regression function $f_0:\mcX \to \mathbb{R}$ has the following semiparametric expression:
\begin{equation}\label{semi:function}
	f_0(\bfx)=\sum_{j\in J_{\textrm{lin}}} f_{j,0}(z_j)+\sum_{j\in J_{\textrm{lin}}^c} f_{j,0}(z_j), \bfx=(z_1,\ldots, z_p)\in \mcX,\nonumber
\end{equation}
where $J_{\textrm{lin}}$ is a (unknown) subset of $\{1,\ldots, p\}$ and $J_{\textrm{lin}}^c$ denotes its complement,
$f_{j,0}$ for $j\in J_{\textrm{lin}}$ are linear functions and  $f_{j,0}$ for $j\in J_{\textrm{lin}}^c$ are nonlinear. 
Our aim is to identify $J_{\textrm{lin}}$ as well as to conduct statistical inference about $f_0$ based on the observations. 
Without loss of generality, we may assume that $J_{\textrm{lin}}=\{1,2,\ldots, d\}$ for some nonnegative integer $d\leq p$, therefore,
$f_{1,0},\ldots, f_{d,0}$ are linear and  $f_{d+1,0},\ldots,f_{p,0}$ are nonlinear. For convenience, define $\{1,\ldots, d\}$ as the empty set when $d=0$. 

%%%%%%%
\section{Penalized Estimation}\label{sec:estimation}
In this section, we propose a penalized sieve estimator based on a modified spline basis, which can consistently estimate the underlying regression function $f_0$, effectively identify the linear components, and validly conduct statistical inference. 
\subsection{Centralized Spline}
To estimate $f_0=\sum_{j=1}^pf_{j,0}$, we follow the idea of sieve estimation, i.e., estimating each $f_{j,0}$ by a linear combination of basis functions. The common basis function used in literature includes B-spline basis, wavelet basis, etc. (see \citealp{c07} for an excellent review of sieve basis). However, for linearity detection purpose, the existing bases are not adequate. Thus, we will propose a modified spline space and the corresponding basis to address this issue. Given $M+1$ strictly increasing knots $\bft_M=\{t_0, t_1, \ldots, t_M\}$ with $t_0=0, t_M=1$  and integer $r\geq 1$, define $r$-th degree \textit{Centralized Spline Space} 
\begin{align*}
	\textrm{CSpl}(r, \bft_M)=\bigg\{\sum_{k=1}^{r}c_k\psi_k(z)+\sum_{k=1}^{M-1}\widetilde{c}_k\widetilde{\psi}_k(z): z\in [0, 1], c_k, \widetilde{c}_k\in \mathbb{R} \bigg\},
\end{align*}
with
\begin{align}
	\psi_1(z)&=z-\frac{1}{2},\;\;\psi_k(z)=\bigg(z^k-\frac{1}{k+1}\bigg)-\frac{6k}{(k+1)(k+2)}\bigg(z-\frac{1}{2}\bigg), \textrm{ for } k=2, \ldots, r,\nonumber\\
	\widetilde{\psi}_k(z)&=\bigg((z-t_k)^r_+-\frac{1}{r+1}(1-t_k)^{r+1}\bigg)-\bigg\{\frac{6(1-t_k)^{r+1}}{(r+1)}-\frac{12(1-t_k)^{r+2}}{(r+1)(r+2)}\bigg\}\bigg(z-\frac{1}{2}\bigg),\nonumber\\
&\textrm{ for } k=1,\ldots, M-1,\nonumber
\end{align}
being the corresponding \textit{Centralized Spline Basis}. Expressed by centralized spline basis, any function in centralized spline space can be decomposed two orthogonal parts. To be more specific, for any $f=\sum_{k=1}^{r}c_k\psi_k+\sum_{k=1}^{M-1}\widetilde{c}_k\widetilde{\psi}_k\in \textrm{CSpl}(r, \bft_M)$, we decompose $f=f_{-}+f_{\sim}$, with
\begin{align}
	f_{-}(z)=c_1\psi_1(z)=c_1(z-1/2)\;\; \textrm{ and }\;\; f_\sim(z)=\sum_{k=2}^{r}c_k\psi_k(z)+\sum_{k=1}^{M-1}\widetilde{c}_k\widetilde{\psi}_k(z),\label{eq:linear:nonlinear:definition}
\end{align}
which are corresponding to the linear {and nonlinear components}. It can be verified that
\begin{align}
	\int_0^1f_{-}(z)f_\sim(z)dz=0.\label{eq:orthogonality}
\end{align}

\begin{Remark}
The centralized spline basis essentially is an orthogonal version of the polynomial spline basis $\{z, z^2,\ldots, z^r, (z-t_1)^r_+,\ldots, (z-t_{M-1})^r_+\}$. However, compared to the classical  polynomial splines or B-splines, centralized spline basis is able to effectively separate the linear part from the nonlinear component due to (\ref{eq:orthogonality}). Even though, all the bases generate similar function spaces and the difference is only up to a constant.
\end{Remark}

%%%%%%
\subsection{Penalized Estimator}
We begin by introduing the following function spaces
\begin{align*}
\mathcal{H}=\bigg\{f:\mcX \to \mathbb{R} \;|\;\int_\mcX f^2(\bfx)d\bfx<\infty \bigg\} \textrm{,\;\;\; } \mcH_0=\bigg\{f\in H \;|\; \int_\mcX f(\bfx)d\bfx=0\bigg\},
\end{align*}
and
\begin{equation}\label{eq:Sieve:space:ThetaNT}
	\Theta_{NT}=\bigg\{f(\bfx)=\sum_{j=1}^p f_j(z_j) \;|\; f_j \in \Theta_{NT, j}, \textrm{ for } j\in [p]\bigg\},
\end{equation}
where $\Theta_{NT,  j}=\textrm{CSpl}(r_j, \bft_{j, M_j})$ for some integers  $M_j, r_j\geq 1$ and knots $\bft_{j,M_j}=(t_{j,0}, \ldots, t_{j, M_j})$ with $t_{j,0}=0, t_{j, M_j}=1$ for $j\in [p]$. Clearly, $\Theta_{NT}$ is a linear subspace of $\mcH_0$ and in the following it will be the sieve space to estimate the underlying regression function $f_0$. Moreover, for $g, f\in \mcH$, we {introduce the} following notation when the corresponding values exist,  
\begin{align*}
\zeta_i(g, f)&=\frac{1}{T}\sum_{t=1}^T\bigg(g(\bfX_{it})-\frac{1}{T}\sum_{s=1}^Tg(\bfX_{is})\bigg)\bigg(f(\bfX_{it})-\frac{1}{T}\sum_{s=1}^Tf(\bfX_{is})\bigg),\\
\langle g, f\rangle_{NT}&=\frac{1}{N}\sum_{i=1}^N\zeta_i(g,f),\;\;\langle g, f\rangle=\ev(\langle g, f\rangle_{NT}),\;\;\|g\|_{NT}^2=\langle g, g\rangle_{NT},\;\;\|g\|^2=\langle g, g\rangle.
\end{align*}
{We can show that under mild assumptions}, $\langle \cdot , \cdot\rangle$ is a valid inner product on $\Theta_{NT}$ {(see Lemma}  \ref{lemma:expectation:sample:variance} and Lemma \ref{lemma:expectation:sample:variance:large:T}  for details). By above notation, we define a penalized objective function on $\Theta_{NT}$ as follows. For $f(\bfx)=\sum_{j=1}^pf_j(z_j) \in \Theta_{NT}$ with $f_j\in \Theta_{NT,j}$, let
\begin{equation}\label{eq:objective:function}
l_{NT}(f)=\frac{1}{NT}\sum_{i=1}^N\sum_{t=1}^T\bigg[Y_{it}-f(\bfX_{it})-\frac{1}{T}\sum_{s=1}^T\bigg(Y_{is}-f(\bfX_{is})\bigg)\bigg]^2+\sum_{j=1}^pp_{\lambda_{NT}}\bigg(\|f_{j,\sim}\|_{NT}\bigg),
\end{equation}
where $f_{j,\sim}$ is the nonlinear component of $f_j$ as defined in (\ref{eq:linear:nonlinear:definition}), and  $p_{\lambda_{NT}}$ is a given penalty function with tuning parameter $\lambda_{NT}$. The penalized estimator is defined as the minimizer of (\ref{eq:objective:function}), {namely}, 
\begin{equation}
	\widehat{f}=\Argmin_{f\in \Theta_{NT}}l_{NT}(f).\label{eq:penalized:estimator}
\end{equation}
There are {several possible} choices {for the functional form of the penalty term} $p_{\lambda_{NT}}$.  To name a few,  Ridge penalty for $p_{\lambda_{NT}}(z)=\lambda_{NT}z^2$,  Lasso penalty \citep{t96} for $p_{\lambda_{NT}}(z)=\lambda_{NT}|z|$, and Smoothly Clipped Absolute Deviation (SCAD) penalty \citep{fl01} for $p_{\lambda_{NT}}$ with first order derivative
\begin{align}
	p'_{\lambda_{NT}}(z)=\lambda_{NT}I(z\leq \lambda_{NT})+\frac{(\kappa\lambda_{NT}-z)_+}{\kappa-1}I(z>\lambda_{NT}),\label{eq:derivative:scad}
\end{align} 
where $\kappa>2$ is some predetermined constant. In general, with larger $\lambda_{NT}$, the penalty function $p_{\lambda_{NT}}$ will be larger and thus (\ref{eq:objective:function}) will  {tend}  to shrink the nonlinear components $f_{j, \sim}$'s. When compared  to other penalties, the solution via SCAD penalty simultaneously enjoys three desirable properties, i.e., unbiasedness, sparsity, and continuity, see \cite{fl01} for {a} detailed discussion. Therefore, throughout this paper, we will consider $p_{\lambda_{NT}}$ as SCAD penalty, and extension to other types of penalties are left as future work.

%%%%%%%

\subsection{Computational Algorithm}
{In this section} we propose a local quadratic approximation algorithm to solve optimization problem in (\ref{eq:penalized:estimator}). For each $j=1,\ldots, p$, let $\psi_{j, 1}, \psi_{j, 2}, \ldots, \psi_{j, r_j}, \widetilde{\psi}_{j, 1}, \ldots, \widetilde{\psi}_{j, M_j-1}$ be the centralized spline basis and for any $f_j \in \Theta_{NT,j}$, it follows that $f_j(z)=f_{j,-}(z)+f_{j, \sim}(z)$, with $f_{j,-}(z)=v_j\psi_{j,1}(z) \textrm{ and } f_{j,\sim}=u_j^\top\bfPsi_{j,\sim}(z),$
for some $v_j \in \mathbb{R}$, $u_{j}\in \mathbb{R}^{M_j+r_j-2}$, and all $z\in [0,1]$. Here $\bfPsi_{j,\sim}(z)=( \psi_{j, 2}(z), \ldots, \psi_{j, r_j}(z), \widetilde{\psi}_{j, 1}(z), \ldots, \widetilde{\psi}_{j, M_j-1}(z))^\top$ is a $(M_j+r_j-2)$-dimensional vector of functions. Furthermore, for each $j \in [p]$, we define vectors 
\begin{align*}
	\bfB_{j,\sim}&=(\bfPsi_{j,\sim}(Z_{11j}), \bfPsi_{j,\sim}(Z_{12j}),\ldots, \bfPsi_{j,\sim}(Z_{itj}), \ldots, \bfPsi_{j,\sim}(Z_{NTj}))^\top\in \mathbb{R}^{NT\times(M_j+r-2)},\\
	v&=(v_1, \ldots, v_p)^\top \in \mathbb{R}^p, \;\bfY=(Y_{11}, Y_{12},\ldots, Y_{it},\ldots, Y_{NT})^\top \in \mathbb{R}^{NT}
\end{align*}
and matrices
\begin{align*}
	\bfB_{-}&=\begin{pmatrix}
	\psi_{1,1}(Z_{111})& \psi_{2,1}(Z_{112})&\ldots & \psi_{p,1}(Z_{11p})\\
	\psi_{1,1}(Z_{121})& \psi_{2,1}(Z_{122})&\ldots & \psi_{p,1}(Z_{12p})\\
	\vdots&\vdots&\vdots&\vdots\\
	\psi_{1,1}(Z_{it1})& \psi_{2,1}(Z_{it2})&\ldots & \psi_{p,1}(Z_{itp})\\
	\vdots&\vdots&\vdots&\vdots\\
	\psi_{1,1}(Z_{NT1})& \psi_{2,1}(Z_{NT2})&\ldots & \psi_{p,1}(Z_{NTp})\\
	\end{pmatrix}\in \mathbb{R}^{NT\times p},\\
	H&=I_T-\frac{1}{T}uu^\top \in \mathbb{R}^{T\times T},\; \textrm{ with } u=(1,1,\ldots, 1)^\top\in \mathbb{R}^T,\;\; M_H=I_N\otimes H\in \mathbb{R}^{NT \times NT}.
\end{align*}
By {using the} above notation, it is not difficult to verify the  following equalities,
\begin{align}
	\|f_{j,\sim}\|_{NT}^2=\frac{1}{NT}u_j^\top \bfB_{j,\sim}^\top M_H\bfB_{j,\sim}u_j\nonumber
\end{align}
and
\begin{align}
	l_{NT}(f)=\frac{1}{NT}\bigg(\bfY-\bfB_{-}v-\sum_{j=1}^p\bfB_{j,\sim}u_j\bigg)^\top M_H \bigg(\bfY-\bfB_{-}v-\sum_{j=1}^p\bfB_{j,\sim}u_j\bigg)\nonumber\\
	+\sum_{j=1}^pp_{\lambda_{NT}}\bigg(\sqrt{\frac{1}{NT}u_j^\top \bfB_{j,\sim}^\top M_H\bfB_{j,\sim}u_j}\bigg).\label{eq:adapted:optimization}
\end{align}
Therefore, the optimization problem in (\ref{eq:penalized:estimator}) is adapted to the optimization problem in (\ref{eq:adapted:optimization}), which {is reduced to finding the}  corresponding minimizer $v$ and $u_j$'s. {As in}  \cite{fl01}, {we will also use quadratic functions}  to approximate the penalty terms in (\ref{eq:adapted:optimization}). Note that 
\begin{align}
	\frac{\partial p_{\lambda}\bigg(\sqrt{\frac{1}{NT}u^\top \bfB_{j,\sim}^\top M_H\bfB_{j,\sim}u}\bigg)}{\partial u}=\frac{\sqrt{NT}p'_{\lambda}\bigg(\sqrt{\frac{1}{NT}u^\top \bfB_{j,\sim}^\top M_H\bfB_{j,\sim}u}\bigg)\bfB_{j,\sim}^\top M_H\bfB_{j,\sim}u}{\sqrt{u^\top \bfB_{j,\sim}^\top M_H\bfB_{j,\sim}u}},\nonumber
\end{align}
provided $u^\top \bfB_{j,\sim}^\top M_H\bfB_{j,\sim}u>0$. Therefore,  if $u \approx u^0$, Taylor expansion leads to
\begin{align*}
	&p_{\lambda}\bigg(\sqrt{\frac{1}{NT}u^\top \bfB_{j,\sim}^\top M_H\bfB_{j,\sim}u}\bigg)\\
	\approx& p_{\lambda}\bigg(\sqrt{\frac{1}{NT}u^{0\top} \bfB_{j,\sim}^\top M_H\bfB_{j,\sim}u^0}\bigg)+D_j(u^0)u^\top\bfB_{j,\sim}^\top M_H\bfB_{j,\sim}(u-u^0)\\
	\approx& p_{\lambda}\bigg(\sqrt{\frac{1}{NT}u^{0\top} \bfB_{j,\sim}^\top M_H\bfB_{j,\sim}u^0}\bigg)+D_j(u^0)\bigg[u^{\top}\bfB_{j,\sim}^\top M_H\bfB_{j,\sim}u-u^{0\top}\bfB_{j,\sim}^\top M_H\bfB_{j,\sim}u^0 \bigg],
\end{align*}
with $D_j(u^0)=\sqrt{NT}p'_{\lambda}\left(\sqrt{\frac{1}{NT}\smash[b]{u^{0\top} \bfB_{j,\sim}^\top M_H\bfB_{j,\sim}u^0}}\right)\left(u^{0\top} \bfB_{j,\sim}^\top M_H\bfB_{j,\sim}u^0\right)^{-1/2}$ and provided $D_j(u^0)$ exists. As a consequence, if $u_j\approx u_j^0$ for all $j=1,\ldots, p$, (\ref{eq:adapted:optimization}) can be locally approximated, up to a constant, by 
\begin{align}
\frac{1}{NT}\bigg(\bfY-\bfB_{-}v-\sum_{j=1}^p\bfB_{j,\sim}u_j\bigg)^\top M_H \bigg(\bfY-\bfB_{-}v-\sum_{j=1}^p\bfB_{j,\sim}u_j\bigg)+\sum_{j=1}^pD_j(u^0_j)u^{\top}_j\bfB_{j,\sim}^\top M_H\bfB_{j,\sim}u_j.\nonumber
\end{align}
From above equation, we summarize the proposed algorithm below.
\begin{enumerate}[label=(\alph*),ref=(\alph*)]
\item \label{algm:step:1} Choose initial values $(fv^{(0)}, u_1^{(0)}, \ldots,  u_p^{(0)})$.
\item \label{algm:step:2} In the $s$-th iteration, solve following optimization problem:
\begin{align}
	(v^{(s+1)}, u_1^{(s+1)}, \ldots,  u_p^{(s+1)})=&\argmin_{v, u_1, \ldots, u_p} \bigg(\bfY-\bfB_{-}v-\sum_{j=1}^p\bfB_{j,\sim}u_j\bigg)^\top M_H \bigg(\bfY-\bfB_{-}v-\sum_{j=1}^p\bfB_{j,\sim}u_j\bigg)\nonumber\\
	&+NT\sum_{j=1}^pD_j(u^{(s)}_j)u^{\top}_j\bfB_{j,\sim}^\top M_H\bfB_{j,\sim}u_j. \label{eq:quadratic:approximation}
\end{align}
\item \label{algm:step:3}  Repeat \ref{algm:step:2} until the difference between $(v^{(s)}, u_1^{(s)}, \ldots,  u_p^{(s)})$ and $(v^{(s+1)}, u_1^{(s+1)}, \ldots,  u_p^{(s+1)})$ is small enough.
\end{enumerate}
\begin{Remark}
It is  {worthwhile} mentioning that the optimization {problem}  in (\ref{eq:quadratic:approximation}) is a ridge-type regression problem, which can significantly reduces the {computatioal}  complexity. For convergence analysis {of the}  proposed algorithm, we refer the readers to \cite{x09} and \cite{hl05}.
\end{Remark}

%%%%%%%%%%%%%%%%%%%%
\section{Asymptotic Theory}\label{sec:asympotics}
In this section we present several asymptotic results concerning our proposed procedure for both short panel (fixed $T$) and large panel (diverging $T$). However, before proceeding further, we {remind the readers the Holder-smoothness notion of a function}. An univariate function $f:[0,1]\to \mathbb{R}$ is said to be \textit{$m$-smooth}, if $m=r+\delta$, for some $0<\delta\leq 1$ and integer $r$ such that $f$ is $r$-times continuously differentiable and $|f^{(r)}(u)-f^{(r)}(v)|\leq c|u-v|^\delta$ for some $c>0$ and all $u, v\in [0,1]$. {Additionally, in the sequel, we use the following notation.} {We let} $q_i(\bfw)$ be the density function of $\bfW_i=(\bfX_{i1}, \ldots, \bfX_{iT})$ and $\pi_i(\bfx)$ be the density function of $\bfX_{i1}$. For a function $g: [0,1]^k \to \mathbb{R}$, we define $\|g\|_2^2=\int g^2(\bfu)d\bfu-[\int g(\bfu)d\bfu]^2$ whenever the integrals exist. {Finally we set} $\mathbb{Z}=(\bfW_1, \bfW_2,\ldots, \bfW_N)$ and $\bfepsilon=(\epsilon_{11},\epsilon_{12},\ldots, \epsilon_{it},\ldots, \epsilon_{NT})^\top \in \mathbb{R}^{NT}$.

\subsection{Consistency}
The main results of this section show that  the proposed penalized estimator is consistent in terms of both estimation and linearity detection. However, these results require some  technical conditions, which are stated as follows.
\begin{Assumption}\label{Assumption:A1}
\begin{enumerate}[label={(\roman*}),ref={(\roman*})]
\item \label{A1:a} $T$ is a fixed constant.
\item \label{A1:b} For some $a_1>1$ , it satisfies that $a_1^{-1}\leq q_i(\bfw)\leq a_1$ for all $i=1,\ldots, N$ and all $\bfw \in [0, 1]^{pT}$. 
\end{enumerate}
\end{Assumption}
\begin{Assumption}\label{Assumption:A2}
\begin{enumerate}[label={(\roman*}),ref={(\roman*})]
\item  \label{A2:a}$T$ is diverging.
\item \label{A2:b} For some $a_3>1$ and $0\leq a_4<1$, it satisfies that $a_3^{-1}\leq \pi_i(\bfx)\leq a_3$ for all $i=1,\ldots, N$ and all $\bfx \in \mcX$. For each $i$,  $\{\bfX_{i1},\ldots, \bfX_{iT}\}$ is a stationary alpha-mixing  sequence with alpha mixing coefficient $\alpha_{[i]}(t)\leq a_4^t$ for all $t\geq 0$.
\end{enumerate}
\end{Assumption}
\begin{Assumption}\label{Assumption:common}
\begin{enumerate}[label={(\roman*}),ref={(\roman*})]
\item \label{Ac:a1} $\{\bfW_i, i=1,\ldots, N\}$ are independent across $i$.
\item \label{Ac:a2} There exist $a_2>1$ such that the eigenvalues $\ev(\bfepsilon\bfepsilon^\top|\mathbb{X})$ are in $[a_2^{-1}, a_2]$ and $\ev(\epsilon_{it}|\bfX_{it})=0$ for all $i=1,\ldots, N$ and $t=1,\ldots, T$.
\item \label{Ac:a} $f_0(\bfx)=\sum_{j=1}^p f_{j,0}(z_j)$ such that
\begin{enumerate}[label={(\alph*})]
\item  $\int_0^1 f_{j,0}(z)dz=0, \textrm{ for } j=1,\ldots, p$. 
\item  For some constant $a_6>0$ and $\beta_{1,0}, \ldots, \beta_{d,0}\in \mathbb{R}$ that
\begin{eqnarray*}
	&& f_{j,0}(z)=\beta_{j,0}(z-1/2) \;\;\textrm{ for } j=1,2,\ldots, d,\\
	&&\int_0^1|f_{j,0}(z)-\beta(z-1/2)|^2dz\geq a_6\;\; \textrm{ for all } \beta \in \mathbb{R} \;\textrm{ and for }\; j=d+1,\ldots, p.
\end{eqnarray*}
\item For each $j=d+1, \ldots, p$, $f_{j,0}$ is  $m_j$-smooth for some constant $m_j>1$.
\end{enumerate}

\item \label{Ac:b} There exists $a_7>0$ such that, for all $j\in [p]$, the bandwidth of knots $\bft_{j, M_j}$ satisfies 
\begin{align*}
\frac{\max_{1\leq i\leq M_j}(t_{j, i}-t_{j, i-1})}{\min_{1\leq i\leq M_j}(t_{j, i}-t_{j, i-1})}\leq a_7.
\end{align*}
\item \label{Ac:d} The degree of centralized spline space $\textrm{CSpl}(r_j, \bft_{j, M_j})$  satisfies that
\begin{align*}
r_j \geq \begin{cases}
1 & \textrm{ for } j=1,\ldots, d\\
\floor{m_j} & \textrm{ for } j=d+1,\ldots, p
\end{cases}.
\end{align*} 
\end{enumerate}
\end{Assumption}
\begin{Remark}
Assumption \ref{Assumption:A1}.\ref{A1:a} is the classical setting for short panel. \ref{Assumption:A1}.\ref{A1:b} imposes a quasi-uniformity condition on the density $q_i$, with the correlation among explanatory variables $Z_{it1},\ldots, Z_{itp}$ and  the dependence among $\bfX_{i1}, \ldots, \bfX_{iT}$ along the time dimension being jointly controlled by $a_1$. Similar assumptions are also proposed by \cite{h98} and \cite{h03}. Assumption \ref{Assumption:A2}.\ref{A2:a} allows $T$ is diverging, which is the standard setting for large panel. In the case of diverging $T$, Assumption \ref{Assumption:A2}.\ref{A2:b} requires the sequence  $\bfX_{i1}, \ldots, \bfX_{iT}$ is stationary for each $i$. Moreover, the correlation among explanatory variables $Z_{it1},\ldots, Z_{itp}$ is characterized by the quasi-uniform assumption on $\pi_i$, while the weak dependence for the observations along the time dimension is controlled by a geometric $\alpha$-mixing coefficient sequence. A similar $\alpha$-mixing condition can be found in \cite{ssp16}, \cite{sj17}, and \cite{sc13}. The stationarity assumption in Assumption \ref{Assumption:A2}.\ref{A2:b} can be relaxed at a  cost of introducing more notation. 
\end{Remark}
\begin{Remark}
Assumption \ref{Assumption:common}.\ref{Ac:a1} requires the explanatory variables to be independent across $i$. This is only for mathematical convenience, and we  can relax this assumption to conditional independence given fixed effects $\alpha_1, \ldots, \alpha_N$.  Assumption \ref{Assumption:common}.\ref{Ac:a2} assumes that $\bfX_{it}$ is exogenous and allows cross-sectional dependence on the error terms. Our method also can be extended to the case when $a_2$ tends to infinity slowly. In particular,  if for each $i$, $\{\epsilon_{i1}, \ldots, \epsilon_{iT}\}$ is a martingale difference sequence and $(\epsilon_{i1}, \ldots, \epsilon_{iT})$'s are mutually independent across $i$, then the eigenvalues condition  will be satisfied provided $\textrm{Var}(\epsilon_{it}) \in [a_2^{-1}, a_2]$ for all $i$ and $t$. Assumption \ref{Assumption:common}.\ref{Ac:a} imposes three conditions on the underlying regression function $f_0$, (a) Identification conditions of $f_{j,0}$'s; (b) Identification conditions of linearity; {and} (c) Smoothness conditions on  $f_{j,0}$'s. The identification conditions  of $f_{j,0}$'s are different from the classical ones in \cite{h98} for sectional data and \cite{sj12} for panel data. However, its validity can be guaranteed by mild conditions, see Lemmas \ref{lemma:expectation:sample:variance} and \ref{lemma:expectation:sample:variance:large:T} in Appendix. The identification conditions of linearity specifies the function form of each $f_{j,0}$. In particular, we requires the difference between nonlinear component and arbitrary linear function has a fixed and strictly positive lower bound $a_6$. With more cumbersome calculation, this lower bound is allowed {to decrease}  slowly to zero. The $m_j$-smoothness assumption is standard for nonparametric regression problem to reduce the model complexity, see \cite{c07}, \cite{s94}. Assumption \ref{Assumption:common}.\ref{Ac:b} and \ref{Assumption:common}.\ref{Ac:d} are common regular conditions on knots and degree in spline regression literature, which provide theoretical assurances for a good approximation to smooth functions, see \cite{zsw98} and \cite{h98}. It is worth mentioning that for $j=1,\ldots, d$, each $f_{j,0}$ is exactly a linear function, and a spline with degree $r_j\geq 1$ will be adequate to perform good approximation.
\end{Remark}

For each $j=1,\ldots, p$, let $h_j$ be the maximal length between two successive points of knots $\bft_{j, M_j}$, i.e., $h_j=\max_{1\leq i\leq M_j}(t_{j, i}-t_{j, i-1})$. Under Assumption \ref{Assumption:common}.\ref{Ac:b}, it follows that $h_j \asymp M_j^{-1}$. Theorem \ref{thm:rate:of:convergence:together} below proves that $m_j$'s and $h_j$'s play critical roles in 
the rate of convergence of the proposed estimator $\widehat{f}$.
\begin{theorem}\label{thm:rate:of:convergence:together}
Suppose $\lambda_{NT}\to 0$ and either one of the following conditions holds:
\begin{enumerate}[label=(\alph*)]
\item Assumptions \ref{Assumption:A1}, \ref{Assumption:common} are valid and  $\sum_{j=d+1}^p h_j=o(1)$, $\sum_{j=1}^p h_j^{-2}=o(N)$;
\item Assumptions \ref{Assumption:A2}, \ref{Assumption:common} are valid and $\sum_{j=d+1}^p h_j=o(1)$, $\sum_{j=1}^ph_j^{-2}=o(N)$,  $\sum_{j=1}^ph_j^{-1}=o(T)$.
\end{enumerate}
Then it follows that
\begin{align*}
	\|\widehat{f}-f_0\|^2=O_P\bigg(\sum_{j=1}^p\frac{1}{NTh_j}+\sum_{j=d+1}^ph_j^{2m_j}\bigg)\; \textrm{ and }\;\;\|\widehat{f}-f_0\|_2^2=O_P\bigg(\sum_{j=1}^p\frac{1}{NTh_j}+\sum_{j=d+1}^ph_j^{2m_j}\bigg).
\end{align*}
\end{theorem}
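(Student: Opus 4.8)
The plan is to run a ``basic inequality'' argument for the penalized sieve estimator (\ref{eq:penalized:estimator}), adapted to the within‑transformed loss. Fix an additive spline approximant $f_0^{*}=\sum_{j=1}^p f_{j,0}^{*}\in\Theta_{NT}$ of $f_0$, taking $f_{j,0}^{*}=f_{j,0}$ for $j\le d$ (each such component is exactly linear and already lies in $\Theta_{NT,j}$ with vanishing nonlinear part) and, for $j\ge d+1$, taking $f_{j,0}^{*}$ to be a best $L^2$ mean‑zero spline approximation of the $m_j$‑smooth $f_{j,0}$ in $\textrm{CSpl}(r_j,\bft_{j,M_j})$. Since the within transformation annihilates $\alpha_i^0$, expanding $Y_{it}=f_0(\bfX_{it})+\alpha_i^0+\epsilon_{it}$ inside (\ref{eq:objective:function}) shows that, for any $f\in\Theta_{NT}$, $l_{NT}(f)=\|f_0-f\|_{NT}^2+\tfrac{2}{NT}\sum_{i,t}\widetilde{(f_0-f)}_{it}\widetilde\epsilon_{it}+\tfrac1{NT}\sum_{i,t}\widetilde\epsilon_{it}^{\,2}+\sum_{j=1}^p p_{\lambda_{NT}}(\|f_{j,\sim}\|_{NT})$, where $\widetilde{\,\cdot\,}$ denotes time‑demeaning. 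The $f$‑free third term cancels when we use $l_{NT}(\widehat f)\le l_{NT}(f_0^{*})$, leaving
\[
\|\widehat f-f_0\|_{NT}^2\le\|f_0^{*}-f_0\|_{NT}^2+\frac{2}{NT}\sum_{i,t}\widetilde{(\widehat f-f_0^{*})}_{it}\,\widetilde\epsilon_{it}+\sum_{j=1}^p\big(p_{\lambda_{NT}}(\|f_{0,j,\sim}^{*}\|_{NT})-p_{\lambda_{NT}}(\|\widehat f_{j,\sim}\|_{NT})\big).
\]

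\noindent\textbf{Controlling the three terms.} First, classical spline approximation bounds under Assumption \ref{Assumption:common} give $\|f_{j,0}^{*}-f_{j,0}\|_\infty=O(h_j^{m_j})$ for $j\ge d+1$ and no error for $j\le d$, so $\|f_0^{*}-f_0\|_2^2=O(\sum_{j=d+1}^p h_j^{2m_j})$, and the same bound holds for $\|f_0^{*}-f_0\|_{NT}^2$ in probability by the seminorm comparison (Lemmas \ref{lemma:expectation:sample:variance} and \ref{lemma:expectation:sample:variance:large:T}). Second, writing any $g\in\Theta_{NT}$ through its centralized‑spline design matrix $\bfB$ and coefficient vector, the stochastic cross term is $\tfrac1{NT}$ times a bilinear form in $\bfepsilon$ and the coefficients of $g=\widehat f-f_0^{*}$; a generalized Cauchy--Schwarz inequality bounds it by $\|g\|_{NT}\big(\tfrac1{NT}\bfepsilon^\top\Pi\bfepsilon\big)^{1/2}$, with $\Pi=M_H\bfB(\bfB^\top M_H\bfB)^{-1}\bfB^\top M_H$ idempotent of rank $K:=\dim\Theta_{NT}\asymp\sum_{j=1}^p h_j^{-1}$. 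By Assumption \ref{Assumption:common}.\ref{Ac:a2}, $\ev[\bfepsilon^\top\Pi\bfepsilon\mid\bfZ]=\mathrm{tr}(\Pi\,\ev[\bfepsilon\bfepsilon^\top\mid\bfZ])\le a_2K$, so $\tfrac1{NT}\bfepsilon^\top\Pi\bfepsilon=O_P(K/(NT))=O_P\big(\sum_j (NTh_j)^{-1}\big)$ uniformly in $g$. Third, by the form of the SCAD penalty (\ref{eq:derivative:scad}) one has $0\le p_{\lambda_{NT}}(z)$ and $p_{\lambda_{NT}}(z)=(\kappa+1)\lambda_{NT}^2/2$ for $z\ge\kappa\lambda_{NT}$; since $\|f_{0,j,\sim}^{*}\|_{NT}$ is zero for $j\le d$ and stays bounded away from $0$ for $j\ge d+1$ by the lower bound $a_6$ of Assumption \ref{Assumption:common}.\ref{Ac:a}, while $\lambda_{NT}\to0$, the penalty discrepancy is $O_P(\lambda_{NT}^2)$, of smaller order than the remaining terms.

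\noindent\textbf{Assembling the rate.} With $\delta_{NT}=\|\widehat f-f_0\|_{NT}$, $A_{NT}^2=\sum_{j=d+1}^p h_j^{2m_j}$, $r_{NT}^2=\sum_{j=1}^p (NTh_j)^{-1}$ and $\|g\|_{NT}\le\delta_{NT}+O_P(A_{NT})$, the displayed inequality reduces to $\delta_{NT}^2\le O_P(A_{NT}^2+r_{NT}^2)+O_P(r_{NT})\delta_{NT}$, which forces $\delta_{NT}^2=O_P(A_{NT}^2+r_{NT}^2)$, i.e. the claimed rate for $\|\widehat f-f_0\|_{NT}^2$. To pass to $\|\cdot\|^2$ and $\|\cdot\|_2^2$, decompose $\widehat f-f_0=(\widehat f-f_0^{*})+(f_0^{*}-f_0)$: the first summand lies in $\Theta_{NT}$, on which Lemma \ref{lemma:expectation:sample:variance} (case (a), Assumption \ref{Assumption:A1}) or Lemma \ref{lemma:expectation:sample:variance:large:T} (case (b), Assumption \ref{Assumption:A2}) yields $\|h\|^2\asymp\|h\|_{NT}^2\asymp\|h\|_2^2$ uniformly — this is where the growth restrictions $\sum_j h_j^{-2}=o(N)$ and, for diverging $T$, $\sum_j h_j^{-1}=o(T)$ are consumed — while for the second summand all three norms are $O_P(A_{NT}^2)$ by the spline bound; a triangle inequality then delivers both stated rates.

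\noindent\textbf{Main obstacle.} The crux is the uniform two‑sided comparison of $\langle\cdot,\cdot\rangle_{NT}$ with $\langle\cdot,\cdot\rangle$ over the growing‑dimensional sieve $\Theta_{NT}$, simultaneously for fixed and for diverging $T$; this is an operator‑norm concentration of the Gram matrix $\tfrac1{NT}\bfB^\top M_H\bfB$ around its mean, which under the $\alpha$‑mixing structure of Assumption \ref{Assumption:A2}.\ref{A2:b} and the quasi‑uniform knot/density conditions requires precisely the stated bounds on $\sum_j h_j^{-1}$ and $\sum_j h_j^{-2}$ — it is isolated in the auxiliary Lemmas \ref{lemma:expectation:sample:variance} and \ref{lemma:expectation:sample:variance:large:T}, which the present proof merely invokes. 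The secondary complications, namely that $f_0\notin\Theta_{NT}$ (forcing the split into a sieve part and an approximation part) and that the SCAD penalty is non‑convex, are handled respectively by the triangle‑inequality decomposition above and by using only the boundedness of $p_{\lambda_{NT}}$ rather than any first‑order condition.
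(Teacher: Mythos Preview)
Your basic-inequality route is sound up to the stochastic term, but there is a genuine gap in how you handle the penalty discrepancy. You upper-bound
\[
\sum_{j=1}^p\Big(p_{\lambda_{NT}}(\|f^*_{0,j,\sim}\|_{NT})-p_{\lambda_{NT}}(\|\widehat f_{j,\sim}\|_{NT})\Big)\le \sum_{j=d+1}^p p_{\lambda_{NT}}(\|f^*_{0,j,\sim}\|_{NT})=(p-d)\frac{(\kappa+1)}{2}\lambda_{NT}^2
\]
and then assert this is ``of smaller order than the remaining terms.'' The theorem, however, only assumes $\lambda_{NT}\to 0$; there is no hypothesis tying $\lambda_{NT}$ to $\gamma_{NT}^2=\sum_j(NTh_j)^{-1}+\sum_{j>d}h_j^{2m_j}$. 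In fact, in the companion selection result one wants $\gamma_{NT}=o(\lambda_{NT})$, so typically $\lambda_{NT}^2\gg\gamma_{NT}^2$. Your argument therefore delivers only $\|\widehat f-f_0\|_{NT}^2=O_P(\gamma_{NT}^2+\lambda_{NT}^2)$, not the claimed $O_P(\gamma_{NT}^2)$.

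The paper closes this gap with a two-pass bootstrap that you are missing. It compares $\widehat f$ not to a deterministic $f_0^*$ but to the restricted least-squares estimator $\widehat f_{*,0}\in\Theta_{NT}^0$, and uses the Pythagorean identity $\|Y-\widehat f\|_{NT}^2-\|Y-\widehat f_{*,0}\|_{NT}^2=\|\widehat f_*-\widehat f\|_{NT}^2-\|\widehat f_*-\widehat f_{*,0}\|_{NT}^2$. A first pass (dropping the nonnegative penalty terms of $\widehat f$ for $j\le d$ and using only the plateau bound for $j>d$) yields the preliminary rate $\|\widehat f-f_0\|=O_P(\lambda_{NT}+\gamma_{NT})$. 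This preliminary rate, combined with the identifiability constant $a_6$ via Lemma~\ref{lemma:lower:bound:non:linear}, forces $\|\widehat f_{j,\sim}\|_{NT}$ to be bounded away from zero for $j>d$, so that $p_{\lambda_{NT}}(\|\widehat f_{j,\sim}\|_{NT})=p_{\lambda_{NT}}(\|\widehat f_{*,0,j,\sim}\|_{NT})=(\kappa+1)\lambda_{NT}^2/2$ and these terms \emph{cancel exactly}. The second pass then gives $\|\widehat f_*-\widehat f\|_{NT}\le\|\widehat f_*-\widehat f_{*,0}\|_{NT}$ with no penalty residual, hence the $\lambda_{NT}$-free rate. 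Your one-pass argument using only the boundedness of SCAD cannot achieve this cancellation.

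A minor point: the lemmas you cite for the empirical-to-population norm equivalence (Lemmas~\ref{lemma:expectation:sample:variance} and~\ref{lemma:expectation:sample:variance:large:T}) compare $\|\cdot\|$ with $\|\cdot\|_2$; the uniform comparison of $\|\cdot\|_{NT}$ with $\|\cdot\|$ over $\Theta_{NT}$ is a separate concentration statement (Lemmas~\ref{lemma:uniform:equivalence:empirical:population:norm} and~\ref{lemma:uniform:equivalence:empirical:population:norm:large:T}), and that is where the conditions $\sum_j h_j^{-2}=o(N)$ and $\sum_j h_j^{-1}=o(T)$ are actually used.
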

Theorem \ref{thm:rate:of:convergence:together} states that the rate of convergence $\widehat{f}$ consists of two parts, namely, estimation error $\sum_{j=1}^p(NTh_j)^{-1}$ and approximation error $\sum_{j=d+1}^ph_j^{2m_j}$, which coincides with standard result in  \cite{h98} and \cite{h03}. It should be noted that for linear components, namely $j=1,\ldots, d$, the approximation error does not involve in the $O_P$ term. On the other hand, for the nonparametric parts,  the rate of convergence can benefit from balancing the estimation and the  approximation errors. Specifically,  if $h_j \asymp N^{-\frac{1}{2m_j+1}}$ for $j=d+1, \ldots, p$, the rate of convergence improves. It should be observed that the convergence still holds even if $h_j\asymp 1$ for $j=1,\ldots, d$ and by doing so, the rate of convergence can be further improved. The choice of $h_j$ with constant order means the number of knots $M_j$ is not diverging. Since the first $d$ components are linear, setting the corresponding $h_j$'s to be constant does not ruin the estimation consistency. However, this is usually infeasible in practice, as the prior information about the linearity of the explanatory variables is typically unavailable. Furthermore, Theorem \ref{thm:rate:of:convergence:together} directly shows that the global minimizer $\widehat{f}$ is consistent, while previous work about SCAD penalized regression only establishes  the existence of a consistent local minimizer, e.g., see \cite{fl01} and \cite{x09}. 

Theorem \ref{thm:rate:of:convergence:together} only addresses the issue for estimation, which is not adequate to distinguish the linear components from the nonlinear ones. While with appropriate choice of tuning parameter $\lambda_{NT}$, Theorem \ref{thm:selection:consistency:together} below proves that the estimator $\widehat{f}$ will automatically recover the linearity in the underlying regression function $f_0$.
\begin{theorem}\label{thm:selection:consistency:together}
Suppose $\lambda_{NT}\to 0$ and either one of the following conditions is satisfied:
\begin{enumerate}[label=(\alph*)]
\item Assumptions \ref{Assumption:A1}, \ref{Assumption:common} hold and  $\;\sum_{j=d+1}^p h_j^{2m_j}=o(\lambda_{NT}^2)$, $\sum_{j=1}^p h_j^{-1}=o(N\lambda_{NT}^2)$, $\sum_{j=1}^ph_j^{-2}=o(N)$;
\item Assumptions \ref{Assumption:A2}, \ref{Assumption:common} hold and $\;\sum_{j=d+1}^ph_j^{2m_j}=o(\lambda_{NT}^2)$, $\sum_{j=1}^p h_j^{-1}=o(NT\lambda_{NT}^2)$, $\sum_{j=1}^ph_j^{-2}=o(N),  \sum_{j=1}^ph_j^{-1}=o(T)$.
\end{enumerate}
Then with probability approaching one, the following holds:
\begin{align*}
	\widehat{f}_{j,\sim}=0\textrm{ for } j=1,2,\ldots,d, \quad\textrm{ and }\quad \widehat{f}_{j,\sim}\neq 0\textrm{ for } j=d+1,\ldots, p.
\end{align*}
\end{theorem}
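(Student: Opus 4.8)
The plan is to prove the two halves of the conclusion separately: (I) every genuinely nonlinear component survives, i.e.\ $\widehat f_{j,\sim}\neq 0$ for $j>d$; and (II) every linear component is shrunk to a pure line, i.e.\ $\widehat f_{j,\sim}=0$ for $j\le d$. Part (I) is a soft consequence of the consistency already granted by Theorem~\ref{thm:rate:of:convergence:together} together with the identification lower bound $a_6$; Part (II) is the substance and is a first‑order (KKT‑type) argument exploiting the shape of the SCAD penalty near zero. Throughout I use that every function in $\Theta_{NT,j}$ has zero Lebesgue integral on $[0,1]$ (the centralized basis functions do), so for additive $g=\sum_j g_j$ with $\int_0^1 g_j=0$ the cross terms vanish and $\|g\|_2^2=\sum_j\|g_j\|_2^2$; in particular Theorem~\ref{thm:rate:of:convergence:together} yields $\|\widehat f_j-f_{j,0}\|_2^2=O_P\big(\sum_k (NTh_k)^{-1}+\sum_{k>d}h_k^{2m_k}\big)$ for each $j$, and under the hypotheses of the present theorem this rate is $o(\lambda_{NT}^2)$.

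\textbf{Part (I): retention of the nonlinear components.} Fix $j>d$ and let $\bar f_{j,0}\in\Theta_{NT,j}$ be a standard $\floor{m_j}$‑degree spline approximant of $f_{j,0}$, so $\|f_{j,0}-\bar f_{j,0}\|_2^2=O(h_j^{2m_j})=o(1)$; decompose $\bar f_{j,0}=\bar f_{j,0,-}+\bar f_{j,0,\sim}$ as in \eqref{eq:linear:nonlinear:definition}. Since $\bar f_{j,0,-}(z)=c(z-1/2)$ for some $c$, Assumption~\ref{Assumption:common}.\ref{Ac:a}(b) gives $\|\bar f_{j,0,\sim}\|_2\ge \inf_{\beta}\|f_{j,0}-\beta(z-1/2)\|_2-\|f_{j,0}-\bar f_{j,0}\|_2\ge \sqrt{a_6}-o(1)$, bounded away from $0$. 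Because $\widehat f_j-\bar f_{j,0}\in\Theta_{NT,j}$, the orthogonality \eqref{eq:orthogonality} splits its linear and nonlinear parts, so $\|\widehat f_{j,\sim}-\bar f_{j,0,\sim}\|_2^2\le\|\widehat f_j-\bar f_{j,0}\|_2^2\le 2\|\widehat f_j-f_{j,0}\|_2^2+2\|f_{j,0}-\bar f_{j,0}\|_2^2=o_P(1)$. Hence $\|\widehat f_{j,\sim}\|_2\ge\sqrt{a_6}-o_P(1)>0$ with probability approaching one, i.e.\ $\widehat f_{j,\sim}\neq 0$, and a union bound over the finitely many $j>d$ completes this half.

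\textbf{Part (II): sparsity of the linear components.} Fix $j\le d$. Here $f_{j,0}(z)=\beta_{j,0}(z-1/2)$ has no nonlinear part, so the same orthogonality argument gives $\|\widehat f_{j,\sim}\|_2\le\|\widehat f_j-f_{j,0}\|_2=o_P(\lambda_{NT})$, and by the equivalence of $\|\cdot\|_{NT}$ and $\|\cdot\|$ on $\Theta_{NT}$ (the estimates behind Theorem~\ref{thm:rate:of:convergence:together} and Lemmas~\ref{lemma:expectation:sample:variance}--\ref{lemma:expectation:sample:variance:large:T}) also $\|\widehat f_{j,\sim}\|_{NT}=o_P(\lambda_{NT})$, so eventually $\|\widehat f_{j,\sim}\|_{NT}<\lambda_{NT}$, the regime where by \eqref{eq:derivative:scad} $p_{\lambda_{NT}}$ is exactly linear with slope $\lambda_{NT}$. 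Suppose toward a contradiction that $\widehat f_{j,\sim}\neq 0$. The path $s\mapsto f^{(s)}$ that replaces $\widehat f_{j,\sim}$ by $s\widehat f_{j,\sim}$ (all else fixed) stays in $\Theta_{NT}$, so $s\mapsto l_{NT}(f^{(s)})$ is minimized at $s=1\neq 0$ and is differentiable there (the rescaled SCAD term is nonsmooth only at $s=0$). Differentiating the least‑squares and penalty parts, using $\ev[\epsilon_{it}\mid\bfX_{it}]=0$ and that the within‑transformation in $\langle\cdot,\cdot\rangle_{NT}$ kills $\alpha_i^0$, the stationarity condition is
\[
\tfrac12\,\lambda_{NT}\,\|\widehat f_{j,\sim}\|_{NT}=\langle f_0-\widehat f,\ \widehat f_{j,\sim}\rangle_{NT}+\langle \bfepsilon,\ \widehat f_{j,\sim}\rangle_{NT}.
\]
It remains to bound the right side by $o_P(\lambda_{NT})\|\widehat f_{j,\sim}\|_{NT}$. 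For the bias term, Cauchy--Schwarz leaves $\|f_0-\widehat f\|_{NT}\,\|\widehat f_{j,\sim}\|_{NT}$; splitting $f_0-\widehat f=(f_0-\bar f_0)+(\bar f_0-\widehat f)$, the first piece has mean‑square $O(\sum_{k>d}h_k^{2m_k})=o(\lambda_{NT}^2)$ by the bounded‑density assumptions, and the second is $\asymp\|\bar f_0-\widehat f\|\le\|\bar f_0-f_0\|+\|f_0-\widehat f\|=o(\lambda_{NT})+o_P(\lambda_{NT})$ by Theorem~\ref{thm:rate:of:convergence:together}, so $\|f_0-\widehat f\|_{NT}=o_P(\lambda_{NT})$. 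For the noise term I would use a maximal inequality: conditionally on the design, $\sup\{|\langle \bfepsilon,g\rangle_{NT}|:\,g\in\Theta_{NT,j},\ \|g\|_{NT}=1\}$ has conditional second moment $O(K_j/(NT))$ with $K_j=M_j+r_j-2\asymp h_j^{-1}$, because $\ev[\bfepsilon\bfepsilon^\top\mid\mathbb{Z}]$ has bounded eigenvalues; a union bound over $j\le d$ and $\sum_j h_j^{-1}=o(NT\lambda_{NT}^2)$ give $\max_{j\le d}\sup_{\|g\|_{NT}=1}|\langle\bfepsilon,g\rangle_{NT}|=o_P(\lambda_{NT})$. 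Substituting and dividing by $\|\widehat f_{j,\sim}\|_{NT}>0$ forces $\tfrac12\lambda_{NT}\le o_P(\lambda_{NT})$, impossible with probability approaching one; hence $\widehat f_{j,\sim}=0$, and a union bound over $j\le d$ finishes the proof.

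\textbf{Main obstacle.} The delicate point is the uniform control of the stochastic term $\langle\bfepsilon,\widehat f_{j,\sim}\rangle_{NT}$: since $\widehat f_{j,\sim}$ is data‑dependent it must be dominated over the whole (growing‑dimensional) nonlinear spline space, which needs the empirical‑process/random‑matrix estimate that knits together cross‑sectional independence, the conditional covariance bound of Assumption~\ref{Assumption:common}.\ref{Ac:a2}, and — for diverging $T$ — the $\alpha$‑mixing of Assumption~\ref{Assumption:A2}.\ref{A2:b}, together with the sample‑to‑population norm equivalence from Lemmas~\ref{lemma:expectation:sample:variance}--\ref{lemma:expectation:sample:variance:large:T}. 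Everything else is either granted by Theorem~\ref{thm:rate:of:convergence:together} or a routine computation.
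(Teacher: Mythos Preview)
Your argument is correct and your Part~(I) coincides with the paper's (it is exactly the content of Lemma~\ref{lemma:lower:bound:non:linear} together with Theorem~\ref{thm:rate:of:convergence:together}). For Part~(II), however, you take a genuinely different route from the paper. The paper does \emph{not} write down a KKT condition at $\widehat f$; instead it fixes any $g\in\Theta_{NT}^0$ with $\|g-f_0\|\le C_\delta\gamma_{NT}$ and any small $r\in\Theta_{NT}^1$, and shows the strict inequality $l_{NT}(g)<l_{NT}(g+r)$ by rewriting the least--squares part through the unpenalized projection $\widehat f_*$ (Pythagoras gives $\|Y-f\|_{NT}^2=\|\widehat f_*-f\|_{NT}^2+\text{const}$) and then using only the rate $\|\widehat f_*-f_0\|=O_P(\gamma_{NT})$ from Lemma~\ref{lemma:rate:of:convergence:non:penalized}; the noise $\bfepsilon$ never appears explicitly because it is absorbed into $\widehat f_*$. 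Your approach instead differentiates $s\mapsto l_{NT}(f^{(s)})$ at $s=1$, splits $Y-\widehat f$ into $(f_0-\widehat f)+\bfepsilon$ (the fixed effects drop), and bounds the two pieces separately: the bias via Theorem~\ref{thm:rate:of:convergence:together} and the empirical--population norm equivalence, and the noise via the maximal bound $\ev\big[\sup_{\|g\|_{NT}=1,\,g\in\Theta_{NT,j}}|\langle\bfepsilon,g\rangle_{NT}|^2\,\big|\,\mathbb Z\big]\le a_2\dim(\Theta_{NT,j})/(NT)$, which is precisely the computation behind Lemma~\ref{lemma:expect:value:projection:y:f0}. Both arguments hinge on $\gamma_{NT}=o(\lambda_{NT})$ and on the SCAD penalty being linear on $(0,\lambda_{NT}]$; the paper's version is a bit slicker (no separate noise term to handle), while yours is the more standard first--order argument from the penalized regression literature and makes the role of each rate condition more transparent.
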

The tuning parameter $\lambda_{NT}$ in Theorem \ref{thm:selection:consistency:together} (unlike in Theorem \ref{thm:rate:of:convergence:together}) can neither be too large nor too small. With suitable choices of $\lambda_{NT}$ and $h_j$'s, the proposed estimator $\widehat{f}=\sum_{j=1}^p\widehat{f}_j$ will automatically and correctly specify the  linear and nonlinear forms with probability approaching one. Since in Theorem \ref{thm:selection:consistency:together}, the tuning parameters  $h_j$'s and $\lambda_{NT}$ play important roles in selection consistency, a fundamental  issue in practice  is the choice of these parameters. The discussion of this issue is deferred to Section \ref{sec:choice:turning:parameter}.

\subsection{Solution Path}
{In this section we define the} solution path of $\widehat{f}$ and provide its theoretical properties and practical implications. {For fixed} knots $\bft_{j, M_j}$'s and {the} tuning parameters $k_j$'s and $h_j$'s, one can obtain a sequence of estimators  $\widehat{f}$ by using a sequence of increasing $\lambda_{NT}$'s and these estimators forms a solution path. For sufficiently large $\lambda_{NT}$, all the nonlinear components $\widehat{f}_{j,\sim}$'s will vanish and result in a model consisting of all linear components.  On the other hand, when $\lambda_{NT}$ is close to zero, all the $\widehat{f}_j$'s will be nonlinear. {Consequently, we may} obtain $p+1$ different models in the solution path by increasing $\lambda_{NT}$ from zero to infinity. The following corollary is a direct consequence of Theorem \ref{thm:selection:consistency:together}.
\begin{corollary}\label{corollary:selection:consistency:soluton:path}
Suppose $\lambda_{NT}\to 0$ and either one of the following conditions is satisfied:
\begin{enumerate}[label=(\alph*)]
\item Assumptions \ref{Assumption:A1}, \ref{Assumption:common} hold and  $\;\sum_{j=d+1}^p h_j^{2m_j}=o(\lambda_{NT}^2)$, $\sum_{j=1}^p h_j^{-1}=o(N\lambda_{NT}^2)$, $\sum_{j=1}^ph_j^{-2}=o(N)$;
\item Assumptions \ref{Assumption:A2}, \ref{Assumption:common} hold and $\;\sum_{j=d+1}^ph_j^{2m_j}=o(\lambda_{NT}^2)$, $\sum_{j=1}^p h_j^{-1}=o(NT\lambda_{NT}^2)$, $\sum_{j=1}^ph_j^{-2}=o(N),  \sum_{j=1}^ph_j^{-1}=o(T)$.
\end{enumerate}
Then with probability approaching one, one model contained in the solution path will correctly specify all the linear components.
\end{corollary}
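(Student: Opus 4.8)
The plan is to read off the corollary from Theorem~\ref{thm:selection:consistency:together}, the only extra ingredient being the observation that every penalized estimator $\widehat f$ associated with an admissible tuning value already sits on the solution path.

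First I would fix a sequence $\lambda_{NT}\to 0$ satisfying the rate restrictions in part~(a) (respectively part~(b)) of the corollary; these are precisely the hypotheses of Theorem~\ref{thm:selection:consistency:together}(a) (respectively (b)). Let $\widehat f=\Argmin_{f\in\Theta_{NT}}l_{NT}(f)$ be the corresponding minimizer of~(\ref{eq:objective:function}). By the construction described in Section~\ref{sec:estimation}, the solution path is the family of estimators $\widehat f$ obtained as $\lambda_{NT}$ increases from $0$ to $\infty$, equivalently the finitely many (at most $p+1$) distinct linearity patterns $\{j:\widehat f_{j,\sim}=0\}$ that these estimators realize; hence the estimator $\widehat f$ corresponding to the fixed $\lambda_{NT}$ above is, by definition, one of the models contained in the path.

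Next I would invoke Theorem~\ref{thm:selection:consistency:together}: under the stated assumptions, with probability approaching one we have $\widehat f_{j,\sim}=0$ for $j=1,\ldots,d$ and $\widehat f_{j,\sim}\neq 0$ for $j=d+1,\ldots,p$, i.e.\ the model realized by this $\widehat f$ declares exactly the first $d$ components linear and the remaining $p-d$ components nonlinear, which is the correct specification. Combining the two observations, with probability approaching one the solution path contains a model that correctly specifies all the linear components, as claimed.

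The only step needing a little care --- and the closest thing to an obstacle here --- is the bookkeeping between the continuum of tuning values used to generate the path and a discrete grid of $\lambda_{NT}$'s used in implementation: one must ensure that the admissible $\lambda_{NT}$ of Theorem~\ref{thm:selection:consistency:together} either belongs to the grid or lies in an interval of tuning values on which the selected linearity pattern is constant, so that the correct model still appears among the finitely many distinct models generated along the path. Under the monotonicity of the number of detected linear components along the path noted in Section~\ref{sec:estimation}, the pattern $\{j:\widehat f_{j,\sim}=0\}$ is constant on a neighborhood of such a $\lambda_{NT}$, so no additional probabilistic argument beyond Theorem~\ref{thm:selection:consistency:together} is required.
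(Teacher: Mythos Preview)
Your proposal is correct and matches the paper's approach: the paper simply states that the corollary is a direct consequence of Theorem~\ref{thm:selection:consistency:together}, and your argument spells out exactly this---the estimator at an admissible $\lambda_{NT}$ is by construction one of the models on the solution path, and Theorem~\ref{thm:selection:consistency:together} guarantees that this model correctly identifies the linear components with probability approaching one. Your extra paragraph about the discrete grid is more caution than the paper itself offers, but it does no harm.
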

Corollary \ref{corollary:selection:consistency:soluton:path} indicates that the solution path is consistent  in the sense that, one in the $p+1$ models will correctly identify both the  linear and the  nonlinear parts. Notice that for linearity detection problem, {one essentially needs}  to identify the correct model out of $2^p$ candidates. Another {immediate}  implication from Corollary \ref{corollary:selection:consistency:soluton:path}  is that in practice, any model selection method, e.g., Akaike Information Criterion (AIC) or Bayesian Information Criterion (BIC) criteria,  based on these $p+1$ models is valid, reliable and is equivalent to that based on $2^p$ models, which is a significant reduction on  model complexity.

\subsection{Joint Asymptotic Distribution}
In this section, we will present the limit distribution of proposed estimator $\widehat{f}$. To proceed further, recall that $\bfPsi_{j,\sim}(z)=( \psi_{j, 2}(z), \ldots, \psi_{j, r_j}(z), \widetilde{\psi}_{j, 1}(z), \ldots, \widetilde{\psi}_{j, M_j-1}(z))^\top$  is the basis of $\Theta_{NT,j,\sim}$, for $j=1,\ldots, p$. We  further define $\bfPsi_{j,-}(z)=\psi_{j,1}(z)=z-1/2$, $\bfPsi_j(z)=(\bfPsi_{j,-}(z), \bfPsi{j,\sim}^\top(z))^\top$, and $\bfPsi^0(\bfx)=(\bfPsi_{1,-}(z_1),\ldots, \bfPsi_{d,-}(z_d), \bfPsi_{d+1}^\top(z_{d+1}),\ldots, \bfPsi^\top_{p}(z_p))^\top$. By this definition, we know $\bfPsi_{j}(z)$ is the basis of $\Theta_{NT,j}$. If we define the space of correctly specified model
\begin{align}
	\Theta_{NT}^0=\bigg\{f(\bfx)=\sum_{j=1}^pf_j(z_j)\in \Theta_{NT}\;\bigg|\; f_j(z)=\beta_j(z-1/2) \textrm{ for } \beta_j\in \mathbb{R} \textrm{ and } j=1,\ldots,d \bigg\},\nonumber
\end{align}
then $\bfPsi^0(z)$ will be its basis. By Theorems \ref{thm:selection:consistency:together}, it follows that $\widehat{f}\in \Theta_{NT}^0$ with probability approaching one, and thus we have the following expression for the proposed estimator:
\begin{align*}
	\widehat{f}(\bfx)=\sum_{j=1}^d\widehat{\beta}_j(z_j-1/2)+\sum_{j=d+1}^p\widehat{f}_{j}(z_j), \;\textrm{ with }\; \widehat{f}_j \in \Theta_{NT,j}\; \textrm{ for }\;j=d+1,\ldots, p.
\end{align*}
Therefore, it is natural for us to study the asymptotic distributions of $\widehat{\beta}_j$'s and $\widehat{f}_j(z_{j,0})$'s  with $z_{d+1,0}, \ldots, z_{p, 0}\in [0, 1]$ being some fixed constants. We consider following elements in $\Theta_{NT}^0$:
\begin{align}
	v_{NT,j}^*(\bfx)=c_{j}^{*\top}V^{-1}\bfPsi^0(\bfx)\quad\textrm{ and }\quad \widehat{v}_{NT,j}^*(\bfx)&=c_{j}^{*\top}V_{NT}^{-1}\bfPsi^0(\bfx), \quad\textrm{ for } j=1,\ldots, p,\label{eq:definition:hat:vnt}
\end{align} 
with 
\begin{align*}
V_{NT}&=\frac{1}{NT}\sum_{i=1}^N\sum_{t=1}^T \bigg(\bfPsi^0(\bfX_{it})-\frac{1}{T}\sum_{s=1}^T\bfPsi^0(\bfX_{is})\bigg)\bigg(\bfPsi^0(\bfX_{it})-\frac{1}{T}\sum_{s=1}^T\bfPsi^0(\bfX_{is})\bigg)^\top,\quad V=\ev(V_{NT}),\nonumber\\
	c_j^*&=(\underbrace{0,\ldots,0}_{j-1}, 1, 0,\ldots, 0)^\top \in \mathbb{R}^{d+\sum_{k=d+1}^p(M_k+r_k-1)},\quad\textrm{ for }\quad j=1,\ldots, d,\nonumber\\
	c_j^*&=(\underbrace{\!\! 0, \ldots, 0 \!\!}_{\text{\makebox[36pt]{$d+\sum_{k=d+1}^{j-1}(M_k+r_k-1)$}}}, \bfPsi_j^\top(z_{j,0}),0, \ldots, 0)^\top \in \mathbb{R}^{d+\sum_{k=d+1}^p(M_k+r_k-1)},\quad\textrm{ for }\quad j=d+1,\ldots, p.
\end{align*}
It can be shown that, for any $f(\bfx)=\sum_{j=1}^d\beta_j(z_j-1/2)+\sum_{j=d+1}^p{f}_{j}(z_j)\in \Theta_{NT}^0$ with $f_j\in \Theta_{NT,j}$, $j=d+1,\ldots, p$, the following equality holds:
\begin{align*}
	\langle v_{NT,j}^*, f\rangle=\begin{cases}
	\beta_j & \textrm{ for } j=1, \ldots, d;\\
	f_j(z_{j,0}) & \textrm{ for  } j=d+1,\ldots, p.
	\end{cases}
\end{align*}
If we define linear functionals from $\Theta_{NT}^0$ to $\mathbb{R}$ such that $\mathcal{L}_j(f)=\beta_{j}$ for $j=1,\ldots, d$ and $\mathcal{L}_j(f)=f_j(z_{j,0})$ for $j=d+1,\ldots, p$, then $v_{NT,j}^*$'s are essentially the Riesz representatives of $\mathcal{L}_j$'s.  

In order to establish the asymptotic distribution, more regular assumptions on the error terms $\epsilon_{it}$'s are needed. Thus, in the following, we define the standard deviation inner product and norm in $\Theta_{NT}$, which contains the information of $\bfepsilon$. For $g, f \in \Theta_{NT}$, we define
\begin{align}
	\langle g, f\rangle_\textrm{sd}=\frac{1}{NT}\ev\bigg(\bfg^\top M_H\bfepsilon \bfepsilon^\top M_H \bff \bigg)\quad \textrm{ and }\quad \|g\|^2_\textrm{sd}=\langle g, g\rangle_\textrm{sd},\nonumber
\end{align}
where $\bfg=(g(\bfX_{11}), g(\bfX_{12}),\ldots, g(\bfX_{it}), g(\bfX_{NT}))^\top, \bff=(f(\bfX_{11}), f(\bfX_{12}),\ldots, f(\bfX_{it}), f(\bfX_{NT}))^\top \in \mathbb{R}^{NT}$. In addition, denoting $\bfepsilon_i=(\epsilon_{i1},\ldots, \epsilon_{iT})$ for $i \in [N]$, we propose Assumption \ref{Assumption:A4} on the error terms $\bfepsilon_i$'s and $v_{NT,j}^*$ for statistical inference. 
\begin{Assumption}\label{Assumption:A4}
\begin{enumerate}[label={(\roman*}),ref={(\roman*})]
\item\label{A4:a}  There exists $a_8>0$, such that 
$\sup_{i\in [N]}\sup_{t\in [T]}\ev(\epsilon_{it}^4|\mathbb{Z})\leq a_8$.
\item\label{A4:b}  $(\bfW_i, \bfepsilon_i)$'s are independent across $i$.
\item \label{A4:c} In the case of diverging $T$, for each $i$, $\{(\bfX_{it}, \epsilon_{it}), t\in [T]\}$ is an alpha-mixing  sequence with mixing coefficient $\widetilde{\alpha}_{[i]}(t)\leq a_9^t$ for all $t\geq 0$ and some $0<a_9<1$. 
\end{enumerate}
\end{Assumption}
\begin{Assumption}\label{Assumption:A5}
There exist constants $\sigma_j>0$ and $r_{j,k}$ for $j,k \in \{1,\ldots, p\}$ such that the following convergence conditions hold:
\begin{eqnarray*}
	&&\|v_{{NT},j}^*\|_{\textrm{sd}}^2 \to \sigma_j^2>0, \textrm{ for } j=1,\ldots, d,\quad\quad \|v_{{NT},j}^*\|_{\textrm{sd}}^2h_j \to \sigma_j^2>0 \textrm{ for } j=d+1,\ldots, p,\\
	&&\frac{\langle v_{{NT},j}^*, v_{{NT},k}^*\rangle_{\textrm{sd}}}{	\|v_{{NT},j}^*\|_{\textrm{sd}} 	\|v_{{NT},k}^*\|_{\textrm{sd}}} \to r_{j,k}, \textrm{ for } 1\leq j, k \leq p, \nonumber\\
	&&\Sigma=\begin{pmatrix}
	\sigma_1^2& r_{1,2}\sigma_1\sigma_2 &r_{1,3}\sigma_1\sigma_3&\ldots &r_{1,p}\sigma_1\sigma_p\\ 
	r_{1,2}\sigma_1\sigma_2 & \sigma_2^2 & r_{2, 3}\sigma_2\sigma_3 &\ldots & r_{2,p}\sigma_2\sigma_p\\
	\vdots&\vdots&\vdots&\vdots&\vdots\\
	r_{1,p}\sigma_1\sigma_p &r_{2,p}\sigma_2\sigma_p&r_{3,p}\sigma_3\sigma_p&\ldots & \sigma_p^2
	\end{pmatrix}\in \mathbb{R}^{p\times p} \textrm{ is positive definite}.
\end{eqnarray*}
\end{Assumption}
\begin{Remark}
Assumption \ref{Assumption:A4}.\ref{A4:a} is a stronger  moment condition on the error terms to verify Lyapunov condition.
Assumption \ref{Assumption:A4}.\ref{A4:b}  is the condition for cross-sectional independence, which can be relaxed to be conditional independence given the fixed effects $\alpha_1, \ldots, \alpha_N$, see  \cite{sc13}. Assumption \ref{Assumption:A4}.\ref{A4:c} requires that each individual time series $\{(W_{it}, \epsilon_{it}), t=1,\ldots, T\}$ is alpha-mixing and the level of dependence is controlled by a factor of $a_9$. Assumptions \ref{Assumption:A4}.\ref{A4:a}-\ref{A4:c} are standard conditions in literature, which, e.g., can be found in  \cite{sj12} , \cite{sc13}, and \cite{ls16}. 
\end{Remark}
\begin{Remark}
Assumption \ref{Assumption:A5} is a regular condition to express the covariance matrix of joint asymptotic distribution for $(\widehat{\beta}_1, \ldots, \widehat{\beta}_d, \widehat{f}_{d+1}(z_{j,0}),\ldots, \widehat{f}_{p}(z_{j,0}))$. The marginal asymptotic distribution of each component is still valid without this assumption. Nevertheless, it is verified in Lemmas \ref{lemma:verify:condition:c1:parametric} and  \ref{lemma:verify:condition:c1:parametric}  that $\|v_{NT,j}^*\|_\textrm{sd}^2\asymp 1$ for $j=1,\ldots, d$ and $\|v_{NT,j}^*\|_\textrm{sd}^2\asymp h_j^{-1}$ for $j=d+1,\ldots, p$. Similar conditions also imposed in  \cite{sc13aos} and \cite{cs15} to obtain the joint distribution of parametric and nonparametric components.
\end{Remark}

For presentation purpose, we choose  $h_1=h_2=\ldots=h_p=h$ and define $m_*=\min_{d+1\leq j \leq p}m_j$. Theorem \ref{thm:asymptotic:normal:multiple:together} below states that, with suitable choice of $h$ and $\lambda_{NT}$, we can obtain the asymptotic distribution of  $(\widehat{\beta}_1, \ldots, \widehat{\beta}_d, \widehat{f}_{d+1}(z_{j,0}),\ldots, \widehat{f}_{p}(z_{j,0}))$.

\begin{theorem}\label{thm:asymptotic:normal:multiple:together}
Suppos $\lambda_{NT}\to 0$ and one of the following conditions is satisfied:
\begin{enumerate}
\item Assumptions \ref{Assumption:A1}, \ref{Assumption:common}, \ref{Assumption:A4}, \ref{Assumption:A5} are valid and  $h^{-1}=o(N\lambda_{NT}^2)$, $h^{2m_*}=o(\lambda_{NT}^2)$, $h^{-3}=o(N)$, $h^{2m_*-2}=o(1)$, $Nh^{2m_*}=o(1)$;
\item  Assumptions \ref{Assumption:A2}, \ref{Assumption:common}, \ref{Assumption:A4}, \ref{Assumption:A5} are valid and  $h^{-1}=o(T)$, $h^{-1}=o(NT\lambda_{NT}^2)$, $h^{2m_*}=o(\lambda_{NT}^2)$, $h^{-4}=o(NT)$, $h^{2m_*-3}=o(1)$, $h^{-5}=o(N^2)$, $h^{2m_*-4}T=o(N)$, $NTh^{2m_*}=o(1)$.
\end{enumerate}
Then with probability approaching one, the following holds:
\begin{align*}
	\begin{pmatrix}
	\sqrt{NT}(\widehat{\beta}_1-\beta_{1,0})\\
	\vdots\\
	\sqrt{NT}(\widehat{\beta}_d-\beta_{d,0})\\
	\sqrt{NT}(\widehat{f}_{d+1}(z_{d+1,0})-f_{d+1,0}(z_{d+1,0}))\\
	\vdots\\
	\sqrt{NT}(\widehat{f}_{p}(z_{p,0})-f_{p,0}(z_{p,0}))\\
	\end{pmatrix}\cid \textrm{N}(0, \Sigma),
\end{align*}
where $z_{d+1,0},\ldots, z_{p,0}\in [0, 1]$.
\end{theorem}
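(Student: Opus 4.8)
The plan is to show that, on the high-probability event where linearity detection is correct, $\widehat{f}$ behaves like an ordinary least-squares estimator over the finite-dimensional space $\Theta_{NT}^0$, to linearize the coordinates of interest along the Riesz representers $v_{NT,j}^*$, and to conclude with a central limit theorem for a sum that is independent across the cross-sectional index $i$. \emph{Step 1 (reduction to unpenalized estimation on $\Theta_{NT}^0$).} By Theorem \ref{thm:selection:consistency:together}, under the stated hypotheses there is an event $\mathcal E_{NT}$ with $\pr(\mathcal E_{NT})\to 1$ on which $\widehat{f}_{j,\sim}=0$ for $j\le d$ and $\widehat{f}_{j,\sim}\ne 0$ for $j>d$, so $\widehat{f}\in\Theta_{NT}^0$. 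Combining Theorem \ref{thm:rate:of:convergence:together} with Assumption \ref{Assumption:common}.\ref{Ac:a}, $\|\widehat{f}_{j,\sim}\|_{NT}$ stays bounded away from $0$ for $j>d$ while $\lambda_{NT}\to 0$, hence eventually $\|\widehat{f}_{j,\sim}\|_{NT}>\kappa\lambda_{NT}$ and $p'_{\lambda_{NT}}(\|\widehat{f}_{j,\sim}\|_{NT})=0$ by (\ref{eq:derivative:scad}); thus the penalty term in (\ref{eq:objective:function}) drops out of the first-order conditions restricted to $\Theta_{NT}^0$, and on $\mathcal E_{NT}$ the estimator $\widehat{f}$ is the minimizer of the within-$i$ demeaned quadratic loss over $\Theta_{NT}^0$. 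Its normal equations read $\langle\widehat{f},g\rangle_{NT}=\langle f_0,g\rangle_{NT}+\frac{1}{NT}\bfg^\top M_H\bfepsilon$ for every $g\in\Theta_{NT}^0$, the operator $M_H$ annihilating the fixed effects.

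\emph{Step 2 (linearization).} Let $f_0^*$ denote the $\langle\cdot,\cdot\rangle_{NT}$-projection of $f_0$ onto $\Theta_{NT}^0$; since $f_{j,0}$ is already linear for $j\le d$, $f_0^*$ differs from $f_0$ only through the spline approximation of the nonparametric components. As $\widehat{f}-f_0^*\in\Theta_{NT}^0$ and $\langle f_0-f_0^*,g\rangle_{NT}=0$ for $g\in\Theta_{NT}^0$, the normal equations give $\langle\widehat{f}-f_0^*,g\rangle_{NT}=\frac{1}{NT}\bfg^\top M_H\bfepsilon$; choosing $g=\widehat{v}_{NT,j}^*$, the sample Riesz representer of $\mathcal L_j$, and recalling that $\mathcal L_j(\widehat{f})$ equals $\widehat{\beta}_j$ for $j\le d$ and $\widehat{f}_j(z_{j,0})$ for $j>d$, we obtain
\[
\mathcal L_j(\widehat{f})-\mathcal L_j(f_0)=\frac{1}{NT}\big(\widehat{\bfv}_{NT,j}^*\big)^\top M_H\bfepsilon+\big(\mathcal L_j(f_0^*)-\mathcal L_j(f_0)\big),
\]
where $\widehat{\bfv}_{NT,j}^*=(\widehat{v}_{NT,j}^*(\bfX_{11}),\dots,\widehat{v}_{NT,j}^*(\bfX_{NT}))^\top$. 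The bias $\mathcal L_j(f_0^*)-\mathcal L_j(f_0)$ reduces to inner products of $\widehat{v}_{NT,j}^*$ against the polynomial-spline approximation errors of $f_{d+1,0},\dots,f_{p,0}$; by classical spline approximation bounds (Assumption \ref{Assumption:common}.\ref{Ac:b}, \ref{Assumption:common}.\ref{Ac:d} and the $m_j$-smoothness) together with the undersmoothing conditions $Nh^{2m_*}=o(1)$, respectively $NTh^{2m_*}=o(1)$, this is $o_P((NT)^{-1/2})$, so $\sqrt{NT}\big(\mathcal L_j(\widehat{f})-\mathcal L_j(f_0)\big)=\frac{1}{\sqrt{NT}}\big(\widehat{\bfv}_{NT,j}^*\big)^\top M_H\bfepsilon+o_P(1)$.

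\emph{Step 3 (representer substitution and CLT).} One next shows $\frac{1}{\sqrt{NT}}\big(\widehat{\bfv}_{NT,j}^*-\bfv_{NT,j}^*\big)^\top M_H\bfepsilon=o_P(1)$; since $\widehat{v}_{NT,j}^*-v_{NT,j}^*=c_j^{*\top}(V_{NT}^{-1}-V^{-1})\bfPsi^0$, this rests on a sufficiently sharp operator-norm bound for $V_{NT}-V$, whose dimension grows like $h^{-1}$, obtained from a concentration inequality for the spline Gram matrix controlled by the density bound in Assumption \ref{Assumption:A1}.\ref{A1:b} (respectively by the stationarity and geometric $\alpha$-mixing of Assumption \ref{Assumption:A2}.\ref{A2:b}); this is exactly where the polynomial-in-$h^{-1}$ conditions $h^{-3}=o(N)$ and, for diverging $T$, $h^{-4}=o(NT)$, $h^{-5}=o(N^2)$, $h^{2m_*-4}T=o(N)$, etc., are consumed. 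It then remains to obtain the joint limit of $\big(\frac{1}{\sqrt{NT}}(\bfv_{NT,1}^*)^\top M_H\bfepsilon,\dots,\frac{1}{\sqrt{NT}}(\bfv_{NT,p}^*)^\top M_H\bfepsilon\big)$. Writing $M_H=I_N\otimes H$, the $j$th coordinate equals $\frac{1}{\sqrt{NT}}\sum_{i=1}^N(\bfv_{NT,j,i}^*)^\top H\bfepsilon_i$, a sum of mean-zero terms independent across $i$ by Assumption \ref{Assumption:A4}.\ref{A4:b}, with variances $\|v_{NT,j}^*\|_\textrm{sd}^2$ and cross-covariances $\langle v_{NT,j}^*,v_{NT,k}^*\rangle_\textrm{sd}$. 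By the Cram\'er--Wold device it suffices to treat an arbitrary linear combination: for fixed $T$ this is a Lindeberg--Feller CLT for a triangular array, whose Lyapunov condition follows from the fourth-moment bound of Assumption \ref{Assumption:A4}.\ref{A4:a} together with the magnitudes $\|v_{NT,j}^*\|_\textrm{sd}^2\asymp 1$ ($j\le d$) and $\asymp h^{-1}$ ($j>d$) established in Lemma \ref{lemma:verify:condition:c1:parametric}, while for diverging $T$ one additionally uses a large-block/small-block decomposition within each $i$ to neutralize the serial dependence allowed by Assumption \ref{Assumption:A4}.\ref{A4:c}. Centering and normalizing each coordinate by its limiting standard deviation and using the limiting correlations $r_{j,k}$ of Assumption \ref{Assumption:A5} then yields the $\textrm{N}(0,\Sigma)$ limit.

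The main obstacle I anticipate is Step 3: quantifying the error incurred by replacing the empirical inner product by its population counterpart when the ambient dimension diverges like $h^{-1}$, simultaneously with the temporal dependence in the large-$T$ regime. This is precisely what forces the numerous polynomial rate conditions in hypotheses (1)--(2) and the strengthened moment requirement in Assumption \ref{Assumption:A4}.\ref{A4:a}, which go beyond what Theorems \ref{thm:rate:of:convergence:together} and \ref{thm:selection:consistency:together} need.
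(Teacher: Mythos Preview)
Your proposal is essentially correct but follows a genuinely different route from the paper.

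Your Step~1 exploits a shortcut the paper does not use: because the SCAD penalty is flat beyond $\kappa\lambda_{NT}$ and vanishes at the origin, on the selection-consistent event the penalized minimizer $\widehat f$ coincides \emph{exactly} with the unpenalized projection $\widehat f_{*,0}$ onto $\Theta_{NT}^0$, and you can read off the normal equations directly. The paper never makes this identification; instead it keeps $\widehat f$ as the penalized estimator and runs the Chen--Liao perturbation device, comparing $l_{NT}(\widehat f)$ with $l_{NT}(\widehat f+b_{NT}u_{NT}^*)$ for $b_{NT}=o((NT)^{-1/2})$ and showing the penalty contributions cancel only at the level of this one-sided inequality (Lemmas~\ref{lemma:general:asymptotoic:expansion} and~\ref{lemma:SNT:rate:3}). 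That machinery is more general---it would survive penalties that are locally smooth rather than locally flat---while your route is more elementary and sidesteps the pathwise-derivative framework and Condition~\ref{Condition:C1} entirely.

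The second divergence is where the technical weight sits. You carry the \emph{empirical} Riesz representer $\widehat v_{NT,j}^*$ through the normal equations and postpone the sample-to-population passage to Step~3, where it becomes an operator-norm bound on $V_{NT}^{-1}-V^{-1}$. The paper carries the \emph{population} representer $v_{NT,j}^*$ throughout, and the analogous cost appears as a stochastic-equicontinuity bound for $f\mapsto\langle u_{NT}^*,f-f_0\rangle_{NT}-\langle u_{NT}^*,f-f_0\rangle$ uniformly over the shrinking ball $\{f\in\Theta_{NT}^0:\|f-f_0\|_2\le C\gamma_{NT}\}$, proved by chaining with Bernstein increments (Lemmas~\ref{lemma:SNT:rate:2} and~\ref{lemma:SNT:rate:2:large:T}). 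Both devices consume the same polynomial-in-$h^{-1}$ rate conditions, and your diagnosis of where $h^{-3}=o(N)$, $h^{-5}=o(N^2)$, etc., are spent matches the paper's bookkeeping. One small correction: for the diverging-$T$ CLT the paper does not use a large-block/small-block scheme; it applies the cross-sectional Lyapunov CLT directly, controlling $\ev|w_i|^3$ via Yokoyama's moment inequality for $\alpha$-mixing partial sums, which is tidier than blocking.
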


Theorem \ref{thm:asymptotic:normal:multiple:together} establishes the joint asymptotic distribution of both the linear and nonlinear components of $\widehat{f}$, which includes estimators with different rate of convergence. \cite{sc13aos}, \cite{cs15} and \cite{dl18} also established similar joint asymptotic results in partially linear model. However, compared with their results, Theorem \ref{thm:asymptotic:normal:multiple:together} does not require the prior knowledge of linearity. The constant $m_*$ is the smallest degree of smoothness among all the $f_{j,0}$'s, which represents the effective smoothness of $f_0$. From Theorem \ref{thm:asymptotic:normal:multiple:together}, a necessary condition is $m_*>1.5$ for short panel and $m_*>2$ for large panel, which requires the underlying regression function needs to be enough smooth. If one is of more interest in the marginal distribution of each $\widehat{f}_j$,  Theorem \ref{thm:asymptotic:normal:marginal:together} below establishes the limit distribution of $\widehat{f}_{j}(z_{j,0})$ without Assumption \ref{Assumption:A5}, where $z_{j,0}\in [0,1]$ is fixed constant for $j=1,\ldots,p$.
\begin{theorem}\label{thm:asymptotic:normal:marginal:together}
Suppos $\lambda_{NT}\to 0$ and one of the following conditions is satisfied:
\begin{enumerate}
\item Assumptions \ref{Assumption:A1}, \ref{Assumption:common}, \ref{Assumption:A4} are valid and  $h^{-1}=o(N\lambda_{NT}^2)$, $h^{2m_*}=o(\lambda_{NT}^2)$, $h^{-3}=o(N)$, $h^{2m_*-2}=o(1)$, $Nh^{2m_*}=o(1)$;
\item  Assumptions \ref{Assumption:A2}, \ref{Assumption:common}, \ref{Assumption:A4} are valid and  $h^{-1}=o(T)$, $h^{-1}=o(NT\lambda_{NT}^2)$, $h^{2m_*}=o(\lambda_{NT}^2)$, $h^{-4}=o(NT)$, $h^{2m_*-3}=o(1)$, $h^{-5}=o(N^2)$, $h^{2m_*-4}T=o(N)$, $NTh^{2m_*}=o(1)$.
\end{enumerate}
Then with probability approaching one, the following holds:
\begin{align*}
	\frac{\sqrt{NT}(\widehat{f}_j(z_{j,0})-f_{j,0}(z_{j,0}))}{\|v_{NT,j}^*\|_{\textrm{sd}}}\cid \textrm{N}(0, (z_{j,0}-1/2)^2), \;\;\textrm{ for }\;\; j=1,\ldots, d,
\end{align*}
and 
\begin{align*}
	\frac{\sqrt{NT}(\widehat{f}_j(z_{j,0})-f_{j,0}(z_{j,0}))}{\|v_{NT,j}^*\|_{\textrm{sd}}}\cid \textrm{N}(0, 1),\;\;\textrm{ for }\;\; j=d+1,\ldots, p.
\end{align*}
where $z_{d+1,0},\ldots, z_{p,0}\in [0, 1]$.
\end{theorem}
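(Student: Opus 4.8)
The plan is to reduce $\widehat f$ to an ``oracle'' restricted least-squares estimator on the correctly specified sieve space $\Theta_{NT}^0$, and then run a central limit theorem on a weighted average of the errors. First, observe that under either rate regime the hypotheses of Theorem~\ref{thm:selection:consistency:together} are met (using $h_1=\dots=h_p=h$, $\sum_{j>d}h^{2m_j}\asymp h^{2m_*}$, and $h^{-3}=o(N)\Rightarrow h^{-2}=o(N)$), so $\widehat f\in\Theta_{NT}^0$ with probability tending to one. On that event the penalty term for $j\le d$ is evaluated at $0$, while for $j>d$ the rate bound of Theorem~\ref{thm:rate:of:convergence:together} together with Assumption~\ref{Assumption:common}\ref{Ac:a} and $\|\cdot\|_{NT}\to\|\cdot\|$ (Lemmas~\ref{lemma:expectation:sample:variance}, \ref{lemma:expectation:sample:variance:large:T}) forces $\|\widehat f_{j,\sim}\|_{NT}$ to stay bounded away from $0$, hence eventually larger than $\kappa\lambda_{NT}$, so the SCAD derivative $p'_{\lambda_{NT}}$ vanishes there. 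Consequently $\widehat f$ solves, with probability tending to one, the exact first-order conditions of the quadratic part of $l_{NT}$ restricted to $\Theta_{NT}^0$, i.e. $\langle \widehat f,g\rangle_{NT}=\langle f_0,g\rangle_{NT}+\tfrac{1}{NT}\bfg^\top M_H\bfepsilon$ for every $g\in\Theta_{NT}^0$, where $M_H$ annihilates the fixed effects $\alpha_i^0$ and $\bfg$ is the evaluation vector of $g$.

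Next, fix a deterministic spline near-interpolant $f_0^\ast\in\Theta_{NT}^0$ of $f_0$ which keeps the linear components $\beta_{j,0}(z_j-1/2)$, $j\le d$, exact and approximates each nonlinear $f_{j,0}$ to order $h^{m_j}$ in sup-norm (standard under Assumption~\ref{Assumption:common}\ref{Ac:b}, \ref{Ac:d}). Taking $g=\widehat v_{NT,j}^\ast$ --- which by construction in (\ref{eq:definition:hat:vnt}) is the $\langle\cdot,\cdot\rangle_{NT}$-Riesz representative of the functional $\mathcal{L}_j$ on $\Theta_{NT}^0$ --- gives the decomposition
\[
\mathcal{L}_j(\widehat f)-\mathcal{L}_j(f_0^\ast)=\langle f_0-f_0^\ast,\widehat v_{NT,j}^\ast\rangle_{NT}+\tfrac{1}{NT}\widehat\bfv_j^{\ast\top}M_H\bfepsilon,
\]
with $\mathcal{L}_j(f_0^\ast)=\beta_{j,0}$ for $j\le d$ and $|\mathcal{L}_j(f_0^\ast)-f_{j,0}(z_{j,0})|=O(h^{m_\ast})$ for $j>d$. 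For the first term, Cauchy--Schwarz together with $\|f_0-f_0^\ast\|_{NT}=O_P(h^{m_\ast})$ and $\|\widehat v_{NT,j}^\ast\|_{NT}=O_P(\|v_{NT,j}^\ast\|)$, where $\|v_{NT,j}^\ast\|_\textrm{sd}^2\asymp 1$ for $j\le d$ and $\asymp h^{-1}$ for $j>d$ (Lemmas~\ref{lemma:verify:condition:c1:parametric}), shows that after multiplying by $\sqrt{NT}$ and dividing by $\|v_{NT,j}^\ast\|_\textrm{sd}$ it is $o_P(1)$ exactly because of $Nh^{2m_\ast}=o(1)$ (short panel) or $NTh^{2m_\ast}=o(1)$ and $h^{-1}=o(T)$ (large panel); the same conditions kill the deterministic $O(h^{m_\ast})$ term for $j>d$.

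It remains to handle the stochastic term. I would first replace $\widehat v_{NT,j}^\ast$ by its population analogue $v_{NT,j}^\ast$ using a concentration bound for the sample Gram matrix, $\|V_{NT}-V\|_{\mathrm{op}}=o_P(1)$ with an explicit rate (this is where $h^{-3}=o(N)$, $h^{-4}=o(NT)$, $h^{-5}=o(N^2)$, $h^{2m_\ast-2}=o(1)$, etc.\ enter, and where the $\alpha$-mixing of Assumption~\ref{Assumption:A2}\ref{A2:b} is used in the large-$T$ case), which yields $\|\widehat v_{NT,j}^\ast-v_{NT,j}^\ast\|=o_P(\|v_{NT,j}^\ast\|)$ and hence $\tfrac{1}{NT}(\widehat\bfv_j^\ast-\bfv_j^\ast)^\top M_H\bfepsilon=o_P((NT)^{-1/2}\|v_{NT,j}^\ast\|_\textrm{sd})$. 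Then one shows $\tfrac{1}{\sqrt{NT}\,\|v_{NT,j}^\ast\|_\textrm{sd}}\bfv_j^{\ast\top}M_H\bfepsilon=\tfrac{1}{\sqrt{NT}\,\|v_{NT,j}^\ast\|_\textrm{sd}}\sum_{i=1}^N(M_H\bfv_j^\ast)_i^\top\bfepsilon_i\cid\textrm{N}(0,1)$: its conditional-on-$\mathbb{Z}$ variance concentrates around $\|v_{NT,j}^\ast\|_\textrm{sd}^2/\|v_{NT,j}^\ast\|_\textrm{sd}^2\to1$ by the eigenvalue condition Assumption~\ref{Assumption:common}\ref{Ac:a2} and the definition of $\langle\cdot,\cdot\rangle_\textrm{sd}$; for fixed $T$ the summands are independent across $i$ and a Lyapunov CLT applies with the fourth-moment bound Assumption~\ref{Assumption:A4}\ref{A4:a}, while for diverging $T$ each inner sum is first split by a big-block/small-block decomposition exploiting the geometric $\alpha$-mixing in Assumption~\ref{Assumption:A4}\ref{A4:c}, reducing to a Lyapunov CLT for the resulting triangular array. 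Finally, for $j\le d$ one multiplies through by $(z_{j,0}-1/2)$ to pass from $\widehat\beta_j-\beta_{j,0}$ to $\widehat f_j(z_{j,0})-f_{j,0}(z_{j,0})=(z_{j,0}-1/2)(\widehat\beta_j-\beta_{j,0})$, producing the variance $(z_{j,0}-1/2)^2$, whereas for $j>d$ one has $\mathcal{L}_j(\widehat f)=\widehat f_j(z_{j,0})$ directly and the normalized limit variance is $1$; Slutsky's theorem assembles the pieces.

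\textbf{Main obstacle.} The delicate step is the replacement of $\widehat v_{NT,j}^\ast$ by $v_{NT,j}^\ast$ in the large-$T$ regime: one needs an operator-norm rate for $V_{NT}-V$ (and for the analogous quantity entering $\|\cdot\|_\textrm{sd}$) sharp enough that $\widehat v_{NT,j}^\ast-v_{NT,j}^\ast$ is negligible relative to $\|v_{NT,j}^\ast\|_\textrm{sd}\asymp h^{-1/2}$ --- this is precisely what forces the long list of side conditions on $h,N,T$ --- together with the accompanying $\alpha$-mixing CLT and the concentration of the conditional variance. The approximation-error bookkeeping and the final Slutsky step are routine by comparison.
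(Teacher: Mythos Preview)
Your proposal is correct and takes a genuinely different route from the paper's. The paper does not work via first-order conditions for $\widehat f_{*,0}$ nor does it swap a sample Riesz representative for a population one. Instead it uses a perturbation-and-sandwich argument: with $u_{NT}^*=v_{NT,j}^*/\|v_{NT,j}^*\|_{\textrm{sd}}$ the \emph{population} Riesz representative, it compares $l_{NT}(\widehat f)$ with $l_{NT}(\widehat f\pm b_{NT}u_{NT}^*)$ for a sequence $b_{NT}=o((NT)^{-1/2})$, uses that the SCAD penalties cancel on the good event (your same observation), and after isolating the quadratic part obtains $\langle u_{NT}^*,\widehat f-f_0\rangle=-\Delta_{NT}(f_0)[u_{NT}^*]+o_P((NT)^{-1/2})$ directly in the \emph{population} inner product (Lemma~\ref{lemma:general:asymptotoic:expansion}). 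The price of the perturbation trick is a stochastic-equicontinuity term (controlling $\langle u_{NT}^*,\cdot\rangle_{NT}-\langle u_{NT}^*,\cdot\rangle$ uniformly over a shrinking ball), which the paper handles by chaining in Lemmas~\ref{lemma:SNT:rate:2} and~\ref{lemma:SNT:rate:2:large:T}; this is exactly where the list of side conditions on $h,N,T$ enters in the paper's argument. Your ``main obstacle'' --- the operator-norm replacement $\widehat v_{NT,j}^*\to v_{NT,j}^*$ --- is therefore sidestepped entirely by the paper's device, at the cost of the equicontinuity argument instead.

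What your route buys is transparency: the reduction $\widehat f=\widehat f_{*,0}$ on the oracle event is clean and the FOC computation with $\widehat v_{NT,j}^*$ is elementary. What the paper's route buys is that it never needs a quantitative rate for $V_{NT}-V$, only the qualitative equivalence already established in Lemmas~\ref{lemma:uniform:equivalence:empirical:population:norm} and~\ref{lemma:uniform:equivalence:empirical:population:norm:large:T}; you would need to check that the stated conditions --- calibrated for the equicontinuity bounds --- also suffice for your replacement step. One minor correction: in the diverging-$T$ CLT, a big-block/small-block decomposition of the inner $t$-sum is unnecessary. Since the $w_i$'s are independent across $i$ by Assumption~\ref{Assumption:A4}\ref{A4:b}, the paper simply bounds $\mathbb{E}|w_i|^3$ via a moment inequality for $\alpha$-mixing sequences (Yokoyama) and applies Lyapunov directly across $i$ (Lemma~\ref{thm:asymptotic:normal:multiple:large:T}).
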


The choice of homogeneous $h_j$'s in Theorems \ref{thm:asymptotic:normal:multiple:together}  and \ref{thm:asymptotic:normal:marginal:together}  is not only simple for presentation, but also it is  practically convenient. As discussed in Section \ref{sec:choice:turning:parameter}, homogeneous $h_j$'s will reduce the complexity of tuning parameter selection. For theoretical interest, we include the case of heterogeneous $h_j$'s in Appendix. 

\begin{Remark}
To apply Theorems \ref{thm:asymptotic:normal:multiple:together} and \ref{thm:asymptotic:normal:marginal:together}, one needs to estimate the unknown variance. We use the estimator proposed in \cite{sj12} to estimate the variance in the presence of heteroskedasticity and autocorrelation. To be more specific, for functions $u$ and $v$,  we define
\begin{eqnarray*}
	\widehat{\bfepsilon}_{i}=H(\bfY_i-\widehat{\bff}_i),\quad S_{ij}=\frac{1}{T}\sum_{t=j+1}^T u(\bfZ_{it})v(\bfZ_{i,t-j})\widehat{\epsilon}_{it}\widehat{\epsilon}_{i,t-j}\quad\textrm{ and }\quad S_i=S_{i0}+2\sum_{j=1}^{l_T}k_{Tj}S_{ij}.
\end{eqnarray*}
where $\widehat{\bff}_i=(\widehat{f}(\bfZ_{i1}),\ldots, \widehat{f}(\bfZ_{iT}))^\top$, $l_T$ is the window size, $k_{Tj}$ is a weight function such that $\sup_{j}|k_{Tj}|<\infty$ and $\lim_{T\to \infty}|k_{Tj}|=1$ for each $j$, and $\widehat{\epsilon}_{it}$ is the $t$-th element of $\widehat{\bfepsilon}_{i}$. By above notation, $\langle u, v\rangle_{\textrm{sd}}$ can be estimated by 
\begin{equation*}
	\widehat{\langle u, v\rangle}_\textrm{sd}=\frac{1}{N}\sum_{i=1}^N S_i.
\end{equation*}
Therefore, the unknown quantity $\langle v_{NT,j}^*, v_{NT,k}^*\rangle_{\textrm{sd}}$ can be estimated by
\begin{equation*}
	\widehat{\langle \widehat{v}_{NT,j}^*, \widehat{v}_{NT,k}^*\rangle}_{\textrm{sd}},
\end{equation*}
where $\widehat{v}_{NT,j}^*$ and  $\widehat{v}_{NT,k}^*$ are defined in (\ref{eq:definition:hat:vnt}).
\end{Remark}

\section{Practical Choice of Tuning Parameters}\label{sec:choice:turning:parameter}
{In this section we discuss}  how to determine tuning parameters $r_j$'s, $h_j$'s and $\lambda_{NT}$. {Motivated by two different objectives,}  we propose two distinct strategies to select $\lambda_{NT}$ for estimation and {for} linearity detection. For convenience, we simply choose each of the knots $\bft_{j, M_j}$, to be an uniform partition of $[0, 1]$ in practice when $h_j$'s are determined.

%It is worth mentioning that based on solution path, our proposed approach essentially only need to specify $h_j$'s only and selection of $\lambda_{NT}$ has been adapted to selection of the best model out of $p+1$ models obtained by solution path.  

%%%%%%

\subsection{Cross Validation}
Before proceeding further, we formally define $k$-fold cross validation procedure in the framework of panel data. Given positive integer $k\geq 2$, $N$ individuals are randomly separated into $k$ disjointed groups and let $I_1, \ldots, I_k$ be the corresponding sets of indexes with $N_1,\ldots, N_k$ elements, respectively. By this notation, it follows that $I_1,\ldots, I_k$ is a partition of $\{1,\ldots, N\}$. Moreover, we denote $I_s^c$ as the compliment of $I_s$ for $s=1,\ldots, k$ and $\theta=(r_1,\ldots, r_p, h_1,\ldots, h_p, \lambda_{NT})$ as the tuning parameters. Given $\theta$, we set $\widehat{f}_{I_s^c,\theta}$ to be the penalized estimator based on observations $\{(\bfY_i, \bfW_i), i\in I_s^c\}$ and tuning parameter $\theta$. The cross validation value is defined as follows,
\begin{align}
	\textrm{CV}(\theta)&=\textrm{CV}(r_1,\ldots, r_p, h_1,\ldots, h_p, \lambda_{NT})\nonumber\\
	&=\sum_{s=1}^k\frac{1}{N_sT}\sum_{i\in I_s}\sum_{t=1}^T\bigg[Y_{it}-\widehat{f}_{I_s^c,\theta}(\bfX_{it})-\frac{1}{T}\sum_{l=1}^T\bigg(Y_{il}-\widehat{f}_{I_s^c,\theta}(\bfX_{il})\bigg)\bigg]^2.\label{eq:definition:cv}
\end{align}
Based on (\ref{eq:definition:cv}), the optimal tuning parameter ${\theta}_{\textrm{opt}}$ is defined as the minimizer of $\textrm{CV}(\theta)$  among several candidates, i.e., 
\begin{align}
	{\theta}_{\textrm{opt}}=\argmin_{\theta}\textrm{CV}(\theta),\label{eq:theta:opt}
\end{align}
where the minimum is taken over some pre-specified values. The procedure in (\ref{eq:theta:opt}) is called $k$-fold cross validation, which {provides a}  powerful tool {for choosing the}  tuning parameters with solid theoretical {justifications} , see \cite{a91}, \cite{h14b} and \cite{y07}. Other methods for empirical {choices}  of $k_j$'s and $h_j$'s in the framework of sieve estimator can be found in  \cite{ho14} and \cite{cc18}.

%%%%%%%%

\subsection{Determination of $k_j$'s and $h_j$'s}
 In sieve estimation, the choices of $k_j$'s and $h_j$'s  {play}  essential roles in the estimation accuracy.  For example,  Assumption \ref{Assumption:common}.\ref{Ac:d} specifies lower bounds on $r_j$'s, while Theorem \ref{thm:rate:of:convergence:together} implies that if $h_j\asymp (NT)^{-\frac{1}{2m_j+1}}$ for $j=d+1,\ldots, p$, the rate of convergence for $\widehat{f}$ will be improved and  a more accurate estimation is obtained. The procedure in (\ref{eq:theta:opt}) to determine $k_j$'s and $h_j$'s also needs to specify $\lambda_{NT}$ simultaneously, which is inconvenient in practice {as it involves} too many parameters. To address this concern, we use cross validation criterion based on non-penalized estimator for the choices  of $k_j$'s and $h_j$'s. To be more specific, {their optimal choices,} $k_{j,\textrm{opt}}$'s and $h_{j,\textrm{opt}}$'s are defined as follows,
\begin{align}
	(k_{1,\textrm{opt}},\ldots, k_{p,\textrm{opt}},\ldots, h_{1,\textrm{opt}},\ldots, h_{p,\textrm{opt}})=\argmin_{k_1,\ldots, k_p, h_1,\ldots, h_p}\textrm{CV}(k_1,\ldots, k_p, h_1,\ldots, h_p, 0).\label{eq:kj:hj:opt}
\end{align}
The cross validation procedure in (\ref{eq:kj:hj:opt}) is motivated by Theorem \ref{thm:rate:of:convergence:together}, since the rate of convergence is the same regardless of the penalty.

%%%%%%%

\subsection{Determination of $\lambda_{NT}$}
After selecting $k_j$'s and $h_j$'s,  we may choose $\lambda_{NT}$ in three distinct ways different purposes.

For estimation, a similar procedure as (\ref{eq:theta:opt}) is recommended. Specifically, given pre-determined $k_j$'s and $h_j$'s, the optimal $\lambda_{NT, \textrm{opt}}$ is selected as follows,
\begin{align}
\lambda_{NT,\textrm{opt}}=\argmin_{\lambda_{NT}}\textrm{CV}(k_1,\ldots, k_p, h_1,\ldots, h_p, \lambda_{NT}),\label{eq:lambda:opt}
\end{align}
with the minimum taken over some pre-determined candidates of $\lambda_{NT}$.

However, for linearity detection, we propose  a practically convenient approach to select a model based on solution path without determining $\lambda_{NT}$. By Corollary \ref{corollary:selection:consistency:soluton:path}, the solution path will select $p+1$ models, {in which one} correctly identifies all the linear components. Therefore, it is natural for us to perform model selection among these $p+1$ candidates. For $\nu=1,\ldots, p+1$, let $J_\nu \subset \{1,\ldots, p\}$ be the set of indexes of linear components selected by $\nu$-th model along the solution path. We further define the function space
\begin{align}
	\widetilde{\Theta}_{J_\nu}=\bigg\{f(\bfx)=\sum_{j=1}^pf_j(z_j)\in \Theta_{NT}\;|\; f_j(z)=\beta_j(z-1/2) \textrm{ for some } \beta_j \in \mathbb{R} \textrm{ and all } j\in J_\nu\bigg\},\nonumber
\end{align}
and non-penalized estimator for $k$-fold cross validation
\begin{align}
	\widehat{f}_{I_s^c,J_\nu}=\argmin_{f\in \widetilde{\Theta}_{J_\nu}}\frac{1}{NT}\sum_{i\in I_s^c}\sum_{t=1}^T\bigg[Y_{it}-f(\bfX_{it})-\frac{1}{T}\sum_{l=1}^T\bigg(Y_{il}-f(\bfX_{il})\bigg)\bigg]^2, \textrm{ for } s\in [k],\; \nu \in [p+1]. \nonumber
\end{align}
Similar to (\ref{eq:theta:opt}), we propose following procedure to identify linearity based on $k$-fold cross validation,
\begin{align}
	\widehat{J}_{\textrm{CV}}=\argmin_{J_\nu}\sum_{s=1}^k\frac{1}{N_sT}\sum_{i\in I_s}\sum_{t=1}^T\bigg[Y_{it}-\widehat{f}_{I_s^c,J_\nu}(\bfX_{it})-\frac{1}{T}\sum_{l=1}^T\bigg(Y_{il}-\widehat{f}_{I_s^c,J_\nu}(\bfX_{il})\bigg)\bigg]^2.\label{eq:opt:J:nu}
\end{align}
The procedure in (\ref{eq:opt:J:nu})  is completely data-driven without {the  need to choose} $\lambda_{NT}$. Based on the solution  path, other information {criteria, such as such AIC or BIC,} also can be applied to conduct model selection, see  \cite{h14b} and \cite{b06}.

To conduct valid statistical inference,  $\lambda_{NT}$ is selected based on the solution path and $\widehat{J}_{CV}$ defined in (\ref{eq:opt:J:nu}). First,  based on the solution path, we find the values of $\lambda_{NT}$ resulting in the model with indexes of linear components being $\widehat{J}_{CV}$. Then the turning parameter $\lambda_{NT,\textrm{inf}}$ is chosen to be the smallest one among these values. 
%the choice of $h_j$'s is less challenging comparing to estimation. To be more specific, in Theorem \ref{thm:selection:consistency}, when the order of $h_j$'s are running within certain range, selection consistency is still preserved. Therefore, in linearity detection, it is not necessary to find the “best" $h_j$'s and  reasonably good" $h_j$'s are adequate. 

%For estimation purpose, a popular and effective method to determine $h_j$'s is cross-validation(CV), see \cite{a91}, \cite{h14b} and \cite{y07}. Based on CV, each $h_j$ is tuned individually to achieve a better estimation. As discussed above, for linearity detection, we simply set $h_1=h_2=\ldots=h_p=h$. By doing so, we reduce the number of tuning parameters from $p$ to one, which can also be selected by CV or other information criteria such as AIC or BIC. 

%Fixing knots $\bft_{j, M_j}$'s and tuning parameters $h_j$'s, we will discuss a practically convenient approach to choice $\lambda_{NT}$ based on solution path.  Many information criteria can be applied to conduct the model selection, to name a few, CV, AIC or BIC.

%%%%%%

\section{Simulation}\label{sec:simulation}
To evaluate the finite sample performance of {the proposed estimation and selection procedure}, we consider the following data generating process,
\begin{align}
	y_{it}=f_1(z_{it1})+f_2(z_{it2})+f_3(z_{it3})+f_4(z_{it4})+\alpha_i+\epsilon_{it}.\nonumber
\end{align}
The {functional forms of the underlying} regression functions are specified as follows
\begin{align*}
	f_1(z)=2z,\;\; f_2(z)=3z,\;\;f_3(z)=z+r\sin(6z),\;\; f_4(z)=z+r\beta_{6,9}(z),
\end{align*} 
with the first two $f_j$'s  being linear and the last two being nonlinear functions whose degree of nonlinearity is controlled by a factor of $r$. The function $\beta_{6,9}(z)$ is density of the beta distribution with parameters $(6, 9)$. The fixed effect $\alpha_i$'s and the idiosyncratic error $\epsilon_{it}$'s are i.i.d standard normal random variables across $i$ and $t$. The regressors $z_{itj}, j=1,2,3,4$ are generated as follows, (a)$\{u_{itj}, i\in [N], t\in [T], j\in [p]\}$ are i.i.d uniform random variables on $[0, 1]$;  (b) $z_{it1}=u_{it1}+\alpha_i$; (c) for $j=2,3$, $z_{itj}=u_{itj}+\delta_i$, with $\delta_i$'s being i.i.d  standard normal random variables; (d) $z_{it4}=u_{it4}$.  For the sample size and degree of nonlinearity, we consider all combinations of $(N, T, r)$ with $N=(50, 100, 200)$, $T=(3, 10, 50)$ and $r=(0.01, 0.1, 0.2, 0.5, 1)$, which include both short and large panel settings with weak and strong nonlinearity. The number of replication is set to be $R=500$. For convenience, we choose the degree of polynomial spline $k_j=3$ for $j\in [4]$ and set $h_1=h_2=h_3=h_4=h$ with $h^{-1}$  determined by $5$-fold cross validation among $\{\ceil[\big]{c(NT)^{1/4}}+2, c=0.3, 0.4,\ldots, 2\}$.

In the following, we {consider} three numerical experiments to study the finite sample performance of the proposed procedure.

\noindent \textbf{Experiment 1}: For the estimation, the estimator $\widehat{f}(\bfx)=\sum_{j=1}^T\widehat{f}_j(z_j)$ is evaluated
using the root mean squared error (RMSE) defined as
\begin{align*}
	\textrm{RMSE}=\sqrt{\frac{1}{NT}\sum_{j=1}^p\sum_{i=1}^N\sum_{t=1}^T\bigg[\widehat{f}_j(z_{itj})-\bigg(f_j(z_{itj})-\int_0^1 f_j(z)dz\bigg)\bigg]^2}.
\end{align*}
An integration term is added in above equation, since $\widehat{f}_j$ essentially is the estimator of $f_j-\int_0^1 f_j(z)dz$ (see Assumption \ref{Assumption:common}.\ref{Ac:a}). A sequence of $\lambda_{NT}$'s in $[0, 1]$ are used in the experiment to obtain the estimator. In particular, $\lambda_{NT}=0$ results in a non-penalized estimator.

\noindent \textbf{Experiment 2}:  For linearity detection, we generate the solution path  along a sequence of $\lambda_{NT}$'s with $\log(\lambda_{NT})=\{-6,-5.9,\ldots, 0.9, 1\}$. Four different proportions are calculated among 500 replications, namely, proportion of solution path containing the correct model and proportions of correct linearity detection from solution path based on $5$-fold cross validation score (CV), AIC and BIC, respectively.
%\begin{enumerate}[label=Experiment {{\arabic*}}:,ref=(\alph*),leftmargin=*]
%\item 
%\item
%\end{enumerate} 

\noindent \textbf{Experiment 3}: To study the asymptotic normality of proposal estimator, we consider the setting with $r=1$. For $z_0=(0, 0.25, 0.5, 0.75, 1)$, we construct the point wise confidence intervals for $f_{j,0}(z_0)$ based on Theorem \ref{thm:asymptotic:normal:marginal:together}. We calculate the percentages of the ground truth $f_{j,0}(z_0)$ falling in the 95\% confidence intervals.

Figure \ref{figure:simulation:mse} reports the RMSE of the proposed estimator with different sample sizes and degrees of nonlinearity. Some interesting findings can be observed in Figure \ref{figure:simulation:mse}. Firstly when varying $\lambda_{NT}$, for cases that $r=0.5, 1$ with strong nonlinearity, RMSE decreases and then increases, while for cases $r=0.01, 0.1$ with weak nonlinearity, RMSE decreases and then stays the same regardless of the sample size. For the case with moderately strong linearity, namely $r=0.2$, with small sample size $N=50, T=3$, RMSE follows a similar pattern as that of weak linearity cases, while with other sample sizes, there is a decrease on RMSE when $\lambda_{NT}$ varying from $0$ to $0.17$ and followed by a slight growth when $\lambda_{NT}$ increasing from $0.17$ to $0.2$. With larger $\lambda_{NT}$, the RMSE remains the same. Secondly, with larger $\lambda_{NT}$, RMSE stays at the same level regardless of $\lambda_{NT}$ in each case except $r=1$. Thirdly, for $\lambda_{NT}\geq 0.4$, the RMSE increases as the degree of nonlinearity becomes larger. Finally, with an appropriate choice of $\lambda_{NT}$, a penalized estimator can outperform nonpenalized estimator in terms of  RMSE. For linearity detection, Figure \ref{figure:simulation:proportion} reveals that with stronger nonlinearity, all the procedures are more likely to perform a correct linearity detection. In particular, when $r=1$, the solution path will contain the correct model in all replications except when sample size is small, $N=50, T=3$, which confirms the validity of Corollary \ref{corollary:selection:consistency:soluton:path}. Moreover, among three criteria for model selection, BIC and CV score can effectively choose the true model when $r$ is large, while AIC and CV score work better for small $r$. Figure \ref{figure:simulation:CI} reports the coverage rates of the 95\% confidence intervals for $f_{j,0}(z_0)$. It is worth mentioning that, for the linear components $f_{1,0}$ and $f_{2,0}$, the coverage rates are almost $100\%$ when $z_0=1/2$. This is due to (\ref{eq:linear:nonlinear:definition}) that $\widehat{f}_j(1/2)=0$ if $\widehat{f}_j$ is estimated as a linear function. In general, the coverage rate will approach to 95\% when $N$ becomes larger or both $N$ and $T$ become larger.

% Please add the following required packages to your document preamble:
% \usepackage{multirow}
\begin{figure}[htp]
\centering
\includegraphics[width=2 in, height=2 in]{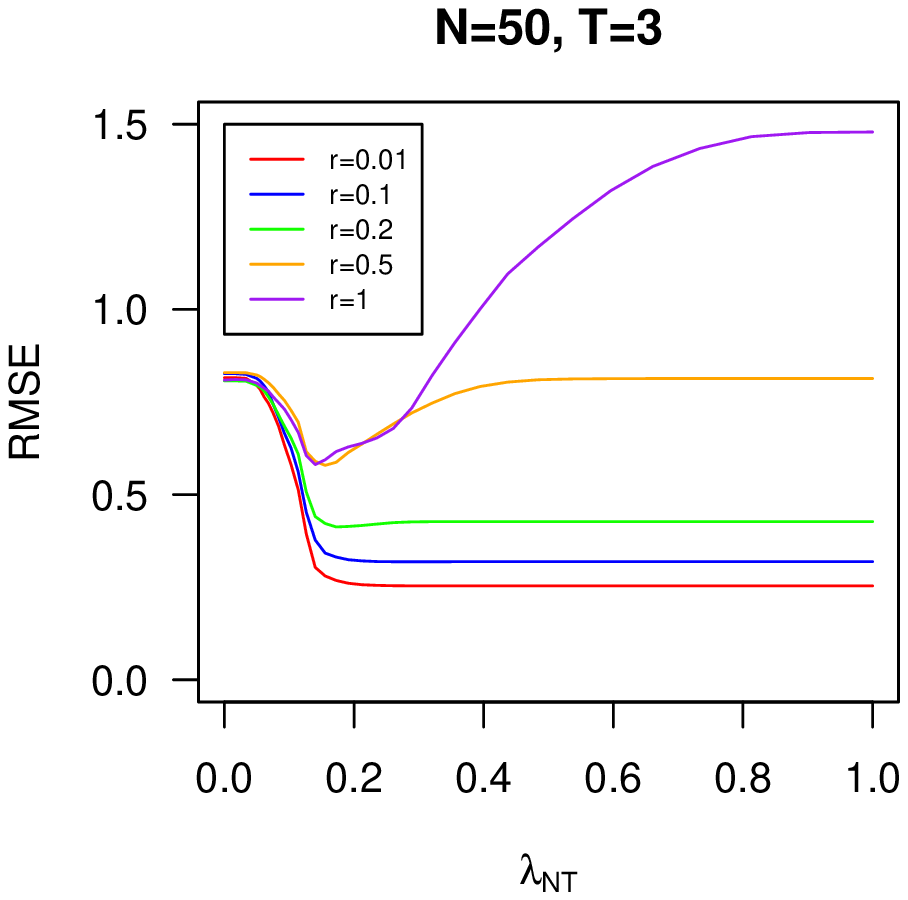}
\includegraphics[width=2 in, height=2 in]{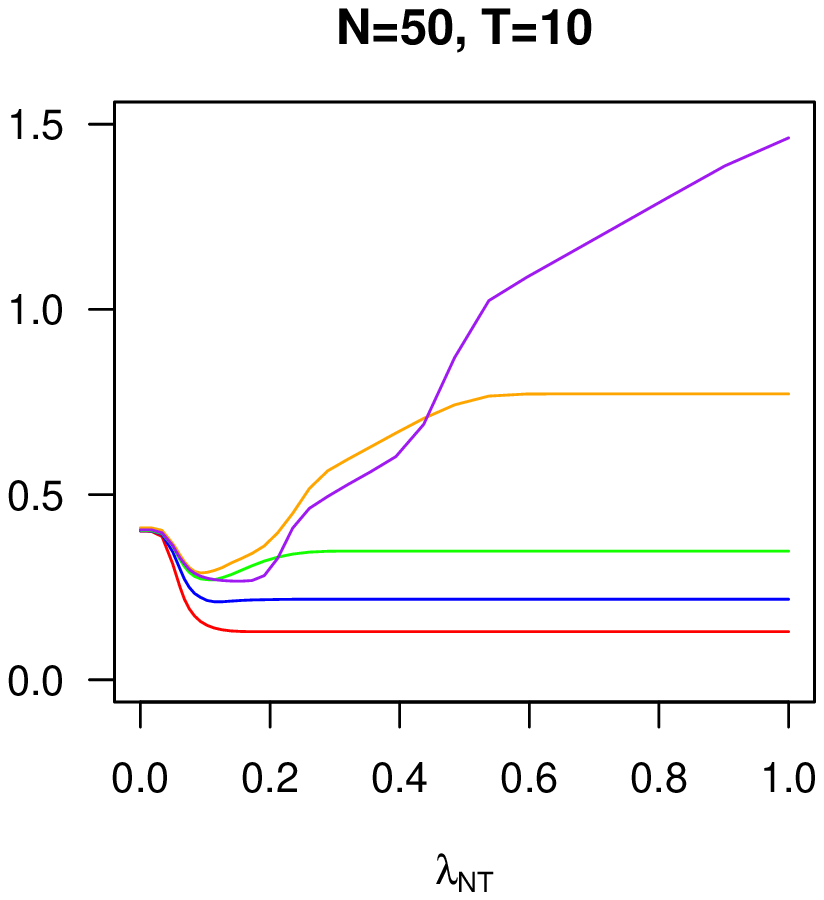}
\includegraphics[width=2 in, height=2 in]{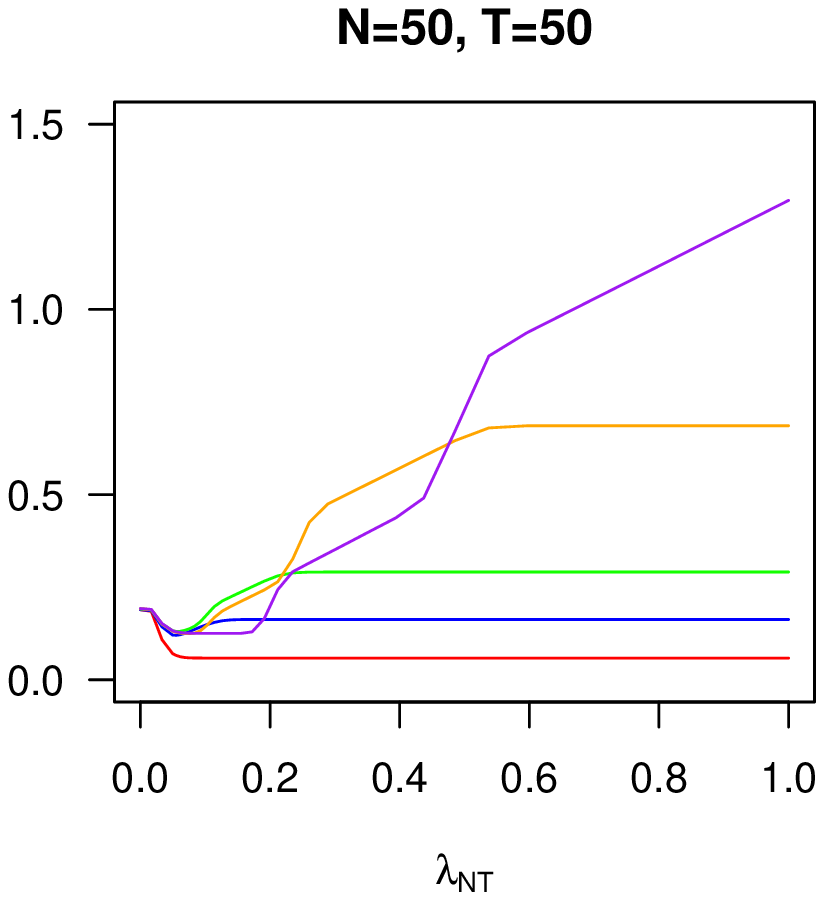}
\includegraphics[width=2 in, height=2 in]{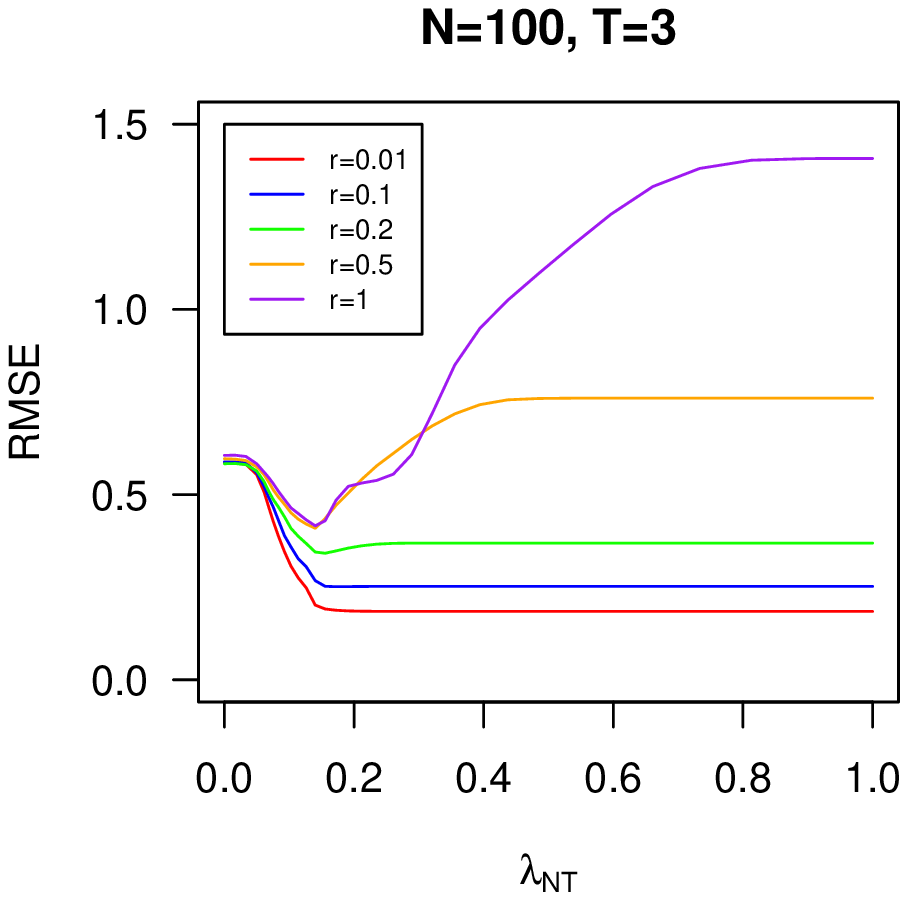}
\includegraphics[width=2 in, height=2 in]{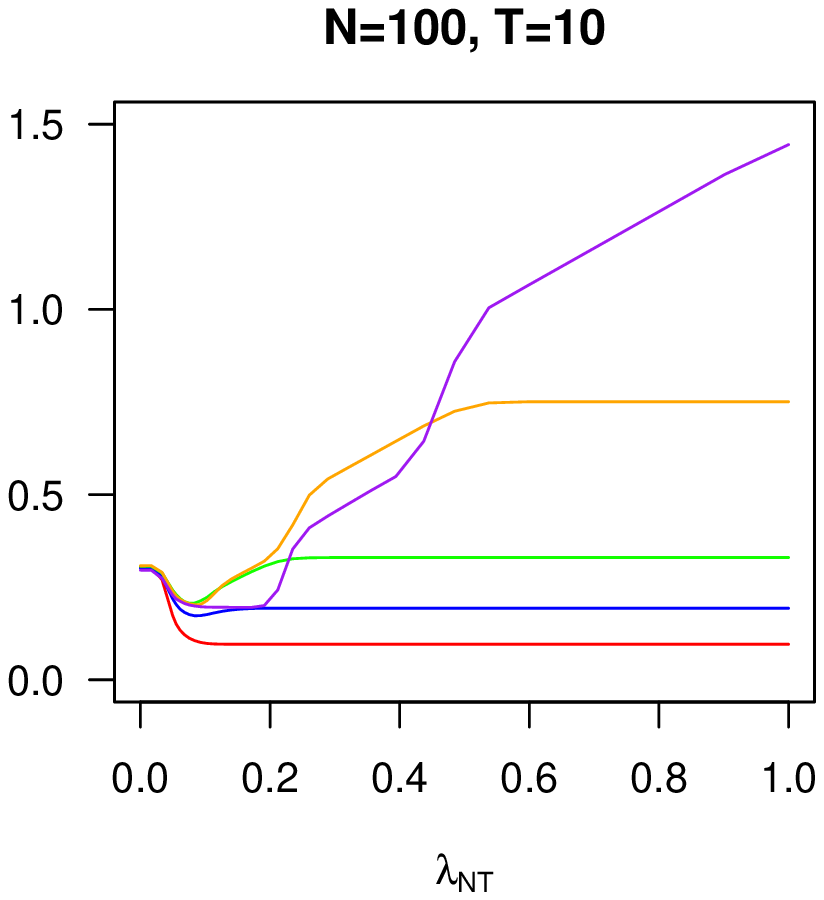}
\includegraphics[width=2 in, height=2 in]{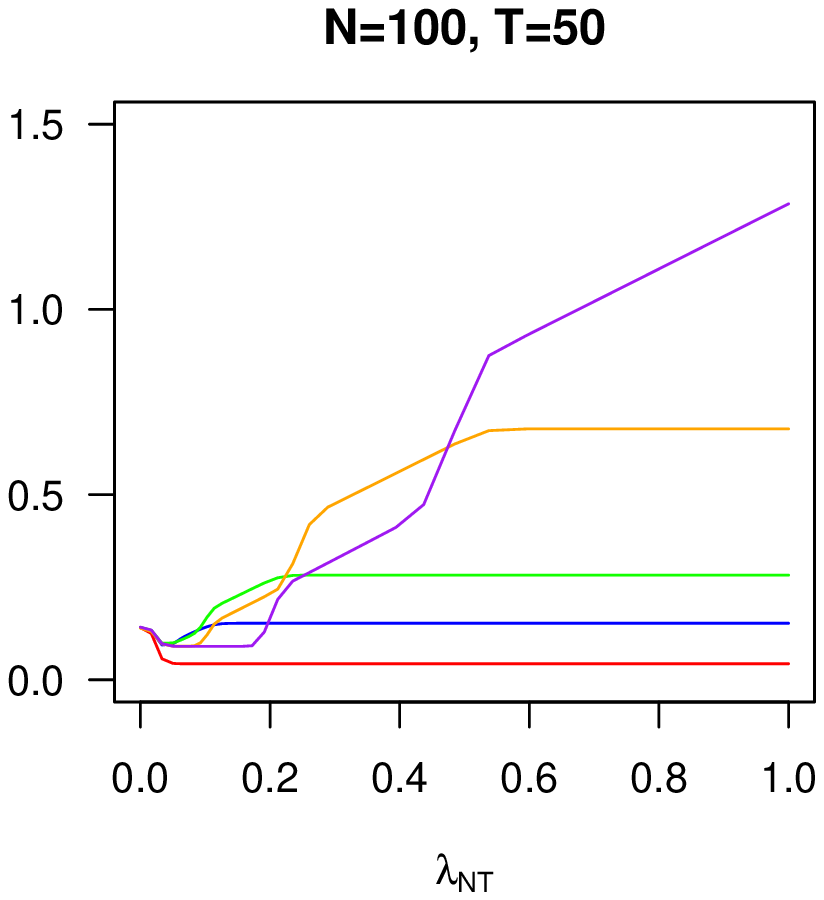}
\includegraphics[width=2 in, height=2 in]{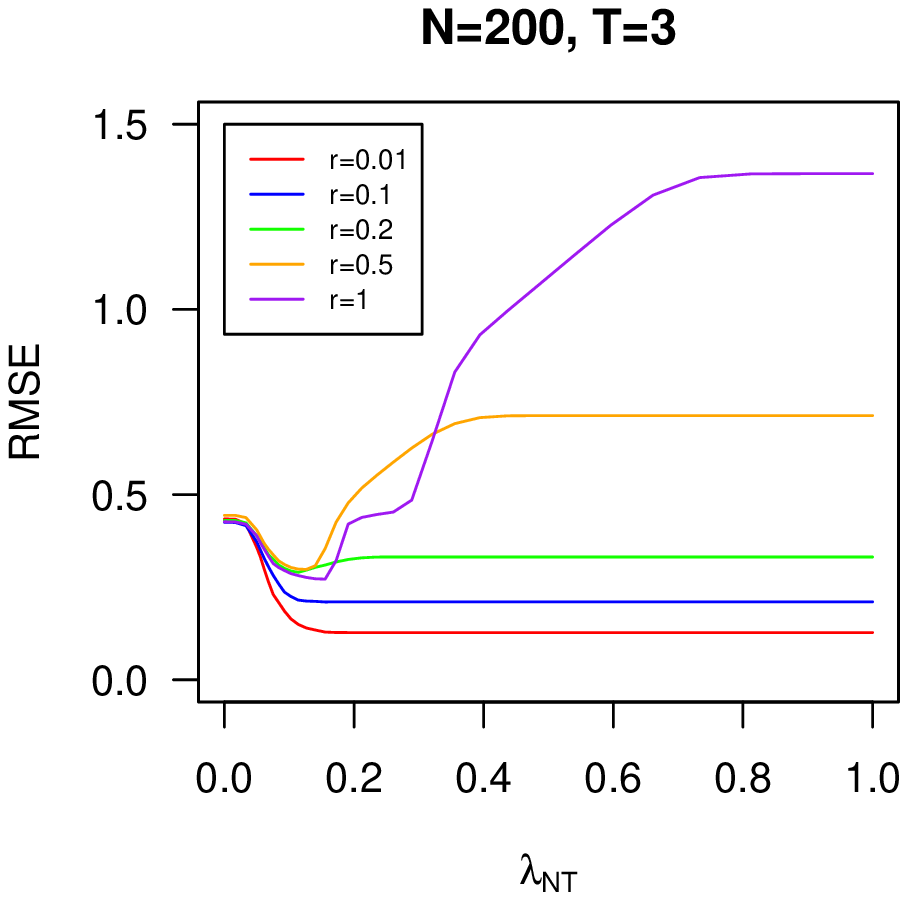}
\includegraphics[width=2 in, height=2 in]{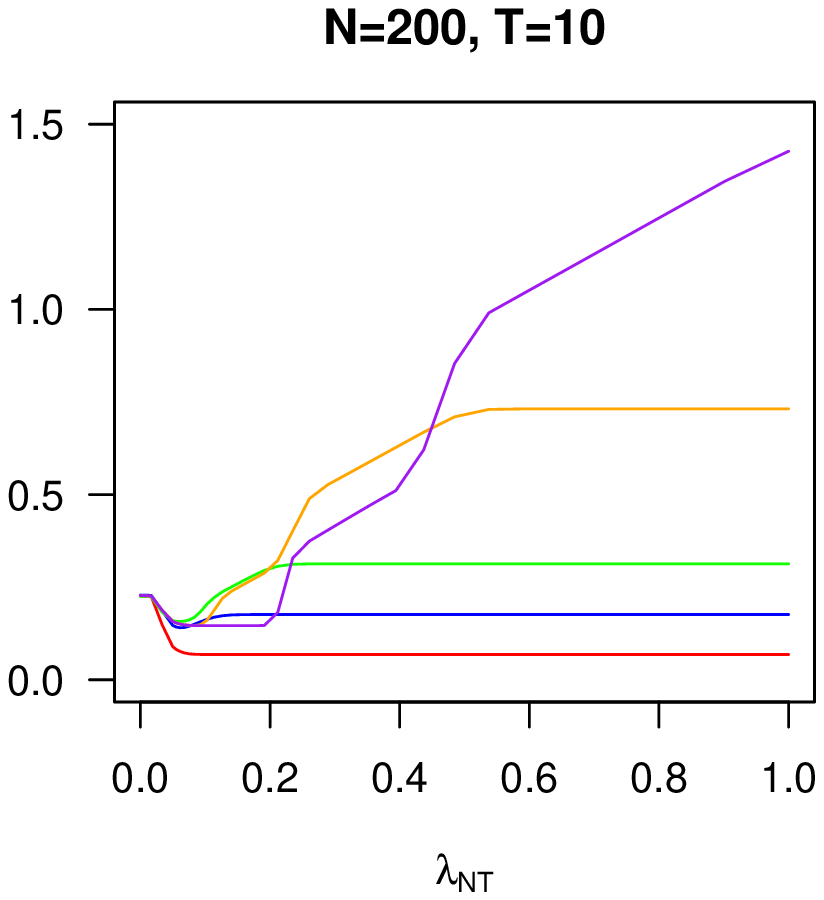}
\includegraphics[width=2 in, height=2 in]{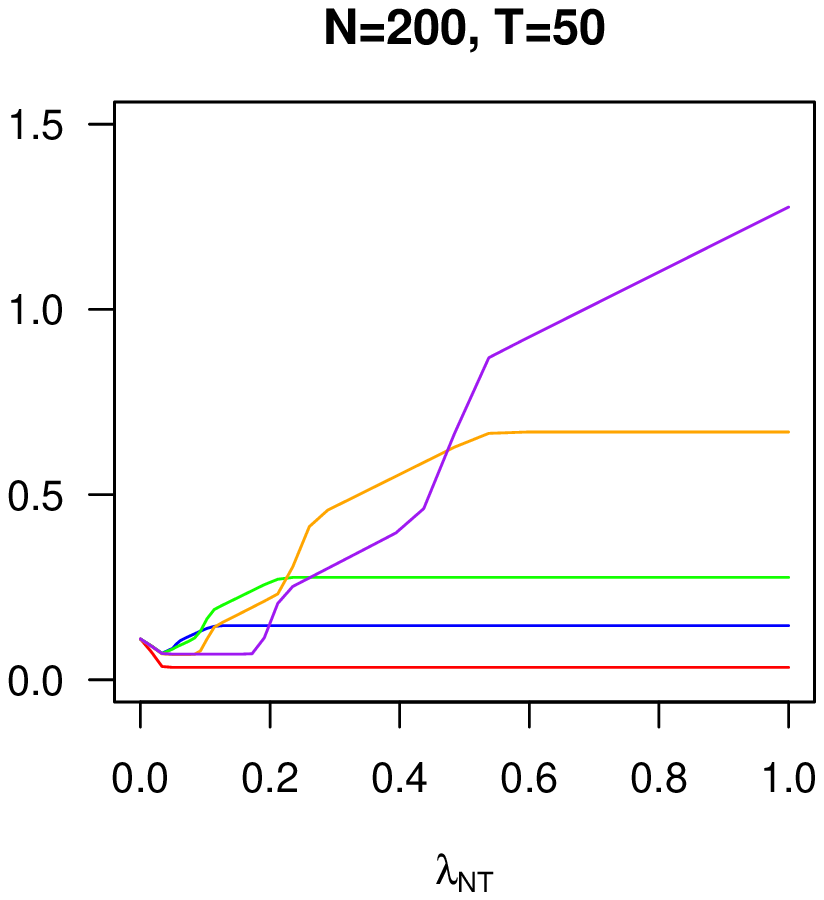}
\caption{RMSE of penalized estimator with different sample sizes, degrees of nonlinearity and tuning parameter $\lambda_{NT}$.}
\label{figure:simulation:mse}
\end{figure}

\begin{figure}[htp]
\centering
\includegraphics[width=2 in, height=2 in]{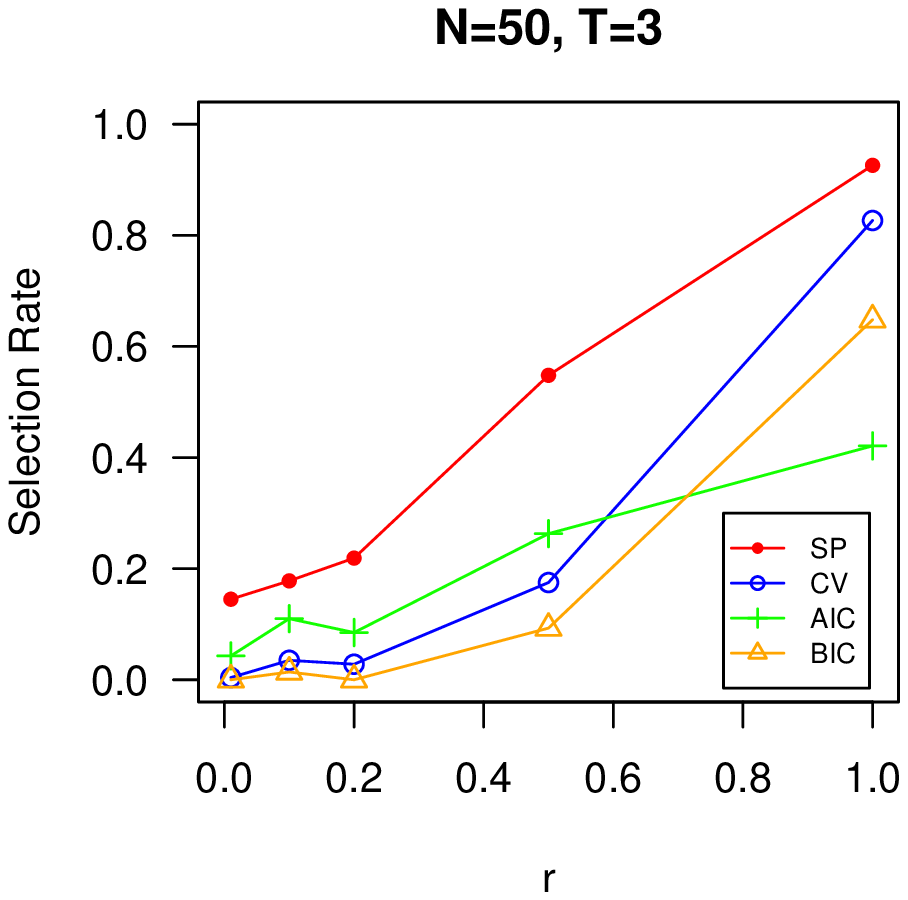}
\includegraphics[width=2 in, height=2 in]{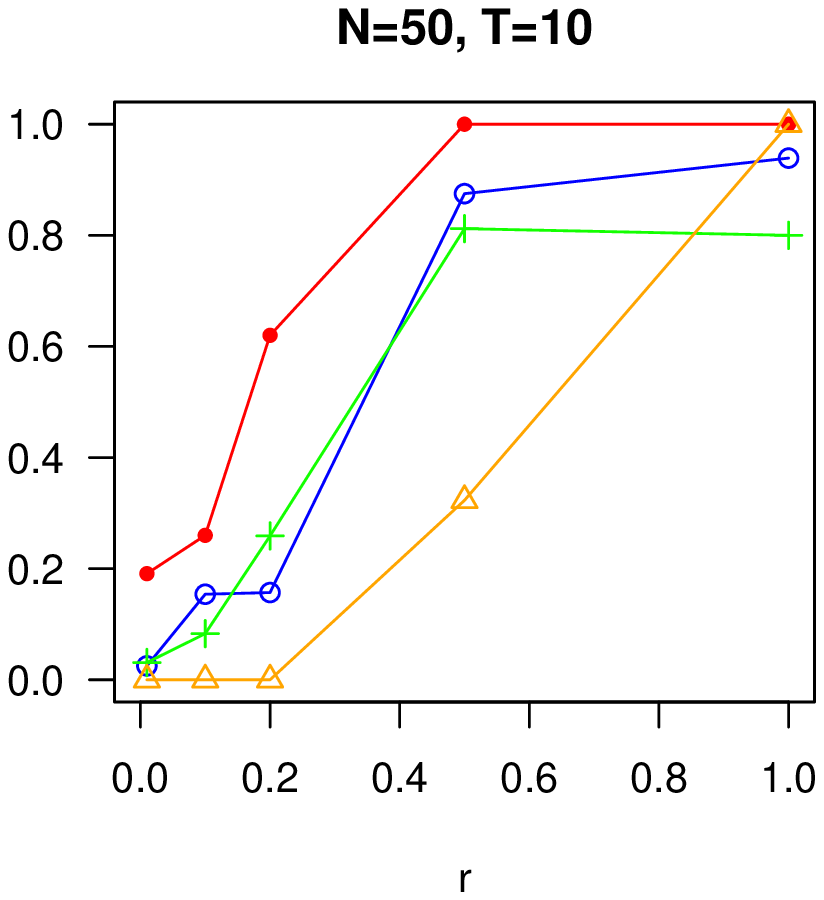}
\includegraphics[width=2 in, height=2 in]{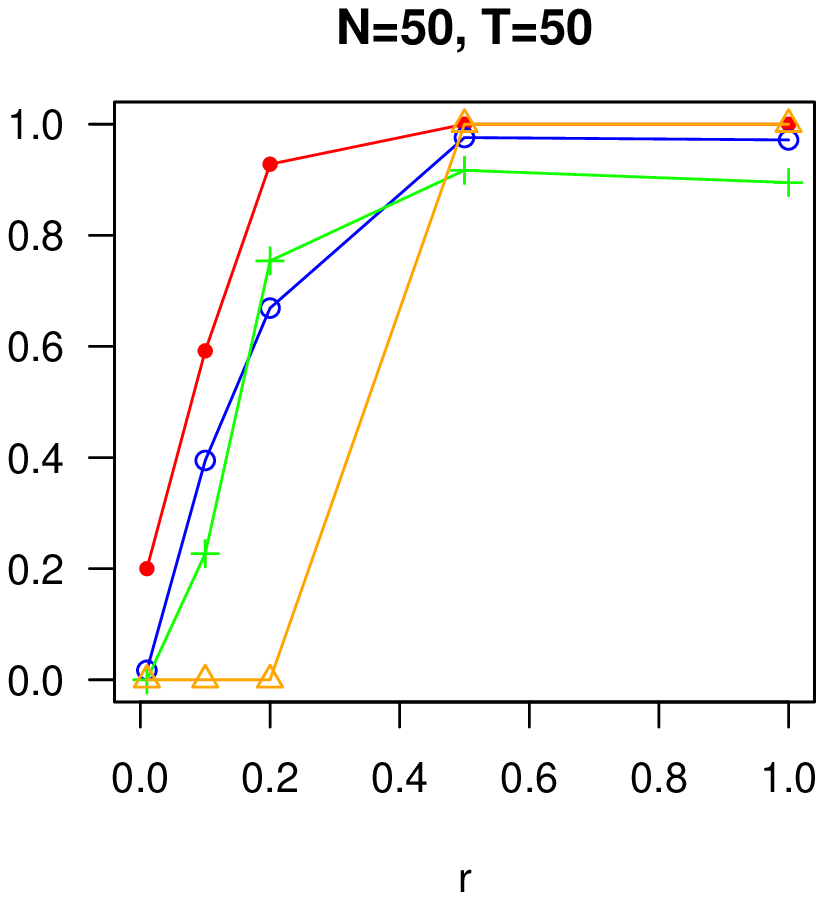}
\includegraphics[width=2 in, height=2 in]{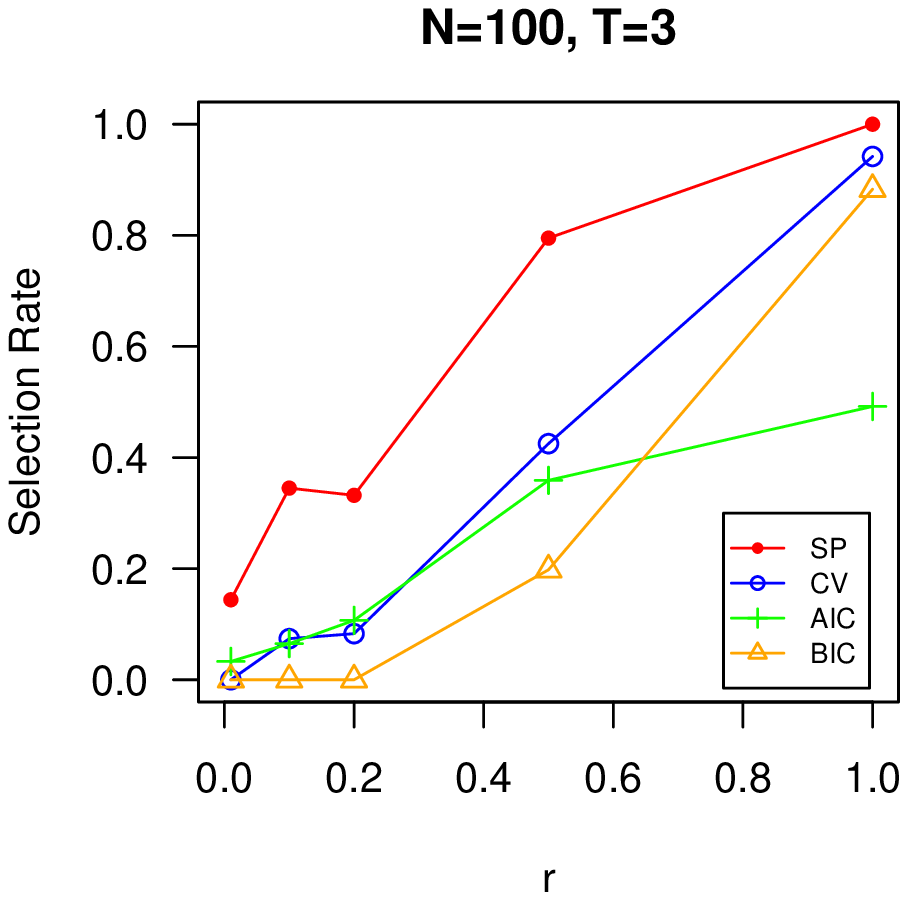}
\includegraphics[width=2 in, height=2 in]{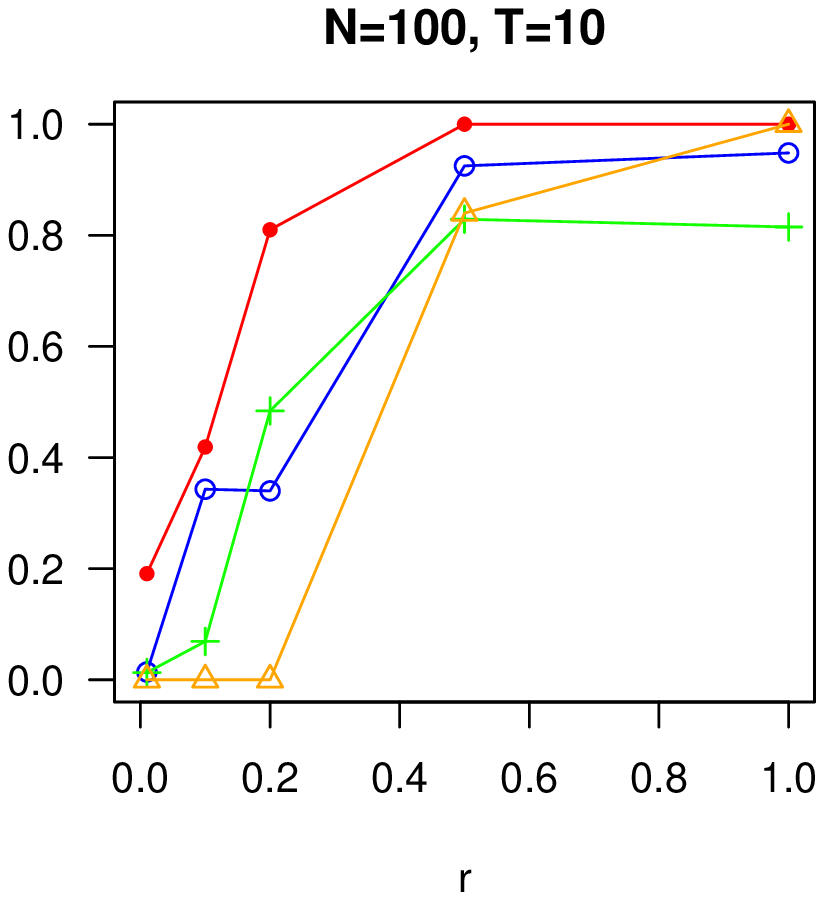}
\includegraphics[width=2 in, height=2 in]{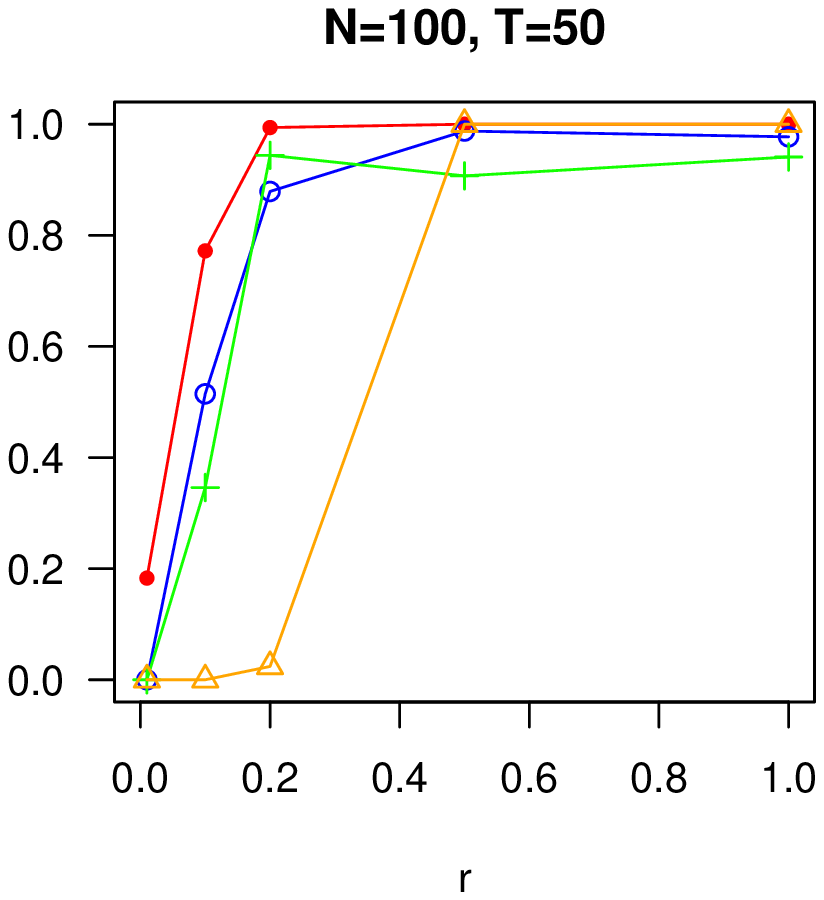}
\includegraphics[width=2 in, height=2 in]{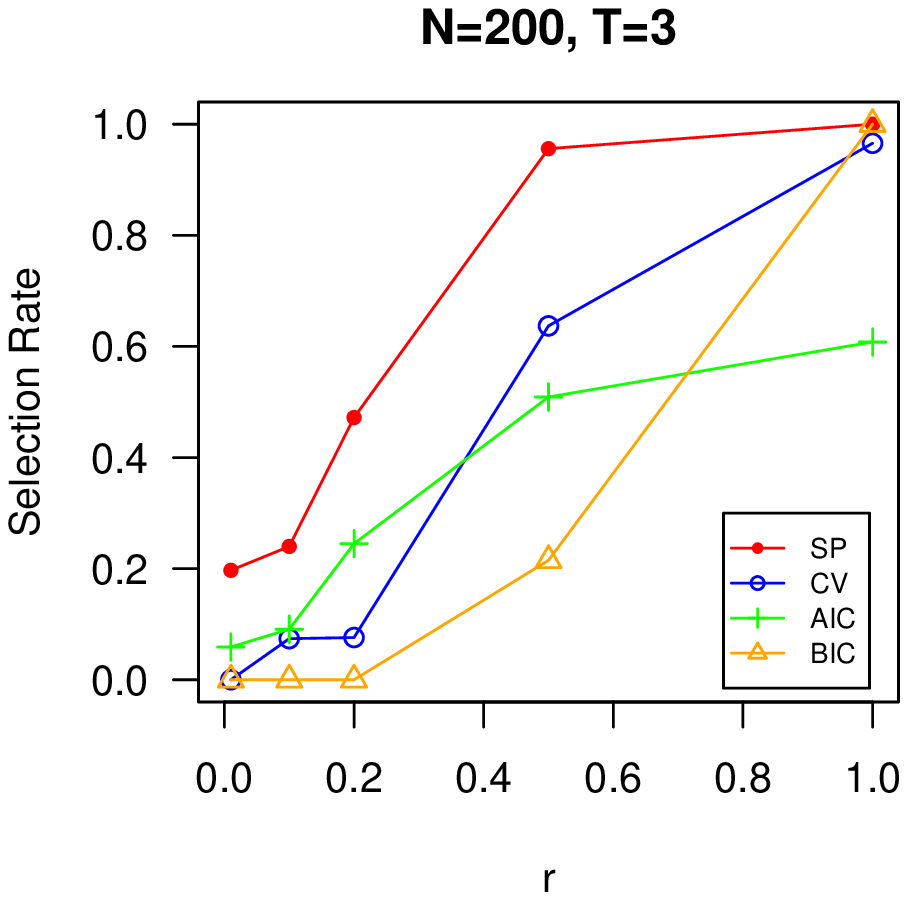}
\includegraphics[width=2 in, height=2 in]{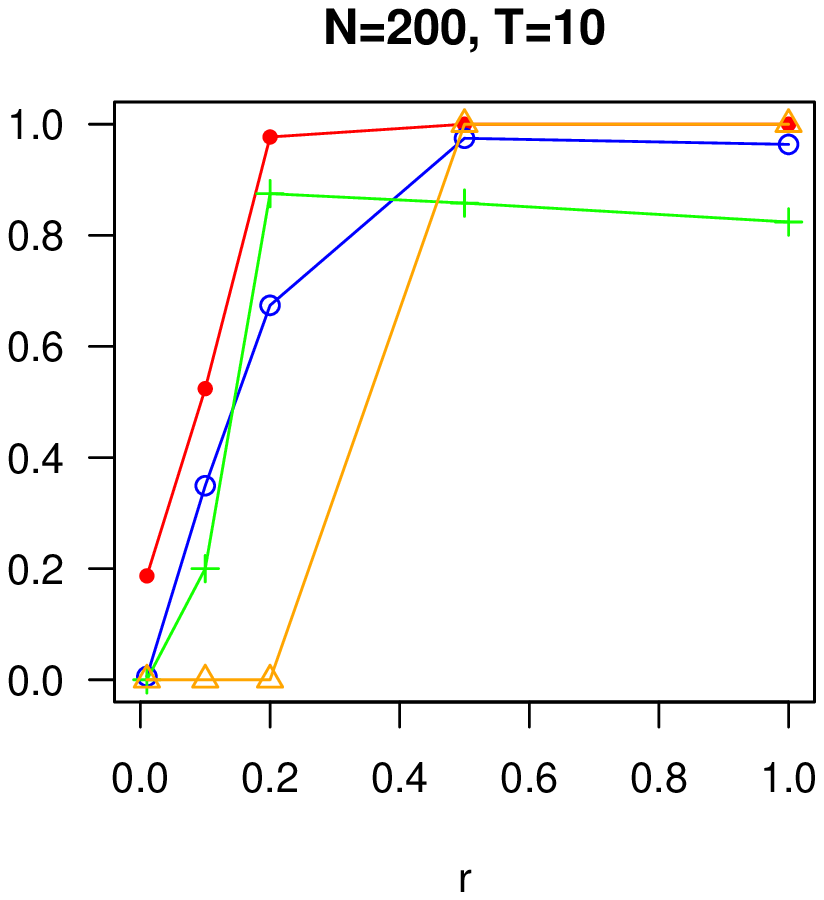}
\includegraphics[width=2 in, height=2 in]{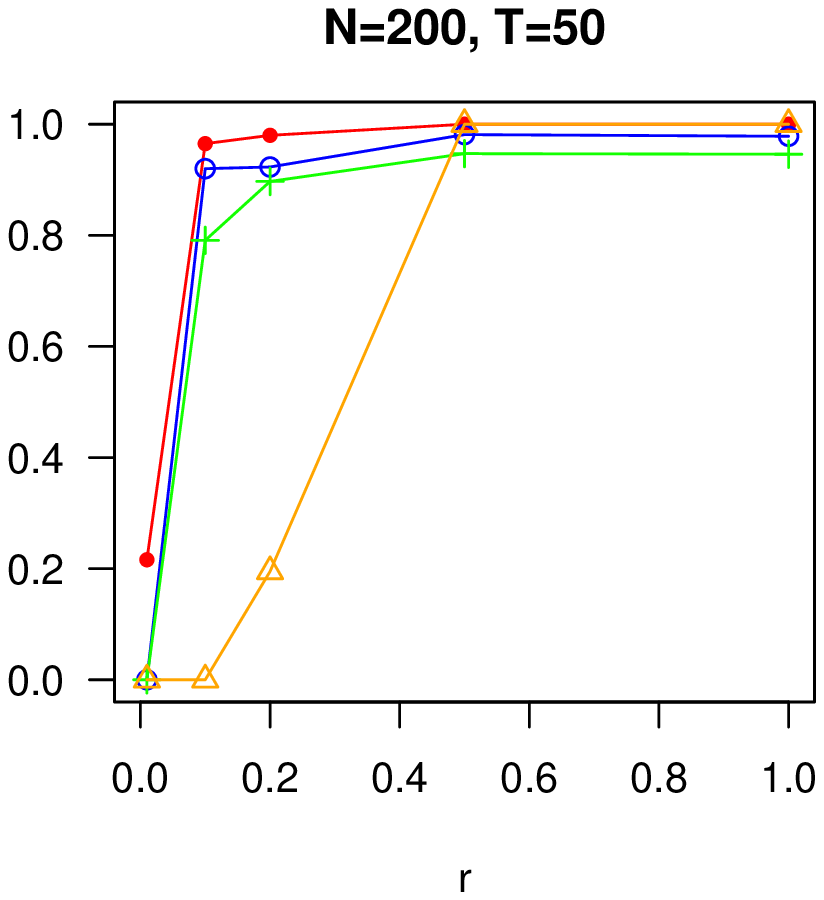}
\caption{Proportion of correct linearity identification with different sample sizes and degrees of nonlinearity. SP stands for the proportion of solution path containing the correct model. CV, AIC and BIC are the proportions of correct model selection from solution path based on 5-fold cross-validation, AIC and BIC, respectively.}
\label{figure:simulation:proportion}
\end{figure}

\begin{figure}[htp]
\centering
\includegraphics[width=2in,height=2in]{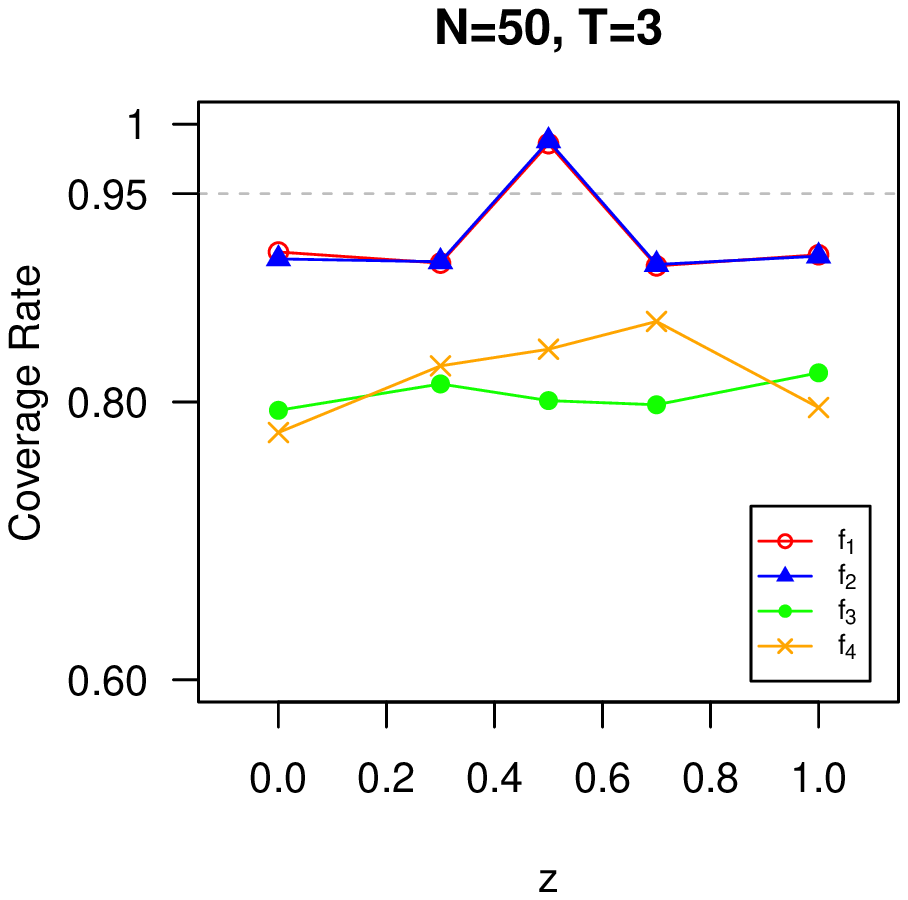}
\includegraphics[width=2in,height=2in]{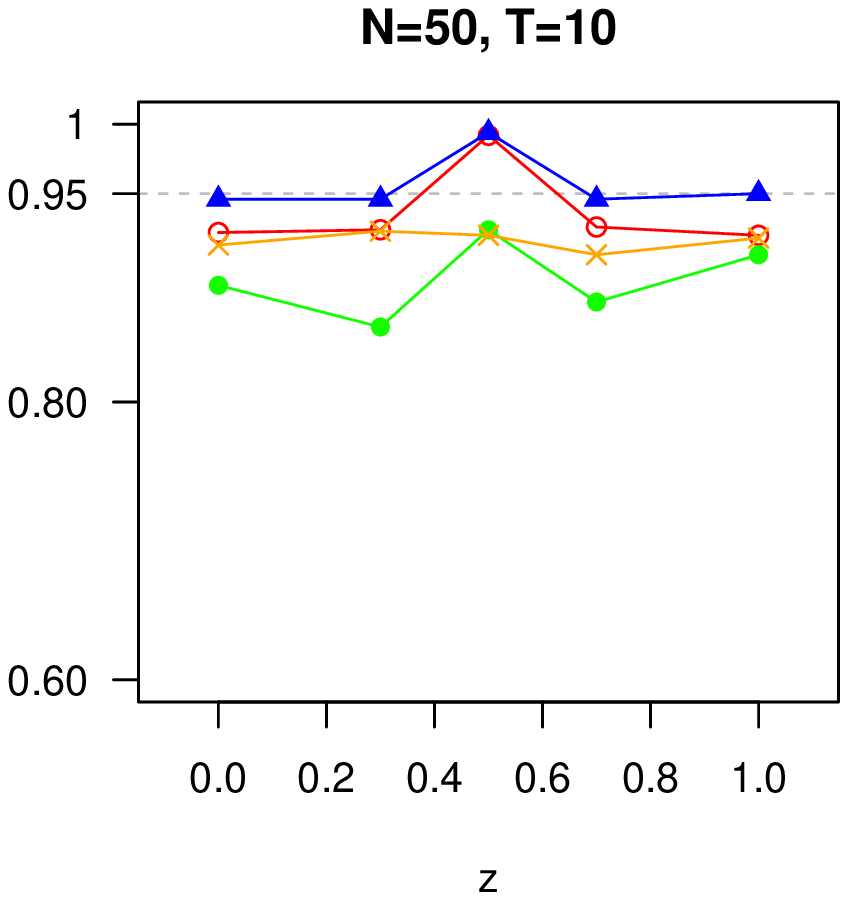}
\includegraphics[width=2in,height=2in]{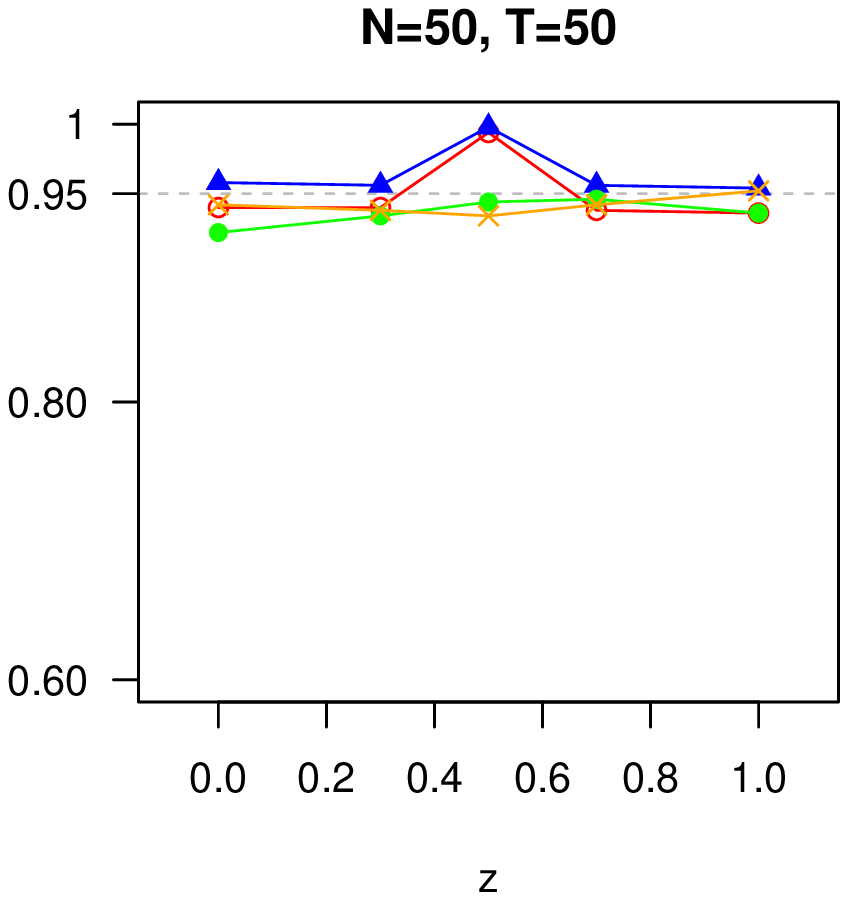}

\includegraphics[width=2in,height=2in]{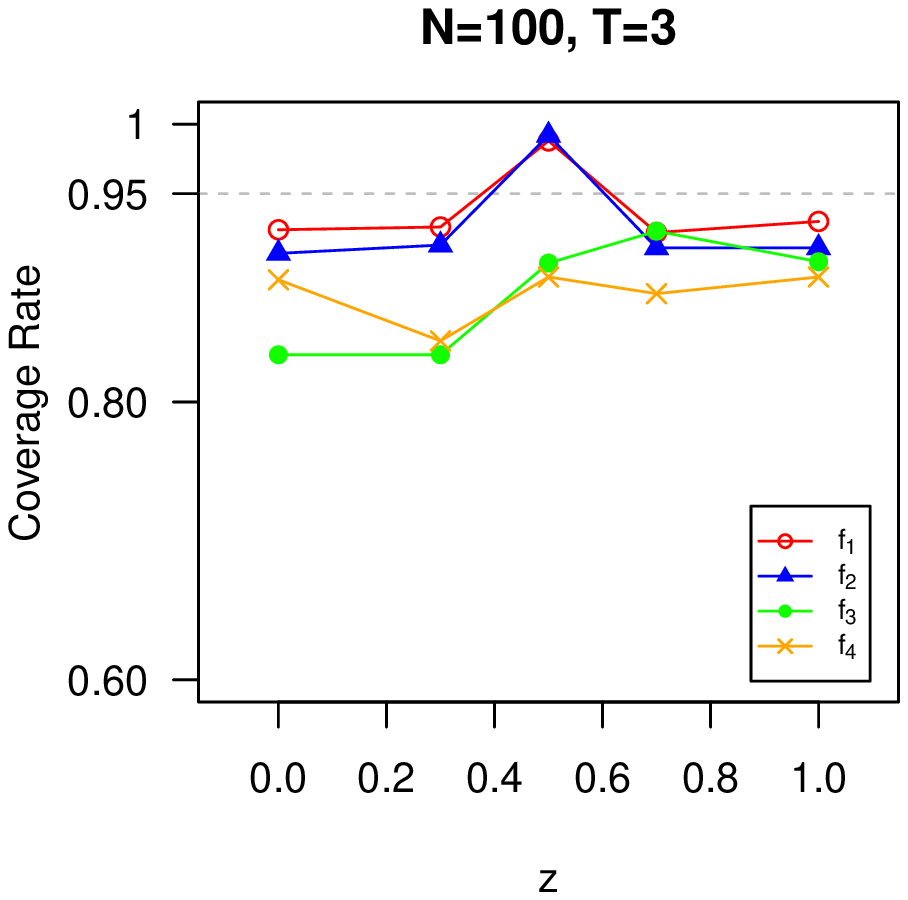}
\includegraphics[width=2in,height=2in]{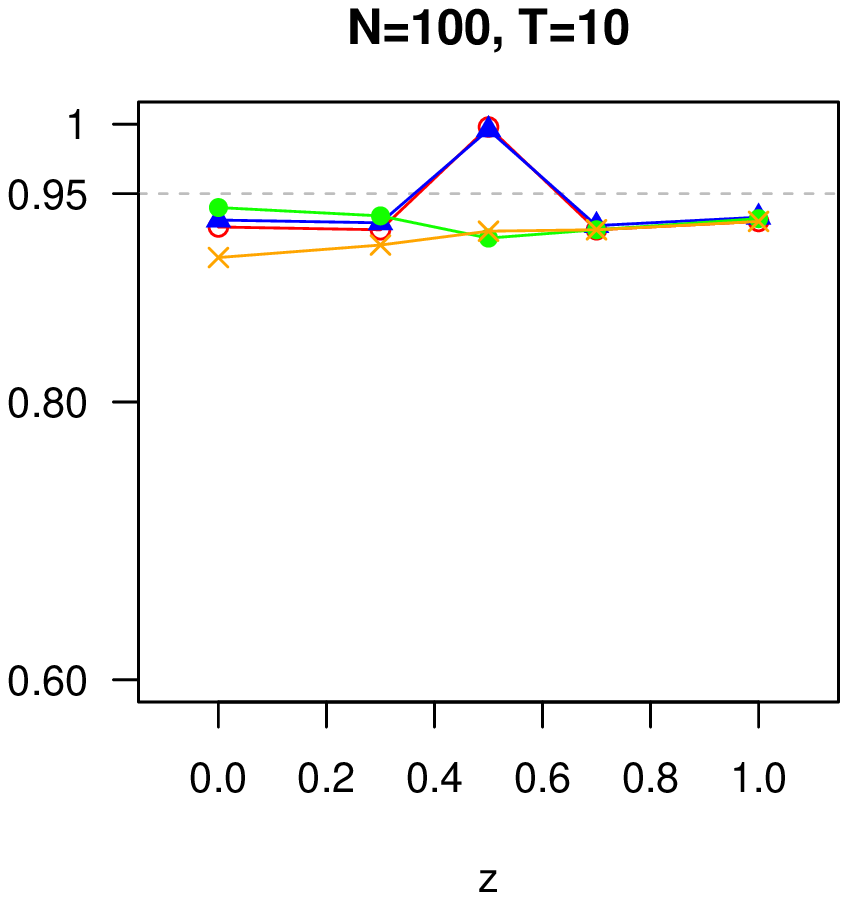}
\includegraphics[width=2in,height=1.7in]{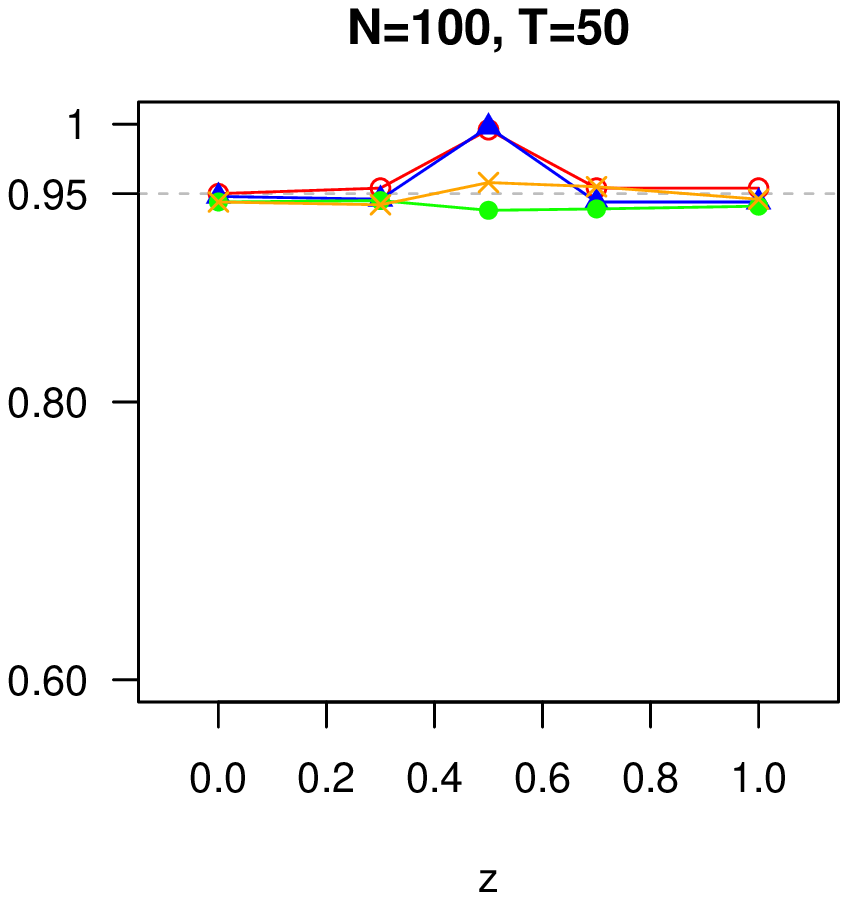}

\includegraphics[width=2in,height=2in]{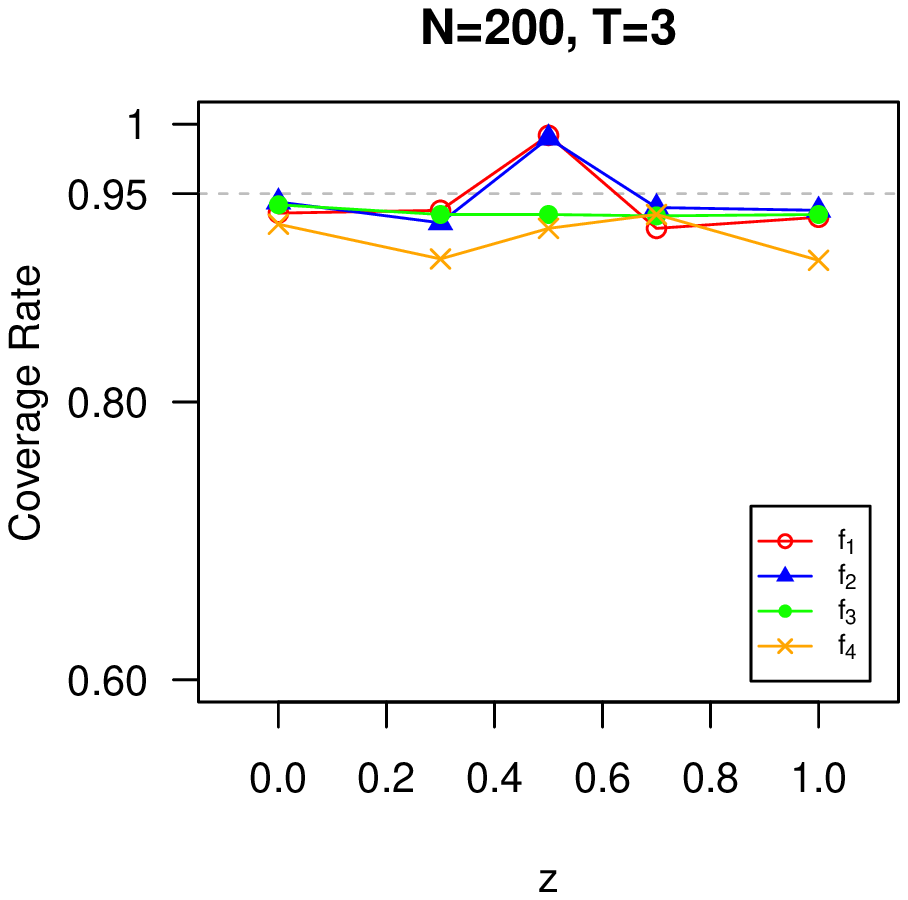}
\includegraphics[width=2in,height=2in]{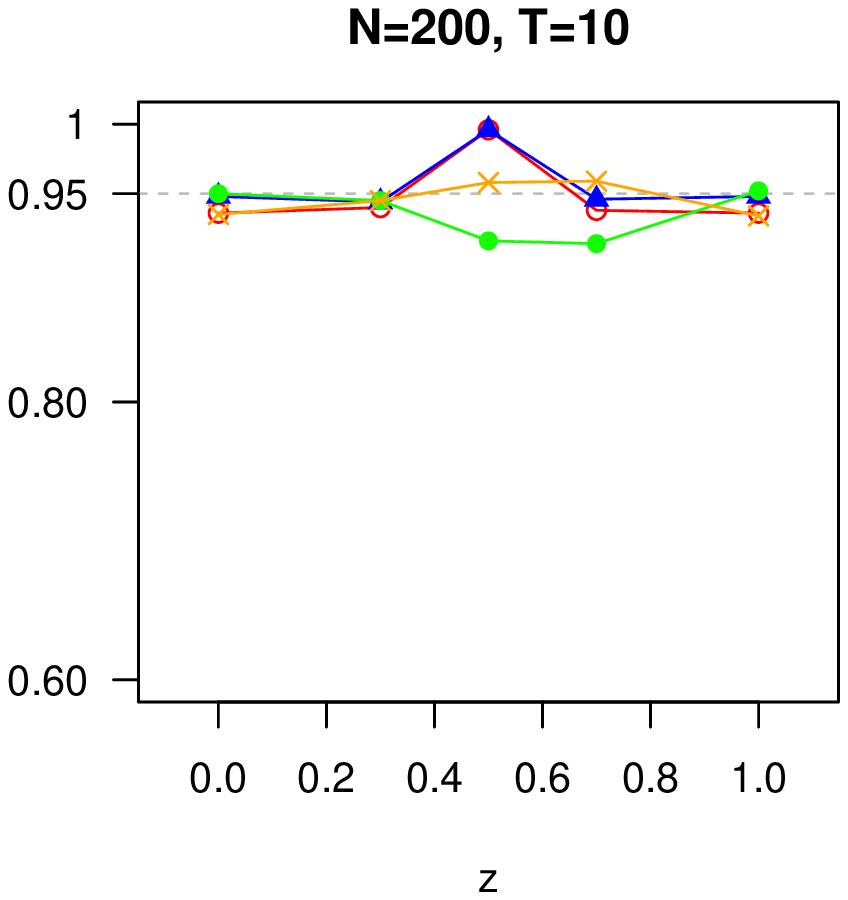}
\includegraphics[width=2in,height=2in]{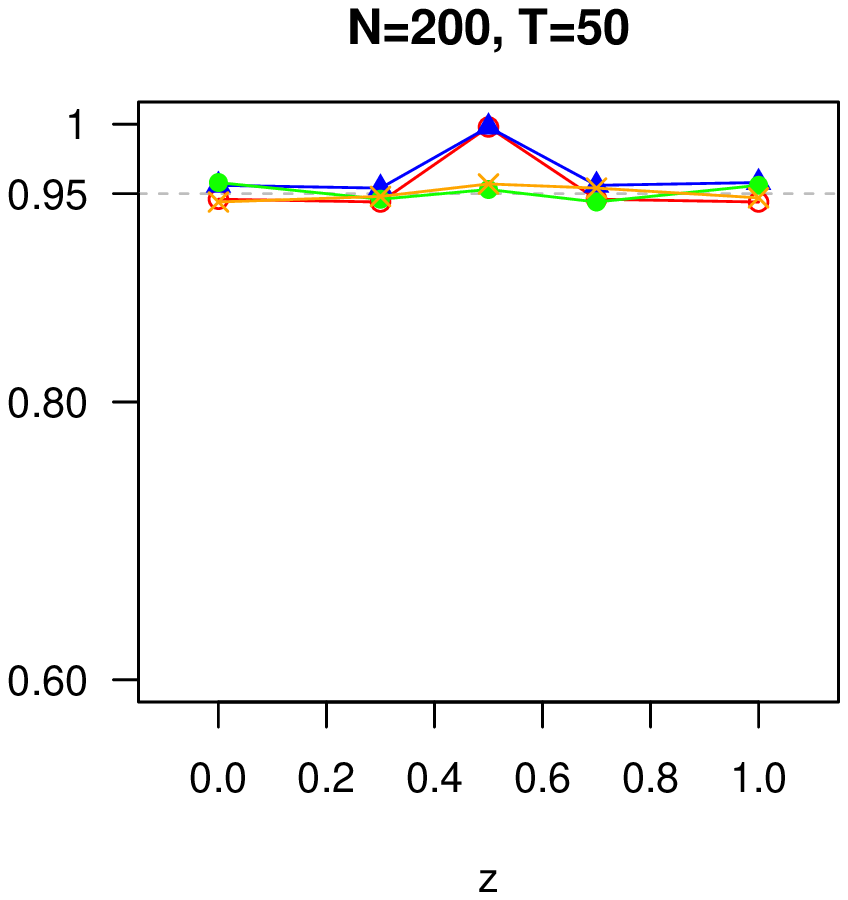}
\caption{Coverage rates of the $95\%$ confidence intervals for $f_{j,0}(z_0)$'s  with different sample sizes and $r=1$.}
\label{figure:simulation:CI}
\end{figure}

%%%%

\section{Empirical Application}\label{sec:empirical:study}
\subsection{Aggregate Production}
In this section, we apply our linearity detection procedure to Aggregate Production data, which is extracted from version 9.0 of the Penn World Table. We keep a balanced panel dataset for $48$ countries across the world for the period 1950-2014. Following \cite{gks16}, we consider following regression model,
\begin{align}
	y_{it}=f_1(k_{it})+f_2(l_{it})+f_3(\textrm{pub}_{it})+f_4(\textrm{xm}_{it})+\alpha_i+error,\nonumber
\end{align}
where $y_{it}, k_{it}, l_{it}$	are the real log GDP, capital stock, and number of people engaged
of the $i$-th country at time $t$, respectively. Besides, $\textrm{pub}_{it}$ is government/public expenditure, defined as the government spending, and $\textrm{xm}_{it}$ is net trade openness, which equals exports minus imports of merchandise. Using the same criteria as in the simulation study, we choose $h_1=h_2=h_3=h_4=0.2$ by cross-validation. The tuning parameter $\lambda_{NT}$ is selected such that $\log(\lambda_{NT})$ increases from $-4$ to $-3$ with an increment $0.05$. Firstly, the solution path in Figure \ref{figure:realdata:solution:path} indicates five candidate models can be obtained when $\lambda_{NT}$ increasing with models being summarized in Table \ref{table:realdata:selected:model}. Moreover, according to Table \ref{table:realdata:selected:model}, among these five candidates, the model with all linear $f_j$'s is the preferable based on CV. The estimated coefficients of explanatory variables are provided in Table \ref{table:linear:estimator:realdata1}, from which we can see the model is highly significant. Non-penalized estimators of $f_j$'s are constructed and corresponding fitted curves are provided in Figure \ref{figure:realdata:fhat:plot}. The fitted curves of the non-penalized estimators preserve linear patterns if one only looks at the interval between two dashed vertical lines, which is the $2.5\%$ to $97.5\%$ percentile range of the corresponding regressor. This information contained in Figure \ref{figure:realdata:fhat:plot} coincides with the findings based on our proposed linearity detection procedure.

\begin{figure}[htp]
\centering
\includegraphics[width=3 in, height=2.3 in]{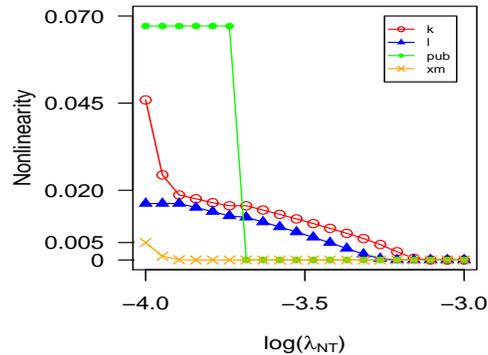}
\caption{Solution Path of nonlinearity.}
\label{figure:realdata:solution:path}
\end{figure}

\begin{table}[htp]
\begin{tabular}{|c|c|c|c|c|c|}
\hline
          & Model 1     & Model 2   & Model 3   & Model 4   & Model 5    \\ \hline
Linearity & $\emptyset$ & xm        & pub,xm    & l,pub,xm  & k,l,pub,xm \\ \hline
CV        & 2.616       & 1.453     & 0.054     & 0.042     & 0.040*      \\ \hline
%AIC       & -1896.425*   & -1871.823 & -1808.196 & -1685.833 & -1521.417  \\ \hline
%BIC       & -1745.286   & -1750.911* & -1717.513 & -1625.377 & -1491.189  \\ \hline
\end{tabular}
\caption{Models selected by solution path.}
\label{table:realdata:selected:model}
\end{table}
\begin{table}[htp]
\centering
\begin{tabular}{|c|c|c|c|c|}
\hline
     & k     & l     & pub   & xm    \\ \hline
Coef & 7.627$^{***}$ & 2.187$^{***}$ & 0.668$^{***}$ & 0.586$^{***}$ \\ \hline
SE   & 0.178 & 0.426 & 0.230 & 0.122 \\ \hline
\end{tabular}
\caption{Linear Coefficients Estimators for Aggregate Production}
\label{table:linear:estimator:realdata1}
\end{table}
\begin{figure}[htp]
\centering
\includegraphics[width=1.5 in, height=1.4 in]{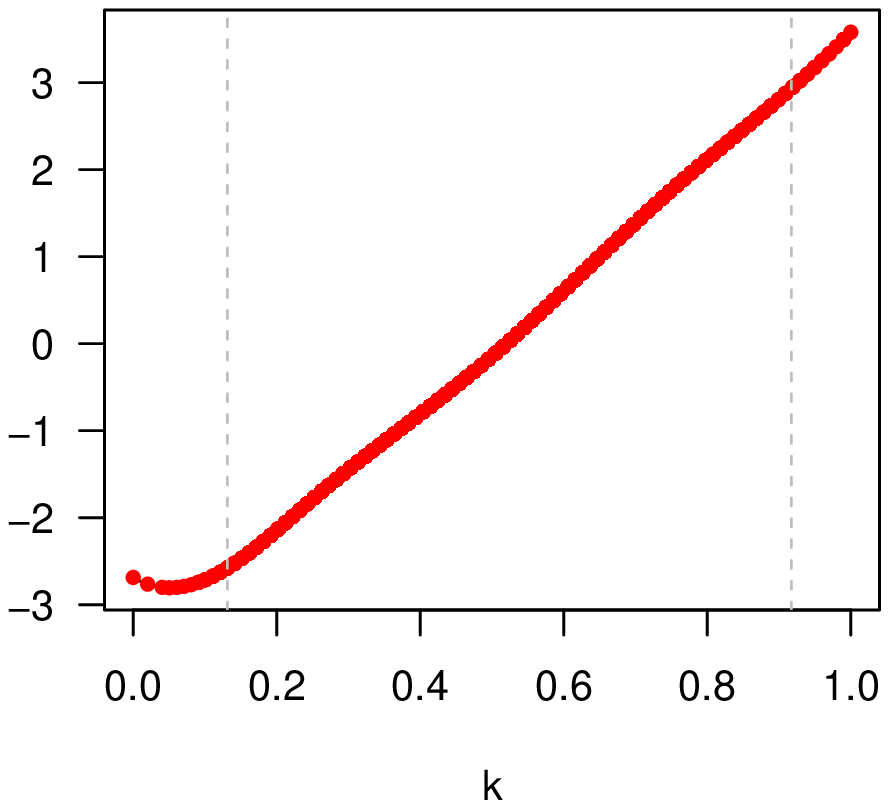}
\includegraphics[width=1.5 in, height=1.4 in]{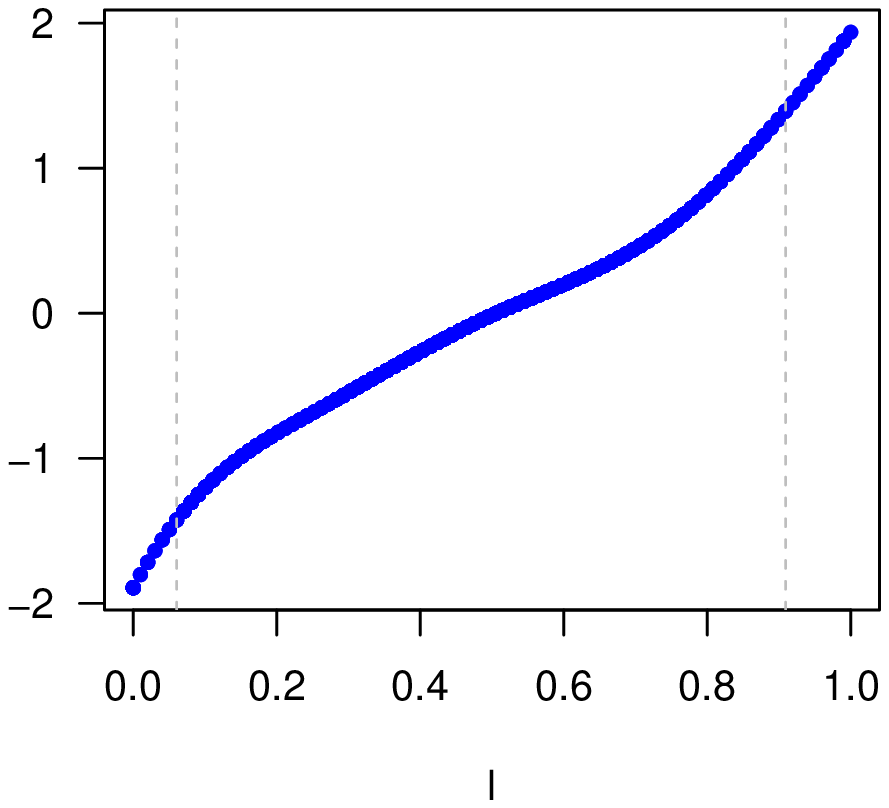}
\includegraphics[width=1.5 in, height=1.4 in]{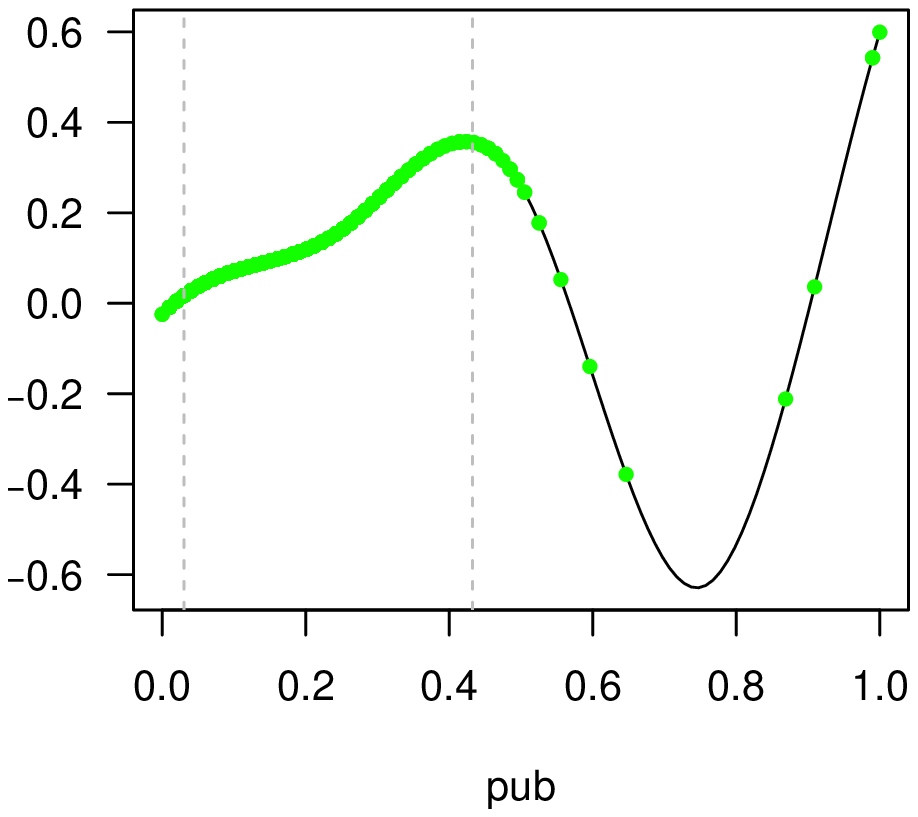}
\includegraphics[width=1.5 in, height=1.4 in]{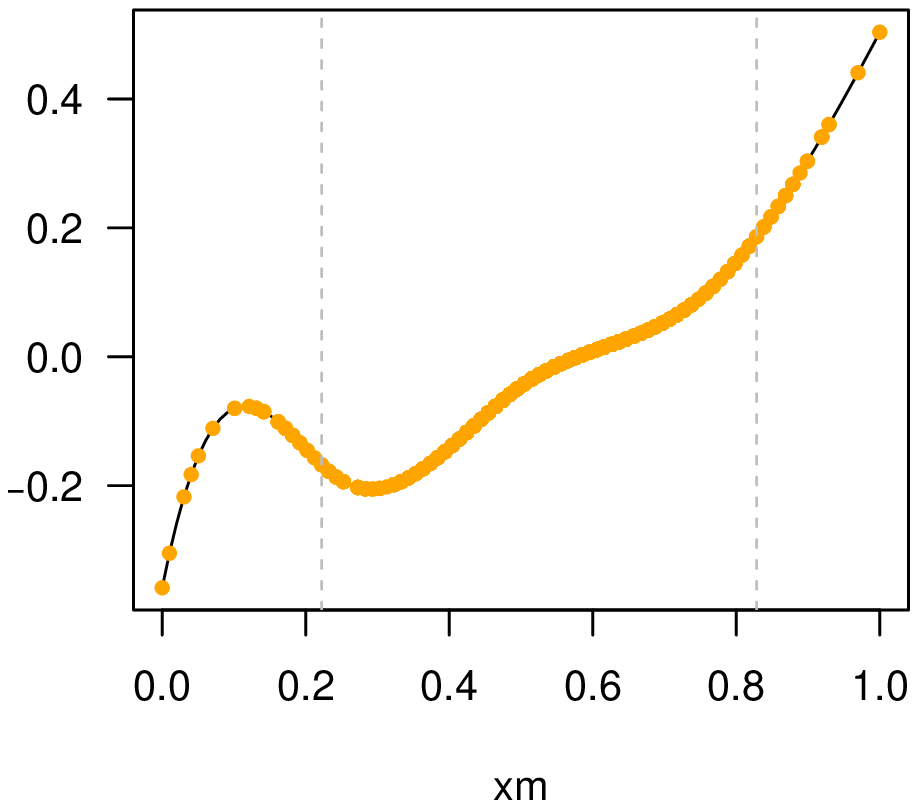}
\caption{Estimated curves of $f_j$'s. Non-penalized estimator for top 4 panels; Penalized estimator for bottom 4 panels. Two vertical dashed lines indicates the $2.5\%$ to $97.5\%$ pencentile range of each regressor.}
\label{figure:realdata:fhat:plot}
\end{figure}
%\begin{figure}[htp]
%\centering
%\includegraphics[width=1.5 in, height=1.4 in]{realdata_fhat_quan1.eps}
%\includegraphics[width=1.5 in, height=1.4 in]{realdata_fhat_quan2.eps}
%\includegraphics[width=1.5 in, height=1.4 in]{realdata_fhat_quan3.eps}
%\includegraphics[width=1.5 in, height=1.4 in]{realdata_fhat_quan4.eps}
%\caption{Non-penalized estimator of $f_j$'s.}
%\label{figure:realdata:fhat:plot:quantile}
%\end{figure}

\subsection{Environmental Kuznets Curve}

In the second application, we estimate the environmental Kuznets curve (EKC), which is also studied in \cite{a07}, \cite{ap09} and  \cite{lqs16}. Following \cite{lqs16}, we consider the following nonparametric model:
\begin{equation}
	y_{it}=f_1(\textrm{e}_{it})+f_2(\textrm{gdp}_{it})+f_3(\textrm{trade}_{it})+\alpha_i+\epsilon_{it},\label{eq:model:1:realdata2}
\end{equation}
where $y_{it}$ represents the per capita $\textrm{CO}_2$  emission of country $i$ in year $t$, $\textrm{e}_{it}$ is per capita  energy consumption, $\textrm{gdp}_{it}$ stands for the per capita GDP, and $\textrm{trade}_{it}$ is the per capta trade. All the variables are taken logarithm and all the explanatory variables are scaled to $[0, 1]$. The data is obtained from World Bank Development Indicators and we keep a balanced panel for $N=89$ and $T=40$ after eliminating missing values.

From the solution path in Figure \ref{figure:realdata2:solution:path1}, we extract 4 submodels and calculate their CV scores, which are summarized in Table \ref{table:cv:realdata2:1}. Based on the CV score, the selected linear explanatory variables are trade and e, while the variable gpd will be treated as nonlinear. Based on selected, model, we further estimate the linear and nonlinear components, and the results are reported in Table \ref{table:linear:estimator:realdata2:1} and Figure \ref{figure:realdata2:ci:f2}. Table \ref{table:linear:estimator:realdata2:1}  shows the coefficients of e and trade are both highly significant. Meanwhile,  Figure \ref{figure:realdata2:ci:f2} indicates that as gdp increasing, its effect on $\textrm{CO}_2$ emission will increases first and then begin to fall, which coincides with common hypothesis that the relationship between income and the emission of chemicals like sulfur dioxide ($\textrm{SO}_2$) and carbon dioxide ($\textrm{CO}_2$) or the natural resource usage has an inverted U-shape, see \cite{lqs16}. Finally, we present the nonpenalized estimation curves of the explanatory variables in Figure \ref{figure:realdata2:fhat:plot}. If only screening the fitted curves in $2.5\%$ to $97.5\%$ pencentile range of the regressors in Figure \ref{figure:realdata2:fhat:plot}, we can draw the same conclusion that e and trade are linear, while gdp is nonlinearly correlated with CO$_2$ emission.

\begin{figure}[htp]
\centering
\includegraphics[width=3 in, height=2.3 in]{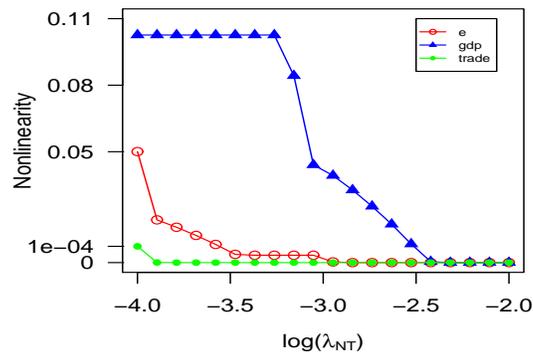}
\caption{Solution Path of nonlinearity.}
\label{figure:realdata2:solution:path1}
\end{figure}

\begin{figure}[htp]
\centering
\includegraphics[width=1.5 in, height=1.4 in]{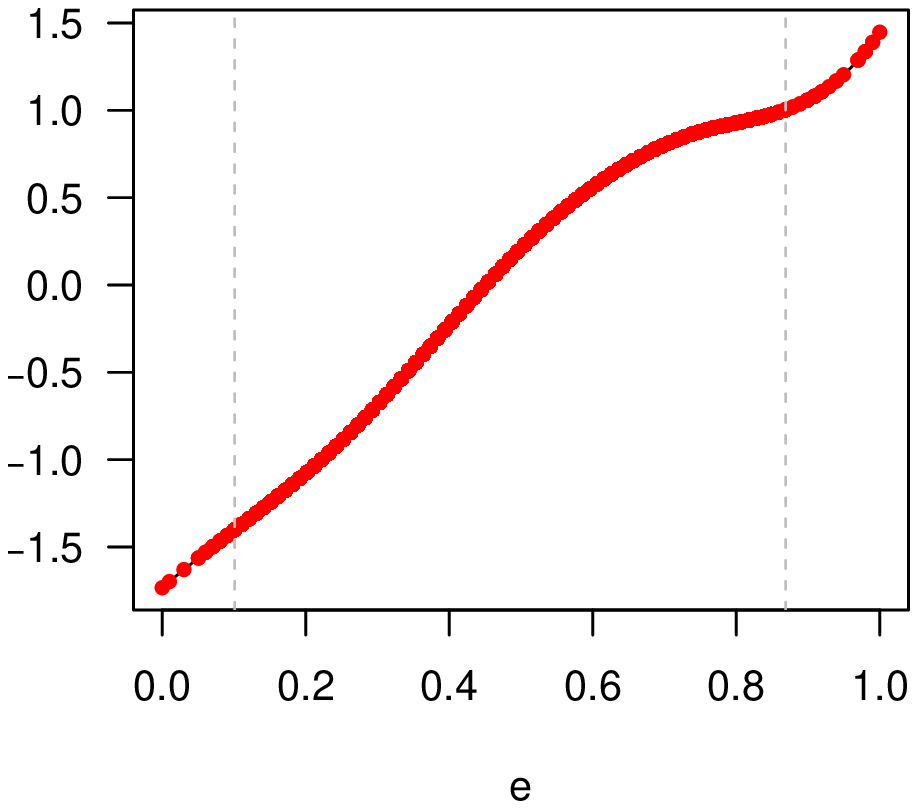}
\includegraphics[width=1.5 in, height=1.4 in]{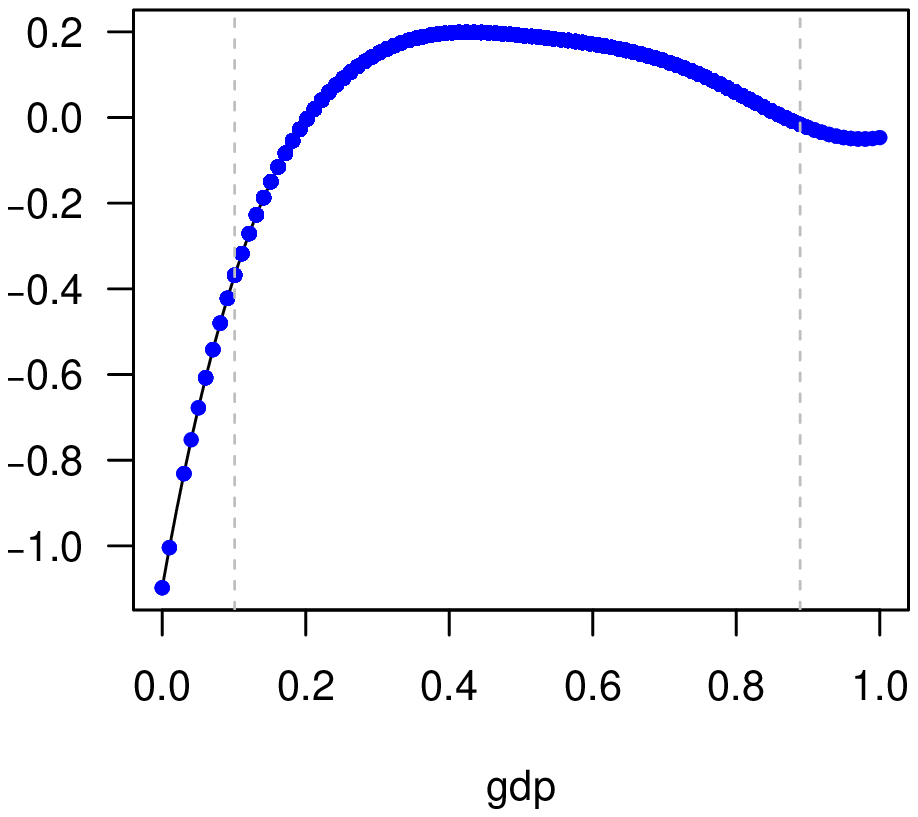}
\includegraphics[width=1.5 in, height=1.4 in]{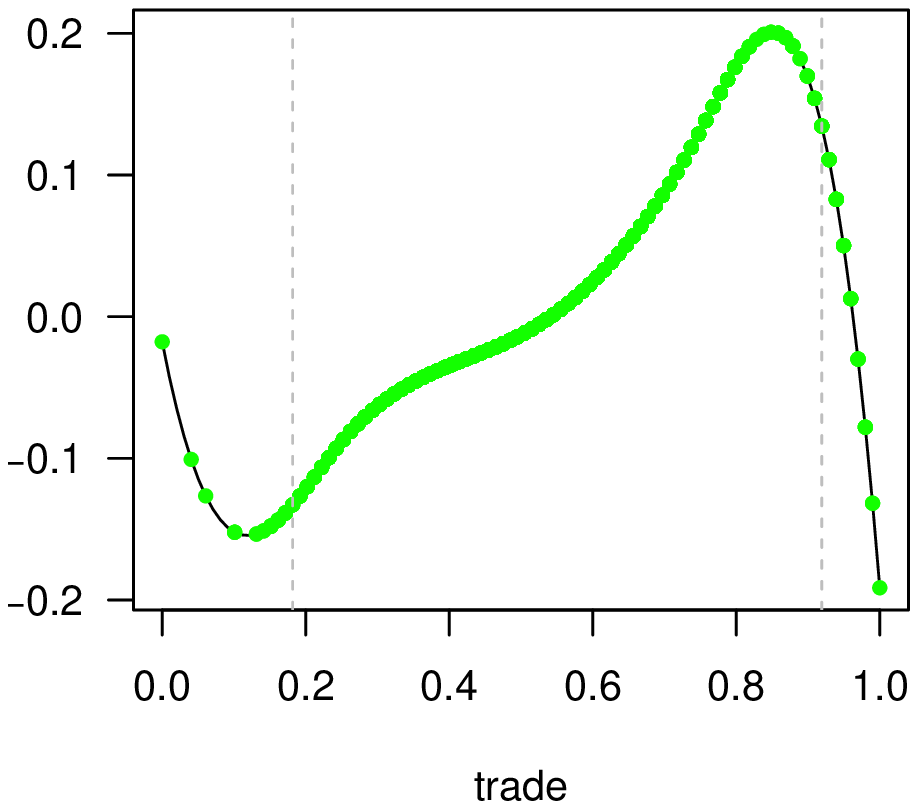}
\caption{Estimated curves of $f_j$'s. Non-penalized estimator for top 4 panels; Penalized estimator for bottom 4 panels. Two vertical dashed lines indicates the $5\%$ to $95\%$ pencentile range of each regressor.}
\label{figure:realdata2:fhat:plot}
\end{figure}

\begin{figure}[htp]
\centering
\includegraphics[width=3 in, height=2.3 in]{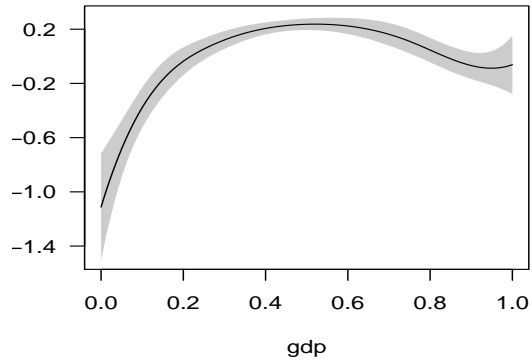}
\caption{Confidence interval of $f_2$}
\label{figure:realdata2:ci:f2}
\end{figure}

\begin{table}[htp]
\begin{tabular}{|c|c|c|c|c|}
\hline
          & Model 1 & Model 2 & Model 3  & Model 4       \\ \hline
Linearity & $\emptyset$      & trade   & trade,e & trade,e,gdp \\ \hline
CV        & 0.0541  & 0.0523  & 0.0489$^*$   & 0.0563        \\ \hline
\end{tabular}
\caption{CV Scores  of Models in Solution Path for EKC}
\label{table:cv:realdata2:1}
\end{table}

\begin{table}[htp]
\centering
\begin{tabular}{|c|c|c|}
\hline
     & e     & trade \\ \hline
Coef & 3.452$^{***}$ & 0.491$^{***}$ \\ \hline
SE   & 0.303 & 0.123 \\ \hline
\end{tabular}
\caption{Linear Coefficients Estimators for EKC}
\label{table:linear:estimator:realdata2:1}
\end{table}

\bibliography{ref}{}

\begin{thebibliography}{}

\bibitem[Andrews, 1991]{a91}
Andrews, D.~W. (1991).
\newblock Asymptotic optimality of generalized cl, cross-validation, and
  generalized cross-validation in regression with heteroskedastic errors.
\newblock {\em Journal of Econometrics}, 47(2-3):359--377.

\bibitem[Ang, 2007]{a07}
Ang, J.~B. (2007).
\newblock Co2 emissions, energy consumption, and output in france.
\newblock {\em Energy policy}, 35(10):4772--4778.

\bibitem[Apergis and Payne, 2009]{ap09}
Apergis, N. and Payne, J.~E. (2009).
\newblock Co2 emissions, energy usage, and output in central america.
\newblock {\em Energy Policy}, 37(8):3282--3286.

\bibitem[Baltagi, 2006]{b06}
Baltagi, B.~H. (2006).
\newblock {\em Panel Data Econometrics Theoretical Contributions and Empirical
  Applications}.
\newblock Emerald Group Publishing Limited.

\bibitem[Baltagi and Griffin, 1983]{bg83}
Baltagi, B.~H. and Griffin, J.~M. (1983).
\newblock Gasoline demand in the oecd: an application of pooling and testing
  procedures.
\newblock {\em European Economic Review}, 22(2):117--137.

\bibitem[Baltagi and Li, 2002]{bd02}
Baltagi, B.~H. and Li, D. (2002).
\newblock Series estimation of partially linear panel data models with fixed
  effects.
\newblock {\em Annals of Economics and Finance}, 3(1):103--116.

\bibitem[Bartolucci et~al., 2018]{bvw18}
Bartolucci, C., Villosio, C., and Wagner, M. (2018).
\newblock Who migrates and why? evidence from italian administrative data.
\newblock {\em Journal of Labor Economics}, 36(2):551--588.

\bibitem[Baxa et~al., 2015]{bpmv15}
Baxa, J., Pla{\v{s}}il, M., and Va{\v{s}}{\'\i}{\v{c}}ek, B. (2015).
\newblock Changes in inflation dynamics under inflation targeting? evidence
  from central european countries.
\newblock {\em Economic Modelling}, 44:116--130.

\bibitem[Bloom et~al., 2004]{bcs04}
Bloom, D.~E., Canning, D., and Sevilla, J. (2004).
\newblock The effect of health on economic growth: a production function
  approach.
\newblock {\em World Development}, 32(1):1--13.

\bibitem[Boor, 1978]{d78}
Boor, C.~d. (1978).
\newblock {\em A Practical Guide to Splines}.
\newblock Springer Verlag, New York.

\bibitem[Chen, 2007]{c07}
Chen, X. (2007).
\newblock Large sample sieve estimation of semi-nonparametric models.
\newblock volume~6 of {\em Handbook of Econometrics}, chapter~76, pages
  5549--5632. Elsevier.

\bibitem[Chen and Christensen, 2018]{cc18}
Chen, X. and Christensen, T.~M. (2018).
\newblock Optimal sup-norm rates and uniform inference on nonlinear functionals
  of nonparametric iv regression.
\newblock {\em Quantitative Economics}, 9(1):39--84.

\bibitem[Chen and Liao, 2014]{cl14}
Chen, X. and Liao, Z. (2014).
\newblock Sieve m inference on irregular parameters.
\newblock {\em Journal of Econometrics}, 182(1):70--86.

\bibitem[Cheng and Shang, 2015]{cs15}
Cheng, G. and Shang, Z. (2015).
\newblock Joint asymptotics for semi-nonparametric regression models with
  partially linear structure.
\newblock {\em The Annals of Statistics}, 43(3):1351--1390.

\bibitem[Dahlberg and Johansson, 2000]{dj00}
Dahlberg, M. and Johansson, E. (2000).
\newblock An examination of the dynamic behaviour of local governments using
  gmm bootstrapping methods.
\newblock {\em Journal of Applied Econometrics}, 15(4):401--416.

\bibitem[Deaton, 2008]{d08}
Deaton, A. (2008).
\newblock Income, health, and well-being around the world: Evidence from the
  gallup world poll.
\newblock {\em Journal of Economic Perspectives}, 22(2):53--72.

\bibitem[DeVore and Lorentz, 1993]{dl93}
DeVore, R.~A. and Lorentz, G.~G. (1993).
\newblock {\em Constructive Approximation}, volume 303.
\newblock Springer Science \& Business Media.

\bibitem[Dong and Linton, 2018]{dl18}
Dong, C. and Linton, O. (2018).
\newblock Additive nonparametric models with time variable and both stationary
  and nonstationary regressors.
\newblock {\em Journal of Econometrics}, 207(1):212--236.

\bibitem[Fan and Li, 2001]{fl01}
Fan, J. and Li, R. (2001).
\newblock Variable selection via nonconcave penalized likelihood and its oracle
  properties.
\newblock {\em Journal of the American Statistical Association},
  96(456):1348--1360.

\bibitem[Freyberger, 2017]{f17}
Freyberger, J. (2017).
\newblock Non-parametric panel data models with interactive fixed effects.
\newblock {\em The Review of Economic Studies}, 85(3):1824--1851.

\bibitem[Glass et~al., 2016]{gks16}
Glass, A.~J., Kenjegalieva, K., and Sickles, R.~C. (2016).
\newblock A spatial autoregressive stochastic frontier model for panel data
  with asymmetric efficiency spillovers.
\newblock {\em Journal of Econometrics}, 190(2):289--300.

\bibitem[Griliches, 1964]{g64}
Griliches, Z. (1964).
\newblock Research expenditures, education, and the aggregate agricultural
  production function.
\newblock {\em The American Economic Review}, 54(6):961--974.

\bibitem[Gy{\"o}rfi et~al., 2006]{gkkw06}
Gy{\"o}rfi, L., Kohler, M., Krzyzak, A., and Walk, H. (2006).
\newblock {\em A distribution-free theory of nonparametric regression}.
\newblock Springer Science \& Business Media.

\bibitem[Hansen, 2014]{h14b}
Hansen, B.~E. (2014).
\newblock Nonparametric sieve regression: Least squares, averaging least
  squares, and cross-validation.
\newblock {\em Handbook of Applied Nonparametric and Semiparametric
  Econometrics and Statistics}.

\bibitem[Henderson et~al., 2008]{hcl08}
Henderson, D.~J., Carroll, R.~J., and Li, Q. (2008).
\newblock Nonparametric estimation and testing of fixed effects panel data
  models.
\newblock {\em Journal of Econometrics}, 144(1):257--275.

\bibitem[Horowitz, 2014]{ho14}
Horowitz, J.~L. (2014).
\newblock Adaptive nonparametric instrumental variables estimation: Empirical
  choice of the regularization parameter.
\newblock {\em Journal of Econometrics}, 180(2):158--173.

\bibitem[Hsiao, 2014]{h14}
Hsiao, C. (2014).
\newblock {\em Analysis of Panel Data}.
\newblock Cambridge University Press.

\bibitem[Hsiao and Tahmiscioglu, 2008]{ht08}
Hsiao, C. and Tahmiscioglu, A.~K. (2008).
\newblock Estimation of dynamic panel data models with both individual and
  time-specific effects.
\newblock {\em Journal of Statistical Planning and Inference},
  138(9):2698--2721.

\bibitem[Hsiao and Tahmiscioglu, 1997]{Hsiao1997}
Hsiao, C. and Tahmiscioglu, K. (1997).
\newblock A panel analysis of liquidity constraints and firm investment.
\newblock {\em Journal of the American Statistical Association},
  92(2):455--465.

\bibitem[Huang, 1998]{h98}
Huang, J. (1998).
\newblock Projection estimation in multiple regression with application to
  functional anova models.
\newblock {\em The Annals of Statistics}, 26(1):242--272.

\bibitem[Huang, 2003]{h03}
Huang, J. (2003).
\newblock Local asymptotics for polynomial spline regression.
\newblock {\em The Annals of Statistics}, 31(5):1600--1635.

\bibitem[Huang et~al., 2010]{hhw10}
Huang, J., Horowitz, J.~L., and Wei, F. (2010).
\newblock Variable selection in nonparametric additive models.
\newblock {\em The Annals of Statistics}, 38(4):2282--2313.

\bibitem[Hunter and Li, 2005]{hl05}
Hunter, D.~R. and Li, R. (2005).
\newblock Variable selection using mm algorithms.
\newblock {\em The Annals of Statistics}, 33(4):1617--1642.

\bibitem[Kilian and Park, 2009]{kp09}
Kilian, L. and Park, C. (2009).
\newblock The impact of oil price shocks on the us stock market.
\newblock {\em International Economic Review}, 50(4):1267--1287.

\bibitem[Koop and Tobias, 2004]{kt04}
Koop, G. and Tobias, J.~L. (2004).
\newblock Learning about heterogeneity in returns to schooling.
\newblock {\em Journal of Applied Econometrics}, 19(7):827--849.

\bibitem[Lee and Robinson, 2015]{lr15}
Lee, J. and Robinson, P.~M. (2015).
\newblock Panel nonparametric regression with fixed effects.
\newblock {\em Journal of Econometrics}, 188(2):346--362.

\bibitem[Li and Liang, 2015]{ll15}
Li, C. and Liang, Z. (2015).
\newblock Asymptotics for nonparametric and semiparametric fixed effects panel
  models.
\newblock {\em Journal of Econometrics}, 185(2):420--434.

\bibitem[Li et~al., 2016]{lqs16}
Li, D., Qian, J., and Su, L. (2016).
\newblock Panel data models with interactive fixed effects and multiple
  structural breaks.
\newblock {\em Journal of the American Statistical Association},
  111(516):1804--1819.

\bibitem[Liang and Li, 2009]{ll09}
Liang, H. and Li, R. (2009).
\newblock Variable selection for partially linear models with measurement
  errors.
\newblock {\em Journal of the American Statistical Association},
  104(485):234--248.

\bibitem[Lu and Su, 2016]{ls16}
Lu, X. and Su, L. (2016).
\newblock Shrinkage estimation of dynamic panel data models with interactive
  fixed effects.
\newblock {\em Journal of Econometrics}, 190(1):148--175.

\bibitem[Mammen et~al., 2009]{mbT09}
Mammen, E., Støve, B., and Tjøstheim, D. (2009).
\newblock Nonparametric additive models for panels of time series.
\newblock {\em Econometric Theory}, 25(2):442–481.

\bibitem[Merlev{\`e}de et~al., 2009]{mpr09}
Merlev{\`e}de, F., Peligrad, M., Rio, E., et~al. (2009).
\newblock Bernstein inequality and moderate deviations under strong mixing
  conditions.
\newblock In {\em High Dimensional Probability V: the Luminy Volume}, volume~5,
  pages 273--292. Institute of Mathematical Statistics.

\bibitem[Ruckstuhl et~al., 2000]{rwc00}
Ruckstuhl, A.~F., Welsh, A.~H., and Carroll, R.~J. (2000).
\newblock Nonparametric function estimation of the relationship between two
  repeatedly measured variables.
\newblock {\em Statistica Sinica}, 10(1):51--71.

\bibitem[Shang and Cheng, 2013]{sc13aos}
Shang, Z. and Cheng, G. (2013).
\newblock Local and global asymptotic inference in smoothing spline models.
\newblock {\em The Annals of Statistics}, 41(5):2608--2638.

\bibitem[Solow, 1957]{s57}
Solow, R.~M. (1957).
\newblock Technical change and the aggregate production function.
\newblock {\em The Review of Economics and Statistics}, 39(3):312--320.

\bibitem[Stone, 1994]{s94}
Stone, C.~J. (1994).
\newblock The use of polynomial splines and their tensor products in
  multivariate function estimation.
\newblock {\em The Annals of Statistics}, 22(1):118--171.

\bibitem[Su and Chen, 2013]{sc13}
Su, L. and Chen, Q. (2013).
\newblock Testing homogeneity in panel data models with interactive fixed
  effects.
\newblock {\em Econometric Theory}, 29(6):1079--1135.

\bibitem[Su and Jin, 2012]{sj12}
Su, L. and Jin, S. (2012).
\newblock Sieve estimation of panel data models with cross section dependence.
\newblock {\em Journal of Econometrics}, 169(1):34--47.

\bibitem[Su and Ju, 2017]{sj17}
Su, L. and Ju, G. (2017).
\newblock Identifying latent grouped patterns in panel data models with
  interactive fixed effects.
\newblock {\em Journal of Econometrics}.
\newblock forthcoming.

\bibitem[Su et~al., 2016]{ssp16}
Su, L., Shi, Z., and Phillips, P.~C. (2016).
\newblock Identifying latent structures in panel data.
\newblock {\em Econometrica}, 84(6):2215--2264.

\bibitem[Su and Zhang, 2015]{sz15}
Su, L. and Zhang, Y. (2015).
\newblock Nonparametric dynamic panel data models with interactive fixed
  effects: sieve estimation and specification testing.
\newblock {\em Working Paper}.

\bibitem[Su and Zhang, 2016]{sz16}
Su, L. and Zhang, Y. (2016).
\newblock Semiparametric estimation of partially linear dynamic panel data
  models with fixed effects.
\newblock In {\em Essays in Honor of Aman Ullah}, pages 137--204. Emerald Group
  Publishing Limited.

\bibitem[Tibshirani, 1996]{t96}
Tibshirani, R. (1996).
\newblock Regression shrinkage and selection via the lasso.
\newblock {\em Journal of the Royal Statistical Society. Series B
  (Methodological)}, 58(1):267--288.

\bibitem[Van~de Geer, 2000]{van2000}
Van~de Geer, S. (2000).
\newblock {\em Empirical Processes in M-estimation}.
\newblock Cambridge University Press.

\bibitem[Xue, 2009]{x09}
Xue, L. (2009).
\newblock Consistent variable selection in additive models.
\newblock {\em Statistica Sinica}, 19(3):1281--1296.

\bibitem[Yang, 2007]{y07}
Yang, Y. (2007).
\newblock Consistency of cross validation for comparing regression procedures.
\newblock {\em The Annals of Statistics}, 35(6):2450--2473.

\bibitem[Yoshihara, 1978]{y78}
Yoshihara, K.-I. (1978).
\newblock Moment inequalities for mixing sequences.
\newblock {\em Kodai Mathematical Journal}, 1(2):316--328.

\bibitem[Zhang et~al., 2011]{zcl11}
Zhang, H.~H., Cheng, G., and Liu, Y. (2011).
\newblock Linear or nonlinear? automatic structure discovery for partially
  linear models.
\newblock {\em Journal of the American Statistical Association},
  106(495):1099--1112.

\bibitem[Zhou et~al., 1998]{zsw98}
Zhou, S., Shen, X., and Wolfe, D. (1998).
\newblock Local asymptotics for regression splines and confidence regions.
\newblock {\em The Annals of Statistics}, 26(5):1760--1782.

\end{thebibliography}
\bibliographystyle{apalike}

\newpage
\setcounter{subsection}{0}
\renewcommand{\thesubsection}{A.\arabic{subsection}}
\setcounter{subsubsection}{0}
\renewcommand{\thesubsubsection}{\textbf{A.\arabic{subsection}.\arabic{subsubsection}}}
\setcounter{equation}{0}
\renewcommand{\theequation}{A.\arabic{equation}}
\setcounter{lemma}{0}
\renewcommand{\thelemma}{A.\arabic{lemma}}
\setcounter{proposition}{0}
\renewcommand{\theproposition}{A.\arabic{proposition}}

\section*{Appendix}
This appendix contains proofs and simulation results which are not included in the main text. For simplicity, we define following notation. For function $g: \mcX \to \mathbb{R}$, define $\ev_i(g)=\ev(g(\bfX_{i1}))$ when ever the expected value exists. Define $\Theta_{NT,j} =\textrm{CSpl}(r_j, \bft_{j, M_j})$, the Centralized Spline Space to approximate $f_j$, for $j\in [p]$, and denote $\Theta_{NT,j,-}, \Theta_{NT,j,\sim}$ as the subspaces of $\Theta_{NT,j}$ for linear and nonlinear component respectively. By this notation and  for any $g \in \Theta_{NT, j}$, we define $g_{-}\in \Theta_{NT, j, -}$ and $g_{\sim}\in \Theta_{NT, j, \sim}$ such that $g=g_{-}+g_{\sim}$, which is the unique decomposition due to (\ref{eq:linear:nonlinear:definition}). Recall that the function space of correct specified model is defined as 
\begin{align}
	\Theta_{NT}^0=\bigg\{f(\bfx)=\sum_{j=1}^pf_j(z_j)\in \Theta_{NT}\;\bigg|\; f_j(z)=\beta_j(z-1/2) \textrm{ for } \beta_j\in \mathbb{R} \textrm{ and } j=1,\ldots,d \bigg\},\nonumber
\end{align}
and we define its complement by the following:
\begin{align*}
	\Theta_{NT}^1=\bigg\{f(\bfx)=\sum_{j=1}^d f_j(z_j)\;\bigg|\: f_j \in \Theta_{NT, j, \sim}\bigg\}.\nonumber
\end{align*}
We also define the non-penalized projection estimators as follows:
\begin{align*}
	\widehat{f}_{*}=\argmin_{f\in \Theta_{NT}} \|Y-f\|_{NT}^2 \textrm{,\;\; }  \widehat{f}_{*,0}=\argmin_{f\in \Theta_{NT}^0} \|Y-f\|_{NT}^2,
\end{align*}
here $Y$ is treated as a function such that $Y(\bfX_{it})=Y_{it}$. Let $\|\cdot\|_\infty$ be the sup-norm of a function, and $\lambda_{\min}(A),\; \lambda_{\max}(A)$ be the smallest, and largest eigenvalues of squared matrix $A$.  

In the following, we need introduce notation for vectors for convenience. We define $\bfepsilon=(\epsilon_{11},\epsilon_{12},\ldots, \epsilon_{it},\ldots, \epsilon_{NT})^\top\in \mathbb{R}^{NT}$,  and for any $f: \mcX \to \mathbb{R}$, we denote the bold-faced $\bff$ as the vector $\bff=(f(\bfX_{11}),f(\bfX_{12}),\ldots, f(\bfX_{it}),\ldots, f(\bfX_{NT}))^\top\in \mathbb{R}^{NT}$. By this definition, we have $\bfepsilon=\bfY-\bff_0$. 

We also define following sequences which will be frequently used in the proof:
\begin{itemize}
\item $d_{NT}$: the dimension of $\Theta_{NT}$ and $d_{NT}\asymp \sum_{j=1}^p h_j^{-1}$;
\item $d_{NT,0}$: the dimension of $\Theta_{NT}^0$  and $d_{NT}\asymp \sum_{j=d+1}^p h_j^{-1}$;
\item $A_{NT}$: a constant such that $\|f\|_\infty \leq A_{NT}\|f\|_2$ for all $f\in \Theta_{NT}$ and $A_{NT}^2\asymp \sum_{j=1}^p h_j^{-1}$, see  Lemma \ref{lemma:l2:norm:sup:norm};
\item $A_{NT, 0}$: a constant such that $\|f\|_\infty \leq A_{NT}\|f\|_2$ for all $f\in \Theta_{NT}^0$  and $A_{NT,0}^2\asymp \sum_{j=d+1}^p h_j^{-1}$, see  Lemma \ref{lemma:l2:norm:sup:norm};
\item $\rho_{NT}$: approximation error bound and $\rho_{NT}\asymp \sum_{j=d+1}^p h_j^{m_j}$, see Lemma \ref{lemma:approximation:error};
\item $\gamma_{NT}$:  $\gamma_{NT}^2=\sum_{j=1}^p(NTh_j)^{-1}+\sum_{j=d+1}^p h_j^{2m_j}$.
\end{itemize}
\subsection{Proof of Theorems \ref{thm:rate:of:convergence:together} and \ref{thm:selection:consistency:together} -- Fixed $T$ Case}
\begin{lemma}\label{lemma:expectation:sample:variance}
Under Assumption \ref{Assumption:A1}, the following holds for all $g\in \mcH$: 
\begin{eqnarray*}
	\frac{1}{2a_1}\|g\|_2^2\leq \ev\bigg(\zeta_i(g,g)\bigg)\leq a_1\|g\|_2^2, \textrm{ and }\;\;\frac{1}{2a_1}\|g\|_2^2\leq \|g\|^2\leq a_1\|g\|_2^2.
\end{eqnarray*}
Moreover, if $g\in \mcH^0$, we have
\begin{eqnarray*}
	\frac{1}{a_1}\|g\|_2^2\leq \ev\bigg(g^2(\bfX_{it})\bigg)\leq a_1\|g\|_2^2.
\end{eqnarray*}
\end{lemma}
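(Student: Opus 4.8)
The plan is to prove the two-sided bounds relating the population quantities $\ev(\zeta_i(g,g))$ and $\|g\|^2$ to the variance-type norm $\|g\|_2^2$, using only the quasi-uniformity bound $a_1^{-1}\le q_i(\bfw)\le a_1$ on the joint density of $\bfW_i = (\bfX_{i1},\ldots,\bfX_{iT})$ and the definition of $\zeta_i$. The key observation is that $\zeta_i(g,g)$ is, for each fixed $i$, a genuine sample variance of the values $g(\bfX_{i1}),\ldots,g(\bfX_{iT})$, namely $\zeta_i(g,g) = \frac{1}{T}\sum_{t=1}^T (g(\bfX_{it}) - \bar g_i)^2$ with $\bar g_i = \frac{1}{T}\sum_{s=1}^T g(\bfX_{is})$, so it is nonnegative and can be rewritten in the ``U-statistic'' form $\zeta_i(g,g) = \frac{1}{2T^2}\sum_{t,s}(g(\bfX_{it})-g(\bfX_{is}))^2$. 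Taking expectations, I would express $\ev(\zeta_i(g,g))$ as an integral of $\frac12(g(\bfx_t)-g(\bfx_s))^2$ against the marginal density of the pair $(\bfX_{it},\bfX_{is})$, and similarly identify the marginal density of a single $\bfX_{it}$.

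First I would record that, since $q_i$ is a density on $[0,1]^{pT}$ bounded between $a_1^{-1}$ and $a_1$, every marginal of $q_i$ — in particular the density $\pi_{i,t}$ of $\bfX_{it}$ and the joint density $q_{i,t,s}$ of $(\bfX_{it},\bfX_{is})$ for $t\neq s$ — is also bounded between $a_1^{-1}$ and $a_1$ on its cube, because integrating out the remaining coordinates over a unit-volume region preserves both bounds. Then for the single-observation claim: $\ev(g^2(\bfX_{it})) = \int_{\mcX} g^2(\bfx)\pi_{i,t}(\bfx)\,d\bfx$, which lies between $a_1^{-1}\int g^2$ and $a_1\int g^2$; when $g\in\mcH^0$ we have $\int g = 0$ so $\int g^2 = \|g\|_2^2$, giving $a_1^{-1}\|g\|_2^2 \le \ev(g^2(\bfX_{it})) \le a_1\|g\|_2^2$. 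For the $\zeta_i$ claim I would write $\ev(\zeta_i(g,g)) = \frac{1}{2T^2}\sum_{t\neq s}\int\int \tfrac12\cdot 2\,(g(\bfx)-g(\bfy))^2 q_{i,t,s}(\bfx,\bfy)\,d\bfx\,d\bfy$ (the diagonal terms vanish) and bound $q_{i,t,s}$ above and below. The upper bound is immediate: $(g(\bfx)-g(\bfy))^2 \le 2g^2(\bfx)+2g^2(\bfy)$ integrated against $q_{i,t,s}\le a_1$ gives something like $\ev(\zeta_i(g,g)) \le 2a_1\int g^2 \le 2a_1\|g\|_2^2$; a slightly more careful bookkeeping, noting that $\frac{1}{T}\sum_t(g(\bfX_{it})-\bar g_i)^2 \le \frac1T\sum_t g^2(\bfX_{it})$ pointwise since subtracting the mean only decreases the sum of squares, yields directly $\ev(\zeta_i(g,g)) \le \frac1T\sum_t \ev(g^2(\bfX_{it})) \le a_1\int g^2 \le a_1\|g\|_2^2$ (using $\int g^2 \ge \|g\|_2^2$ always, with equality in $\mcH^0$ — actually $\int g^2 = \|g\|_2^2 + (\int g)^2 \ge \|g\|_2^2$, so I must be slightly careful: the clean statement $\ev(\zeta_i(g,g))\le a_1\|g\|_2^2$ needs the lower-bound-free version, and I'd get it from the centering identity below rather than the crude bound).

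The lower bound $\frac{1}{2a_1}\|g\|_2^2 \le \ev(\zeta_i(g,g))$ is the part requiring the pairwise representation. Using $\ev(\zeta_i(g,g)) = \frac{1}{2T^2}\sum_{t\neq s}\ev\big[(g(\bfX_{it})-g(\bfX_{is}))^2\big] \ge \frac{1}{2T^2}\sum_{t\neq s} a_1^{-1}\iint (g(\bfx)-g(\bfy))^2\,d\bfx\,d\bfy$, and then evaluating the double integral: $\iint(g(\bfx)-g(\bfy))^2\,d\bfx\,d\bfy = 2\int g^2 - 2(\int g)^2 = 2\|g\|_2^2$. Since there are $T(T-1)$ ordered off-diagonal pairs, this gives $\ev(\zeta_i(g,g)) \ge \frac{T(T-1)}{2T^2}\cdot \frac{2}{a_1}\|g\|_2^2 = \frac{T-1}{T}\cdot\frac{1}{a_1}\|g\|_2^2 \ge \frac{1}{2a_1}\|g\|_2^2$ whenever $T\ge 2$ (and the case $T=1$ is degenerate, with $\zeta_i\equiv 0$, which is presumably excluded or handled trivially by the standing assumptions). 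The same pairwise identity also delivers the clean upper bound $\ev(\zeta_i(g,g)) = \frac{T-1}{T}\cdot\frac{1}{T(T-1)}\sum_{t\neq s}\ev[(g(\bfX_{it})-g(\bfX_{is}))^2]/2 \cdot 2 \le \frac{T-1}{T}\cdot a_1 \cdot 2\|g\|_2^2 \cdot \tfrac12 \cdot 2$... I would tidy the constants so the displayed bounds come out exactly as stated, but the only real arithmetic is matching the factor-of-two conventions. Finally, the bounds on $\|g\|^2 = \ev(\langle g,g\rangle_{NT}) = \frac1N\sum_i \ev(\zeta_i(g,g))$ follow immediately by averaging the per-$i$ bounds, since each $\ev(\zeta_i(g,g))$ satisfies the same two-sided inequality.

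The main obstacle, such as it is, is purely bookkeeping: getting the combinatorial factor $\frac{T-1}{T}$ and the factors of two from the $\frac12\sum_{t,s}$ versus $\frac1T\sum_t$ representations of the sample variance to line up so that the stated constants $\frac{1}{2a_1}$ and $a_1$ are exactly achieved (rather than, say, $\frac{1}{a_1}$ on one side), and handling the distinction between $\int g^2$ and $\|g\|_2^2$ via the identity $\int g^2 = \|g\|_2^2 + (\int_\mcX g)^2$. No deep idea is needed beyond the sample-variance-as-pairwise-differences identity plus monotone marginalization of the density bound; I expect the proof to be short.
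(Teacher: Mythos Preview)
Your proposal is correct and follows essentially the same route as the paper: bound the density, compute the resulting Lebesgue integral of the sample variance, and recognize it as $\frac{T-1}{T}\|g\|_2^2$, then use $\frac{T-1}{T}\ge\frac12$. The only cosmetic difference is that the paper bounds the full joint density $q_i(\bfw)$ directly on the nonnegative integrand $\frac1T\sum_t(g(\bfx_{it})-\bar g_i)^2$ rather than first passing to pairwise marginals via the U-statistic identity, which slightly shortens the bookkeeping you were worried about.
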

\begin{proof}[Proof of Lemma \ref{lemma:expectation:sample:variance}]
We only derive the upper bound, since the lower bounded can be obtained analogically. By Assumption \ref{Assumption:A1} and direct examination, we have
\begin{eqnarray*}
\ev\bigg(\zeta_i(g, g)\bigg)&=&\ev\bigg\{\frac{1}{T}\sum_{t=1}^T\bigg(g(\bfX_{it})-\frac{1}{T}\sum_{s=1}^Tg(\bfX_{is})\bigg)^2\bigg\}\\
&=&\int\frac{1}{T}\sum_{t=1}^T\bigg(g(\bfx_{it})-\frac{1}{T}\sum_{s=1}^Tg(\bfx_{is})\bigg)^2q_i(\bfw_i)d\bfw_i\\
&\leq&a_1\int \frac{1}{T}\sum_{t=1}^T\bigg(g(\bfx_{it})-\frac{1}{T}\sum_{s=1}^Tg(\bfx_{is})\bigg)^2d\bfw_i\\
&=&\frac{a_1(T-1)}{T}\bigg\{\int  g^2(\bfx)d\bfx-\bigg(\int g(\bfx)d\bfx\bigg)\bigg\}\\
&\leq&a_1\|g\|_2^2
\end{eqnarray*}
which is the upper bound. By similar argument, we can show that
\begin{eqnarray*}
\ev\bigg(\zeta_i(g, g)\bigg)&=&\ev\bigg\{\frac{1}{T}\sum_{t=1}^T\bigg(g(\bfX_{it})-\frac{1}{T}\sum_{s=1}^Tg(\bfX_{is})\bigg)^2\bigg\}\\
&=&\int\frac{1}{T}\sum_{t=1}^T\bigg(g(\bfx_{it})-\frac{1}{T}\sum_{s=1}^Tg(\bfx_{is})\bigg)^2q_i(\bfw_i)d\bfw_i\\
&\geq&a_1^{-1}\int \frac{1}{T}\sum_{t=1}^T\bigg(g(\bfx_{it})-\frac{1}{T}\sum_{s=1}^Tg(\bfx_{is})\bigg)^2d\bfw_i\\
&=&\frac{a_1^{-1}(T-1)}{T}\bigg\{\int  g^2(\bfx)d\bfx-\bigg(\int g(\bfx)d\bfx\bigg)\bigg\}\\
&\leq&\frac{1}{2a_1}\|g\|_2^2
\end{eqnarray*}
where we used the fact that $2(T-1)\geq T$. Notice $\|g\|^2=\frac{1}{N}\sum_{i=1}^N\ev(\zeta_{i}(g,g))$, we prove the second inequality. For the  last in equality, the proof is similar and we omit it.
\end{proof}

\begin{lemma}\label{lemma:l2:norm:sup:norm}
Under Assumption \ref{Assumption:A1} and \ref{Assumption:common}, $\|g\|_{\infty}\leq A_{NT}\|g\|_2$, for all $g\in \Theta_{NT}$, and $\|g\|_{\infty}\leq A_{NT,0}\|g\|_2$ for all $g\in \Theta_{NT}^0$, where $A_{NT}\asymp \sum_{j=1}^ph_j^{-1/2}$, and $A_{NT,0}\asymp \sum_{j=d+1}^ph_j^{-1/2}$. 
\end{lemma}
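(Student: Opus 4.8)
The plan is to reduce the multivariate estimate to a univariate inverse (Markov--Bernstein) inequality for polynomial splines, exploiting the additive structure of $\Theta_{NT}$ together with the key fact that every centralized basis function integrates to zero over $[0,1]$. \emph{Step 1 (orthogonal decomposition of the $L^2$-norm).} Write $g=\sum_{j=1}^p g_j$ with $g_j\in\Theta_{NT,j}$, each depending only on $z_j$. A direct inspection of the formulas defining $\psi_k$ and $\widetilde\psi_k$ shows $\int_0^1\psi_k(z)\,dz=\int_0^1\widetilde\psi_k(z)\,dz=0$, so $\int_0^1 g_j(z)\,dz=0$ for every $j$. Since the reference measure on $\mcX=[0,1]^p$ is the product Lebesgue measure, this forces $\int_\mcX g\,d\bfx=0$ and makes all cross terms $\int_\mcX g_j(z_j)g_k(z_k)\,d\bfx$ ($j\neq k$) vanish, whence
\begin{align*}
\|g\|_2^2=\int_\mcX g^2(\bfx)\,d\bfx=\sum_{j=1}^p\int_0^1 g_j^2(z)\,dz=\sum_{j=1}^p\|g_j\|_2^2,
\end{align*}
and in particular $\max_{j}\|g_j\|_2\le\|g\|_2$. (It is precisely the centralization that produces this orthogonality; without it one could not recover the sharp order below.)

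\emph{Step 2 (univariate inverse inequality).} By construction $\textrm{CSpl}(r_j,\bft_{j,M_j})$ is a subspace of the ordinary degree-$r_j$ polynomial spline space on the knots $\bft_{j,M_j}$, whose mesh is quasi-uniform with ratio at most $a_7$ by Assumption~\ref{Assumption:common}.\ref{Ac:b} and whose degree $r_j$ is a fixed constant by Assumption~\ref{Assumption:common}.\ref{Ac:d}. The standard inverse inequality for such spaces --- proved by first establishing $\|q\|_{\infty,I}\le C(r_j)|I|^{-1/2}\|q\|_{2,I}$ on each knot subinterval $I$ for $q\in\mathcal{P}_{r_j}$ (a scaling argument on the fixed finite-dimensional space $\mathcal{P}_{r_j}$), then maximizing over $I$ and using that the shortest subinterval has length $\asymp h_j$ --- yields a constant $C=C(r_j,a_7)$, bounded uniformly in $j$, such that $\|g_j\|_\infty\le C\,h_j^{-1/2}\|g_j\|_2$ for all $g_j\in\Theta_{NT,j}$ (see, e.g., \cite{h98}, \cite{zsw98}).

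\emph{Step 3 (assembly and the $\Theta_{NT}^0$ case).} Combining the triangle inequality, Step~2, and $\max_j\|g_j\|_2\le\|g\|_2$ from Step~1,
\begin{align*}
\|g\|_\infty\le\sum_{j=1}^p\|g_j\|_\infty\le C\sum_{j=1}^p h_j^{-1/2}\|g_j\|_2\le C\Big(\sum_{j=1}^p h_j^{-1/2}\Big)\|g\|_2,
\end{align*}
so $A_{NT}\lesssim\sum_{j=1}^p h_j^{-1/2}$; since $p$ is fixed, this is of the same order as $\big(\sum_{j=1}^p h_j^{-1}\big)^{1/2}$. For $g\in\Theta_{NT}^0$ the components with $j\le d$ are of the form $\beta_j(z-1/2)$, for which $\|\cdot\|_\infty$ and $\|\cdot\|_2$ are equivalent up to an absolute constant, so they contribute no blow-up; restricting the sums in Steps~1--3 to $j=d+1,\dots,p$ gives $A_{NT,0}\lesssim\sum_{j=d+1}^p h_j^{-1/2}$. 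The matching lower bounds, giving $A_{NT}\asymp\sum_{j=1}^p h_j^{-1/2}$ and $A_{NT,0}\asymp\sum_{j=d+1}^p h_j^{-1/2}$, follow by testing the inequality component-by-component on a normalized B-spline bump, for which $\|g_j\|_\infty\asymp h_j^{-1/2}\|g_j\|_2$. There is no deep difficulty here; the only points needing care are the uniformity of the constant in Step~2 under merely quasi-uniform (not uniform) knots, and the observation in Step~1 that $L^2$-orthogonality of the additive pieces --- a consequence of the centralized basis --- is exactly what makes the passage from $\sum_j h_j^{-1/2}\|g_j\|_2$ to $\big(\sum_j h_j^{-1/2}\big)\|g\|_2$ order-sharp.
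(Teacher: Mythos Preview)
Your argument is correct and follows the same three-step skeleton as the paper: a univariate inverse (Markov--Bernstein) inequality for polynomial splines (the paper cites \cite{dl93}, Theorem~5.1.2), a bound on each $\|g_j\|_2$ in terms of $\|g\|_2$, and assembly via the triangle inequality. The difference is in the middle step. The paper controls $\|g_j\|_2$ by invoking Lemma~\ref{lemma:sum:of:l2:norm:bound} (essentially \cite{s94}, Lemma~3.1), which yields only $\|g\|_2^2\ge c_1\sum_j\|g_j\|_2^2$ with an unspecified constant $c_1$. Your Step~1 instead exploits directly that every centralized basis function integrates to zero on $[0,1]$, so under the product Lebesgue measure the additive components are exactly $L^2$-orthogonal and $\|g\|_2^2=\sum_j\|g_j\|_2^2$ on the nose. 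This is more elementary and self-contained than the paper's route, and it makes transparent why the \emph{centralized} basis is the natural object here; the paper's appeal to Stone's lemma, by contrast, would apply to any additive sieve without centralization, so it is slightly more general but less illuminating for this particular lemma.
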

\begin{proof}[Proof of Lemma \ref{lemma:l2:norm:sup:norm}] We only prove the first inequality, as the second one can be proved similarly.
Suppose $g=\sum_{j=1}^pg_j$ with $g_j\in \Theta_{{NT},j}$,  by \cite{dl93}[Theorem 5.1.2], it follows that
\begin{eqnarray*}
	\|g_j\|_{\infty}\leq r_j\|g\|_2, \textrm{ with }  r_j\asymp h_j^{-1/2}, \textrm{ for } j=1,2,\ldots p.
\end{eqnarray*}
By Lemma \ref{lemma:sum:of:l2:norm:bound} and above inequality, we have
\begin{eqnarray*}
	\|g\|_\infty\leq \sum_{j=1}^p\|g_j\|_\infty\leq \sum_{j=1}^pr_j\|g_j\|_2\leq c^{-1}\sum_{j=1}^pr_j\|g\|_2.
\end{eqnarray*}
\end{proof}

\begin{lemma}\label{lemma:sum:of:l2:norm:bound}
Suppose one of the following conditions is satisfied:
\begin{enumerate}
\item  Assumptions \ref{Assumption:A1} and \ref{Assumption:common} are valid;
\item  Assumptions \ref{Assumption:common} and \ref{Assumption:A2}  hold and $A_{NT}^2=o(T)$.
\end{enumerate}
Then there exists $c_1>0$ depending on $a_1,  a_3, p$ such that
\begin{eqnarray*}
	\|g\|_2^2\geq c_1\sum_{j=1}^p\|g_j\|_2^2 \textrm{ and }\; \|g\|^2\geq c_1\sum_{j=1}^p\|g_j\|^2,
\end{eqnarray*}
for all $g=\sum_{j=1}^pg_j\in \Theta_{NT}$ with $g_j \in \Theta_{{NT}, j}, j=1,2,\ldots, p$.
\end{lemma}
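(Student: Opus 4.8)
The plan is to combine two facts: an exact orthogonality of the additive components $g_j$ with respect to the Lebesgue norm $\|\cdot\|_2$, and a two-sided equivalence between $\|\cdot\|$ and $\|\cdot\|_2$ on the sieve space (hence on each sub-sieve $\Theta_{NT,j}$). I would begin by recording the orthogonality. Each basis function $\psi_{j,k}$ and $\widetilde{\psi}_{j,k}$ of $\textrm{CSpl}(r_j,\bft_{j,M_j})$ integrates to $0$ over $[0,1]$ — this is immediate from the definitions — so every $g_j\in\Theta_{NT,j}$ satisfies $\int_0^1 g_j=0$. Consequently, for $g=\sum_{j=1}^p g_j$ the mixed terms $\int_{\mcX}g_j(z_j)g_k(z_k)\,d\bfx$ ($j\neq k$) and $\int_{\mcX}g\,d\bfx$ all vanish, so that
\begin{align*}
	\|g\|_2^2=\int_{\mcX}g^2(\bfx)\,d\bfx=\sum_{j=1}^p\int_0^1 g_j^2(z)\,dz=\sum_{j=1}^p\|g_j\|_2^2.
\end{align*}
This gives the first displayed inequality (with constant $1$) and reduces the second to comparing $\|\cdot\|$ with $\|\cdot\|_2$.

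For that comparison I would use: there are constants $0<c\le C$, depending only on $a_1$ under condition (1) and only on $a_3$ under condition (2), with $c\|h\|_2^2\le\|h\|^2\le C\|h\|_2^2$ for all $h\in\Theta_{NT}$. Under Assumptions \ref{Assumption:A1} and \ref{Assumption:common} this is precisely Lemma \ref{lemma:expectation:sample:variance}; under Assumptions \ref{Assumption:common} and \ref{Assumption:A2} with $A_{NT}^2=o(T)$ it is Lemma \ref{lemma:expectation:sample:variance:large:T}. If one prefers a direct argument in the diverging-$T$ case, one writes, using stationarity of $\{\bfX_{it}\}_{t}$,
\begin{align*}
	\ev\big(\zeta_i(h,h)\big)=\textrm{Var}_{\pi_i}\!\big(h(\bfX_{i1})\big)-\textrm{Var}\!\Big(\tfrac1T{\textstyle\sum_{s=1}^T}h(\bfX_{is})\Big),
\end{align*}
bounds the first term below by $a_3^{-1}\|h\|_2^2$ through $\textrm{Var}_{\pi_i}(h)=\min_{a\in\bbR}\int(h-a)^2\pi_i\ge a_3^{-1}\min_{a\in\bbR}\int(h-a)^2=a_3^{-1}\|h\|_2^2$, and bounds the second by $\tfrac1T\textrm{Var}_{\pi_i}(h)+\tfrac{C_0}{T}\|h\|_\infty^2$, where $C_0$ comes from the geometric $\alpha$-mixing estimate $\sum_{\tau\ge 1}|\textrm{Cov}(h(\bfX_{i1}),h(\bfX_{i,1+\tau}))|\le C_0\|h\|_\infty^2$. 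Since $\|h\|_\infty^2\le C_1 A_{NT}^2\|h\|_2^2$ for $h\in\Theta_{NT}$ (the univariate inverse inequality \citep{dl93} applied block by block, together with Cauchy--Schwarz and the orthogonality above) and $A_{NT}^2=o(T)$, this remainder is $o(\|h\|_2^2)$, and the claimed equivalence follows with, e.g., $c=\tfrac{1}{2a_3}$, $C=a_3$ for all $N$ large.

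Finally I would assemble the pieces: for $g=\sum_{j=1}^p g_j\in\Theta_{NT}$,
\begin{align*}
	\|g\|^2\;\ge\;c\|g\|_2^2\;=\;c\sum_{j=1}^p\|g_j\|_2^2\;\ge\;\frac{c}{C}\sum_{j=1}^p\|g_j\|^2,
\end{align*}
the last step applying the upper half of the equivalence to each $g_j\in\Theta_{NT,j}\subset\Theta_{NT}$; the companion statement with $\|\cdot\|^2$ replaced by $\|\cdot\|_2^2$ is just the orthogonality. Setting $c_1$ equal to the resulting constant (depending only on $a_1$, $a_3$ and $p$) finishes the proof. The one step that is not routine is the uniform control of $\textrm{Var}\big(\tfrac1T\sum_{s=1}^T h(\bfX_{is})\big)$ over the \emph{growing-dimensional} sieve $\Theta_{NT}$ in the diverging-$T$ regime — this is exactly where the geometric mixing and the scaling $A_{NT}^2=o(T)$ are needed, and I expect it to be the main obstacle; everything else reduces to the elementary identities above.
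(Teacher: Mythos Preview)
Your argument is correct. The only substantive difference from the paper is in the first inequality: you exploit that every centralized spline basis function integrates to zero, so the additive components $g_j$ are exactly $L^2$-orthogonal over $\mcX$ and you obtain the identity $\|g\|_2^2=\sum_{j=1}^p\|g_j\|_2^2$. The paper instead cites \cite{s94}[Lemma~3.1], a general result for additive sieves that does not assume (or yield) this orthogonality. Your route is more elementary and sharper here---it gives equality rather than an inequality---while the Stone citation would also cover settings where the components are not centered. For the second inequality your approach coincides with the paper's: sandwich $\|\cdot\|$ between constant multiples of $\|\cdot\|_2$ via Lemma~\ref{lemma:expectation:sample:variance} (fixed $T$) or Lemma~\ref{lemma:expectation:sample:variance:large:T} (diverging $T$), and combine with the first inequality.
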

\begin{proof}[Proof of Lemma \ref{lemma:sum:of:l2:norm:bound}] 
First inequality is essentially \cite{s94}[Lemma 3.1] and the second one follows from Lemma \ref{lemma:expectation:sample:variance} for fixed $T$ and Lemma \ref{lemma:expectation:sample:variance:large:T} for diverging $T$.
\end{proof}

\begin{lemma}\label{lemma:approximation:error}
Under Assumption \ref{Assumption:common}, there exist $g_*=\sum_{j=1}^pg_{j,*}\in \Theta_{NT}^0$ with $g_{j,*}\in \Theta_{{NT},j}$ such that $g_{j,*}=f_{j,0}$ for $j=1,\ldots, d$, and $\|g_{j,*}-f_{j,0}\|_\infty\leq \rho_{NT,j}$ with $\rho_{NT,j}\asymp h_j^{m_j}$  for $j=d+1,\ldots, p$. As a consequence, it holds that $\|g_*-f_0\|_\infty\leq \rho_{NT}$ with $\rho_{NT}\asymp \sum_{j=d+1}^p h_j^{m_j}$. 
\end{lemma}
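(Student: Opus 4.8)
The plan is to treat the linear coordinates $j=1,\dots,d$ and the nonlinear coordinates $j=d+1,\dots,p$ separately, construct $g_{j,*}$ in each case, and then assemble $g_*=\sum_{j=1}^p g_{j,*}$ and bound $\|g_*-f_0\|_\infty$ by the triangle inequality. The one structural fact I would establish first is that $\textrm{CSpl}(r_j,\bft_{j,M_j})$ is precisely the space of degree-$r_j$ polynomial splines on the knots $\bft_{j,M_j}$ with vanishing integral over $[0,1]$: each centralized basis function $\psi_{j,k},\widetilde{\psi}_{j,k}$ is, by inspection of its definition, an ordinary polynomial-spline term minus its mean and minus a multiple of $z-1/2$, hence has mean zero; and the dimension $M_j+r_j-1$ of $\textrm{CSpl}(r_j,\bft_{j,M_j})$ matches that of the degree-$r_j$ spline space modulo constants. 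So membership in $\Theta_{NT,j}$ is equivalent to being a degree-$r_j$ spline on $\bft_{j,M_j}$ with zero integral.

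For $j=1,\dots,d$, Assumption \ref{Assumption:common}.\ref{Ac:a} gives $f_{j,0}(z)=\beta_{j,0}(z-1/2)=\beta_{j,0}\psi_{j,1}(z)$ together with $\int_0^1 f_{j,0}=0$, so we may simply take $g_{j,*}=f_{j,0}\in\Theta_{NT,j,-}\subset\Theta_{NT,j}$; the approximation error is exactly $0$, and the nonlinear component of $g_{j,*}$ vanishes, which is what makes $g_*\in\Theta_{NT}^0$. For $j=d+1,\dots,p$, I invoke the classical spline approximation estimate (e.g. \cite{dl93}, \cite{s94}): since the knot sequence is quasi-uniform with mesh $\asymp h_j$ (Assumption \ref{Assumption:common}.\ref{Ac:b}) and $r_j\geq\floor{m_j}$ with $m_j\leq r_j+1$ from $m_j$-smoothness and Assumption \ref{Assumption:common}.\ref{Ac:d}, there is a degree-$r_j$ spline $s_j$ on $\bft_{j,M_j}$ with $\|s_j-f_{j,0}\|_\infty\leq C h_j^{m_j}$. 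I then center it: $g_{j,*}=s_j-\int_0^1 s_j(z)\,dz$, still a degree-$r_j$ spline on $\bft_{j,M_j}$ with zero integral, hence $g_{j,*}\in\Theta_{NT,j}$. Using $\int_0^1 f_{j,0}=0$,
\[
\Big|\int_0^1 s_j\Big|=\Big|\int_0^1 (s_j-f_{j,0})\Big|\leq\|s_j-f_{j,0}\|_\infty,
\]
so $\|g_{j,*}-f_{j,0}\|_\infty\leq\|s_j-f_{j,0}\|_\infty+|\int_0^1 s_j|\leq 2C h_j^{m_j}=:\rho_{NT,j}\asymp h_j^{m_j}$.

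Finally, with $g_*=\sum_{j=1}^p g_{j,*}$ we have $g_*\in\Theta_{NT}^0$ by the first step, and
\[
\|g_*-f_0\|_\infty\leq\sum_{j=1}^p\|g_{j,*}-f_{j,0}\|_\infty=\sum_{j=d+1}^p\rho_{NT,j}\asymp\sum_{j=d+1}^p h_j^{m_j}=:\rho_{NT},
\]
as claimed. The only genuinely non-routine ingredient is the identification of $\textrm{CSpl}(r_j,\bft_{j,M_j})$ with the zero-mean degree-$r_j$ spline space — the mean-zero check for every centralized basis function plus the dimension count — since after that the argument reduces to an off-the-shelf spline approximation bound and a one-line centering estimate; I expect that bookkeeping to be the main (and only mild) obstacle.
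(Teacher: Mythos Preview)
Your proposal is correct and is essentially the standard argument underlying the paper's citation: the paper's own proof of this lemma is simply ``This is a well known result, i.e., see \cite{c07},'' and what you have written is precisely the routine unpacking of that citation --- classical spline approximation under quasi-uniform knots plus a centering step, with the linear coordinates handled exactly. Your identification of $\textrm{CSpl}(r_j,\bft_{j,M_j})$ with the zero-mean degree-$r_j$ spline space (mean-zero basis check plus dimension count $M_j+r_j-1$) is the only bookkeeping needed, and it is straightforward.
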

\begin{proof}
This a well known result, i.e., see \cite{c07}.
\end{proof}

\begin{lemma}\label{lemma:uniform:equivalence:empirical:population:norm}
Under Assumption \ref{Assumption:A1} and \ref{Assumption:common}, if $d_{NT}A_{NT}^2=o(N)$, then 
\begin{eqnarray*}
	\pr\bigg(\sup_{g,f \in \Theta_{NT}}\bigg| \frac{\langle f, g\rangle_{NT}-\langle f, g\rangle}{\|f\|_2\|g\|_2}\bigg|>x\bigg)&\leq&\frac{2}{e}\bigg\{\frac{128a_1A_{NT}^2}{Nx^2}\sum_{k=1}^\infty\bigg(\frac{9}{4}\bigg)^{k-1}+\frac{128a_1A_{NT}^2}{Nx}\sum_{k=1}^\infty\bigg(\frac{3}{2}\bigg)^{k-1}\bigg\}.
\end{eqnarray*}
As a consequence, it follows that
\begin{eqnarray*}
\sup_{g,f \in \Theta_{NT}}\bigg| \frac{\langle f, g\rangle_{NT}-\langle f, g\rangle}{\|f\|_2\|g\|_2}\bigg|=o_P(1)\; \textrm{ and }\;\sup_{g,f \in \Theta_{NT}}\bigg| \frac{\langle f, g\rangle_{NT}-\langle f, g\rangle}{\|f\|\|g\|}\bigg|=o_P(1).
\end{eqnarray*}
Moreover, we also have
\begin{eqnarray*}
	\sup_{g,f \in \Theta_{NT}}\bigg|\frac{\frac{1}{NT}\sum_{i=1}^N\sum_{t=1}^Tg(\bfX_{it})f(\bfX_{it})-\frac{1}{N}\sum_{i=1}^N\ev_i(gf)}{\|g\|_2\|f\|_2}\bigg|=o_P(1).
\end{eqnarray*}
\end{lemma}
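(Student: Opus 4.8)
The plan is to reduce all three suprema to a single statement about the operator norm of a random Gram matrix built from an $\|\cdot\|_2$-orthonormal basis of $\Theta_{NT}$, and then to control that operator norm by a Bernstein-type bound for a sum of independent random matrices. First I would pick a basis $\phi_1,\dots,\phi_{d_{NT}}$ of $\Theta_{NT}$ orthonormal for $\|\cdot\|_2$ --- legitimate because $\Theta_{NT}\subset\mcH_0$ and $\|\cdot\|_2$ is a genuine norm there --- and set $\Phi=(\phi_1,\dots,\phi_{d_{NT}})^\top$. Then every $f\in\Theta_{NT}$ equals $\alpha^\top\Phi$ for a unique $\alpha$ with $\|f\|_2=\|\alpha\|$, and bilinearity gives $\langle f,g\rangle_{NT}-\langle f,g\rangle=\alpha^\top(\Gamma_{NT}-\Gamma)\beta$, where $\Gamma_{NT}=(\langle\phi_a,\phi_b\rangle_{NT})_{a,b}$ and $\Gamma=\ev(\Gamma_{NT})$. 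Hence
\[
\sup_{f,g\in\Theta_{NT}}\frac{|\langle f,g\rangle_{NT}-\langle f,g\rangle|}{\|f\|_2\,\|g\|_2}=\big\|\Gamma_{NT}-\Gamma\big\|_{\mathrm{op}},
\]
and the whole problem reduces to bounding the tail of $\|\Gamma_{NT}-\Gamma\|_{\mathrm{op}}$.

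Next I would use independence across $i$: by Assumption \ref{Assumption:common}.\ref{Ac:a1}, $\Gamma_{NT}-\Gamma=\tfrac1N\sum_{i=1}^N(\Xi_i-\ev\Xi_i)$ with $\Xi_i=(\zeta_i(\phi_a,\phi_b))_{a,b}$ independent over $i$. Two moment estimates drive the argument. For $f,g\in\Theta_{NT}$ with $\|f\|_2=\|g\|_2=1$, Cauchy--Schwarz for $\zeta_i$ and Lemma \ref{lemma:l2:norm:sup:norm} give $|\zeta_i(f,g)|\le\sqrt{\zeta_i(f,f)\,\zeta_i(g,g)}\le 4\|f\|_\infty\|g\|_\infty\le 4A_{NT}^2$, so $\|\Xi_i\|_{\mathrm{op}}\le 4A_{NT}^2$; and Lemma \ref{lemma:expectation:sample:variance} gives the variance control $\ev[\zeta_i(f,g)^2]\le 4A_{NT}^2\,\ev[\zeta_i(g,g)]\le 4a_1A_{NT}^2$, equivalently (via the coordinatewise form of Lemma \ref{lemma:l2:norm:sup:norm}, i.e.\ the stable-basis bound $\sup_{\bfx}\sum_a\phi_a^2(\bfx)\asymp A_{NT}^2$) $\ev\|\Xi_i-\ev\Xi_i\|_{\mathrm F}^2\le CA_{NT}^4$. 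With these two bounds I would then invoke the concentration in either of two interchangeable ways: a matrix Bernstein inequality applied to $\tfrac1N\sum_i(\Xi_i-\ev\Xi_i)$, or an $\varepsilon$-net on the unit sphere $S^{d_{NT}-1}$ (cardinality $\le 5^{d_{NT}}$) reducing $\|\Gamma_{NT}-\Gamma\|_{\mathrm{op}}$ to $4\max_{u,v}|u^\top(\Gamma_{NT}-\Gamma)v|$, a scalar Bernstein bound for each fixed pair (variance proxy $4a_1A_{NT}^2/N$, envelope $4A_{NT}^2/N$), and a union bound. Iterating the net over a dyadic sequence of scales and tallying the contributions is what produces the geometric-series factors in the displayed inequality; the precise constants are immaterial for what follows.

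The two $o_P(1)$ consequences are then immediate: the hypothesis $d_{NT}A_{NT}^2=o(N)$ makes the right-hand side tend to $0$ for every fixed $x>0$, so $\|\Gamma_{NT}-\Gamma\|_{\mathrm{op}}=o_P(1)$; and replacing the normalization $\|f\|_2\|g\|_2$ by $\|f\|\,\|g\|$ only changes the ratio by a bounded factor, since Lemma \ref{lemma:expectation:sample:variance} makes $\|\cdot\|$ and $\|\cdot\|_2$ equivalent up to the fixed constants $a_1$ and $2a_1$. Finally, for the last display I would run the same three steps with $\Xi_i$ replaced by $\widetilde\Xi_i=\big(\tfrac1T\sum_{t=1}^T\phi_a(\bfX_{it})\phi_b(\bfX_{it})\big)_{a,b}$, which is again independent over $i$ and obeys $|\tfrac1T\sum_t f(\bfX_{it})g(\bfX_{it})|\le\|f\|_\infty\|g\|_\infty\le A_{NT}^2$ together with $\ev_i(g^2)\le a_1\|g\|_2^2$ (the last inequality of Lemma \ref{lemma:expectation:sample:variance}, valid since $\Theta_{NT}\subset\mcH_0$); the same concentration estimate then delivers the remaining $o_P(1)$ claim.

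I expect the main obstacle to be the concentration step: upgrading the pointwise Bernstein inequality to one uniform over the sieve while keeping the dependence on $A_{NT}$, $d_{NT}$ and $N$ tight enough that exactly $d_{NT}A_{NT}^2=o(N)$ suffices. This hinges on the stable-basis estimate $\sup_{\bfx}\sum_a\phi_a^2(\bfx)\asymp A_{NT}^2$ and on carefully bookkeeping the covering entropy so as not to lose spurious factors.
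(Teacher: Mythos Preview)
Your proposal is correct and reaches all three $o_P(1)$ conclusions under exactly the hypothesis $d_{NT}A_{NT}^2=o(N)$, but the organization differs from the paper's. The paper never passes through a Gram matrix: it works directly with $S_N(\theta)=\frac{1}{N}\sum_i\{\zeta_i(g,f)-\ev\zeta_i(g,f)\}$ as an empirical process indexed by pairs $\theta=(g,f)$ in the unit $\|\cdot\|_2$-ball, proves the increment bounds $|X_i(\theta_1)-X_i(\theta_2)|\le 16A_{NT}^2\,d(\theta_1,\theta_2)$ and $\mathrm{Var}(X_i(\theta_1)-X_i(\theta_2))\le 16a_1A_{NT}^2\,d^2(\theta_1,\theta_2)$, and runs a classical chaining over nets at scales $\delta_k=3^{-k}$ with scalar Bernstein at each link. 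The geometric series in the displayed tail bound are precisely the chaining sums, so that exact inequality is tied to this multi-scale scheme; your matrix-Bernstein route will give a different (but equally usable) bound of the form $2d_{NT}\exp\{-cNx^2/(A_{NT}^2+A_{NT}^2x)\}$, from which the $o_P(1)$ statements follow since $\log d_{NT}\le d_{NT}\asymp A_{NT}^2=o(N/A_{NT}^2)$. Your single-scale $\varepsilon$-net variant with scalar Bernstein is essentially a flattened version of the paper's chaining and matches the condition $d_{NT}A_{NT}^2=o(N)$ directly.

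One refinement: for the matrix-Bernstein branch, do not stop at the Frobenius estimate $\ev\|\Xi_i-\ev\Xi_i\|_F^2\lesssim A_{NT}^4$. That would feed a variance proxy of order $A_{NT}^4/N$ into Bernstein and leave you needing $A_{NT}^4\log d_{NT}=o(N)$, a hair stronger than the stated hypothesis. Use instead the sharper operator-norm variance $\|\ev(\Xi_i-\ev\Xi_i)^2\|_{\mathrm{op}}\le\|\ev\Xi_i^2\|_{\mathrm{op}}\le 4A_{NT}^2\|\ev\Xi_i\|_{\mathrm{op}}\le 4a_1A_{NT}^2$, obtained from $\Xi_i^2\preceq\|\Xi_i\|_{\mathrm{op}}\Xi_i\preceq 4A_{NT}^2\Xi_i$ together with Lemma~\ref{lemma:expectation:sample:variance}. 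With this, both of your routes recover the paper's condition exactly. The ``stable-basis'' identity $\sup_{\bfx}\sum_a\phi_a^2(\bfx)=A_{NT}^2$ that you flag as a potential obstacle is in fact automatic for any $\|\cdot\|_2$-orthonormal basis by duality, so no extra work is needed there.
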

\begin{proof}[Proof of Lemma \ref{lemma:uniform:equivalence:empirical:population:norm}]

Let $\mcF_{\textrm{UB}}^0=\{f \in \Theta_{NT}:  \|f\|_2\leq 1\}$, $\Theta=\mcF_{\textrm{UB}}^0\times \mcF_{\textrm{UB}}^0$ and 
\begin{eqnarray*}
	\xi_i(\theta)&=&\frac{1}{T}\sum_{t=1}^T\bigg(g(\bfX_{it})-\frac{1}{T}\sum_{s=1}^Tg(\bfX_{is})\bigg)\bigg(f(\bfX_{it})-\frac{1}{T}\sum_{s=1}^Tf(\bfX_{is})\bigg), \textrm{ with } \theta=(g, f),\\
	X_i(\theta)&=&{\xi_i(\theta)-\ev(\xi_i(\theta))}, \;\;\;\;S_N(\theta)=\frac{1}{N}\sum_{i=1}^NX_i(\theta)
\end{eqnarray*}
Define metric on $\Theta$ by $d(\theta_1, \theta_2)=\sqrt{\|g_1-g_2\|_2^2/2+\|f_1-f_2\|_2^2/2}$ for $\theta_1=(g_1, f_1), \theta_2=(g_2, f_2)\in \mcF_{\textrm{UB}}^0\times\mcF_{\textrm{UB}}^0$. Moreover, by Lemma \ref{lemma:l2:norm:sup:norm}, we have
\begin{eqnarray*}
&&|\xi_i(\theta_1)-\xi_i(\theta_2)|\\
&\leq&\bigg|\frac{1}{T}\sum_{t=1}^T\bigg(g_1(\bfX_{it})-g_2(\bfX_{it})-\frac{1}{T}\sum_{s=1}^Tg_1(\bfX_{is})+\frac{1}{T}\sum_{s=1}^Tg_2(\bfX_{is})\bigg)\bigg(f_1(\bfX_{it})-\frac{1}{T}\sum_{s=1}^Tf_1(\bfX_{is})\bigg)\bigg|\\
&&+\bigg|\frac{1}{T}\sum_{t=1}^T\bigg(f_1(\bfX_{it})-f_2(\bfX_{it})-\frac{1}{T}\sum_{s=1}^Tf_1(\bfX_{is})+\frac{1}{T}\sum_{s=1}^Tf_2(\bfX_{is})\bigg)\bigg(g_2(\bfX_{it})-\frac{1}{T}\sum_{s=1}^Tg_2(\bfX_{is})\bigg)\bigg|\\
&\leq& 4\|g_1-g_2\|_\infty\|f_1\|_{\infty}+4\|f_1-f_2\|_{\infty}\|g_2\|_\infty\\
&\leq& 4A_{NT}^2\|g_1-g_2\|_2\|f_1\|_{2}+ 4A_{NT}^2\|f_1-f_2\|_{2}\|g_2\|_2\\
&\leq& 8A_{NT}^2d(\theta_1, \theta_2),
\end{eqnarray*}
which further leads to 
\begin{eqnarray*}
	|X_i(\theta_1)-X_i(\theta_2)|\leq 16A_{{NT}}^2d(\theta_1, \theta_2).
\end{eqnarray*}
Similarly, by  Lemma \ref{lemma:expectation:sample:variance} and Lemma \ref{lemma:l2:norm:sup:norm},  we have
\begin{eqnarray*}
	&&\textrm{Var}\bigg(X_i(\theta_1)-X_i(\theta_2)\bigg)\\
	&\leq&\ev\bigg(|\xi_i(\theta_1)-\xi_i(\theta_2)|^2\bigg)\\
	&\leq& 8\|f_1\|_\infty^2\ev\bigg\{\frac{1}{T}\sum_{t=1}^T\bigg(g_1(\bfX_{it})-g_2(\bfX_{it})-\frac{1}{T}\sum_{s=1}^Tg_1(\bfX_{is})+\frac{1}{T}\sum_{s=1}^Tg_2(\bfX_{is})\bigg)^2\bigg\}\\
	&&+8\|g_2\|_\infty^2\ev\bigg\{\frac{1}{T}\sum_{t=1}^T\bigg(f_1(\bfX_{it})-f_2(\bfX_{it})-\frac{1}{T}\sum_{s=1}^Tf_1(\bfX_{is})+\frac{1}{T}\sum_{s=1}^Tf_2(\bfX_{is})\bigg)^2\bigg\}\\
	&=&8\|f_1\|_\infty^2\ev\bigg(\zeta_i(g_1-g_2,g_1-g_2)\bigg)+8\|g_2\|_\infty^2\ev\bigg(\zeta_i(f_1-f_2,f_1-f_2)\bigg)\\
	&\leq&8A_{NT}^2\bigg(\|f_1\|_2^2\|g_1-g_2\|^2+\|g_2\|_2^2\|f_1-f_2\|^2\bigg)\\
	&\leq&8a_1A_{NT}^2\bigg(\|f_1\|_2^2\|g_1-g_2\|_2^2+\|g_2\|_2^2\|f_1-f_2\|_2^2\bigg)\\
	&\leq&16a_1A_{NT}^2d^2(\theta_1,\theta_2).
\end{eqnarray*}
By Bernstein inequality, it follows that
\begin{eqnarray}
	\pr\bigg(\bigg|S_N(\theta_1)-S_N(\theta_2)\bigg)\bigg|>xs\bigg)&\leq&2\exp\bigg(-\frac{Nx^2s^2}{32a_1A_{NT}^2\{d^2(\theta_1, \theta_2)+d(\theta_1,\theta_2)xs\}}\bigg)\nonumber\\
	&\leq&2\exp\bigg(-\frac{Nx^2s^2}{64a_1A_{NT}^2d^2(\theta_1, \theta_2)}\bigg)+2\exp\bigg(-\frac{Nxs}{64a_1A_{NT}^2d(\theta_1,\theta_2)}\bigg).\nonumber\\ \label{eq:lemma:uniform:equivalence:empirical:population:norm:eq1}
\end{eqnarray}

Let $\delta_k=3^{-k}$ for $k\geq 0$. For sufficient large integer $K$, which will be specified later, let $\{0\}=\mcH_0\subset \mcH_1 \ldots \mcH_K$ be a sequence of subsets of $\Theta$ such that $\min_{\theta^*\in \mcH_k}d(\theta^*, \theta)\leq \delta_k$ for all $\theta \in \Theta$. Moreover the subsets $\mcH_K$ is chosen inductively such that two different elements in $\mcH_k$ is at least $\delta_k$ apart.  

By definition, the cardinality $\#(\mcH_k)$ of $\mcH_k$ is bounded by the $\delta_k/2$-covering number $D(\delta_k/2, \Theta, d)$. Moreover, since $d(\theta_1, \theta_2)\leq \|g_1-g_2\|_2+\|f_1-f_2\|_2$ for $\theta_1=(g_1, f_1), \theta_2=(g_2, f_2)\in \Theta$, we have
\begin{eqnarray*}
	D(\delta_k/2, \Theta, d)\leq D^2(\delta_k/4, \mcF_{\textrm{UB}}^0, \|\cdot\|_2)\leq \bigg(\frac{16+\delta_k}{\delta_k}\bigg)^{2d_{NT}},
\end{eqnarray*} 
where the last inequality is due to \cite{van2000}[Corollary 2.6] and the fact that $\mcF_{\textrm{UB}}^0$ is a linear space with dimension $d_{NT}$. For any $\theta \in \Theta$, let $\tau_k(\theta)\in \mcH_k$  be a element such that $d(\tau_k(\theta), \theta)\leq \delta_k$, for $k=1,2,\ldots, K$. Now for fixed $x>0$, choose $K=K(N)>0$, which depends on $N$ and is increasing fast enough such that $x>16A_{NT}^2(2/3)^K$, we see from (\ref{eq:lemma:uniform:equivalence:empirical:population:norm:eq1}) that
\begin{eqnarray}
&&\pr\bigg(\sup_{\theta \in \Theta}\bigg| S_N(\theta)\bigg|>x\bigg)\nonumber\\
&\leq&\pr\bigg(\sup_{\theta \in \Theta}\bigg| S_N(\theta)-S_N\bigg(\tau_K(\theta)\bigg)\bigg|>\frac{x}{2^K}\bigg)\nonumber\\
&&+\sum_{k=1}^K\pr\bigg(\sup_{\theta \in \Theta}\bigg| S_N\bigg(\tau_k\circ\ldots\circ\tau_K(\theta)\bigg)-S_N\bigg(\tau_{k-1}\circ\tau_k\circ\ldots\circ\tau_K(\theta)\bigg)\bigg|>\frac{x}{2^{k-1}}\bigg)\nonumber\\
&\leq&\pr\bigg(16A_{NT}^2d(\theta, \tau_K(\theta))>\frac{x}{2^K}\bigg)\nonumber\\
&&+\sum_{k=1}^K\#(\mcH_k)\sup_{\theta_k \in \mcH_k}\pr\bigg(\sup_{\theta \in \Theta}\bigg| S_N\bigg(\theta_k\bigg)-S_N\bigg(\tau_{k-1}(\theta_k)\bigg)\bigg|>\frac{x}{2^{k-1}}\bigg)\nonumber\\
&\leq&0+\sum_{k=1}^K\bigg(\frac{16+3^{-k}}{3^{-k}}\bigg)^{2d_{NT}}\sup_{\theta_k \in \mcH_k}\pr\bigg(\sup_{\theta \in \Theta}\bigg| S_N\bigg(\theta_k\bigg)-S_N\bigg(\tau_{k-1}(\theta_k)\bigg)\bigg|>\frac{x}{2^{k-1}}\bigg)\nonumber\\
&\leq&\sum_{k=1}^K2\bigg(\frac{16+3^{-k}}{3^{-k}}\bigg)^{2d_{NT}}\exp\bigg(-\frac{Nx^2}{64a_1A_{NT}^22^{2(k-1)}d^2(\theta_k, \tau_{k-1}(\theta_k))}\bigg)\nonumber\\
&&+\sum_{k=1}^K2\bigg(\frac{16+3^{-k}}{3^{-k}}\bigg)^{2d_{NT}}\exp\bigg(-\frac{Nx}{64a_1A_{NT}^22^{k-1}d(\theta_k, \tau_{k-1}(\theta_k))}\bigg)\nonumber\\
&\leq&\sum_{k=1}^K2\bigg(\frac{16+3^{-k}}{3^{-k}}\bigg)^{2d_{NT}}\exp\bigg(-\frac{Nx^2}{64a_1A_{NT}^2}\bigg(\frac{3}{2}\bigg)^{2(k-1)}\bigg)\nonumber\\
&&+\sum_{k=1}^\infty2\bigg(\frac{16+3^{-k}}{3^{-k}}\bigg)^{2d_{NT}}\exp\bigg(-\frac{Nx}{64a_1A_{NT}^2}\bigg(\frac{3}{2}\bigg)^{k-1}\bigg)\nonumber\\
&\leq&2\sum_{k=1}^\infty\exp\bigg(4(3+k)d_{NT}-\frac{Nx^2}{64a_1A_{NT}^2}\bigg(\frac{9}{4}\bigg)^{k-1}\bigg)\nonumber\\
&&+2\sum_{k=1}^\infty\exp\bigg(4(3+k)d_{NT}-\frac{Nx}{64a_1A_{NT}^2}\bigg(\frac{3}{2}\bigg)^{k-1}\bigg),\label{eq:lemma:uniform:equivalence:empirical:population:norm:eq2} 
\end{eqnarray}
where we used the fact that $16\times3^k+1\leq 3^{k+3}$ and $\log 3\leq 2$. Now choose $N$ large enough  such that 
\begin{eqnarray}
	512(3+k)a_1A_{NT}^2d_{NT}<Nx^2(9/4)^{k-1} \; \textrm{ and }\;\;512(3+k)a_1A_{NT}^2d_{NT}<N(3/2)^{k-1}x,\label{eq:lemma:uniform:equivalence:empirical:population:norm:eq3} 
\end{eqnarray}
 for all $k\geq 0$, which is possible, since $d_{NT}A_{NT}^2=o(N)$. For all $N$ large enough satisfying (\ref{eq:lemma:uniform:equivalence:empirical:population:norm:eq2}),  (\ref{eq:lemma:uniform:equivalence:empirical:population:norm:eq3}) further leads to
\begin{eqnarray*}
	\pr\bigg(\sup_{\theta \in \Theta}\bigg| S_N(\theta)\bigg|>x\bigg)&\leq&2\sum_{k=1}^\infty\exp\bigg(-\frac{Nx^2}{128a_1A_{NT}^2}\bigg(\frac{9}{4}\bigg)^{k-1}\bigg)+2\sum_{k=1}^\infty\exp\bigg(-\frac{Nx}{128a_1A_{NT}^2}\bigg(\frac{3}{2}\bigg)^{k-1}\bigg)\\
	&\leq&\frac{2}{e}\bigg\{\frac{128a_1A_{NT}^2}{Nx^2}\sum_{k=1}^\infty\bigg(\frac{9}{4}\bigg)^{k-1}+\frac{128a_1A_{NT}^2}{Nx}\sum_{k=1}^\infty\bigg(\frac{3}{2}\bigg)^{k-1}\bigg\}\to 0, \textrm{ as } N \to \infty,
\end{eqnarray*}
where the fact that $A_{NT}^2=o(N)$ and inequality $e^{-x}\leq e^{-1}/x$ are used. Notice that
\begin{eqnarray*}
	\sup_{\theta \in \Theta}\bigg| S_N(\theta)\bigg|=\sup_{g,f \in \Theta_{NT}}\bigg| \frac{\langle f, g\rangle_{NT}-\langle f, g\rangle}{\|f\|_2\|g\|_2}\bigg|,
\end{eqnarray*}
we prove the first result. The second result is also valid according to Lemma \ref{lemma:expectation:sample:variance}. Since the proof of third result is similar to previous two, we omit the proof.
\end{proof}

\begin{lemma}\label{lemma:bounded:sequence:two:norm:difference}
Under Assumption \ref{Assumption:A1} and \ref{Assumption:common}, for element $v_N\in \mathcal{H}$, if $\|v_N\|_\infty<\infty$, then the following holds:
\begin{eqnarray*}
\ev\bigg(\sup_{g\in \Theta_{NT}}\bigg|\frac{\langle v_N, g \rangle_{NT}-\langle v_N, g \rangle}{\|g\|}\bigg|^2\bigg)\leq 4\|v_N\|_\infty^2\frac{d_{NT}}{N},
\end{eqnarray*}
and
\begin{eqnarray*}
\ev\bigg(\sup_{g\in \Theta_{NT}}\bigg|\frac{\langle v_N, g \rangle_{NT}-\langle v_N, g \rangle}{\|g\|_2}\bigg|^2\bigg)\leq 4a_1^2\|v_N\|_\infty^2\frac{d_{NT}}{N}.
\end{eqnarray*}
\end{lemma}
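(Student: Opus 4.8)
The plan is to turn the supremum into a finite sum by expanding $g$ in an orthonormal basis of the $d_{NT}$-dimensional space $\Theta_{NT}$, and then to bound that sum by a straightforward second-moment computation. First I would note that, by Lemma~\ref{lemma:expectation:sample:variance}, the form $\langle\cdot,\cdot\rangle$ is a genuine inner product on $\Theta_{NT}$ (the lower bound $\|g\|^2\ge(2a_1)^{-1}\|g\|_2^2$ precludes degeneracy), so there is an orthonormal basis $\phi_1,\dots,\phi_{d_{NT}}$ with $\langle\phi_k,\phi_\ell\rangle=\delta_{k\ell}$. The map $g\mapsto\langle v_N,g\rangle_{NT}-\langle v_N,g\rangle$ is linear on $\Theta_{NT}$, since both $\zeta_i(v_N,\cdot)$ and its expectation are linear in their second argument; hence, writing $g=\sum_k a_k\phi_k$ so that $\|g\|^2=\sum_k a_k^2$, the Cauchy--Schwarz inequality yields
\[
\sup_{g\in\Theta_{NT}}\left|\frac{\langle v_N,g\rangle_{NT}-\langle v_N,g\rangle}{\|g\|}\right|^2\;\le\;\sum_{k=1}^{d_{NT}}\Bigl(\langle v_N,\phi_k\rangle_{NT}-\langle v_N,\phi_k\rangle\Bigr)^2 .
\]
Equivalently, the worst-case $g$ is the Riesz representative of this functional.

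Next I would take expectations term by term. Since $\langle v_N,\phi_k\rangle=\ev\bigl(\langle v_N,\phi_k\rangle_{NT}\bigr)$ by definition, the expectation of the $k$-th summand equals $\textrm{Var}\bigl(\langle v_N,\phi_k\rangle_{NT}\bigr)$. Writing $\langle v_N,\phi_k\rangle_{NT}=N^{-1}\sum_{i=1}^N\zeta_i(v_N,\phi_k)$ and using that the $\bfW_i$, hence the $\zeta_i(v_N,\phi_k)$, are independent across $i$ by Assumption~\ref{Assumption:common}.\ref{Ac:a1}, this variance is $N^{-2}\sum_i\textrm{Var}\bigl(\zeta_i(v_N,\phi_k)\bigr)\le N^{-2}\sum_i\ev\bigl[\zeta_i(v_N,\phi_k)^2\bigr]$. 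To control $\ev[\zeta_i(v_N,\phi_k)^2]$ I would observe that $\zeta_i(g,f)$ is the inner product of the centered sequences $\{g(\bfX_{it})-T^{-1}\sum_s g(\bfX_{is})\}_{t}$ and $\{f(\bfX_{it})-T^{-1}\sum_s f(\bfX_{is})\}_{t}$ under the averaging inner product $T^{-1}\sum_t$, so $|\zeta_i(v_N,\phi_k)|\le\zeta_i(v_N,v_N)^{1/2}\,\zeta_i(\phi_k,\phi_k)^{1/2}$; moreover each centered increment of $v_N$ is bounded by $2\|v_N\|_\infty$, whence $\zeta_i(v_N,v_N)\le4\|v_N\|_\infty^2$. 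Therefore $\ev[\zeta_i(v_N,\phi_k)^2]\le4\|v_N\|_\infty^2\,\ev[\zeta_i(\phi_k,\phi_k)]$.

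Summing over $i$ and then over $k$, and using $N^{-1}\sum_i\ev[\zeta_i(\phi_k,\phi_k)]=\|\phi_k\|^2=1$, collapses the double sum to $4\|v_N\|_\infty^2\,d_{NT}/N$, which is the first asserted bound. For the second inequality I would rerun the identical argument with $\Theta_{NT}$ equipped instead with the $L^2$ inner product --- legitimate since $\Theta_{NT}\subset\mcH_0$, where $\|\cdot\|_2$ is a genuine norm --- obtaining an $L^2$-orthonormal basis $\psi_1,\dots,\psi_{d_{NT}}$; the only change occurs in the last step, where $N^{-1}\sum_i\ev[\zeta_i(\psi_k,\psi_k)]=\|\psi_k\|^2\le a_1\|\psi_k\|_2^2=a_1$ by the upper bound of Lemma~\ref{lemma:expectation:sample:variance}, giving $4a_1\|v_N\|_\infty^2\,d_{NT}/N\le4a_1^2\|v_N\|_\infty^2\,d_{NT}/N$ since $a_1>1$. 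I do not anticipate a genuine obstacle here: the only points requiring a little care are the passage from the supremum to the orthonormal expansion and the elementary bound $\zeta_i(v_N,v_N)\le4\|v_N\|_\infty^2$, which is precisely what supplies the constant appearing in the statement.
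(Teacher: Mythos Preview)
Your proposal is correct and follows essentially the same route as the paper: expand $g$ in an orthonormal basis of the finite-dimensional space $\Theta_{NT}$, apply Cauchy--Schwarz to reduce the supremum to a sum of $d_{NT}$ variance terms, and bound each via $\zeta_i(v_N,\phi_k)^2\le 4\|v_N\|_\infty^2\,\zeta_i(\phi_k,\phi_k)$ together with independence across $i$. The only minor difference is in the second inequality: the paper derives it directly from the first by the norm comparison $\|g\|_2^2\ge a_1^{-1}\|g\|^2$ of Lemma~\ref{lemma:expectation:sample:variance}, whereas you rerun the argument with an $L^2$-orthonormal basis; both paths are valid and yield the constant $4a_1$ before slackening to the stated $4a_1^2$.
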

\begin{proof}[Proof of Lemma \ref{lemma:bounded:sequence:two:norm:difference}]
Let $\{\chi_j\}_{j=1}^{d_{NT}}$ be the orthonormal basis of $\Theta_{NT}$ with respect to $\langle \cdot, \cdot  \rangle$. For $g\in \Theta_{NT}$, we can rewrite $g=\sum_{j=1}^{d_{NT}}b_j\chi_j$ and $\|g\|^2=\sum_{j=1}^{d_{NT}}b_j^2$ with $b_j=\langle g, \chi_j\rangle\in \mathbb{R}$.
\begin{eqnarray*}
	|\langle v_N, g \rangle_{NT}-\langle v_N, g \rangle|&=&\bigg|\sum_{j=1}^{d_{NT}}b_j\bigg(\langle v_N, \chi_j \rangle_{NT}-\langle v_N, \chi_j \rangle\bigg)\bigg|\\
	&\leq&\bigg(\sum_{j=1}^{d_{NT}}b_j^2\bigg)^{1/2}\bigg\{\sum_{j=1}^{d_{NT}}\bigg(\langle v_N, \chi_j \rangle_{NT}-\langle v_N, \chi_j \rangle\bigg)^2\bigg\}^{1/2}\\
	&=&\|g\|\bigg\{\sum_{j=1}^{d_{NT}}\bigg(\langle v_N, \chi_j \rangle_{NT}-\langle v_N, \chi_j \rangle\bigg)^2\bigg\}^{1/2},
\end{eqnarray*}
which leads to 
\begin{eqnarray}
	\sup_{g\in \Theta_{NT}}\bigg|\frac{\langle v_N, g \rangle_{NT}-\langle v_N, g \rangle}{\|g\|}\bigg|\leq \bigg\{\sum_{j=1}^{d_{NT}}\bigg(\langle v_N, \chi_j \rangle_{NT}-\langle v_N, \chi_j \rangle\bigg)^2\bigg\}^{1/2}.\label{eq:lemma:bounded:sequence:two:norm:difference:eq1}
\end{eqnarray}
Notice for each $j\in [d_{NT}]$, it follows that
\begin{eqnarray*}
\ev\bigg\{\bigg(\langle v_N, \chi_j \rangle_{NT}-\langle v_N, \chi_j \rangle\bigg)^2\bigg\}=\textrm{Var}\bigg(\langle v_N, \chi_j \rangle_{NT}\bigg)&=&\textrm{Var}\bigg(\frac{1}{N}\sum_{i=1}^N\zeta_i(v_N, \chi_j)\bigg)\\
&=&\frac{1}{N^2}\sum_{i=1}^N \textrm{Var}\bigg(\zeta_i(v_N, \chi_j)\bigg)\\
&\leq&\frac{1}{N^2}\sum_{i=1}^N \ev\bigg(\zeta_i^2(v_N, \chi_j)\bigg)\\
&\leq&  \frac{4\|v_N\|_\infty^2}{N^2}\sum_{i=1}^N \ev\bigg(\zeta_i(\chi_j, \chi_j)\bigg)\\
&=& \frac{4\|v_N\|_\infty^2}{N} \|\chi_j\|^2=\frac{4\|v_N\|_\infty^2}{N}.
\end{eqnarray*}
As a consequence, by taking expectation on both side of (\ref{eq:lemma:bounded:sequence:two:norm:difference:eq1}) and applying Cauchy–Schwarz inequality, we obtain the first result. The second result follows from the first one and Lemma \ref{lemma:expectation:sample:variance}.
\end{proof}

Define event $\Omega_N=\{\mathbb{X}: 1/2\|g\|\leq \|g\|_{NT}\leq 2\|g\|, \textrm{ for all } g\in \Theta_{NT}\}$ and $\pr(\Omega_N)\to 1$ as $N \to \infty$ by Lemma \ref{lemma:uniform:equivalence:empirical:population:norm}. Then on event $\Omega_N$, $\langle \cdot, \cdot \rangle_{NT}$ is a valid inner product in $\Theta_{NT}$.

\begin{lemma}\label{lemma:expect:value:projection:y:f0}
Under Assumption \ref{Assumption:A1} and \ref{Assumption:common}, if $d_{NT}A_{NT}^2=o(N)$, then on event  $\Omega_N$, the following holds,
$$\frac{a_2^{-1}d_{NT}}{NT}\leq \ev\bigg(\|\widehat{f}_*-\widetilde{f}_*\|_{NT}^2\bigg|\mathbb{Z}\bigg)\leq \frac{a_2d_{NT}}{NT},$$
and
$$\ev\bigg(\|\widehat{f}_*-f_0\|_{NT}^2\bigg| \mathbb{Z}\bigg)\geq \ev\bigg(\|\widehat{f}_*-\widetilde{f}_*\|_{NT}^2\bigg|\mathbb{Z}\bigg)\geq\frac{a_2^{-1}d_{NT}}{NT}.$$
\end{lemma}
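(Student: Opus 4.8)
The plan is to reduce both chains of inequalities to a single quadratic form in $\bfepsilon$ and then exploit the eigenvalue bounds on the conditional error covariance in Assumption~\ref{Assumption:common}.\ref{Ac:a2}. First I would work throughout on the event $\Omega_N$, where $\langle\cdot,\cdot\rangle_{NT}$ is a genuine inner product on $\Theta_{NT}$, so that—writing $\bfB$ for the $NT\times d_{NT}$ matrix of centralized-spline basis values—the Gram matrix $\bfB^\top M_H\bfB$ is invertible (by the defining inequality of $\Omega_N$ together with Lemma~\ref{lemma:sum:of:l2:norm:bound}, which keeps the within-transformed basis linearly independent). Since $\bfY=\bff_0+\bfepsilon$ after $M_H$ eliminates the fixed effects, and since $\widetilde f_*$ is the $\langle\cdot,\cdot\rangle_{NT}$-orthogonal projection of $f_0$ onto $\Theta_{NT}$ (equivalently $\widetilde f_*=\ev(\widehat f_*\mid\mathbb{Z})$), linearity of the projection gives, in vector form,
\[
\widehat{\bff}_*-\widetilde{\bff}_*=\bfB(\bfB^\top M_H\bfB)^{-1}\bfB^\top M_H\bfepsilon,
\qquad
\|\widehat f_*-\widetilde f_*\|_{NT}^2=\frac{1}{NT}\,\bfepsilon^\top Q\bfepsilon,
\]
where $Q:=M_H\bfB(\bfB^\top M_H\bfB)^{-1}\bfB^\top M_H$ is, using $M_H^2=M_H$, symmetric idempotent with $\mathrm{tr}(Q)=\mathrm{tr}\big((\bfB^\top M_H\bfB)^{-1}\bfB^\top M_H\bfB\big)=d_{NT}$.

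Next I would take $\ev(\cdot\mid\mathbb{Z})$. Because $\mathbb{Z}$ carries the same information as $\mathbb{X}$ (and $Q$ is $\mathbb{Z}$-measurable), we get $\ev(\|\widehat f_*-\widetilde f_*\|_{NT}^2\mid\mathbb{Z})=\frac{1}{NT}\,\mathrm{tr}\!\big(Q\,\ev(\bfepsilon\bfepsilon^\top\mid\mathbb{Z})\big)$. Writing $Q=UU^\top$ with $U$ having $d_{NT}$ orthonormal columns, for every unit vector $v$ one has $v^\top U^\top\ev(\bfepsilon\bfepsilon^\top\mid\mathbb{Z})Uv=(Uv)^\top\ev(\bfepsilon\bfepsilon^\top\mid\mathbb{Z})(Uv)\in[a_2^{-1},a_2]$ since $\|Uv\|=1$ and the eigenvalues of $\ev(\bfepsilon\bfepsilon^\top\mid\mathbb{Z})$ lie in $[a_2^{-1},a_2]$; hence the eigenvalues of $U^\top\ev(\bfepsilon\bfepsilon^\top\mid\mathbb{Z})U$ lie in $[a_2^{-1},a_2]$, and its trace—which equals $\mathrm{tr}(Q\,\ev(\bfepsilon\bfepsilon^\top\mid\mathbb{Z}))$—lies in $[a_2^{-1}d_{NT},a_2 d_{NT}]$. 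Dividing by $NT$ yields the first display.

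For the second display I would use orthogonality: since $\widetilde f_*$ minimizes $\|f_0-f\|_{NT}^2$ over $f\in\Theta_{NT}$, the residual $f_0-\widetilde f_*$ is $\langle\cdot,\cdot\rangle_{NT}$-orthogonal to every element of $\Theta_{NT}$, in particular to $\widehat f_*-\widetilde f_*\in\Theta_{NT}$, so on $\Omega_N$
\[
\|\widehat f_*-f_0\|_{NT}^2=\|\widehat f_*-\widetilde f_*\|_{NT}^2+\|\widetilde f_*-f_0\|_{NT}^2\ \ge\ \|\widehat f_*-\widetilde f_*\|_{NT}^2 .
\]
Taking $\ev(\cdot\mid\mathbb{Z})$ and combining with the lower bound just obtained finishes the proof.

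The main obstacle here is bookkeeping rather than analysis: one must check carefully that on $\Omega_N$ the within-transformed design $M_H\bfB$ retains full column rank $d_{NT}$—so that $\widehat f_*$, the projection matrix $Q$, and the identity $\mathrm{tr}(Q)=d_{NT}$ are all legitimate—which is exactly where the definition of $\Omega_N$ and Lemma~\ref{lemma:sum:of:l2:norm:bound} are invoked, and that the conditioning is handled consistently, i.e. $Q$ and $\widetilde f_*$ are treated as $\mathbb{Z}$-measurable while only $\bfepsilon$ is random given $\mathbb{Z}$. The hypothesis $d_{NT}A_{NT}^2=o(N)$ does not enter the estimates above at all; it is needed only through Lemma~\ref{lemma:uniform:equivalence:empirical:population:norm} to guarantee $\pr(\Omega_N)\to1$, so that the statement ``on event $\Omega_N$'' is non-vacuous.
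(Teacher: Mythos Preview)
Your proposal is correct and follows essentially the same approach as the paper: both arguments exploit that $\widehat f_*-\widetilde f_*$ is the projection of the error vector onto $\Theta_{NT}$ with respect to $\langle\cdot,\cdot\rangle_{NT}$, then invoke the eigenvalue bounds on $\ev(\bfepsilon\bfepsilon^\top\mid\mathbb{Z})$ from Assumption~\ref{Assumption:common}\ref{Ac:a2}, and finish the second display via orthogonality of the projection residual. The only difference is packaging: the paper picks an orthonormal basis $\{\phi_j\}_{j=1}^{d_{NT}}$ for $(\Theta_{NT},\langle\cdot,\cdot\rangle_{NT})$, expands $\widehat f_*-\widetilde f_*=\sum_j\langle Y-f_0,\phi_j\rangle_{NT}\phi_j$, and bounds each $\ev(\langle Y-f_0,\phi_j\rangle_{NT}^2\mid\mathbb{Z})$ via the $T\times T$ blocks $\ev(\bfepsilon_i\bfepsilon_i^\top\mid\mathbb{Z})$, whereas you work directly with the $NT\times NT$ projection matrix $Q$ and a trace argument. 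Your route is slightly more compact and sidesteps the paper's intermediate step of dropping the cross-$i$ terms $\ev(S_{ji_1}S_{ji_2}\mid\mathbb{Z})$; your trace bound uses only the global eigenvalue assumption and needs no block structure. A minor remark: the invertibility of $\bfB^\top M_H\bfB$ on $\Omega_N$ follows immediately from the definition of $\Omega_N$ (positivity of $\|\cdot\|_{NT}$ on $\Theta_{NT}\setminus\{0\}$), so the appeal to Lemma~\ref{lemma:sum:of:l2:norm:bound} is not really needed there.
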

\begin{proof}[Proof of Lemma \ref{lemma:expect:value:projection:y:f0}]
On event $\Omega_N$, let $\{\phi_j\}_{j=1}^{d_{NT}}$ be the orthonormal basis of $\Theta_{NT}$ with respect to $\langle \cdot, \cdot \rangle_{NT}$. Recall $\widehat{f}_*=P_{NT}Y$, now we have
	\begin{eqnarray*}
		\widehat{f}_*-	\widetilde{f}_*&=&\sum_{j=1}^{d_{NT}}\langle \widehat{f}_*-	\widetilde{f}_*, \phi_j \rangle_{NT}\phi_j\\
		&=&\sum_{j=1}^{d_{NT}}\langle P_{NT}Y-P_{NT}f_0, \phi_j \rangle_{NT}\phi_j\\
		&=&\sum_{j=1}^{d_{NT}}\langle Y-f_0, \phi_j \rangle_{NT}\phi_j.
	%	&=&\sum_{j=1}^{d_{NT}}\frac{1}{NT}\sum_{i=1}^N\sum_{t=1}^T\bigg\{Y_{it}-f_0(\bfx_{it})-\frac{1}{T}\sum_{s=1}^T\bigg(Y_{is}-f_0(\bfx_{is})\bigg)\bigg\}\bigg\{\phi_j(\bfx_{it})-\frac{1}{T}\sum_{s=1}^T\phi_j(\bfx_{is})\bigg\} \phi_j\\
	%	&=&\sum_{j=1}^{d_{NT}}\frac{1}{NT}\sum_{i=1}^N\sum_{t=1}^T\bigg(\epsilon_{it}-\frac{1}{T}\sum_{s=1}^T\epsilon_{is}\bigg)\bigg(\phi_j(\bfx_{it})-\frac{1}{T}\sum_{s=1}^T\phi_j(\bfx_{is})\bigg)\phi_j.
	\end{eqnarray*}
By definition, it yields that
\begin{eqnarray*}
\langle Y-f_0, \phi_j \rangle_{NT}&=&\frac{1}{NT}\sum_{i=1}^N\sum_{t=1}^T\bigg\{Y_{it}-f_0(\bfX_{it})-\frac{1}{T}\sum_{s=1}^T\bigg(Y_{is}-f_0(\bfX_{is})\bigg)\bigg\}\bigg\{\phi_j(\bfX_{it})-\frac{1}{T}\sum_{s=1}^T\phi_j(\bfX_{is})\bigg\}\\
&=&\frac{1}{N}\sum_{i=1}^N S_{ji},
\end{eqnarray*}
where
\begin{eqnarray*}
S_{ji}&=&\frac{1}{T}\sum_{t=1}^T\bigg(\epsilon_{it}-\frac{1}{T}\sum_{s=1}^T\epsilon_{is}\bigg)\bigg(\phi_j(\bfX_{it})-\frac{1}{T}\sum_{s=1}^T\phi_j(\bfX_{is})\bigg)\\
&=&\frac{1}{T}\bfepsilon_i^\top H {\Phi}_{ji},
\end{eqnarray*}
where
\begin{eqnarray*}
	\Phi_{ji}&=&\begin{pmatrix}
	\phi(\bfX_{i1}), \phi(\bfX_{i1}),\ldots, \phi(\bfX_{iT}) 
	\end{pmatrix}^\top\in \mathbb{R}^{T},\;\; \bfepsilon_i=(\epsilon_{i1}, \epsilon_{i2}, \ldots, \epsilon_{iT})^\top\in \mathbb{R}^T,\\
H&=&I_T-\frac{1}{T}uu^\top \in \mathbb{R}^{T\times T}, \textrm{ with } u=(1,1,\ldots, 1)^\top\in \mathbb{R}^T.
\end{eqnarray*}
Combining Assumption \ref{Assumption:common}.\ref{Ac:a2} and above equalities, we show that 
\begin{eqnarray*}
	\ev(\widehat{f}_* |\mathbb{Z})=\widetilde{f}_*.	
\end{eqnarray*}
Moreover, Assumption \ref{Assumption:common}.\ref{Ac:a2} also implies $\ev(S_{ji_1}S_{ji_2}| \mathbb{Z})=0$ when $i_1 \neq i_2$ and
%\begin{eqnarray*}
%\ev(S_{ji}^2|\mathbb{Z})&\leq&\ev\bigg\{\frac{1}{T}\sum_{t=1}^T\bigg(\epsilon_{it}-\frac{1}{T}\sum_{s=1}^T\epsilon_{is}\bigg)^2\bigg|\mathbb{Z}\bigg\}\bigg\{\frac{1}{T}\sum_{t=1}^T\bigg(\phi_j(\bfX_{it})-\frac{1}{T}\sum_{s=1}^T\phi_j(\bfX_{is})\bigg)^2\bigg\}\\
%&\leq&\frac{a_2}{T}\sum_{t=1}^T\bigg(\phi_j(\bfX_{it})-\frac{1}{T}\sum_{s=1}^T\phi_j(\bfX_{is})\bigg)^2.
%\end{eqnarray*}
\begin{eqnarray*}
	\ev(S_{ji}^2|\mathbb{Z})&=&\frac{1}{T^2}\ev\bigg(\Phi_{ji}^\top H\bfepsilon_i\bfepsilon^\top H\Phi_{ji} \bigg|\mathbb{Z}\bigg)\\
	&=& \frac{1}{T^2}\Phi_{ji}^\top H\ev\bigg(\bfepsilon_i\bfepsilon^\top  \bigg|\mathbb{Z}\bigg)H\Phi_{ji}\\
	&\leq&\frac{a_2}{T^2}\Phi_{ji}^\top H\Phi_{ji}\\
	&=&\frac{a_2}{T}\frac{1}{T}\sum_{t=1}^T\bigg(\phi_j(\bfX_{it})-\frac{1}{T}\sum_{s=1}^T\phi_j(\bfX_{is})\bigg)^2
\end{eqnarray*}
Therefore, it follows that
\begin{eqnarray*}
\ev(\|\widehat{f}_*-	\widetilde{f}_*\|_{NT}^2|\mathbb{Z})&=&\sum_{j=1}^{d_{NT}}\ev(\langle Y-f_0, \phi_j \rangle_{NT}^2|\mathbb{Z})\\
&=&\sum_{j=1}^{d_{NT}}\frac{1}{N^2}\sum_{i=1}^N\ev(S_{ji}^2|\mathbb{Z})\\
&\leq&\frac{a_2}{N}\sum_{j=1}^{d_{NT}}\frac{1}{NT^2}\sum_{i=1}^N\sum_{t=1}^T\bigg(\phi_j(\bfX_{it})-\frac{1}{T}\sum_{s=1}^T\phi_j(\bfX_{is})\bigg)^2\\
&=&\frac{a_2}{NT}\sum_{j=1}^{d_{NT}}\|\phi_j\|_{NT}^2\leq \frac{a_2d_{NT}}{NT}, \textrm{ for all } \mathbb{Z}\in \Omega_N,
\end{eqnarray*}
which is the upper bound. Similarly, utilizing Assumption \ref{Assumption:common}.\ref{Ac:a2} , we also can show that
\begin{eqnarray*}
	\ev(\|\widehat{f}_*-	\widetilde{f}_*\|_{NT}^2|\mathbb{Z})\geq \frac{a_2^{-1}d_{NT}}{NT}, \textrm{ for } \mathbb{Z}\in \Omega_N.
\end{eqnarray*}

For second inequality, by definition of $\widetilde{f}_*$, on event $\Omega_N$, it follows that
\begin{eqnarray*}
	\|\widehat{f}_*-f_0\|_{NT}^2&=&\|\widehat{f}_*-\widetilde{f}_*+\widetilde{f}_*-f_0\|_{NT}^2\\
	&=&\|\widehat{f}_*-\widetilde{f}_*\|_{NT}^2+\|\widetilde{f}_*-f_0\|_{NT}^2+2\langle\widehat{f}_*-\widetilde{f}_*,  \widetilde{f}_*-f_0\rangle_{NT}\\
	&=&\|\widehat{f}_*-\widetilde{f}_*\|_{NT}^2+\|\widetilde{f}_*-f_0\|_{NT}^2+2\langle P_{NT}Y-P_{NT}f_0,  P_{NT}f_0-f_0\rangle_{NT}\\
	&=&\|\widehat{f}_*-\widetilde{f}_*\|_{NT}^2+\|\widetilde{f}_*-f_0\|_{NT}^2\\
	&\geq&\|\widehat{f}_*-\widetilde{f}_*\|_{NT}^2,
\end{eqnarray*}
after taking conditional expectation, we finish the proof.
\end{proof}

\begin{lemma}\label{lemma:rate:of:convergence:non:penalized}
Suppose Assumption \ref{Assumption:A1} and \ref{Assumption:common} hold and  $d_{NT}A_{NT}^2=o(N)$, then on event $\Omega_N$, the following statements are true,
\begin{eqnarray*}
\ev\bigg(\|\widehat{f}_*-f_0\|_{NT}^2\bigg|\mathbb{Z}\bigg)\leq 2500a_2^2\bigg(\frac{d_{NT}}{N}+\rho_{NT}^2\bigg),\; \; \ev\bigg(\|\widehat{f}_*-f_0\|^2\bigg|\mathbb{Z}\bigg)\leq 2500a_2^2\bigg(\frac{d_{NT}}{N}+\rho_{NT}^2\bigg),
\end{eqnarray*}
and
\begin{eqnarray*}
\ev\bigg(\|\widehat{f}_{*,0}-f_0\|_{NT}^2\bigg|\mathbb{Z}\bigg)\leq 2500a_2^2\bigg(\frac{d_{NT,0}}{N}+\rho_{NT}^2\bigg),\; \ev\bigg(\|\widehat{f}_{*,0}-f_0\|^2\bigg|\mathbb{Z}\bigg)\leq 2500a_2^2\bigg(\frac{d_{NT,0}}{N}+\rho_{NT}^2\bigg).
\end{eqnarray*}
\end{lemma}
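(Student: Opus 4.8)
The plan is to split $\widehat{f}_*-f_0$ into a stochastic part that lies in the sieve space $\Theta_{NT}$ and a purely deterministic approximation part, bound each, and then glue them with the triangle inequality; the bounds for $\widehat{f}_{*,0}$ follow from the identical argument with $\Theta_{NT}^0$ in place of $\Theta_{NT}$. I would first record two elementary facts. First, for \emph{any} $h:\mcX\to\mathbb{R}$ one has $\|h\|_{NT}\le\|h\|_\infty$, since each per-individual centered empirical average of squares is dominated by the corresponding average of squares; hence also $\|h\|^2=\ev(\|h\|_{NT}^2)\le\|h\|_\infty^2$. Second, by Lemma \ref{lemma:approximation:error} there is a spline $g_*\in\Theta_{NT}^0\subset\Theta_{NT}$ with $\|g_*-f_0\|_\infty\le\rho_{NT}$, so $\|g_*-f_0\|_{NT}\le\rho_{NT}$ and $\|g_*-f_0\|\le\rho_{NT}$ deterministically, in particular on $\Omega_N$.

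Next I would handle the approximation part. On $\Omega_N$ the form $\langle\cdot,\cdot\rangle_{NT}$ is an inner product on $\Theta_{NT}$, and $\widetilde{f}_*=P_{NT}f_0$ is the corresponding orthogonal projection, hence minimizes $\|\cdot-f_0\|_{NT}$ over $\Theta_{NT}$; since $g_*\in\Theta_{NT}$ this gives $\|\widetilde{f}_*-f_0\|_{NT}\le\|g_*-f_0\|_{NT}\le\rho_{NT}$. Using the Pythagorean identity $\|\widehat{f}_*-f_0\|_{NT}^2=\|\widehat{f}_*-\widetilde{f}_*\|_{NT}^2+\|\widetilde{f}_*-f_0\|_{NT}^2$ established inside the proof of Lemma \ref{lemma:expect:value:projection:y:f0}, the bound $\ev(\|\widehat{f}_*-\widetilde{f}_*\|_{NT}^2\mid\mathbb{Z})\le a_2 d_{NT}/(NT)$ from Lemma \ref{lemma:expect:value:projection:y:f0}, and the fact that $\|\widetilde{f}_*-f_0\|_{NT}$ is $\mathbb{Z}$-measurable, I obtain on $\Omega_N$ that $\ev(\|\widehat{f}_*-f_0\|_{NT}^2\mid\mathbb{Z})\le a_2 d_{NT}/(NT)+\rho_{NT}^2\le a_2(d_{NT}/N+\rho_{NT}^2)$, which already yields the first claim (the stated constant $2500a_2^2$ is very far from tight, since $a_2>1$ and $T\ge1$).

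To pass to the $\|\cdot\|$ norm I would write $\widehat{f}_*-f_0=(\widehat{f}_*-g_*)-(f_0-g_*)$. The first summand lies in $\Theta_{NT}$, so on $\Omega_N$, $\|\widehat{f}_*-g_*\|\le 2\|\widehat{f}_*-g_*\|_{NT}\le 2\|\widehat{f}_*-f_0\|_{NT}+2\|f_0-g_*\|_{NT}\le 2\|\widehat{f}_*-f_0\|_{NT}+2\rho_{NT}$, while $\|f_0-g_*\|\le\rho_{NT}$ by the elementary facts above. Hence $\|\widehat{f}_*-f_0\|\le 2\|\widehat{f}_*-f_0\|_{NT}+3\rho_{NT}$, so $\|\widehat{f}_*-f_0\|^2\le 8\|\widehat{f}_*-f_0\|_{NT}^2+18\rho_{NT}^2$; taking $\ev(\cdot\mid\mathbb{Z})$ and inserting the previous step gives $\ev(\|\widehat{f}_*-f_0\|^2\mid\mathbb{Z})\le 26 a_2(d_{NT}/N+\rho_{NT}^2)$ on $\Omega_N$, again within the stated constant. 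The four bounds for $\widehat{f}_{*,0}$ are obtained by repeating these three steps verbatim with $\Theta_{NT}^0$, $d_{NT,0}$, $\widehat{f}_{*,0}$ and $\widetilde{f}_{*,0}$ (the $\langle\cdot,\cdot\rangle_{NT}$-projection of $f_0$ onto $\Theta_{NT}^0$): the proof of Lemma \ref{lemma:expect:value:projection:y:f0} uses only that $\langle\cdot,\cdot\rangle_{NT}$ is an inner product on the space in question (true on $\Omega_N$ since $\Theta_{NT}^0\subset\Theta_{NT}$) and the space's dimension, while $g_*\in\Theta_{NT}^0$ is still an admissible comparison element so the approximation part is still $\le\rho_{NT}$; the side condition $d_{NT,0}A_{NT,0}^2=o(N)$ needed for the $\Theta_{NT}^0$-analogue of Lemma \ref{lemma:expect:value:projection:y:f0} follows from $d_{NT}A_{NT}^2=o(N)$ since $d_{NT,0}\le d_{NT}$ and $A_{NT,0}^2\lesssim A_{NT}^2$.

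Almost everything here is routine triangle-inequality bookkeeping; the one point requiring care is that $f_0\notin\Theta_{NT}$, so the $\|\cdot\|$ versus $\|\cdot\|_{NT}$ equivalence valid on $\Omega_N$ cannot be applied to $\widehat{f}_*-f_0$ directly and must be routed through the spline $g_*$ (equivalently, through $\widetilde{f}_*$), and that the approximation error must be controlled \emph{deterministically} via $\|\cdot\|_{NT}\le\|\cdot\|_\infty$ and Lemma \ref{lemma:approximation:error}, not merely in expectation, so that it survives the conditioning on $\mathbb{Z}$.
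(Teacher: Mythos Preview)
Your proof is correct and in fact more economical than the paper's. The paper decomposes through \emph{two} intermediate projections, $\widetilde f_*=P_{NT}f_0$ and $\bar f_*=Pf_0$ (the projection in the population norm $\|\cdot\|$), and then spends most of the work bounding $\|\widetilde f_*-\bar f_*\|_{NT}$ and $\|\widetilde f_*-\bar f_*\|$ by writing these as $\sup_{g}|\langle f_0-Pf_0,g\rangle_{NT}-\langle f_0-Pf_0,g\rangle|/\|g\|_{NT}$, splitting into three remainders $R_1,R_2,R_3$, and invoking Lemma~\ref{lemma:bounded:sequence:two:norm:difference}. You bypass $\bar f_*$ altogether: for the $\|\cdot\|_{NT}$ bound you just use the minimizing property of $\widetilde f_*$ against the competitor $g_*$ together with the deterministic $\|\cdot\|_{NT}\le\|\cdot\|_\infty$, and for the $\|\cdot\|$ bound you route through $g_*$ so that $\widehat f_*-g_*\in\Theta_{NT}$ and the $\Omega_N$ norm equivalence applies, while $f_0-g_*$ is handled by $\|\cdot\|\le\|\cdot\|_\infty$. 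The gain is that you avoid Lemma~\ref{lemma:bounded:sequence:two:norm:difference} entirely and get much smaller constants ($a_2$ and $26a_2$ versus the paper's $2500a_2^2$); the paper's longer route has no real payoff in this fixed-$T$ lemma, though its structure parallels the diverging-$T$ proof where the intermediate $\bar f_*$ and the $R_1,R_2,R_3$ decomposition are reused.
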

\begin{proof}[Proof of Lemma \ref{lemma:rate:of:convergence:non:penalized}]
	Let $\widetilde{f}_*=P_{NT}f_0$ and $\bar{f}_*=Pf_0$.  Lemme \ref{lemma:expect:value:projection:y:f0} implies that on event $\Omega_N$, it holds that
\begin{eqnarray}\label{eq:lemma:rate:of:convergence:non:penalized:eq1}
	      \frac{a_2^{-1}d_{NT}}{NT}   \leq  \|\widehat{f}_*-	\widetilde{f}_*\|_{NT}^2\leq \frac{a_2d_{NT}}{NT},
\end{eqnarray}
Now by definition of $\Omega_N$, we have
\begin{eqnarray}\label{eq:lemma:rate:of:convergence:non:penalized:eq1:1}
	 \frac{a_2^{-1}d_{NT}}{4NT}  \leq \|\widehat{f}_*-	\widetilde{f}_*\|^2\leq \frac{4a_2d_{NT}}{NT}.
\end{eqnarray}

Next we will deal with $\widetilde{f}_*-\bar{f}_*$. By definition we have
\begin{eqnarray*}
	\|\widetilde{f}_*-\bar{f}_*\|_{NT}&=&\sup_{g\in \Theta_{NT}}\bigg|\frac{\langle \widetilde{f}_*-\bar{f}_*, g\rangle_{NT}}{\|g\|_{NT}}\bigg|\\
	&=&\sup_{g\in \Theta_{NT}}\bigg|\frac{\langle P_{NT}f_0-Pf_0, g\rangle_{NT}}{\|g\|_{NT}}\bigg|\\
	&=&\sup_{g\in \Theta_{NT}}\bigg|\frac{\langle f_0-Pf_0, g\rangle_{NT}-\langle f_0-Pf_0, g\rangle}{\|g\|_{NT}}\bigg|,
\end{eqnarray*}
where the fact $\langle f_0-Pf_0, g\rangle=0$ is used. Let $g_* \in \Theta_{NT}$ satisfy that $\|g_*-f_0\|_\infty\leq \rho_{NT}$, where the existence of such $g_*$ is guaranteed by Lemma \ref{lemma:approximation:error}.
Since
\begin{eqnarray*}
	\langle f_0-Pf_0, g\rangle_{NT}-\langle f_0-Pf_0, g\rangle=\langle f_0-g_*, g\rangle_{NT}-\langle f_0-g_*, g\rangle+\langle g_*-Pf_0, g\rangle_{NT}-\langle g_*-Pf_0, g\rangle,
\end{eqnarray*}
we have $\|\widetilde{f}_*-\bar{f}_*\|_{NT}\leq R_1+R_2+R_3$, where
\begin{eqnarray*}
R_1&=&\sup_{g\in \Theta_{NT}}\bigg|\frac{ \langle f_0-g_*, g\rangle_{NT}-\langle f_0-g_*, g\rangle}{\|g\|_{NT}}\bigg|,\\
R_2&=&\sup_{g\in \Theta_{NT}}\bigg|\frac{\langle g_*-Pf_0, g\rangle_{NT}}{\|g\|_{NT}}\bigg|,\\
R_3&=&\sup_{g\in \Theta_{NT}}\bigg|\frac{\langle g_*-Pf_0, g\rangle}{\|g\|_{NT}}\bigg|.
\end{eqnarray*}
Since $\|f_0-g_*\|_\infty\leq \rho_{NT}$, by Lemma \ref{lemma:bounded:sequence:two:norm:difference} and definition of $\Omega_N$, we have 
\begin{eqnarray*}
	R_1^2\leq \frac{16\rho_{NT}^2d_{NT}}{N}\leq 16\rho_{NT}^2\;\; \textrm{ on event }\Omega_N.
\end{eqnarray*}

By Lemma \ref{lemma:uniform:equivalence:empirical:population:norm} and triangle inequality, on event $\Omega_N$, we have
\begin{eqnarray*}
	R_2\leq \|g_*-Pf_0\|_{NT}\leq 2\|g_*-Pf_0\|\leq 2\|g_*-f_0\|+2\|Pf_0-f_0\|\leq 4\|g_*-f_0\|\leq 8\rho_{NT},
\end{eqnarray*}
where we use the fact that $\|g\|\leq 2\|g\|_\infty$ and $\|Pf_0-f_0\|\leq \|g_*-f_0\|$.

On event $\Omega_N$, we have
\begin{eqnarray*}
	R_3\leq 2\sup_{g\in \Theta_{NT}}\bigg|\frac{\langle g_*-Pf_0, g\rangle}{\|g\|}\bigg|\leq 2\|g_*-Pf_0\|\leq 8\rho_{NT}.
\end{eqnarray*}
Combining rate of $R_1, R_2, R_3$, we conclude that on event $\Omega_N$,
\begin{eqnarray}\label{eq:lemma:rate:of:convergence:non:penalized:eq2}
	\|\widetilde{f}_*-\bar{f}_*\|_{NT}\leq 20\rho_{NT}.
\end{eqnarray}
and
\begin{eqnarray}\label{eq:lemma:rate:of:convergence:non:penalized:eq2:2}
	\|\widetilde{f}_*-\bar{f}_*\|\leq 40\rho_{NT},
\end{eqnarray}
By definition of projection, on event $\Omega_N$, we have
\begin{eqnarray}
\|\bar{f}_*-f_0\|_{NT}&\leq& \|\bar{f}_*-g_*\|_{NT}+\|g_*-f_0\|_{NT}\nonumber\\
&\leq& 2\|\bar{f}_*-g_*\|+2\|g_*-f_0\|_\infty\nonumber\\
&=&2\|Pf_0-g_*\|+2\|g_*-f_0\|_\infty\nonumber\\
&\leq&2\|Pf_0-f_0\|+2\|f_0-g_*\|+2\|g_*-f_0\|_\infty\nonumber\\
&\leq&4\|f_0-g_*\|+2\|g_*-f_0\|_\infty\nonumber\\ 
&\leq& 10\|f_0-g_*\|_\infty\leq 10\rho_{NT}.\label{eq:lemma:rate:of:convergence:non:penalized:eq3}
\end{eqnarray}
Similarly, we can show
\begin{eqnarray}
	\|\bar{f}_*-f_0\|\leq\|\bar{f}_*-g_*\|+\|g_*-f_0\|\leq 2\|g_*-f_0\|\leq 4\|g_*-f_0\|_\infty\leq 4\rho_{NT}.\label{eq:lemma:rate:of:convergence:non:penalized:eq3:3}
\end{eqnarray}
Combining (\ref{eq:lemma:rate:of:convergence:non:penalized:eq1}), (\ref{eq:lemma:rate:of:convergence:non:penalized:eq2}) and (\ref{eq:lemma:rate:of:convergence:non:penalized:eq3}), we have
\begin{eqnarray*}
\|\widehat{f}_*-f_0\|_{NT}^2\leq 2500a_2^2\bigg(\frac{d_{NT}}{N}+\rho_{NT}^2\bigg)\; \textrm{ on event } \Omega_N.
\end{eqnarray*}
According to (\ref{eq:lemma:rate:of:convergence:non:penalized:eq1:1}), (\ref{eq:lemma:rate:of:convergence:non:penalized:eq2:2}) and (\ref{eq:lemma:rate:of:convergence:non:penalized:eq3:3}), we also obtain
\begin{eqnarray*}
	\|\widehat{f}_*-f_0\|^2\leq 2500a_2^2\bigg(\frac{d_{NT}}{N}+\rho_{NT}^2\bigg)\; \textrm{ on event } \Omega_N.
\end{eqnarray*}
The low bound can be obtained analogically and similar argument can be applied to prove results for $\widehat{f}_{*,0}$.
\end{proof}

\begin{lemma}\label{lemma:rate:non:linear:part}
Under Assumption \ref{Assumption:A1}, it follows that
$$\|g\|^2\geq\frac{1}{2a_1^2}\bigg(\|g_{-}\|^2+\|g_{\sim}\|^2\bigg), \textrm{ for all } g\in \cup_{j=1}^p\Theta_{N,j}.$$
\end{lemma}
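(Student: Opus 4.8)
The plan is to reduce the claim to a Pythagorean decomposition of the $L^2[0,1]$ norm and then transfer it to the $\langle\cdot,\cdot\rangle$ norm using the two-sided equivalence from Lemma~\ref{lemma:expectation:sample:variance}. Fix $j$ and let $g\in\Theta_{NT,j}$, written as $g=g_-+g_\sim$ with $g_-$ and $g_\sim$ the linear and nonlinear components from (\ref{eq:linear:nonlinear:definition}).

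First I would record two elementary facts about the centralized spline basis. (i) Every basis element has vanishing integral on $[0,1]$: $\int_0^1\psi_k=0$ for $k=1,\ldots,r$ and $\int_0^1\widetilde{\psi}_k=0$ for $k=1,\ldots,M-1$. This is checked directly from the defining formulas, since each $\psi_k$ and $\widetilde{\psi}_k$ is an explicit polynomial or truncated-power term minus a multiple of $z-1/2$ chosen precisely to kill the mean. Hence $\int_0^1 g_-=\int_0^1 g_\sim=\int_0^1 g=0$. (ii) $\int_0^1 g_-(z)g_\sim(z)\,dz=0$, which is exactly (\ref{eq:orthogonality}). Using (i), $\|g\|_2^2=\int_0^1 g^2-\big(\int_0^1 g\big)^2=\int_0^1 g^2$, and expanding $g=g_-+g_\sim$ together with (ii) gives the Pythagorean identity $\|g\|_2^2=\|g_-\|_2^2+\|g_\sim\|_2^2$ (the same identity holds for $\|\cdot\|_2^2$ of $g_-$ and $g_\sim$ individually since they too are mean-zero).

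Then I would apply Lemma~\ref{lemma:expectation:sample:variance}. Viewing $g$, $g_-$, $g_\sim$ as elements of $\mcH$ depending only on the $j$-th coordinate, the lemma yields $\|g\|^2\geq\frac{1}{2a_1}\|g\|_2^2$, and, applied separately to $g_-$ and to $g_\sim$, $\|g_-\|_2^2\geq\frac{1}{a_1}\|g_-\|^2$ and $\|g_\sim\|_2^2\geq\frac{1}{a_1}\|g_\sim\|^2$. Chaining these with the Pythagorean identity gives $\|g\|^2\geq\frac{1}{2a_1}\big(\|g_-\|_2^2+\|g_\sim\|_2^2\big)\geq\frac{1}{2a_1^2}\big(\|g_-\|^2+\|g_\sim\|^2\big)$, which is the asserted bound.

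There is no genuine obstacle here; the only point requiring care is confirming the mean-zero property (i) of the centralized basis functions, which is what makes $\|g\|_2^2$ collapse to $\int_0^1 g^2$ and thereby turns the orthogonality (\ref{eq:orthogonality}) into a usable Pythagorean identity. Everything else is a direct invocation of Lemma~\ref{lemma:expectation:sample:variance}.
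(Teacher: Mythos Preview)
Your proof is correct and is exactly the argument the paper has in mind: the orthogonality (\ref{eq:orthogonality}) together with the mean-zero property of the centralized basis gives the Pythagorean identity $\|g\|_2^2=\|g_-\|_2^2+\|g_\sim\|_2^2$, and the two-sided bounds of Lemma~\ref{lemma:expectation:sample:variance} transfer this to $\|\cdot\|^2$ with the stated constant $1/(2a_1^2)$. The paper's proof is just the one-line statement ``this follows from the orthogonal basis and Lemma~\ref{lemma:expectation:sample:variance},'' which you have unpacked precisely.
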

\begin{proof}[Proof of Lemma \ref{lemma:rate:non:linear:part}]
This follows from the orthogonal basis and Lemma \ref{lemma:expectation:sample:variance}
\end{proof}

For simplicity, we define the following rate:
\begin{equation}\label{eq:definition:gamma:N}
	\gamma_{NT}^2=\frac{d_{NT}}{NT}+\rho_{NT}^2,
\end{equation}
then $\gamma_{NT}^2\asymp N^{-1}\sum_{j=1}^ph_j^{-1}+\sum_{j=d+1}^p h_j^{2m_j}$ for fixed $T$ and $\gamma_{NT}^2\asymp N^{-1}T^{-1}\sum_{j=1}^ph_j^{-1}+\sum_{j=d+1}^p h_j^{2m_j}$ for diverging $T$.

\begin{lemma}\label{lemma:lower:bound:non:linear}
Let $r_N$ be a real number satisfying $r_N\geq \rho_{NT}$, and $r_N=o(1)$.  Suppose $g=\sum_{j=1}^p g_j\in \Theta_{NT}$ with $g_j \in \Theta_{N, j}$ for $j=1,2,\ldots, p$ such that $\|g-f_0\|\leq Kr_N$, for some constant $K$ not depending on $N$. The following statements are true:
\begin{enumerate}
\item If  Assumptions \ref{Assumption:A1} and \ref{Assumption:common} hold,  then
\begin{eqnarray*}
	\|g_{j, \sim}\|\geq \frac{1}{2}\sqrt{\frac{a_6}{2a_1}}, \textrm{ for } j=d+1,\ldots, p.
\end{eqnarray*}
provided $\sqrt{a_6}\geq2(K+4)\sqrt{2a_1^3c_1^{-1}}r_N$.
\item If Assumptions \ref{Assumption:common}, \ref{Assumption:A2} hold and $A_{NT}^2=o(T)$, then
\begin{eqnarray*}
	\|g_{j, \sim}\|\geq \frac{1}{2}\sqrt{\frac{a_6}{2a_3}}, \textrm{ for } j=d+1,\ldots, p.
\end{eqnarray*}
provided $\sqrt{a_6}>2(K+4)\sqrt{8a_3^3c_1^{-1}}r_N$.
\end{enumerate}
Here $g_{j, \sim}$ is the non linear component of $g_j$ for $j=d+1,\ldots, p$. 
\end{lemma}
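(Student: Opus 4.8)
The plan is to turn the strong–nonlinearity condition of Assumption~\ref{Assumption:common}.\ref{Ac:a}(b) --- that $\|f_{j,0}-\beta(\cdot-\tfrac12)\|_2^2\ge a_6$ for every $\beta\in\mathbb{R}$ and every $j\ge d+1$ --- into a lower bound on $\|g_{j,\sim}\|$. The heuristic is: if $g$ is within $Kr_N$ of $f_0$ in $\|\cdot\|$, then each component $g_j$ is within $O(r_N)$ of $f_{j,0}$ in the $L^2[0,1]$ norm $\|\cdot\|_2$, so $g_j$ cannot be close to any linear function; since $g_{j,\sim}=g_j-g_{j,-}$ is exactly what remains of $g_j$ after stripping off its unique linear part, $\|g_{j,\sim}\|_2$ must stay bounded away from $0$, and then so does $\|g_{j,\sim}\|$ by the norm equivalences already established. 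The one subtlety is that $f_{j,0}\notin\Theta_{NT,j}$ in general, so we must pass through the spline approximant.

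First I would invoke Lemma~\ref{lemma:approximation:error} to fix $g_*=\sum_{k=1}^p g_{k,*}\in\Theta_{NT}^0$ with $g_{k,*}\in\Theta_{NT,k}$, $g_{k,*}=f_{k,0}$ for $k\le d$, $\|g_{k,*}-f_{k,0}\|_\infty\le\rho_{NT,k}\le r_N$ for $k>d$, and hence $\|g_*-f_0\|_\infty\le\rho_{NT}\le r_N$. Since $g-g_*=\sum_{k}(g_k-g_{k,*})$ with $g_k-g_{k,*}\in\Theta_{NT,k}$, Lemma~\ref{lemma:sum:of:l2:norm:bound} gives $c_1\|g_j-g_{j,*}\|^2\le\|g-g_*\|^2$ for every $j$; in case~(1) its hypotheses hold under Assumptions~\ref{Assumption:A1} and \ref{Assumption:common}, and in case~(2) once the extra assumption $A_{NT}^2=o(T)$ is used. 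Using Lemma~\ref{lemma:expectation:sample:variance} in case~(1) (and its diverging-$T$ counterpart Lemma~\ref{lemma:expectation:sample:variance:large:T} in case~(2)) I would bound $\|g_*-f_0\|\le\sqrt{a_1}\,\|g_*-f_0\|_2\le\sqrt{a_1}\,r_N$, so that by the triangle inequality $\|g-g_*\|\le(K+\sqrt{a_1})r_N$, whence $\|g_j-g_{j,*}\|\le c_1^{-1/2}(K+\sqrt{a_1})r_N$. Converting back, $\|g_j-g_{j,*}\|_2\le\sqrt{2a_1}\,\|g_j-g_{j,*}\|$, and combining with $\|g_{j,*}-f_{j,0}\|_2\le\|g_{j,*}-f_{j,0}\|_\infty\le r_N$ yields $\|g_j-f_{j,0}\|_2\le C r_N$ with $C=1+\sqrt{2a_1}\,c_1^{-1/2}(K+\sqrt{a_1})$ (and $a_3$ in place of $a_1$ in case~(2)).

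Now fix $j\in\{d+1,\ldots,p\}$ and write $g_{j,-}=v_j(\cdot-\tfrac12)$, so that $g_{j,\sim}=(g_j-f_{j,0})+\big(f_{j,0}-v_j(\cdot-\tfrac12)\big)$. Since $g_j$ and $f_{j,0}$ both integrate to $0$ on $[0,1]$ (Assumption~\ref{Assumption:common}.\ref{Ac:a}(a) and the construction of the centralized basis), $\|\cdot\|_2$ is the honest $L^2[0,1]$ norm on each of these differences, so by the triangle inequality and Assumption~\ref{Assumption:common}.\ref{Ac:a}(b),
\[
\|g_{j,\sim}\|_2\ \ge\ \big\|f_{j,0}-v_j(\cdot-\tfrac12)\big\|_2-\|g_j-f_{j,0}\|_2\ \ge\ \sqrt{a_6}-Cr_N .
\]
Finally, Lemma~\ref{lemma:expectation:sample:variance} (resp. Lemma~\ref{lemma:expectation:sample:variance:large:T}) gives $\|g_{j,\sim}\|\ge(2a_1)^{-1/2}\|g_{j,\sim}\|_2$ (resp. with $a_3$), and under the stated threshold $\sqrt{a_6}\ge 2(K+4)\sqrt{2a_1^3c_1^{-1}}\,r_N$ --- which one checks dominates $2Cr_N$, since $a_1>1$ and $c_1\le 1$ --- we obtain $\|g_{j,\sim}\|_2\ge\tfrac12\sqrt{a_6}$ and therefore $\|g_{j,\sim}\|\ge\tfrac12\sqrt{a_6/(2a_1)}$. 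Case~(2) is the same argument with $a_1$ replaced by $a_3$ throughout and with the threshold stated in part~(2) of the lemma.

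The only real difficulty is bookkeeping: one must funnel the global bound $\|g-f_0\|\le Kr_N$ through the componentwise decomposition of Lemma~\ref{lemma:sum:of:l2:norm:bound} and the three norm conversions among $\|\cdot\|$, $\|\cdot\|_2$ and $\|\cdot\|_\infty$, and then verify that the explicit constant $C$ is absorbed by the hypothesis relating $\sqrt{a_6}$ and $r_N$. In the diverging-$T$ case one must additionally remember that Lemma~\ref{lemma:sum:of:l2:norm:bound} is only available after the side condition $A_{NT}^2=o(T)$ has been secured.
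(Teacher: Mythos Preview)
Your proof is correct and follows essentially the same route as the paper's: both arguments pass from the global bound $\|g-f_0\|\le Kr_N$ to a componentwise bound via Lemma~\ref{lemma:sum:of:l2:norm:bound}, then invoke the nonlinearity gap $a_6$ from Assumption~\ref{Assumption:common}.\ref{Ac:a} together with the norm equivalences of Lemma~\ref{lemma:expectation:sample:variance} (resp.\ Lemma~\ref{lemma:expectation:sample:variance:large:T}). The only difference is cosmetic: the paper routes through the projection $Pf_0$ and carries out the final triangle inequality in the $\|\cdot\|$ norm, whereas you route through the sup-norm approximant $g_*$ of Lemma~\ref{lemma:approximation:error} and do the triangle inequality in $\|\cdot\|_2$ before converting; your version is marginally cleaner since $g_*$ already has explicitly controlled components and the $a_6$ assumption is stated directly in $\|\cdot\|_2$.
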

\begin{proof}[Proof of Lemma \ref{lemma:lower:bound:non:linear}]
We only prove the result for fixed $T$ under  Assumptions \ref{Assumption:A1} and \ref{Assumption:common}. The case for diverging $T$ under Assumptions \ref{Assumption:common} and \ref{Assumption:A2} can be proved similarly.

Let $Pf_{j,0}=g_{j,0}=g_{j,0,-}+g_{j,0,\sim}\in \Theta_{NT}$ and $g=\sum_{j=1}^p g_j,$ with $g_j \in \Theta_{N, j}$ for $j=1,2,\ldots, p$, then  Lemma \ref{lemma:expectation:sample:variance} (Lemma \ref{lemma:expectation:sample:variance:large:T} for diverging $T$) and Lemma \ref{lemma:sum:of:l2:norm:bound} imply that
\begin{eqnarray*}
	\|g_{j,\sim}-g_{j,0,\sim}\|^2&\leq&a_1\|g_{j,\sim}-g_{j,0,\sim}\|_2^2\\
	&=&a_1\bigg(\|g_{j}-g_{j,0}\|_2^2-\|g_{j,-}-g_{j,0,-}\|_2^2\bigg)\\
	&\leq&a_1\|g_{j}-g_{j,0}\|_2^2\leq \frac{a_1}{c_1}\|Pf_0-g\|_2^2\leq\frac{a_1^2}{c_1}\|Pf_0-g\|^2.
\end{eqnarray*}
Therefore, by triangle inequality and Lemma \ref{lemma:approximation:error}, we have
\begin{eqnarray*}
	\|g_{j,\sim}-g_{j,0,\sim}\|\leq \sqrt{a_1^2c_1^{-1}}\|Pf_0-g\|&\leq& \sqrt{a_1^2c_1^{-1}}(\|Pf_0-f_0\|+\|g-f_0\|)\\
	&\leq&\sqrt{a_1^2c_1^{-1}}(\|g_*-f_0\|+\|g-f_0\|) \\
&\leq&\sqrt{a_1^2c_1^{-1}}(2\|g_*-f_0\|_\infty+5C_\delta r_N)\\
	&\leq&\sqrt{a_1^2c_1^{-1}}(2\rho_{NT}+Kr_N)\\
	&\leq&(K+2)\sqrt{a_1^2c_1^{-1}}r_N,
\end{eqnarray*}
and
\begin{eqnarray*}
	\|Pf_{j,0}-f_{j,0}\|\leq \|g_{j,*}-f_{j,0}\|\leq 2\rho_{NT}.
\end{eqnarray*}
As a consequence,  by Assumption \ref{Assumption:common}.\ref{Ac:a} and Lemma \ref{lemma:expectation:sample:variance} (Lemma \ref{lemma:expectation:sample:variance:large:T} for diverging $T$), on event $U_{N, \delta}$(on event $U_{NT, \delta}$ for diverging $T$), it follows that
\begin{eqnarray*}
	\|g_{j, \sim}\|&=& \|g_{j,\sim}-Pf_{j,0}+Pf_{j,0}-f_{j,0}+f_{j,0}\|\\
	&=& \|g_{j,\sim}-g_{j,0,\sim}-g_{j,0,-}+Pf_{j,0}-f_{j,0}+f_{j,0}\|\\
	&\geq&\|f_{j,0}-g_{j,0,-}\|-\|Pf_{j,0}-f_{j,0}\|-\|g_{j,\sim}-g_{j,0,\sim}\|\\
	&\geq&\sqrt{\frac{1}{2a_1}}\|f_{j,0}-g_{j,0,-}\|_2-2\rho_{NT}-(K+2)\sqrt{a_1^2c_1^{-1}}r_N  \\
	&\geq& \sqrt{\frac{1}{2a_1}}\|f_{j,0}-g_{j,0,-}\|_2-(K+4)\sqrt{a_1^2c_1^{-1}}r_N  \\
	&\geq&\sqrt{\frac{a_6}{2a_1}}-(K+4)\sqrt{a_1^2c_1^{-1}}r_N\\
	&\geq&\frac{1}{2}\sqrt{\frac{a_6}{2a_1}},
\end{eqnarray*}
where the last inequality follows from the rate condition: $\sqrt{a_6}\geq2(K+4)\sqrt{2a_1^3c_1^{-1}}r_N$.
\end{proof}

\begin{proof}[Proof of (a) in Theorem \ref{thm:rate:of:convergence:together}]
We define
\begin{eqnarray*}
	R_{NT}=\sup_{g\in \Theta_{NT}}\bigg|\frac{\|g\|_{NT}^2}{\|g\|^2}-1\bigg|,
\end{eqnarray*}
and by definition, it follows that
\begin{eqnarray*}
	1-R_{NT}\leq \sup_{g\in \Theta_{NT}}\frac{\|g\|_{NT}^2}{\|g\|^2}\leq 1+R_{NT}.
\end{eqnarray*}
For any $0<\delta<1$, define event $$U_{N, \delta}=\bigg\{\|\widehat{f}_*-f_0\|\leq C_\delta\gamma_{NT}, \|\widehat{f}_{*,0}-f_0\|\leq C_\delta\gamma_{NT}, R_{NT}\leq \frac{1}{2}\bigg\}\cap \Omega_N,$$ where $C_\delta>0$ is sufficiently large such that $\pr(U_{N,\delta})\geq 1-\delta$ and this is possible due to Lemma \ref{lemma:rate:of:convergence:non:penalized} and Lemma \ref{lemma:uniform:equivalence:empirical:population:norm}.

By definition of $\widehat{f}$, we have
\begin{eqnarray}
	0&\geq &l_{NT}(\widehat{f})-l_{NT}(\widehat{f}_{*,0})\nonumber\\
	&\geq& \frac{\|Y-\widehat{f} \|_{NT}^2-\|Y-\widehat{f}_{*,0}\|_{NT}^2}{2}+\sum_{j=d+1}^p\bigg\{p_{\lambda_{NT}}(\|\widehat{f} _{j,\sim}\|_{NT})-p_{\lambda_{NT}}(\|\widehat{f}_{j,*,0,\sim}\|_{NT})\bigg\}\nonumber\\
	&=& \frac{\|\widehat{f}_*-\widehat{f}\|_{NT}^2-\|\widehat{f}_*-\widehat{f}_{*,0}\|_{NT}^2}{2}+\sum_{j=d+1}^p\bigg\{p_{\lambda_{NT}}(\|\widehat{f} _{j,\sim}\|_{NT})-p_{\lambda_{NT}}(\|\widehat{f}_{j,*,0,\sim}\|_{NT})\bigg\}\label{thm:rate:of:convergence:eq:1}\\
	&\geq&\frac{1}{2}\bigg(\|\widehat{f}_*-\widehat{f} \|^2(1-R_{NT})-\|\widehat{f}_*-\widehat{f}_{*,0}\|^2(1+R_{NT})\bigg)-\sum_{j=d+1}^pp_{\lambda_{NT}}(\|\widehat{f}_{j,*,0,\sim}\|_{NT}).\label{thm:rate:of:convergence:eq:2}
\end{eqnarray}
By Lemma \ref{lemma:lower:bound:non:linear}, on event $U_{N,\delta}$, we have
\begin{eqnarray*}
	\|\widehat{f}_{j,*,0, \sim}\|_{NT}\geq \frac{1}{2} \|\widehat{f}_{j,*,0, \sim}\|\geq \frac{1}{4}\sqrt{\frac{a_6}{2a_1}}, \textrm{ for } j=d+1,\ldots, p,
\end{eqnarray*}
provided  $\sqrt{a_6}>2(C_\delta+4)\sqrt{2a_1^3c_1^{-1}}\gamma_{NT}$.
As a consequence, it follows that
\begin{eqnarray}
	p_{\lambda_{NT}}(\|\widehat{f}_{j,*,0, \sim}\|_{NT})=(\kappa+1)\lambda_{NT}^2/2, \textrm{ for } j=d+1,\ldots, p,  \textrm{ if }\; \frac{1}{4}\sqrt{\frac{a_6}{2a_1}}\geq \kappa\lambda_{NT}.\label{thm:rate:of:convergence:eq:3}
\end{eqnarray}

Combining  (\ref{thm:rate:of:convergence:eq:2}) and  (\ref{thm:rate:of:convergence:eq:3}), if $\sqrt{a_6}>2(C_\delta+4)\sqrt{2a_1^3c_1^{-1}}\gamma_{NT}$ and $\sqrt{a_6}\geq 4\sqrt{2a_1}\kappa\lambda_{NT}$, on event $U_{N,\delta}$, it follows that
\begin{eqnarray*}
\frac{(p-d)(\kappa+1)\lambda_{NT}^2}{2}+\frac{3}{2}\|\widehat{f}_*-\widehat{f}_{*,0}\|^2\geq \frac{1}{4}\|\widehat{f}_*-\widehat{f} \|^2.
\end{eqnarray*}
Taking square root on both side of above inequality, we have
\begin{eqnarray*}
	\frac{1}{2}\|\widehat{f}_*-\widehat{f} \|\leq \sqrt{{(p-d)(\kappa+1)}}\lambda_{NT}+2\|\widehat{f}_*-\widehat{f}_{*,0}\|,
\end{eqnarray*}
which, by triangle inequality,  further implies that  the following holds on event $U_{N, \delta}$,
\begin{eqnarray*}
	\|\widehat{f}-f_0\|&\leq& 2\sqrt{{(p-d)(\kappa+1)}}\lambda_{NT}+5\|\widehat{f}_*-f_0\|+4\|\widehat{f}_{*,0}-f_0\|\\
	&\leq&2\sqrt{{(p-d)(\kappa+1)}}\lambda_{NT}+9C_\delta\gamma_{NT}.
\end{eqnarray*}
Again by Lemma \ref{lemma:lower:bound:non:linear}, on event $U_{N, \delta}$, it holds that
\begin{eqnarray*}
	\|\widehat{f}_{j, \sim}\|_{NT}\geq \frac{1}{2} \|\widehat{f}_{j,\sim}\|\geq \frac{1}{4}\sqrt{\frac{a_6}{2a_1}}, \textrm{ for } j=d+1,\ldots, p,
\end{eqnarray*}
provided  $\sqrt{a_6}>2(2\sqrt{{(p-d)(\kappa+1)}}+9C_\delta+4)\sqrt{2a_1^3c_1^{-1}}(\lambda_{NT}+\gamma_{NT})$. As a consequence, we have
\begin{eqnarray}
	p_{\lambda_{NT}}(\|\widehat{f}_{j, \sim}\|_{NT})=(\kappa+1)\lambda_{NT}^2/2, \textrm{ for } j=d+1,\ldots, p,  \textrm{ if }\; \frac{1}{4}\sqrt{\frac{a_6}{2a_1}}\geq \kappa\lambda_{NT}.\label{thm:rate:of:convergence:eq:4}
\end{eqnarray}
Now in the view of  (\ref{thm:rate:of:convergence:eq:1}),  (\ref{thm:rate:of:convergence:eq:3}) and (\ref{thm:rate:of:convergence:eq:4}), the following holds on event $U_{N, \delta}$,
\begin{eqnarray*}
	\|\widehat{f}_*-\widehat{f}\|\leq 2\|\widehat{f}_*-\widehat{f}\|_{NT}\leq 2\|\widehat{f}_*-\widehat{f}_{*,0}\|_{NT}\leq 4\|\widehat{f}_*-\widehat{f}_{*,0}\|,
\end{eqnarray*}
which further implies 
\begin{eqnarray*}
	\|\widehat{f}-f_0\|&\leq&\|\widehat{f}-\widehat{f}_*\|+\|\widehat{f}_*-f_0\|\\
	&\leq&4\|\widehat{f}_*-\widehat{f}_{*,0}\|+\|\widehat{f}_*-f_0\|\\
	&\leq&5\|\widehat{f}_*-f_0\|+4\|\widehat{f}_{*,0}-f_0\|\\
	 &\leq& 9C_\delta\gamma_{NT}.
\end{eqnarray*}
Since $\delta$ can be arbitrary small, we finish the proof.
\end{proof}

\begin{proof}[Proof of (a) Theorem \ref{thm:selection:consistency:together}]
Fixing $C_\delta>0$ large enough, we need to show that for any $g=\sum_{j=1}^pg_j\in \Theta_{NT}^0$ with $\|g-f_0\|\leq C_\delta\gamma_{NT}$  and any $C>0$ one has $l_{NT}(g)=\min_{r\in \Theta_{NT}^1, \|r\|\leq C\gamma_{NT}}l_{NT}(g+r)$. 

Since $r=\sum_{j=1}^d{r_{j,\sim}}$ with $r_{j, \sim}\in \Theta_{NT,j,\sim}$ for $j=1,\ldots, d$, by Lemma \ref{lemma:sum:of:l2:norm:bound}, we have
\begin{eqnarray}
	\|r_{j,\sim}\|\leq c_1^{-1}\|r\|\leq c_1^{-1}C\gamma_{NT}.\label{eq:thm:selection:consistency:eq1}
\end{eqnarray}
Define event
\begin{eqnarray*}
	E_{N,\delta}=\{\|\widehat{f}_*-f_0\|\leq C_\delta\gamma_{NT}\}\cap \Omega_N,
\end{eqnarray*}
so on event $E_{N,\delta}$, we have $\|r_{j,\sim}\|_{NT}\leq 2c_1^{-1}C\gamma_{NT}.$ Moreover, we can select $C_\delta$ sufficient large such that $\pr(E_{N,\delta})\geq 1-\delta$, which is feasible by Lemma \ref{lemma:rate:of:convergence:non:penalized} and Lemma \ref{lemma:uniform:equivalence:empirical:population:norm}.
Direct calculation shows
\begin{eqnarray*}
	l_{NT}(g)-l_{NT}(g+r)&=&\frac{\|Y-g\|_{NT}^2-\|Y-g-r\|_{NT}^2}{2}-\sum_{j=1}^dp_{\lambda_{NT}}(\|r_{j,\sim}\|_{NT})\\
	&=&\frac{\|\widehat{f}_*-g\|_{NT}^2-\|\widehat{f}_*-g-r\|_{NT}^2}{2}-\sum_{j=1}^dp_{\lambda_{NT}}(\|r_{j,\sim}\|_{NT})\\
	&\leq& \frac{1}{2}\|r\|_{NT}\bigg(\|\widehat{f}_*-g\|_{NT}+\|\widehat{f}_*-g-r\|_{NT}\bigg)-\sum_{j=1}^dp_{\lambda_{NT}}(\|r_{j,\sim}\|_{NT})\\
	&\equiv&\frac{1}{2}S_1+S_2.
\end{eqnarray*}
Notice on event $E_{N,\delta}$, we have
\begin{eqnarray*}
	S_1&\leq&2\|r\|_{NT}\bigg(\|\widehat{f}_*-g\|+\|\widehat{f}_*-g-r\|\bigg)\\
	&\leq&2\|r\|_{NT}\bigg(2\|\widehat{f}_*-g\|+\|r\|\bigg)\\
	&\leq& 2\|r\|_{NT}\bigg(2\|\widehat{f}_*-f_0\|+2\|g-f_0\|+\|r\|\bigg)\\
	&\leq&2(4C_\delta+C)\gamma_{NT}\|r\|_{NT}\\
	&\leq&2(4C_\delta+C)\gamma_{NT}\sum_{j=1}^d\|r_{j,\sim}\|_{NT}.
\end{eqnarray*}

By (\ref{eq:thm:selection:consistency:eq1}), on event $E_{N, \delta}$, if $2c_1^{-1}C\gamma_{NT}\leq \lambda_{NT}$, we have 
\begin{eqnarray*}
	p_{\lambda_{NT}}(\|r_{j,\sim}\|_{NT})=\lambda_{NT}\|r_{j,\sim}\|_{NT}.
\end{eqnarray*}
Therefore, it follows that
\begin{eqnarray*}
	l_{NT}(g)-l_{NT}(g+r)&\leq&2(4C_\delta+C)\gamma_{NT}\sum_{j=1}^d\|r_{j,\sim}\|_{NT}-\sum_{j=1}^dp_{\lambda_{NT}}(\|r_{j,\sim}\|_{NT})\\
	&=&2(4C_\delta+C)\gamma_{NT}\sum_{j=1}^d\|r_{j,\sim}\|_{NT}-\sum_{j=1}^d\lambda_{NT}\|r_{j,\sim}\|_{NT}\\
	&=&\bigg(2(4C_\delta+C)\gamma_{NT}-\lambda_{NT}\bigg)\sum_{j=1}^d\|r_{j,\sim}\|_{NT}<0,
\end{eqnarray*}
provided $2(4C_\delta+C)\gamma_{NT}<\lambda_{NT}$.  

Now we prove that for each on event $E_{N,\delta}$, for all $g\in \Theta_{NT}^0$ with $\|g-f_0\|\leq C_{\delta} \gamma_{NT}$ and all  $r\in \Theta_{NT}^1$ with $\|r\|\leq C\gamma_{NT}$ for any $C>0$, we have
\begin{eqnarray}
	l_{NT}(g)=\min_{r\in \Theta_{NT}^1, \|r\|\leq C\gamma_{NT}}l_{NT}(g+r),\label{eq:thm:selection:consistency:eq2}
\end{eqnarray} 
provided $2(4C_\delta+c_1^{-1}C)\gamma_{NT}<\lambda_{NT}$.  Furthermore let event $F_{N,\delta}=\{\|\widehat{f}-f_0\|\leq C_\delta \gamma_{NT}\}$ and choose $C_\delta$ large such that $\pr(F_{N,\delta})\geq 1-\delta$. Then we have $\pr(E_{N,\delta}\cap F_{N,\delta})\geq 1-2\delta$. By (\ref{eq:thm:selection:consistency:eq2}), we have with probability at least $1-2\delta$,
\begin{eqnarray*}
	l_{NT}(\widehat{f})= \min_{r\in \Theta_{NT}^1, \|r\|\leq C\gamma_{NT}}l_{NT}(\widehat{f}+r),
\end{eqnarray*}
provided $2(4C_\delta+c_1^{-1}C)\gamma_{NT}<\lambda_{NT}$, which proves the first conclusion.

By Lemma \ref{lemma:lower:bound:non:linear}, we can see that with probability at least $1-2\delta$, 
\begin{eqnarray*}
		\|\widehat{f}_{j,\sim}\|\geq \frac{1}{2}\sqrt{\frac{a_6}{2a_1}}>0, \textrm{ for } j=d+1,\ldots, p,
\end{eqnarray*}
provided $\sqrt{a_6}>2(C_\delta+4)\sqrt{2a_1^3c_1^{-1}}\gamma_{NT}$, which is the second conclusion.
\end{proof}

\subsection{Proof of Theorems \ref{thm:rate:of:convergence:together} and \ref{thm:selection:consistency:together} -- Diverging $T$ Case}
\begin{proposition}\label{proposition:calculation:sub:exponential:mgf}
Let $\xi$ be a random variable with zero mean. If there exist positive constants $A, B$ such that $E(e^{\lambda \xi})\leq \exp(\frac{A\lambda^2}{1-B\lambda})$ for all $0\leq \lambda< 1/B$, then we have
\begin{eqnarray}
	\pr(\xi>x)\leq \exp\bigg(-\frac{x^2}{4A+2Bx}\bigg), \textrm{ for all } x\geq 0.\nonumber
\end{eqnarray}
\end{proposition}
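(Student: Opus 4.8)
The plan is a standard Chernoff (exponential Markov) argument followed by a convenient, explicit choice of the free parameter. For any $x\geq 0$ and any $0\leq \lambda< 1/B$, Markov's inequality applied to the nonnegative random variable $e^{\lambda\xi}$ together with the hypothesized bound on the moment generating function gives
\[
\pr(\xi>x)\;=\;\pr(e^{\lambda\xi}>e^{\lambda x})\;\leq\; e^{-\lambda x}\,E(e^{\lambda\xi})\;\leq\;\exp\!\left(-\lambda x+\frac{A\lambda^2}{1-B\lambda}\right).
\]
It then remains to pick $\lambda$ in the admissible range $[0,1/B)$ so that the exponent is at most $-x^2/(4A+2Bx)$. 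The degenerate case $x=0$ is trivial since $\pr(\xi>0)\leq 1=\exp(0)$, so assume $x>0$.

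The key step is the choice $\lambda=\dfrac{x}{2A+Bx}$. First I would verify admissibility: clearly $\lambda\geq 0$, and since $A>0$ we have $\lambda=\frac{x}{2A+Bx}<\frac{x}{Bx}=\frac1B$, so $\lambda\in[0,1/B)$. With this choice one computes $1-B\lambda=\frac{2A}{2A+Bx}$, hence
\[
\frac{A\lambda^2}{1-B\lambda}=\frac{A x^2}{2A(2A+Bx)}=\frac{x^2}{2(2A+Bx)},\qquad \lambda x=\frac{x^2}{2A+Bx}.
\]
Substituting these two identities into the Chernoff bound yields exponent
\[
-\lambda x+\frac{A\lambda^2}{1-B\lambda}=-\frac{x^2}{2A+Bx}+\frac{x^2}{2(2A+Bx)}=-\frac{x^2}{2(2A+Bx)}=-\frac{x^2}{4A+2Bx},
\]
which is exactly the asserted bound.

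There is essentially no obstacle here; the only points requiring a line of care are checking that the selected $\lambda$ lies in $[0,1/B)$ (which uses $A>0$) and disposing of the trivial case $x=0$. An alternative would be to carry out the exact minimization of $\lambda x-A\lambda^2/(1-B\lambda)$ over $\lambda\in[0,1/B)$, but that produces a messier closed form while the explicit near-optimal choice above already delivers the stated constant, so I would not pursue it.
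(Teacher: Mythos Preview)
Your proof is correct and follows exactly the same approach as the paper: apply Markov's inequality to obtain $\pr(\xi>x)\leq\exp\bigl(-\lambda x+\frac{A\lambda^2}{1-B\lambda}\bigr)$ and then evaluate at $\lambda=\frac{x}{2A+Bx}$. You have in fact been more careful than the paper by explicitly verifying the admissibility of this $\lambda$ and spelling out the arithmetic.
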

\begin{proof}[Proof of Proposition \ref{proposition:calculation:sub:exponential:mgf}]
Markov's inequality and the bound of $E(e^{\lambda \xi})$ yield 
\begin{eqnarray}
		\pr(\xi>x)&\leq& \exp\bigg(\frac{A\lambda^2}{1-B\lambda}-\lambda x\bigg).\nonumber
\end{eqnarray}
To finish the proof, we evaluate right side of above inequality at $\lambda=\frac{x}{Bx+2A}$.
\end{proof}
\begin{lemma}[Bernstein Inequality under Strong Mixing]\label{theorem:FME2009:theorem:1}
Let $\{\xi_t, t\geq 1\}$ be a sequence of centered real-valued random variables. Suppose that the sequence satisfies that alpha mixing coefficients $\alpha(t)\leq c\rho^t$ for some $\rho \in (0, 1)$, all $t\geq 0$ and $\sup_{t \geq 1}|\xi_t|\leq M$ for some $M>0$. Then there are positive constant $L_1, L_2$ depending only on $c$ and $\rho$ such that for all $T\geq 4$ and $x$ satisfying $0\leq \lambda <\frac{1}{L_1M (\log T) (\log \log T)}$, we have
\begin{eqnarray}
	\log \ev\bigg(\exp(\lambda\sum_{t=1}^T \zeta_t )\bigg)\leq \frac{L_2M^2T\lambda^2}{1-L_1M (\log T) (\log \log T)\lambda}.\nonumber
\end{eqnarray}
\end{lemma}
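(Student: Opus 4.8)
This statement is a (mild re-)formulation of the Bernstein-type inequality for geometrically strongly mixing sequences available in the literature (Merlev\`ede--Peligrad--Rio); accordingly, the shortest route is to deduce it directly from that result, the only work being to match notation --- the mixing rate there plays the role of $\rho$ here, the uniform bound the role of $M$, and the admissible range of $\lambda$ matches after absorbing the logarithmic factors into $L_1$. I will, however, also indicate how one would prove it from scratch, since this argument is what ultimately controls the range of $\lambda$ and hence the moderate-deviation regime used later.

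The object to bound is the cumulant generating function $\Lambda_T(\lambda) := \log \ev\big(\exp(\lambda \sum_{t=1}^T \xi_t)\big)$. First I would partition $\{1,\dots,T\}$ into alternating \emph{large} blocks of length of order $\log T$ and \emph{small} (spacer) blocks of length of order $\log\log T$. Invoking a coupling lemma (Bradley's coupling, or Berbee's lemma), the family of large-block sums can be replaced by an independent family at the cost of a total-variation error at most (number of blocks)$\,\times\,\alpha(\text{spacer length}) \le T\,c\,\rho^{\,\mathrm{const}\cdot\log\log T}$, which is $o(1)$ for a suitable constant; the spacer-block sums are handled symmetrically. This reduces the problem, up to these controlled errors, to the case of a sum of \emph{independent} block sums.

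Second, for a single block $B$, being a sum of at most $O(\log T)$ centred summands bounded by $M$ whose covariances decay geometrically (so $\mathrm{Var}(B) = O(M^2 \ell)$ for a block of length $\ell$), the classical Bennett/Bernstein bound for the moment generating function gives $\ev(e^{\lambda B}) \le \exp\big(c' M^2 \ell \lambda^2 / (1 - c'' M\lambda)\big)$, valid for $0 \le \lambda < (c'' M)^{-1}$. Multiplying this over the $O(T/\log T)$ independent large blocks (whose total length is $\le T$), doing the same for the spacers, and folding in the coupling errors, I would arrive at $\Lambda_T(\lambda) \le L_2 M^2 T \lambda^2 / (1 - L_1 M (\log T)(\log\log T)\lambda)$; the $(\log T)(\log\log T)$ inflation of the denominator is precisely the price of the spacer length needed to make the coupling error negligible together with the $\log T$ block length needed for the per-block variance bound.

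\textbf{Main obstacle.} The delicate part is the bookkeeping of scales: the spacer length must be large enough that the accumulated coupling error is dominated by the Bernstein term, yet small enough that it only costs the stated $(\log T)(\log\log T)$ factor in the range of $\lambda$; simultaneously the large-block length must be calibrated so that geometric mixing still yields a block variance of order (block length)$\,\times M^2$ uniformly in the block. Getting all three constraints to hold at once with explicit constants $L_1, L_2$ depending only on $c$ and $\rho$ is exactly the technical content of the cited theorem, which is why in this paper we simply quote it rather than re-derive it.
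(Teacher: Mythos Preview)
Your proposal is correct and matches the paper's approach exactly: the paper's proof is a single line citing Theorem~1 of Merlev\`ede--Peligrad--Rio (2009), which is precisely the reference you invoke. Your additional sketch of the blocking-plus-coupling argument is accurate background but not needed here.
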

\begin{proof}[Proof of Lemma \ref{theorem:FME2009:theorem:1}]
This is Theorem 1 from \cite{mpr09}. 
\end{proof}
\begin{proposition}\label{proposition:panel:sub:exponential:inequality}
Let $\{\xi_{t},  t\geq 1\}$ be a sequence of centered real-valued random variables. Suppose the sequence $\{\zeta_{t},  t\geq 1\}$ satisfies that alpha mixing    coefficients $\alpha(t)\leq c\rho^t$ for some $\rho \in (0, 1)$ and all $t\geq 0$. Moreover, $\sup_{t\geq 1}|\zeta_{t}|\leq M$, for some $M>0$. Then there are positive constants $L_3$ relying only on $c$ such that  for all $T\geq 4$ and $x\geq 0$
\begin{eqnarray}
	\pr\bigg(\bigg|\frac{1}{T}\sum_{t=1}^T \zeta_{t}\bigg|>x\bigg)\leq 2\exp\bigg(-\frac{L_3Tx^2}{M^2+M  (\log T) (\log \log T)x}\bigg).\nonumber
\end{eqnarray}
\end{proposition}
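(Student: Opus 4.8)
The plan is to combine the moment generating function bound of Lemma \ref{theorem:FME2009:theorem:1} with the Chernoff-type conversion in Proposition \ref{proposition:calculation:sub:exponential:mgf}, applying both to $\{\zeta_t\}$ and to $\{-\zeta_t\}$ and then taking a union bound. No genuinely hard step is anticipated; the points requiring care are the rescaling $\lambda \leftrightarrow \mu/T$ and checking that the admissible ranges in the two auxiliary statements match up.

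First I would set $S_T=\frac1T\sum_{t=1}^T\zeta_t$ and, for $\mu\ge 0$, write $\ev\big(e^{\mu S_T}\big)=\ev\big(e^{(\mu/T)\sum_{t=1}^T\zeta_t}\big)$. Applying Lemma \ref{theorem:FME2009:theorem:1} with $\lambda=\mu/T$ — which is legitimate precisely when $0\le \mu/T<\big(L_1M(\log T)(\log\log T)\big)^{-1}$, noting that $\log\log T>0$ for $T\ge 4$ — yields
\[
\log\ev\big(e^{\mu S_T}\big)\le \frac{L_2M^2\mu^2/T}{1-L_1M(\log T)(\log\log T)\mu/T}=\frac{A\mu^2}{1-B\mu},\qquad 0\le \mu<1/B,
\]
with $A=L_2M^2/T$ and $B=L_1M(\log T)(\log\log T)/T$. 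Since $-\zeta_t$ satisfies exactly the same hypotheses (it is centered, has the same $\alpha$-mixing coefficients, and $\sup_t|-\zeta_t|\le M$), the identical bound holds for $-S_T$ with the same $A$ and $B$. In particular both $S_T$ and $-S_T$ have zero mean, so Proposition \ref{proposition:calculation:sub:exponential:mgf} applies.

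Next I would invoke Proposition \ref{proposition:calculation:sub:exponential:mgf} with these $A,B$. Its conclusion holds for every $x\ge 0$ (the minimizing value $\lambda=x/(Bx+2A)$ always lies in $[0,1/B)$ because $A>0$), giving $\pr(S_T>x)\le \exp\big(-x^2/(4A+2Bx)\big)$ and likewise $\pr(-S_T>x)\le \exp\big(-x^2/(4A+2Bx)\big)$. A union bound then produces
\[
\pr\big(|S_T|>x\big)\le 2\exp\!\Big(-\frac{x^2}{4A+2Bx}\Big)=2\exp\!\Big(-\frac{Tx^2}{4L_2M^2+2L_1M(\log T)(\log\log T)x}\Big).
\]

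Finally I would absorb the constants. Setting $L_3=\min\{(4L_2)^{-1},(2L_1)^{-1}\}$ — which depends only on $c$ and $\rho$ through the constants $L_1,L_2$ of Lemma \ref{theorem:FME2009:theorem:1} — one has $4L_2M^2+2L_1M(\log T)(\log\log T)x\le L_3^{-1}\big(M^2+M(\log T)(\log\log T)x\big)$, and monotonicity of $t\mapsto e^{-t}$ delivers exactly the claimed bound. The only subtlety to flag in the writeup is the verification that the constraint $0\le\mu<1/B$ coming from Lemma \ref{theorem:FME2009:theorem:1} after rescaling is precisely the hypothesis needed by Proposition \ref{proposition:calculation:sub:exponential:mgf}, so that the tail bound is indeed valid for all $x\ge 0$.
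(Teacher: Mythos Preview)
Your proof is correct and follows essentially the same route as the paper: apply the MGF bound of Lemma~\ref{theorem:FME2009:theorem:1}, convert it to a tail bound via Proposition~\ref{proposition:calculation:sub:exponential:mgf}, and absorb the constants into a single $L_3$. The paper applies Proposition~\ref{proposition:calculation:sub:exponential:mgf} directly to $\xi=\sum_{t=1}^T\zeta_t$ and then substitutes $a=Tx$ rather than first rescaling to $S_T$, and it is less explicit about the two-sided union bound and the range check you carefully spelled out, but the argument is the same.
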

\begin{proof}[Proof of Proposition \ref{proposition:panel:sub:exponential:inequality}]
Lemma \ref{theorem:FME2009:theorem:1} implies that
\begin{eqnarray}
	\log \ev\bigg(\exp(\lambda\sum_{t=1}^T \zeta_{t} )\bigg)\leq \frac{L_2M^2T\lambda^2}{1-L_1M (\log T) (\log \log T)\lambda},\nonumber
\end{eqnarray}
where $L_1, L_2$ are positive constants relying on $c$ only. Applying Proposition \ref{proposition:calculation:sub:exponential:mgf} with $\xi=\sum_{t=1}^T \zeta_{it}$, we have
\begin{eqnarray}
	\pr\bigg(\sum_{t=1}^T \zeta_{t}>a\bigg)\leq \exp\bigg(-\frac{a^2}{4L_2M^2T+2L_1M  (\log T) (\log \log T) a}\bigg), \textrm{ for all } a \geq 0.\nonumber
\end{eqnarray}
Finally, evaluating $a=Tx$, we prove the inequality with $L_3=(4L_2+2L_2)^{-1}$.
\end{proof}

\begin{lemma}\label{lemma:expectation:sample:variance:large:T}
For $g\in \mcH^0$, it follows that 
\begin{eqnarray*}
	a_3^{-1}\|g\|_2^2\leq \ev\bigg(g^2(\bfX_{i1})\bigg)\leq a_3\|g\|_2^2, \textrm{ for all } i\in [N],
\end{eqnarray*}
and
\begin{eqnarray}
a_3^{-1}\|g\|_2^2-\frac{8a_3}{\sqrt{L_3}}\bigg(\frac{\|g\|_\infty^2}{T}+\frac{\|g\|_\infty\|g\|_2}{\sqrt{T}}\bigg)\leq \|g\|^2\leq a_3\|g\|_2^2+\frac{8a_3}{\sqrt{L_3}}\bigg(\frac{\|g\|_\infty^2}{T}+\frac{\|g\|_\infty\|g\|_2}{\sqrt{T}}\bigg).\nonumber
\end{eqnarray}
Furthermore, if $A_{NT}^2=o(T)$, then for all $g\in \Theta_{NT}$, it also holds that
\begin{eqnarray*}
	\frac{1}{2a_3}\|g\|_2^2\leq \|g\|^2\leq 2a_3\|g\|_2^2, 
\end{eqnarray*}
provided $A_{NT}^2/T\leq L_3/256$ and $A_{NT}^2/T\leq L_3a_3^{-4}/1024$.
\end{lemma}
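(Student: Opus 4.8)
The plan is to establish the three assertions in turn, each reducing to elementary density estimates plus a single use of the geometric $\alpha$-mixing structure of Assumption~\ref{Assumption:A2}. The first inequality is immediate: for $g\in\mcH^0$ one has $\int g=0$, hence $\|g\|_2^2=\int g^2$, and since $\bfX_{i1}$ has density $\pi_i$ with $a_3^{-1}\le\pi_i\le a_3$, writing $\ev(g^2(\bfX_{i1}))=\int g^2\pi_i$ and sandwiching $\pi_i$ gives $a_3^{-1}\|g\|_2^2\le\ev(g^2(\bfX_{i1}))\le a_3\|g\|_2^2$.

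For the second display, first expand the time-centering: with $\bar g_i:=\frac1T\sum_{t=1}^Tg(\bfX_{it})$ one has $\zeta_i(g,g)=\frac1T\sum_{t=1}^Tg^2(\bfX_{it})-\bar g_i^2$, so by stationarity $\ev(\zeta_i(g,g))=\ev(g^2(\bfX_{i1}))-\ev(\bar g_i^2)$. Writing $\mu_i:=\ev(g(\bfX_{i1}))$ and $\ev(\bar g_i^2)=\mu_i^2+\textrm{Var}(\bar g_i)$, I regroup this as
\[
\ev(\zeta_i(g,g))=\big(\ev(g^2(\bfX_{i1}))-\mu_i^2\big)-\textrm{Var}(\bar g_i)=\textrm{Var}(g(\bfX_{i1}))-\textrm{Var}(\bar g_i).
\]
This regrouping is the one genuinely non-obvious point: $\mu_i$ need not vanish (only the Lebesgue integral of $g$ does), so the term $\mu_i^2$ cannot simply be dropped, but it is harmless once absorbed into $\textrm{Var}(g(\bfX_{i1}))$, which is two-sidedly controlled. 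Indeed $\textrm{Var}(g(\bfX_{i1}))=\min_{c\in\mathbb{R}}\int(g-c)^2\pi_i\ge a_3^{-1}\min_c\int(g-c)^2=a_3^{-1}\|g\|_2^2$ (the last step because $g$ has Lebesgue-mean zero), while $\textrm{Var}(g(\bfX_{i1}))\le\ev(g^2(\bfX_{i1}))\le a_3\|g\|_2^2$ by the first assertion.

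It remains to bound the nonnegative quantity $\textrm{Var}(\bar g_i)=\frac1{T^2}\sum_{s,t}\textrm{cov}(g(\bfX_{is}),g(\bfX_{it}))$. The $T$ diagonal terms contribute at most $T^{-1}\textrm{Var}(g(\bfX_{i1}))\le a_3\|g\|_2^2/T$; for the off-diagonal terms I use the covariance inequality for an $a_4$-geometric $\alpha$-mixing sequence -- most economically, by feeding the bounded centered array $g(\bfX_{it})-\mu_i$ into the sub-exponential bound of Proposition~\ref{proposition:panel:sub:exponential:inequality} and integrating the tail via $\textrm{Var}(\bar g_i)=\int_0^\infty 2x\,\pr(|\bar g_i-\mu_i|>x)\,dx$, which is where the constant $L_3$ enters. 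After simplifying (using $\|g\|_2\le\|g\|_\infty$ to dominate $\|g\|_2^2/T$ by $\|g\|_\infty\|g\|_2/\sqrt T$) this produces $\textrm{Var}(\bar g_i)\le\frac{8a_3}{\sqrt{L_3}}\big(\frac{\|g\|_\infty^2}{T}+\frac{\|g\|_\infty\|g\|_2}{\sqrt T}\big)$. Averaging $\|g\|^2=\frac1N\sum_i\ev(\zeta_i(g,g))=\frac1N\sum_i\textrm{Var}(g(\bfX_{i1}))-\frac1N\sum_i\textrm{Var}(\bar g_i)$ over $i$ and inserting the displayed bounds yields the second assertion. For the last assertion, when $g\in\Theta_{NT}$ Lemma~\ref{lemma:l2:norm:sup:norm} gives $\|g\|_\infty\le A_{NT}\|g\|_2$, so the correction term is at most $\frac{8a_3}{\sqrt{L_3}}\big(\frac{A_{NT}^2}{T}+\frac{A_{NT}}{\sqrt T}\big)\|g\|_2^2$; since $A_{NT}/\sqrt T=(A_{NT}^2/T)^{1/2}$, the two stated thresholds $A_{NT}^2/T\le L_3/256$ and $A_{NT}^2/T\le L_3a_3^{-4}/1024$ are exactly what forces this to be $\le\frac1{2a_3}\|g\|_2^2$ (a fortiori $\le a_3\|g\|_2^2$), which combined with the second assertion gives $\frac1{2a_3}\|g\|_2^2\le\|g\|^2\le 2a_3\|g\|_2^2$.

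The only real obstacle is the quantitative estimate on $\textrm{Var}(\bar g_i)$: one must extract from geometric $\alpha$-mixing a bound on the variance of a time-average that splits precisely into a $\|g\|_\infty^2/T$ piece and a $\|g\|_\infty\|g\|_2/\sqrt T$ piece with constants traceable to $L_3$, so that the thresholds in the third assertion come out in closed form; everything else -- the density sandwiching and the regrouping that neutralizes the non-zero mean $\mu_i$ -- is routine once that step is in place.
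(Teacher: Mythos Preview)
Your approach is correct and follows essentially the same route as the paper: density sandwiching for the first display, the identity $\ev(\zeta_i(g,g))=\textrm{Var}(g(\bfX_{i1}))-\textrm{Var}(\bar g_i)$ together with two-sided control of $\textrm{Var}(g(\bfX_{i1}))$ for the second, the tail bound of Proposition~\ref{proposition:panel:sub:exponential:inequality} to control $\textrm{Var}(\bar g_i)$, and Lemma~\ref{lemma:l2:norm:sup:norm} for the third display.

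One point worth noting: your regrouping is actually cleaner than the paper's. The paper bounds the quantity $|\ev(\bar g_i^2)-\ev_i^2(g)|$ via the pointwise estimate $|\bar g_i^2-\mu_i^2|\le|\bar g_i-\mu_i|^2+2|\bar g_i-\mu_i||\mu_i|$ and then takes expectations, arriving at $\textrm{Var}(\bar g_i)+2\sqrt{\textrm{Var}(\bar g_i)}\sqrt{\ev_i(g^2)}$; the cross term $2\sqrt{\textrm{Var}(\bar g_i)}\sqrt{\ev_i(g^2)}$ is exactly where the $\|g\|_\infty\|g\|_2/\sqrt T$ piece in the statement originates. But as you observe, $\ev(\bar g_i^2)-\mu_i^2$ is \emph{equal} to $\textrm{Var}(\bar g_i)$, so this cross term is superfluous: integrating the tail from Proposition~\ref{proposition:panel:sub:exponential:inequality} already gives $\textrm{Var}(\bar g_i)\le 8\|g\|_\infty^2/(L_3T)$ alone (up to the negligible $\gamma_T^2/T^2$ term), which is stronger than the two-term correction in the statement. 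So the ``only real obstacle'' you flag---splitting $\textrm{Var}(\bar g_i)$ into a $\|g\|_\infty^2/T$ piece and a $\|g\|_\infty\|g\|_2/\sqrt T$ piece---is in fact a non-obstacle: only the first piece is needed, and the stated two-term bound is simply a weakening of what your argument already delivers.
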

\begin{proof}[Proof of Lemma \ref{lemma:expectation:sample:variance}]
By Assumption \ref{Assumption:A2} and direct examination, we have
\begin{eqnarray*}
	\ev_i(g^2)={\int g^2(\bfx)\pi_i(\bfx)}d\bfx\leq a_3\int g^2(\bfx)d\bfx=a_3\|g\|_2^2
\end{eqnarray*}
where we use the fact that $\int g(\bfx)d\bfx=0$. Similar argument can obtain the lower bound and finish the proof of first inequality.

Now we will prove the second inequality. Direct examination yields 
\begin{eqnarray}
	\|g\|^2=\ev(\|g\|_{NT}^2)&=&\frac{1}{N}\sum_{i=1}^N\ev_i(g^2)-\frac{1}{N}\sum_{i=1}^N\ev(\bar{g}_i^2),\label{eq:lemma:expectation:sample:variance:large:T:eq:1}
\end{eqnarray}
where $\bar{g}_i=T^{-1}\sum_{t=1}^Tg(\bfX_{it})$. Next simple algebra leads to 
\begin{eqnarray*}
	\bigg|\bar{g}_i^2-\ev_i^2(g)\bigg|\leq \bigg|\bar{g}_i-\ev_i(g)\bigg|^2+2\bigg|\bar{g}_i-\ev_i(g)\bigg|\bigg|\ev_i(g)\bigg|.
\end{eqnarray*}
As a consequence, we have
\begin{eqnarray}
	\bigg|\ev(\bar{g}_i^2)-\ev_i^2(g)\bigg|&\leq&\ev\bigg(\bigg|\bar{g}_i^2-\ev_i^2(g)\bigg|\bigg)\nonumber\\
	&\leq&\textrm{Var}(\bar{g}_i)+2\sqrt{\textrm{Var}(\bar{g}_i)}\sqrt{\ev_i(g^2)}\nonumber\\
	&\leq& \textrm{Var}(\bar{g}_i)+2a_3\sqrt{\textrm{Var}(\bar{g}_i)}\|g\|_2\label{eq:lemma:expectation:sample:variance:large:T:eq:2}.
\end{eqnarray}
Let $\gamma_T=(\log T)(\log \log T)$, by Proposition \ref{proposition:panel:sub:exponential:inequality}, for all $x>0$, we have
\begin{eqnarray*}
	\pr\bigg(\bigg|\bar{g}_i-\ev_i(g)\bigg|>x\bigg)&\leq&2\exp\bigg(-\frac{L_3Tx^2}{\|g\|_\infty^2+\|g\|_\infty\gamma_T x}\bigg)\\
	&\leq&2\exp\bigg(-\frac{L_3Tx^2}{2\|g\|_\infty^2}\bigg)+2\exp\bigg(-\frac{L_3Tx}{2\|g\|_\infty\gamma_T}\bigg),
\end{eqnarray*}
which further implies that
\begin{eqnarray*}
	\textrm{Var}(\bar{g}_i)&=&\ev\bigg(\bigg|\bar{g}_i-\ev_i(g)\bigg|^2\bigg)\\
	&=&\int_0^\infty\pr\bigg(\bigg|\bar{g}_i-\ev_i(g)\bigg|^2>x\bigg)dx\\
	&\leq&2\int_0^\infty\exp\bigg(-\frac{L_3Tx}{2\|g\|_\infty^2}\bigg)dx+2\int_0^\infty\exp\bigg(-\frac{L_3T\sqrt{x}}{2\|g\|_\infty\gamma_T}\bigg)dx\\
	&\leq&\frac{4\|g\|_\infty^2}{L_3T}+\frac{16\|g\|_\infty^2\gamma_T^2}{L_3^2T^2}\\
	&\leq& \frac{8\|g\|_\infty^2}{L_3T},
\end{eqnarray*}
where the fact that $\gamma_T^2=o(T)$ is used. Therefore, by \ref{eq:lemma:expectation:sample:variance:large:T:eq:2}, we have
\begin{eqnarray*}
	\bigg|\ev(\bar{g}_i^2)-\ev_i^2(g)\bigg|\leq \frac{8\|g\|_\infty^2}{L_3T}+\frac{6a_3\|g\|_\infty\|g\|_2}{\sqrt{L_3T}}\leq \frac{8a_3}{\sqrt{L_3}}\bigg(\frac{\|g\|_\infty^2}{T}+\frac{\|g\|_\infty\|g\|_2}{\sqrt{T}}\bigg).
\end{eqnarray*}
Above inequality and (\ref{eq:lemma:expectation:sample:variance:large:T:eq:1}) together imply that
\begin{eqnarray}
	\bigg|\|g\|^2-\frac{1}{N}\sum_{i=1}^N\ev_i(g^2)+\frac{1}{N}\sum_{i=1}^N\ev_i^2(g)\bigg|&\leq& \frac{1}{N}\sum_{i=1}^N\bigg|\ev(\bar{g}_i^2)-\ev_i^2(g)\bigg|\nonumber\\
	&\leq&\frac{8a_3}{\sqrt{L_3}}\bigg(\frac{\|g\|_\infty^2}{T}+\frac{\|g\|_\infty\|g\|_2}{\sqrt{T}}\bigg).\label{eq:lemma:expectation:sample:variance:large:T:eq:3}
\end{eqnarray}
By Assumption \ref{Assumption:A2} and direct examination, we have
\begin{eqnarray*}
	\ev_i(g^2)-\ev_i^2(g)&=&{\int \bigg(g(\bfx)-\int g(\bfx)\pi_i(\bfx)d\bfx\bigg)^2\pi_i(\bfx)}d\bfx\\
	&\leq&{\int \bigg(g(\bfx)-\int g(\bfx)d\bfx\bigg)^2\pi_i(\bfx)}d\bfx\\
	&\leq&a_3{\int \bigg(g(\bfx)-\int g(\bfx)d\bfx\bigg)^2d\bfx}\\
	&=&a_3\|g\|_2^2,
\end{eqnarray*}
and
\begin{eqnarray*}
	\ev_i(g^2)-\ev_i^2(g)&=&{\int \bigg(g(\bfx)-\int g(\bfx)\pi_i(\bfx)d\bfx\bigg)^2\pi_i(\bfx)}d\bfx\\
	&\geq& a_3^{-1}{\int \bigg(g(\bfx)-\int g(\bfx)\pi_i(\bfx)d\bfx\bigg)^2}d\bfx\\
	&\geq&a_3^{-1}{\int \bigg(g(\bfx)-\int g(\bfx)d\bfx\bigg)^2d\bfx}\\
	&=&a_3^{-1}\|g\|_2^2.
\end{eqnarray*}
Combining two bounds above and (\ref{eq:lemma:expectation:sample:variance:large:T:eq:3}), we obtain
\begin{eqnarray}
a_3^{-1}\|g\|_2^2-\frac{8a_3}{\sqrt{L_3}}\bigg(\frac{\|g\|_\infty^2}{T}+\frac{\|g\|_\infty\|g\|_2}{\sqrt{T}}\bigg)\leq \|g\|^2\leq a_3\|g\|_2^2+\frac{8a_3}{\sqrt{L_3}}\bigg(\frac{\|g\|_\infty^2}{T}+\frac{\|g\|_\infty\|g\|_2}{\sqrt{T}}\bigg),\nonumber
\end{eqnarray}
which is the second inequality. When $g\in \Theta_{NT}$, by Lemma \ref{lemma:l2:norm:sup:norm}, it follow that
\begin{eqnarray*}
	\bigg\{ a_3^{-1}-\frac{8a_3}{\sqrt{L_3}}\bigg(\frac{A_{NT}^2}{T}+\sqrt{\frac{A_{NT}^2}{T}}\bigg)\bigg\}\|g\|_2^2\leq \|g\|^2\leq\bigg\{ a_3+\frac{8a_3}{\sqrt{L_3}}\bigg(\frac{A_{NT}^2}{T}+\sqrt{\frac{A_{NT}^2}{T}}\bigg)\bigg\}\|g\|_2^2,
\end{eqnarray*}
which is the third inequality by noticing $A_{NT}^2=o(T)$.
\end{proof}

\begin{lemma}\label{lemma:expectation:sample:variance:mixing}
Under Assumption \ref{Assumption:common} and \ref{Assumption:A2}, there exists constant  $c_2>0$ free of $T, N, i$ such that
\begin{eqnarray*}
	\textrm{Var}\bigg(\zeta_i(g, f)\bigg)\leq \frac{c_2}{T}\|g\|_\infty^2\|f\|_\infty^2, \textrm{ for all bounded } g, f\in \mcH. 
\end{eqnarray*}
Furthermore, $\textrm{ for all } g_1, g_2, f_1, f_2\in \Theta_{NT}$, the following also holds:
\begin{eqnarray*}
	\textrm{Var}\bigg(\zeta_i(g_1, f_1)-\zeta_i(g_2, f_2)\bigg)\leq c_2\frac{A_{NT}^4}{T}\bigg(\|g_1-g_2\|_2^2\|f_1\|_2^2+\|f_1-f_2\|_2^2\|g_2\|_2^2\bigg),
\end{eqnarray*}
and
\begin{eqnarray*}
\textrm{Var}\bigg(\frac{1}{T}\sum_{t=1}^Tg_1(\bfX_{it})f_1(\bfX_{it})-\frac{1}{T}\sum_{t=1}^Tg_2(\bfX_{it})f_2(\bfX_{it})\bigg) \leq c_2\frac{A_{NT}^4}{T}\bigg(\|g_1-g_2\|_2^2\|f_1\|_2^2+\|f_1-f_2\|_2^2\|g_2\|_2^2\bigg).
\end{eqnarray*}
\end{lemma}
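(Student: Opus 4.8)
The plan is to reduce all three bounds to one elementary estimate: the variance of a sample mean of a bounded, stationary, geometrically $\alpha$-mixing sequence. Writing $\bar{g}_i=\frac1T\sum_{t=1}^T g(\bfX_{it})$ and $\bar{f}_i=\frac1T\sum_{t=1}^T f(\bfX_{it})$, one has the algebraic identity
\begin{align*}
\zeta_i(g,f)=\frac1T\sum_{t=1}^T g(\bfX_{it})f(\bfX_{it})-\bar{g}_i\bar{f}_i,
\end{align*}
so that $\textrm{Var}(\zeta_i(g,f))\le 2\,\textrm{Var}\!\big(\tfrac1T\sum_{t}(gf)(\bfX_{it})\big)+2\,\textrm{Var}(\bar{g}_i\bar{f}_i)$. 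For any bounded measurable $h$, the sequence $\{h(\bfX_{it})\}_{t\ge1}$ is $\alpha$-mixing with the same coefficients $\alpha_{[i]}(t)\le a_4^{t}$ as $\{\bfX_{it}\}_{t\ge1}$ under Assumption~\ref{Assumption:A2}, since mixing coefficients do not increase under componentwise measurable maps. Davydov's covariance inequality then gives $|\textrm{Cov}(h(\bfX_{is}),h(\bfX_{it}))|\le 4a_4^{|s-t|}\|h\|_\infty^2$, and summing the geometric series,
\begin{align*}
\textrm{Var}\!\Big(\frac1T\sum_{t=1}^T h(\bfX_{it})\Big)=\frac1{T^2}\sum_{s,t=1}^T\textrm{Cov}(h(\bfX_{is}),h(\bfX_{it}))\le\frac{4}{T}\cdot\frac{1+a_4}{1-a_4}\,\|h\|_\infty^2.
\end{align*}
Taking $h=gf$ bounds the first term by $(C/T)\|g\|_\infty^2\|f\|_\infty^2$, and taking $h=g$ or $h=f$ gives $\textrm{Var}(\bar{g}_i)\le(C/T)\|g\|_\infty^2$ and $\textrm{Var}(\bar{f}_i)\le(C/T)\|f\|_\infty^2$, with $C=4(1+a_4)/(1-a_4)$.

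For the product term, set $\mu_g=\ev[\bar{g}_i]$, $\mu_f=\ev[\bar{f}_i]$ and expand $\bar{g}_i\bar{f}_i=\mu_g\mu_f+\mu_g(\bar{f}_i-\mu_f)+\mu_f(\bar{g}_i-\mu_g)+(\bar{g}_i-\mu_g)(\bar{f}_i-\mu_f)$, which yields
\begin{align*}
\textrm{Var}(\bar{g}_i\bar{f}_i)\le 3\Big(\mu_g^2\,\textrm{Var}(\bar{f}_i)+\mu_f^2\,\textrm{Var}(\bar{g}_i)+\ev\big[(\bar{g}_i-\mu_g)^2(\bar{f}_i-\mu_f)^2\big]\Big).
\end{align*}
Since $|\mu_g|,|\bar{g}_i|\le\|g\|_\infty$ and $|\mu_f|,|\bar{f}_i|\le\|f\|_\infty$, the first two summands are at most $(C/T)\|g\|_\infty^2\|f\|_\infty^2$, and the last is $\le 4\|f\|_\infty^2\,\textrm{Var}(\bar{g}_i)\le(4C/T)\|g\|_\infty^2\|f\|_\infty^2$. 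Combining with the bound on the first term gives the first inequality of the lemma with $c_2$ depending only on $a_4$.

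The remaining two inequalities follow by bilinearity together with Lemma~\ref{lemma:l2:norm:sup:norm}. Using $\zeta_i(g_1,f_1)-\zeta_i(g_2,f_2)=\zeta_i(g_1-g_2,f_1)+\zeta_i(g_2,f_1-f_2)$ and $\textrm{Var}(A+B)\le2\textrm{Var}(A)+2\textrm{Var}(B)$, the first inequality applied to each piece gives $\textrm{Var}(\zeta_i(g_1,f_1)-\zeta_i(g_2,f_2))\le(2c_2/T)(\|g_1-g_2\|_\infty^2\|f_1\|_\infty^2+\|g_2\|_\infty^2\|f_1-f_2\|_\infty^2)$. Because $\Theta_{NT}$ is a linear space, $g_1-g_2,f_1-f_2\in\Theta_{NT}$, so Lemma~\ref{lemma:l2:norm:sup:norm} lets us replace each $\|\cdot\|_\infty$ by $A_{NT}\|\cdot\|_2$, producing the stated $A_{NT}^4$-factor. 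The third inequality is identical after writing $g_1f_1-g_2f_2=(g_1-g_2)f_1+g_2(f_1-f_2)$ and applying the sample-mean variance bound above to the bounded functions $(g_1-g_2)f_1$ and $g_2(f_1-f_2)$, then using Lemma~\ref{lemma:l2:norm:sup:norm} once more.

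The argument is essentially routine bookkeeping; the only place requiring a little care is the product term $\textrm{Var}(\bar{g}_i\bar{f}_i)$, where a naive Cauchy--Schwarz bound would spoil the $T^{-1}$ rate --- centering around $\mu_g,\mu_f$ as above is what keeps the rate sharp. The other point to be mindful of is that the mixing/covariance inequality must be invoked for the transformed sequence $\{h(\bfX_{it})\}_t$ rather than for $\{\bfX_{it}\}_t$ directly, which is harmless because $\alpha$-mixing is preserved under measurable transformations of each coordinate.
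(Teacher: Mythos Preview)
Your proof is correct and takes a genuinely different, more elementary route than the paper. The paper centers $\zeta_i(g,f)$ at $\ev_i(gf)-\ev_i(g)\ev_i(f)$, bounds the difference by three terms $R_1,R_2,R_3$ of the type $\big|\tfrac1T\sum_t h(\bfX_{it})-\ev_i(h)\big|$, and then invokes the Bernstein-type tail bound under $\alpha$-mixing (Proposition~\ref{proposition:panel:sub:exponential:inequality}, derived from Merlev\`ede--Peligrad--Rio) to get exponential tails for each $R_j$, which it integrates to obtain $\ev(R_j^2)\le C\|g\|_\infty^2\|f\|_\infty^2/T$. You instead use Davydov's covariance inequality and sum the geometric covariances directly, bypassing the tail-bound machinery entirely; this is shorter and self-contained for the variance estimate at hand. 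The paper's detour through exponential tails is not wasted, though: the same Bernstein bound is reused in Lemmas~\ref{lemma:expectation:sample:variance:large:T} and \ref{lemma:uniform:equivalence:empirical:population:norm:large:T}, so it amortizes the cost across several results. For the second and third inequalities both proofs coincide --- bilinearity of $\zeta_i$, the decomposition $g_1f_1-g_2f_2=(g_1-g_2)f_1+g_2(f_1-f_2)$, and Lemma~\ref{lemma:l2:norm:sup:norm} to pass from $\|\cdot\|_\infty$ to $A_{NT}\|\cdot\|_2$.
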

\begin{proof}[Proof of lemma \ref{lemma:expectation:sample:variance:mixing}]
For general bounded $g, f \in \mcH$, by simple algebra, it follows that
\begin{eqnarray}
	&&\bigg|\zeta_{i}(g, f)-\ev_i(gf)+\ev_i(g)\ev_i(f)\bigg|\nonumber\\
	&=&\bigg|\frac{1}{T}\sum_{t=1}^Tg(\bfX_{it})f(\bfX_{it})-\ev_i(gf)-\frac{1}{T}\sum_{t=1}^Tg(\bfX_{it})\frac{1}{T}\sum_{t=1}^Tf(\bfX_{it})+\ev_i(g)\ev_i(f)\bigg|\nonumber\\
	&\leq&\bigg|\frac{1}{T}\sum_{t=1}^Tg(\bfX_{it})f(\bfX_{it})-\ev_i(gf)\bigg|+\bigg|\frac{1}{T}\sum_{t=1}^Tg(\bfX_{it})\frac{1}{T}\sum_{t=1}^Tf(\bfX_{it})-\ev_i(g)\ev_i(f)\bigg|\nonumber\\
	&\leq&R_1+R_2+R_3,\label{eq:lemma:expectation:sample:variance:mixing:eq:1}
\end{eqnarray}
where
\begin{eqnarray*}
	R_{1}&=&\bigg|\frac{1}{T}\sum_{t=1}^Tg(\bfX_{it})f(\bfX_{it})-\ev_i(gf)\bigg|,\\
	R_{2}&=&\bigg|\frac{1}{T}\sum_{t=1}^Tg(\bfX_{it})-\ev_i(g)\bigg|\|f\|_\infty,\;\;\;R_{3}=\bigg|\frac{1}{T}\sum_{t=1}^Tf(\bfX_{it})-\ev_i(f)\bigg| \ev\bigg(|g|\bigg),
\end{eqnarray*}
Let $\gamma_T=(\log T)(\log T \log T)$ and by Proposition \ref{proposition:panel:sub:exponential:inequality} and Lemma \ref{lemma:l2:norm:sup:norm}, we have
\begin{eqnarray*}
	\pr\bigg(R_1>x\bigg)&\leq&2\exp\bigg(-\frac{L_3Tx^2}{\|g\|_\infty^2\|f\|_\infty^2+\|g\|_\infty\|f\|_\infty\gamma_T x}\bigg)\\
	&\leq&2\exp\bigg(-\frac{L_3Tx^2}{2\|g\|_\infty^2\|f\|_\infty^2}\bigg)+2\exp\bigg(-\frac{L_3Tx}{2\|g\|_\infty\|f\|_\infty\gamma_T}\bigg).
\end{eqnarray*}
By simple calculus, $\int_0^\infty \exp(-ax^{1/k})dx=k!a^{-k}$ for integer $k\geq 1$. As a consequence, it follows that
\begin{eqnarray*}
	\ev(R_1^2)&=&\int_0^\infty \pr\bigg(R_1>\sqrt{x}\bigg)dx\\
	&\leq&2\int_0^\infty\exp\bigg(-\frac{L_3Tx}{2\|g\|_\infty^2\|f\|_\infty^2}\bigg)dx+2\int_0^\infty \exp\bigg(-\frac{L_3Tx^{1/2}}{2\|g\|_\infty\|f\|_\infty\gamma_T}\bigg)dx\\
	&\leq&\frac{4\|g\|_\infty^2\|f\|_\infty^2}{L_3T}+\frac{16\|g\|_\infty^2\|f\|_\infty^2\gamma_T^2}{L_3^2T^2}.
\end{eqnarray*}
Similarly we can show
\begin{eqnarray*}
	\ev(R_2^2)&\leq&\frac{4\|g\|_\infty^2\|f\|_\infty^2}{L_3T}+\frac{8\|g\|_\infty^2\|f\|_\infty^2\gamma_T^2}{L_3^2T^2}\\
	\ev(R_3^2)&\leq&\frac{4a_3\|g\|_2^2\|f\|_\infty^2}{L_3T}+\frac{8a_3\|g\|_2^2\|f\|_\infty^2\gamma_T^2}{L_3^2T^2},
\end{eqnarray*}
where Lemma \ref{lemma:expectation:sample:variance:large:T} is used.
Therefore, by property of variance and (\ref{eq:lemma:expectation:sample:variance:mixing:eq:1}), it follows that
\begin{eqnarray}
	\textrm{Var}\bigg(\zeta_i(g, f)\bigg)&\leq&\ev\bigg(\bigg|\zeta_{i}(g, f)-\ev_i(gf)+\ev_i(g)\ev_i(f)\bigg|^2\bigg)\nonumber\\
	&\leq&4\sum_{j=1}^3\ev(R_j^2)\leq\frac{C}{T}\|g\|_\infty^2\|f\|_\infty^2,\label{eq:lemma:expectation:sample:variance:mixing:eq:2}
\end{eqnarray}
and
\begin{eqnarray}
	\textrm{Var}\bigg(\frac{1}{T}\sum_{t=1}^Tg(\bfX_{it})f(\bfX_{it})\bigg)=\ev(R_1^2)\leq \frac{C}{T}\|g\|_\infty^2\|f\|_\infty^2\;\;\textrm{ with }\;\; C=128a_3L_3^{-1},\label{eq:lemma:expectation:sample:variance:mixing:eq:3}
\end{eqnarray}
which is the first result. Combining Lemma \ref{lemma:l2:norm:sup:norm},  (\ref{eq:lemma:expectation:sample:variance:mixing:eq:2}) and (\ref{eq:lemma:expectation:sample:variance:mixing:eq:3}),  for $g_1, g_2, f_1, f_2 \in \Theta_{NT}$, we have
\begin{eqnarray*}
	\textrm{Var}\bigg(\zeta_i(g_1, f_1)-\zeta_i(g_2, f_2)\bigg)&=&\textrm{Var}\bigg(\zeta_i(g_1-g_2, f_1)+\zeta_i(g_2, f_1-f_2)\bigg)\nonumber\\
	&\leq&2\textrm{Var}\bigg(\zeta_i(g_1-g_2, f_1)\bigg)+2\textrm{Var}\bigg(\zeta_i(g_2, f_1-f_2)\bigg)\\
	&\leq&  2\frac{C}{T}\bigg(\|g_1-g_2\|_\infty^2\|f_1\|_\infty^2+\|f_1-f_2\|_\infty^2\|g_2\|_\infty^2\bigg)\\
	&\leq&  c_2\frac{A_{NT}^4}{T}\bigg(\|g_1-g_2\|_2^2\|f_1\|_2^2+\|f_1-f_2\|_2^2\|g_2\|_2^2\bigg),
\end{eqnarray*}
and
\begin{eqnarray*}
	&&\textrm{Var}\bigg(\frac{1}{T}\sum_{t=1}^Tg_1(\bfX_{it})f_1(\bfX_{it})-\frac{1}{T}\sum_{t=1}^Tg_2(\bfX_{it})f_2(\bfX_{it})\bigg)\\
	&=&\textrm{Var}\bigg\{\frac{1}{T}\sum_{t=1}^Tf_1(\bfX_{it})\bigg(g_1(\bfX_{it})-g_2(\bfX_{it})\bigg)+\frac{1}{T}\sum_{t=1}^T\bigg(f_1(\bfX_{it})-f_2(\bfX_{it})\bigg)g_2(\bfX_{it})\bigg\}\\
	&\leq&2\textrm{Var}\bigg\{\frac{1}{T}\sum_{t=1}^Tf_1(\bfX_{it})\bigg(g_1(\bfX_{it})-g_2(\bfX_{it})\bigg)\bigg\}+2\textrm{Var}\bigg\{\frac{1}{T}\sum_{t=1}^T\bigg(f_1(\bfX_{it})-f_2(\bfX_{it})\bigg)g_2(\bfX_{it})\bigg\}\\
	&\leq&  2\frac{C}{T}\bigg(\|g_1-g_2\|_\infty^2\|f_1\|_\infty^2+\|f_1-f_2\|_\infty^2\|g_2\|_\infty^2\bigg)\\
	&\leq&  c_2\frac{A_{NT}^4}{T}\bigg(\|g_1-g_2\|_2^2\|f_1\|_2^2+\|f_1-f_2\|_2^2\|g_2\|_2^2\bigg),
\end{eqnarray*}
with $c_2=2C$, which is the second inequality.
\end{proof}

\begin{lemma}\label{lemma:uniform:equivalence:empirical:population:norm:large:T}
Under Assumption \ref{Assumption:A2}, if $A_{NT}^4d_{NT}=o(NT)$, $A_{NT}^2d_{NT}=o(N)$ and $A_{NT}^2=o(T)$, then the following holds,
\begin{eqnarray*}
\sup_{g,f \in \Theta_{NT}}\bigg| \frac{\langle f, g\rangle_{NT}-\langle f, g\rangle}{\|f\|_2\|g\|_2}\bigg|=o_P(1),\;\;\;\;\sup_{g,f \in \Theta_{NT}}\bigg| \frac{\langle f, g\rangle_{NT}-\langle f, g\rangle}{\|f\|\|g\|}\bigg|=o_P(1),
\end{eqnarray*}
and
\begin{eqnarray*}
	\sup_{g,f \in \Theta_{NT}}\bigg| \frac{\frac{1}{NT}\sum_{i=1}^N\sum_{t=1}^Tg(\bfX_{it})f(\bfX_{it})-\frac{1}{N}\sum_{t=1}^N\ev_i(gf)}{\|f\|_2\|g\|_2}\bigg|&=&o_P(1),\\
	\sup_{g,f \in \Theta_{NT}}\bigg| \frac{\frac{1}{NT}\sum_{i=1}^N\sum_{t=1}^Tg(\bfX_{it})f(\bfX_{it})-\frac{1}{N}\sum_{t=1}^N\ev_i(gf)}{\|f\|\|g\|}\bigg|&=&o_P(1).
\end{eqnarray*}
\end{lemma}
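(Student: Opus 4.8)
The plan is to reproduce the chaining argument from the proof of Lemma \ref{lemma:uniform:equivalence:empirical:population:norm}, replacing the i.i.d.\ Bernstein variance computations there by the $\alpha$-mixing variance bounds of Lemma \ref{lemma:expectation:sample:variance:mixing}. The key observation is that, although within each $i$ the vectors $\bfX_{i1},\dots,\bfX_{iT}$ are only $\alpha$-mixing, the block statistics $\zeta_i(g,f)$ are still independent across $i$ by Assumption \ref{Assumption:common}.\ref{Ac:a1}; hence the ordinary Bernstein inequality still applies to the average $S_N(\theta)=N^{-1}\sum_{i=1}^N X_i(\theta)$, and the mixing enters only through the variance of a single block, which is exactly what Lemma \ref{lemma:expectation:sample:variance:mixing} supplies.

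Concretely, first I would set $\mcF_{\textrm{UB}}^0=\{f\in\Theta_{NT}:\|f\|_2\le 1\}$, $\Theta=\mcF_{\textrm{UB}}^0\times\mcF_{\textrm{UB}}^0$ with the metric $d(\theta_1,\theta_2)=\sqrt{\|g_1-g_2\|_2^2/2+\|f_1-f_2\|_2^2/2}$, and define $\xi_i(\theta)=\zeta_i(g,f)$, $X_i(\theta)=\xi_i(\theta)-\ev(\xi_i(\theta))$, $S_N(\theta)=N^{-1}\sum_i X_i(\theta)$ for $\theta=(g,f)$, exactly as in Lemma \ref{lemma:uniform:equivalence:empirical:population:norm}. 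The same use of Lemma \ref{lemma:l2:norm:sup:norm} yields the Lipschitz-type bound $|X_i(\theta_1)-X_i(\theta_2)|\le 16A_{NT}^2 d(\theta_1,\theta_2)$, and in particular $|X_i(\theta)|\le 16A_{NT}^2$, while the second inequality of Lemma \ref{lemma:expectation:sample:variance:mixing} gives $\textrm{Var}\big(X_i(\theta_1)-X_i(\theta_2)\big)\le 2c_2 A_{NT}^4 d^2(\theta_1,\theta_2)/T$. Applying Bernstein's inequality across the independent index $i$ then produces a two-term tail for $|S_N(\theta_1)-S_N(\theta_2)|$ in which the sub-Gaussian part is governed by $NTx^2/A_{NT}^4$ and the sub-exponential part by $Nx/A_{NT}^2$; the only structural change from the fixed-$T$ proof is the extra factor $T$ in the denominator of the Gaussian exponent, inherited from the mixing variance bound.

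With this tail estimate I would then run the same peeling/chaining scheme as in Lemma \ref{lemma:uniform:equivalence:empirical:population:norm}: nested $\delta_k$-nets with $\delta_k=3^{-k}$, cardinalities bounded through the covering number $D(\delta_k/2,\Theta,d)\le\big((16+\delta_k)/\delta_k\big)^{2d_{NT}}$ by \cite{van2000}[Corollary 2.6] and $\dim\Theta_{NT}=d_{NT}$, truncation of the chain at a level $K=K(N)$ large enough that the Lipschitz bound forces the residual term to vanish, and summation of the resulting geometric series. The two hypotheses $A_{NT}^4 d_{NT}=o(NT)$ and $A_{NT}^2 d_{NT}=o(N)$ are precisely what make the entropy $d_{NT}$ negligible against $NTx^2/A_{NT}^4$ and $Nx/A_{NT}^2$ respectively at every scale, whence $\pr(\sup_{\theta\in\Theta}|S_N(\theta)|>x)\to 0$; since this supremum equals $\sup_{g,f\in\Theta_{NT}}|(\langle f,g\rangle_{NT}-\langle f,g\rangle)/(\|f\|_2\|g\|_2)|$, the first assertion follows. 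The uncentered version is obtained verbatim after replacing $\xi_i(\theta)$ by $\widetilde\xi_i(\theta)=T^{-1}\sum_{t=1}^T g(\bfX_{it})f(\bfX_{it})$ and invoking the third inequality of Lemma \ref{lemma:expectation:sample:variance:mixing} in place of the second, and the $\|\cdot\|$-normalized statements follow from the $\|\cdot\|_2$-normalized ones together with the norm equivalence $\tfrac{1}{2a_3}\|g\|_2^2\le\|g\|^2\le 2a_3\|g\|_2^2$ on $\Theta_{NT}$, which holds because $A_{NT}^2=o(T)$ (Lemma \ref{lemma:expectation:sample:variance:large:T}).

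The hard part will be the bookkeeping created by the scale mismatch: the increments of $X_i$ are of size $O(A_{NT}^2 d)$ but have variance only $O(A_{NT}^4 d^2/T)$, so the Gaussian and exponential contributions to the chaining sum are controlled by different powers of both $A_{NT}$ and the sample sizes $N$ and $T$. One must verify carefully that the two rate conditions jointly suppress both contributions uniformly over the $K(N)$ levels of the chain --- this is where $A_{NT}^4 d_{NT}=o(NT)$, rather than merely $A_{NT}^2 d_{NT}=o(NT)$, is genuinely used --- and separately that $A_{NT}^2=o(T)$ is needed both inside Lemma \ref{lemma:expectation:sample:variance:mixing} and to make the empirical and population norms comparable on $\Theta_{NT}$, so that the last two statements are not vacuous.
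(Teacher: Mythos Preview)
Your proposal is correct and follows essentially the same route as the paper: the paper also rewrites the supremum over $\mcF_{\textrm{UB}}^0\times\mcF_{\textrm{UB}}^0$, uses the same Lipschitz bound $|X_i(\theta_1)-X_i(\theta_2)|\le 16A_{NT}^2 d(\theta_1,\theta_2)$ together with the mixing variance bound $\textrm{Var}(X_i(\theta_1)-X_i(\theta_2))\le 2c_2 A_{NT}^4 d^2(\theta_1,\theta_2)/T$ from Lemma \ref{lemma:expectation:sample:variance:mixing}, applies Bernstein across the independent blocks $i$, and runs the identical $\delta_k=3^{-k}$ chaining, invoking $A_{NT}^4 d_{NT}=o(NT)$ for the Gaussian term and $A_{NT}^2 d_{NT}=o(N)$ for the exponential term. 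The passage to the $\|\cdot\|$-normalized and uncentered statements via Lemma \ref{lemma:expectation:sample:variance:large:T} and the third inequality of Lemma \ref{lemma:expectation:sample:variance:mixing} is likewise exactly what the paper does.
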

\begin{proof}[Proof of Lemma \ref{lemma:uniform:equivalence:empirical:population:norm:large:T}]
Let $\mcF_{\textrm{UB}}^0=\{f \in \Theta_{NT}:  \|f\|_2\leq 1\}$ and $\Theta=\mcF_{\textrm{UB}}^0\times \mcF_{\textrm{UB}}^0$. Define metric on $\Theta$ by $d(\theta_1, \theta_2)=\sqrt{\|g_1-g_2\|_2^2/2+\|f_1-f_2\|_2^2/2}$ for for any $\theta_1=(g_1, f_1), \theta_2=(g_2, f_2)\in \Theta$.

By definition, it follows that
\begin{eqnarray}
\sup_{g,f \in \mcF_{\textrm{UB}}^0}\bigg|\langle g, f\rangle_{NT}-\langle g, f\rangle\bigg|&=&\sup_{g,f \in \mcF_{\textrm{UB}}^0}\bigg|\langle g, f\rangle_{NT}-\ev\bigg(\langle g, f\rangle_{NT}\bigg)\bigg|\nonumber\\
&=&\sup_{g,f \in \mcF_{\textrm{UB}}^0}\bigg|\frac{1}{N}\sum_{i=1}^N\bigg\{\zeta_{i}(g, f)-\ev\bigg(\zeta_i(g, f)\bigg)\bigg\}\bigg|\nonumber.
\end{eqnarray}
For any $\theta=(g, f)\in \Theta,$ define
\begin{eqnarray*}
	X_i(\theta)=\zeta_i(g, f)-\ev\bigg(\zeta_i(g, f)\bigg)\;\; \textrm{ and }\;\; S_N(\theta)=\frac{1}{N}X_i(\theta).
\end{eqnarray*}
Therefore, for any $\theta_1=(g_1, f_1), \theta_2=(g_2, f_2)\in \Theta$, we have
\begin{eqnarray*}
	|\zeta_i(g_1, f_1)-\zeta_i(g_2, f_2)|&\leq&|\zeta_i(g_1-g_2, f_1)|+|\zeta_i(g_2, f_1-f_2)|\\
	&\leq&4\|g_1-g_2\|_\infty\|f_1\|_\infty+4\|f_1-f_2\|_\infty\|g_2\|_\infty\\
	&\leq&4A_{NT}^2\bigg(\|g_1-g_2\|_2+\|f_1-f_2\|_2\bigg)\\
	&\leq& 8A_{NT}^2d(\theta_1, \theta_2),
\end{eqnarray*}
which further leads to
\begin{eqnarray}
	|X_i(\theta_1)-X_i(\theta_2)|\leq 16A_{NT}^2d(\theta_1, \theta_2).\label{eq:lemma:uniform:equivalence:empirical:population:norm:large:T:eq2}
\end{eqnarray}
Moreover, by Lemma \ref{lemma:l2:norm:sup:norm} and Lemma \ref{lemma:expectation:sample:variance:mixing}, for any $\delta>0$, we have
\begin{eqnarray}
	\textrm{Var}\bigg(X_i(\theta_1)-X_i(\theta_2)\bigg)&\leq& 2c_2\frac{A_{NT}^4}{T}d^2(\theta_1, \theta_2).\label{eq:lemma:uniform:equivalence:empirical:population:norm:large:T:eq2.5}
\end{eqnarray}
Bernstein inequality, (\ref{eq:lemma:uniform:equivalence:empirical:population:norm:large:T:eq2}) and (\ref{eq:lemma:uniform:equivalence:empirical:population:norm:large:T:eq2.5}) together imply that
\begin{eqnarray}
	\pr\bigg\{\bigg|S_N(\theta_1)-S_N(\theta_2)\bigg|>xs\bigg\}&\leq&2\exp\bigg(-\frac{Nx^2s^2}{4c_2A_{NT}^4T^{-1}d^2(\theta_1, \theta_2)+16A_{NT}^2d(\theta_1, \theta_2)xs}\bigg)\nonumber\\
	&\leq&2\exp\bigg(-\frac{Nx^2s^2}{8c_2A_{NT}^4T^{-1}d^2(\theta_1, \theta_2)}\bigg)\nonumber\\
	&&+2\exp\bigg(-\frac{Nxs}{32A_{NT}^2d(\theta_1, \theta_2)}\bigg), \textrm{ for all } x,s>0.\label{eq:lemma:uniform:equivalence:empirical:population:norm:large:T:eq3}
\end{eqnarray}

Similar proof in that of Lemma \ref{lemma:uniform:equivalence:empirical:population:norm}, Let $\delta_k=3^{-k}$ for $k\geq 0$. For sufficient large integer $K$, which will be specified later, let $\{0\}=\mcH_0\subset \mcH_1 \ldots \mcH_K$ be a sequence of subsets of $\Theta$ such that $\min_{\theta^*\in \mcH_k}d(\theta^*, \theta)\leq \delta_k$ for all $\theta \in \Theta$. Moreover the subsets $\mcH_K$ is chosen inductively such that two different elements in $\mcH_k$ is at least $\delta_k$ apart.  By definition, the cardinality $\#(\mcH_k)$ of $\mcH_k$ is bounded by the $\delta_k/2$-covering number $D(\delta_k/2, \Theta, d)$. Moreover, since $d(\theta_1, \theta_2)\leq \|g_1-g_2\|_2+\|f_1-f_2\|_2$ for $\theta_1=(g_1, f_1), \theta_2=(g_2, f_2)\in \Theta$, we have
\begin{eqnarray*}
	D(\delta_k/2, \Theta, d)\leq D^2(\delta_k/4, \mcF_{\textrm{UB}}^0, \|\cdot\|_2)\leq \bigg(\frac{16+\delta_k}{\delta_k}\bigg)^{2d_{NT}},
\end{eqnarray*} 
where the last inequality is due to \cite{van2000}[Corollary 2.6] and the fact that $\mcF_{\textrm{UB}}^0$ is a linear space with dimension $d_{NT}$. For any $\theta \in \Theta$, let $\tau_k(\theta)\in \mcH_k$  be a element such that $d(\tau_k(\theta), \theta)\leq \delta_k$, for $k=1,2,\ldots, K$. Now for fixed $x>0$, choose $K=K(N)>0$, which depends on $N$ and is increasing fast enough such that $x>16A_{NT}^2(2/3)^K$, we see from (\ref{eq:lemma:uniform:equivalence:empirical:population:norm:large:T:eq3}) that
\begin{eqnarray}
&&\pr\bigg(\sup_{\theta \in \Theta}\bigg| S_N(\theta)\bigg|>x\bigg)\nonumber\\
&\leq&\pr\bigg(\sup_{\theta \in \Theta}\bigg| S_N(\theta)-S_N\bigg(\tau_K(\theta)\bigg)\bigg|>\frac{x}{2^K}\bigg)\nonumber\\
&&+\sum_{k=1}^K\pr\bigg(\sup_{\theta \in \Theta}\bigg| S_N\bigg(\tau_k\circ\ldots\circ\tau_K(\theta)\bigg)-S_N\bigg(\tau_{k-1}\circ\tau_k\circ\ldots\circ\tau_K(\theta)\bigg)\bigg|>\frac{x}{2^{k-1}}\bigg)\nonumber\\
&\leq&\pr\bigg(16A_{NT}^2d(\theta, \tau_K(\theta))>\frac{x}{2^K}\bigg)\nonumber\\
&&+\sum_{k=1}^K\#(\mcH_k)\sup_{\theta_k \in \mcH_k}\pr\bigg(\sup_{\theta \in \Theta}\bigg| S_N\bigg(\theta_k\bigg)-S_N\bigg(\tau_{k-1}(\theta_k)\bigg)\bigg|>\frac{x}{2^{k-1}}\bigg)\nonumber\\
&\leq&0+\sum_{k=1}^K\bigg(\frac{16+3^{-k}}{3^{-k}}\bigg)^{2d_{NT}}\sup_{\theta_k \in \mcH_k}\pr\bigg(\sup_{\theta \in \Theta}\bigg| S_N\bigg(\theta_k\bigg)-S_N\bigg(\tau_{k-1}(\theta_k)\bigg)\bigg|>\frac{x}{2^{k-1}}\bigg)\nonumber\\
&\leq&\sum_{k=1}^K2\bigg(\frac{16+3^{-k}}{3^{-k}}\bigg)^{2d_{NT}}\exp\bigg(-\frac{Nx^2}{8c_2A_{NT}^4T^{-1}2^{2(k-1)}d^2(\theta_k, \tau_{k-1}(\theta_k))}\bigg)\nonumber\\
&&+\sum_{k=1}^K2\bigg(\frac{16+3^{-k}}{3^{-k}}\bigg)^{2d_{NT}}\exp\bigg(-\frac{Nx}{32A_{NT}^22^{k-1}d(\theta_k, \tau_{k-1}(\theta_k))}\bigg)\nonumber\\
&\leq&\sum_{k=1}^K2\bigg(\frac{16+3^{-k}}{3^{-k}}\bigg)^{2d_{NT}}\exp\bigg(-\frac{TNx^2}{8c_2A_{NT}^4}\bigg(\frac{3}{2}\bigg)^{2(k-1)}\bigg)\nonumber\\
&&+\sum_{k=1}^\infty2\bigg(\frac{16+3^{-k}}{3^{-k}}\bigg)^{2d_{NT}}\exp\bigg(-\frac{Nx}{32A_{NT}^2}\bigg(\frac{3}{2}\bigg)^{k-1}\bigg)\nonumber\\
&\leq&2\sum_{k=1}^\infty\exp\bigg(4(3+k)d_{NT}-\frac{TNx^2}{8c_2A_{NT}^4}\bigg(\frac{9}{4}\bigg)^{k-1}\bigg)+2\sum_{k=1}^\infty\exp\bigg(4(3+k)d_{NT}-\frac{Nx}{32A_{NT}^2}\bigg(\frac{3}{2}\bigg)^{k-1}\bigg),\nonumber\\\label{eq:lemma:uniform:equivalence:empirical:population:norm:large:T:eq4}
\end{eqnarray} 
where we used the fact that $16\times3^k+1\leq 3^{k+3}$ and $\log 3\leq 2$. Now choose $N$ large enough  such that 
\begin{eqnarray}
	64c_2(3+k)a_1A_{NT}^4d_{NT}<NT(9/4)^{k-1}x^2  \;\;\textrm{ and }\;\;256(3+k)a_1A_{NT}^2d_{NT}<N(3/2)^{k-1}x,\label{eq:lemma:uniform:equivalence:empirical:population:norm:large:T:eq5}
\end{eqnarray}
 for all $k\geq 0$, which is possible, since $d_{NT}A_{NT}^4=o(TN)$ and $d_{NT}A_{NT}^2=o(N)$. Therefore, for all $N$ satisfying (\ref{eq:lemma:uniform:equivalence:empirical:population:norm:large:T:eq5}), (\ref{eq:lemma:uniform:equivalence:empirical:population:norm:large:T:eq4}) further leads to
\begin{eqnarray*}
	\pr\bigg(\sup_{\theta \in \Theta}\bigg| S_N(\theta)\bigg|>x\bigg)&\leq&2\sum_{k=1}^\infty\exp\bigg(-\frac{TNx^2}{16c_2A_{NT}^4}\bigg(\frac{9}{4}\bigg)^{k-1}\bigg)+2\sum_{k=1}^\infty\exp\bigg(-\frac{Nx}{64A_{NT}^2}\bigg(\frac{3}{2}\bigg)^{k-1}\bigg)\\
	&\leq&\frac{2}{e}\bigg\{\frac{16c_2A_{NT}^4}{TNx^2}\sum_{k=1}^\infty\bigg(\frac{9}{4}\bigg)^{k-1}+\frac{64A_{NT}^2}{Nx}\sum_{k=1}^\infty\bigg(\frac{3}{2}\bigg)^{k-1}\bigg\}\to 0, \textrm{ as } N \to \infty,
\end{eqnarray*}
where the fact that $A_{NT}^4=o(NT)$, $A_{NT}^2=o(N)$ and inequality $e^{-x}\leq e^{-1}/x$ are used. Therefore, we have
\begin{eqnarray}
	\sup_{g, f \in \Theta_{NT}}\bigg|\frac{\langle g, f\rangle_{NT}-\langle g, f\rangle}{\|g\|_2\|f\|_2}\bigg|= \sup_{g,f \in \mcF_{\textrm{UB}}^0}\bigg|\langle g, f\rangle_{NT}-\langle g, f\rangle\bigg|=\sup_{\theta \in \Theta}\bigg| S_N(\theta)\bigg|=o_P(1),\nonumber
\end{eqnarray}
which is the first inequality. The second one follows from \ref{lemma:expectation:sample:variance:large:T}. Similar argument can be applied to prove the third and fourth results.
\end{proof}

\begin{lemma}\label{lemma:bounded:sequence:two:norm:difference:large:T}
Under Assumption \ref{Assumption:common} and \ref{Assumption:A2}, for element $v_{NT}\in \mathcal{H}$, if $\|v_{N}\|_\infty\leq \rho_{NT}$, it holds that
\begin{eqnarray*}
\sup_{g\in \Theta_{NT}}\bigg|\frac{\langle v_N, g \rangle_{NT}-\langle v_N, g \rangle}{\|g\|_2}\bigg|^2=O_P\bigg(\frac{A_{NT}^2\rho_{NT}^2d_{NT}}{NT}\bigg).
\end{eqnarray*}
Furthermore, if $A_{NT}^2=o(T)$, it also holds that
\begin{eqnarray*}
\sup_{g\in \Theta_{NT}}\bigg|\frac{\langle v_N, g \rangle_{NT}-\langle v_N, g \rangle}{\|g\|}\bigg|^2=O_P\bigg(\frac{A_{NT}^2\rho_{NT}^2d_{NT}}{NT}\bigg).
\end{eqnarray*}
\end{lemma}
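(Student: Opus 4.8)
The plan is to mimic the argument used in Lemma~\ref{lemma:bounded:sequence:two:norm:difference} for the fixed-$T$ case, replacing the variance bound coming from boundedness of $\zeta_i$ with the mixing variance bound established in Lemma~\ref{lemma:expectation:sample:variance:mixing}. First I would fix an orthonormal basis $\{\chi_j\}_{j=1}^{d_{NT}}$ of $\Theta_{NT}$ with respect to $\langle\cdot,\cdot\rangle$ (which is a genuine inner product on $\Theta_{NT}$ by Lemma~\ref{lemma:expectation:sample:variance:large:T} together with the assumption $A_{NT}^2=o(T)$; in the first display we only need $\langle\cdot,\cdot\rangle_2$, so the $\|\cdot\|_2$ version can be handled with the basis orthonormal for $\langle\cdot,\cdot\rangle_2$). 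Writing $g=\sum_j b_j\chi_j$ with $\|g\|^2=\sum_j b_j^2$, Cauchy--Schwarz in the coefficients gives, exactly as in (\ref{eq:lemma:bounded:sequence:two:norm:difference:eq1}),
\[
\sup_{g\in\Theta_{NT}}\bigg|\frac{\langle v_N,g\rangle_{NT}-\langle v_N,g\rangle}{\|g\|}\bigg|^2\le\sum_{j=1}^{d_{NT}}\bigl(\langle v_N,\chi_j\rangle_{NT}-\langle v_N,\chi_j\rangle\bigr)^2 .
\]
Taking expectations reduces the problem to controlling $\mathrm{Var}(\langle v_N,\chi_j\rangle_{NT})$ for each $j$.

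The next step is to bound that variance. Since $\langle v_N,\chi_j\rangle_{NT}=\frac1N\sum_{i=1}^N\zeta_i(v_N,\chi_j)$ and the $\bfW_i$ are independent across $i$ (Assumption~\ref{Assumption:common}.\ref{Ac:a1}), we get $\mathrm{Var}(\langle v_N,\chi_j\rangle_{NT})=N^{-2}\sum_i\mathrm{Var}(\zeta_i(v_N,\chi_j))$. Here Lemma~\ref{lemma:expectation:sample:variance:mixing} supplies $\mathrm{Var}(\zeta_i(v_N,\chi_j))\le c_2T^{-1}\|v_N\|_\infty^2\|\chi_j\|_\infty^2$, and then Lemma~\ref{lemma:l2:norm:sup:norm} gives $\|\chi_j\|_\infty^2\le A_{NT}^2\|\chi_j\|_2^2$, with $\|\chi_j\|_2^2\asymp\|\chi_j\|^2=1$ (again using the norm equivalence from Lemma~\ref{lemma:expectation:sample:variance:large:T} under $A_{NT}^2=o(T)$). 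Hence $\mathrm{Var}(\zeta_i(v_N,\chi_j))\lesssim T^{-1}A_{NT}^2\rho_{NT}^2$ using $\|v_N\|_\infty\le\rho_{NT}$, so each summand over $j$ has expectation $O(A_{NT}^2\rho_{NT}^2/(NT))$, and summing the $d_{NT}$ terms yields $\mathbb{E}\bigl[\sup_g|\cdots|^2\bigr]=O(A_{NT}^2\rho_{NT}^2 d_{NT}/(NT))$. Markov's inequality then upgrades this to the claimed $O_P$ statement. The $\|\cdot\|_2$-normalized version follows identically, either by repeating the argument with the $\langle\cdot,\cdot\rangle_2$-orthonormal basis or by invoking the two-sided equivalence of $\|\cdot\|$ and $\|\cdot\|_2$ on $\Theta_{NT}$.

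The main obstacle, and the reason the fixed-$T$ proof does not transfer verbatim, is that $\zeta_i(v_N,\chi_j)$ is now a time-average over a dependent (alpha-mixing) sequence rather than a bounded quantity, so the crude bound $\mathbb{E}\zeta_i^2\le 4\|v_N\|_\infty^2\mathbb{E}\zeta_i(\chi_j,\chi_j)$ used in Lemma~\ref{lemma:bounded:sequence:two:norm:difference} is unavailable; everything hinges on the sharp $T^{-1}$ variance decay in Lemma~\ref{lemma:expectation:sample:variance:mixing}, which in turn rests on the Bernstein-type inequality under mixing (Proposition~\ref{proposition:panel:sub:exponential:inequality}). One should also be careful that the basis vectors $\chi_j$ genuinely lie in $\Theta_{NT}$ so that $\|\chi_j\|_\infty\lesssim A_{NT}\|\chi_j\|_2$ applies, and that the growth conditions implicitly needed ($A_{NT}^2=o(T)$, and the $o_P$ conditions behind Lemma~\ref{lemma:expectation:sample:variance:mixing}) are all in force under Assumptions~\ref{Assumption:common} and \ref{Assumption:A2}; these are exactly the hypotheses stated in the lemma, so no extra assumptions are incurred.
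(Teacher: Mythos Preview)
Your proposal is correct and follows essentially the same route as the paper: fix an orthonormal basis of $\Theta_{NT}$, apply Cauchy--Schwarz in the coefficients to reduce to $\sum_j\mathrm{Var}(\langle v_N,\chi_j\rangle_{NT})$, bound each term via independence across $i$ together with the mixing variance estimate of Lemma~\ref{lemma:expectation:sample:variance:mixing} and the sup-norm bound $\|\chi_j\|_\infty\le A_{NT}\|\chi_j\|_2$ from Lemma~\ref{lemma:l2:norm:sup:norm}, and pass between the $\|\cdot\|$ and $\|\cdot\|_2$ normalizations using Lemma~\ref{lemma:expectation:sample:variance:large:T}. Your remark that the $\|\cdot\|_2$-normalized version can be obtained directly without invoking $A_{NT}^2=o(T)$ is a slightly sharper observation than the paper makes explicit.
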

\begin{proof}[Proof of Lemma \ref{lemma:bounded:sequence:two:norm:difference}]
Let $\{\chi_j\}_{j=1}^{d_{NT}}$ be the orthonormal basis of $\Theta_{NT}$ with respect to $\langle \cdot, \cdot  \rangle$. For $g\in \Theta_{NT}$, we can rewrite $g=\sum_{j=1}^{d_{NT}}b_j\chi_j$ and $\|g\|^2=\sum_{j=1}^{d_{NT}}b_j^2$ for some $b_j\in \mathbb{R}$.
\begin{eqnarray*}
	|\langle v_{N}, g \rangle_{NT}-\langle v_{N}, g \rangle|&=&\bigg|\sum_{j=1}^{d_{NT}}b_j\bigg(\langle v_{N}, \chi_j \rangle_{NT}-\langle v_{N}, \chi_j \rangle\bigg)\bigg|\\
	&\leq&\bigg(\sum_{j=1}^{d_{NT}}b_j^2\bigg)^{1/2}\bigg\{\sum_{j=1}^{d_{NT}}\bigg(\langle v_{N}, \chi_j \rangle_{NT}-\langle v_{N}, \chi_j \rangle\bigg)^2\bigg\}^{1/2}\\
	&=&\|g\|\bigg\{\sum_{j=1}^{d_{NT}}\bigg(\langle v_{N}, \chi_j \rangle_{NT}-\langle v_{N}, \chi_j \rangle\bigg)^2\bigg\}^{1/2},
\end{eqnarray*}
which leads to 
\begin{eqnarray}
	\sup_{g\in \Theta_{NT}}\bigg|\frac{\langle v_{N}, g \rangle_{NT}-\langle v_{N}, g \rangle}{\|g\|}\bigg|\leq \bigg\{\sum_{j=1}^{d_{NT}}\bigg(\langle v_{N}, \chi_j \rangle_{NT}-\langle v_{N}, \chi_j \rangle\bigg)^2\bigg\}^{1/2}.\label{eq:lemma:bounded:sequence:two:norm:difference:large:T:eq1}
\end{eqnarray}

For each $j\in [d_{NT}]$, decomposing mean square error to variance and bias, it follows that
\begin{eqnarray}
	\ev\bigg(|\langle v_{N}, \chi_j \rangle_{NT}-\langle v_{N}, \chi_j \rangle|^2\bigg)=\textrm{Var}\bigg(\langle v_{N}, \chi_j \rangle_{NT}\bigg)\nonumber&=&\frac{1}{N^2}\sum_{i=1}^N\textrm{Var}\bigg(\zeta_i(v_N, \chi_j)\bigg)\nonumber\\
	&&\leq \frac{c_2}{NT}\|v_N\|_\infty^2\|\chi_j\|_\infty^2\leq c_2\frac{\rho_{NT}^2A_{NT}^2}{NT},\nonumber
\end{eqnarray}
where  Lemma \ref{lemma:expectation:sample:variance:mixing}, Lemma \ref{lemma:l2:norm:sup:norm}  and Assumption \ref{Assumption:A2}.\ref{A2:a} are used and we prove the first equation. The second one follows from Lemma \ref{lemma:expectation:sample:variance:large:T}.
\end{proof}

To proceed further, we define event 
\begin{eqnarray*}
	\Omega_{NT}=\bigg\{\mathbb{Z}: \frac{1}{2}\|g\|\leq \|g\|_{NT}\leq 2\|g\|, \frac{1}{2a_3}\leq \frac{1}{NT}\sum_{i=1}^N\sum_{t=1}^Tg^2(\bfX_{it})\leq 2a_3\|g\|_2 \textrm{ for all } g\in \Theta_{NT}\bigg\}
\end{eqnarray*}
and $\pr(\Omega_{NT})\to 1$ as $(N, T) \to \infty$ by Lemmas \ref{lemma:uniform:equivalence:empirical:population:norm:large:T}, and \ref{lemma:expectation:sample:variance:large:T}. Then on event $\Omega_{NT}$, $\langle \cdot, \cdot \rangle_{NT}$ is a valid inner product in $\Theta_{NT}$.

\begin{proposition}\label{proposition:bound:of:trace}
Let $A, B\in \mathbb{R}^{k\times k}$ be symmetric and positive definite matrices. If $c^\top A c\geq c^\top B c$ for all $c\in \mathbb{R}^k$, then $$\textrm{Tr}(A^{-1}B)\leq k.$$
\end{proposition}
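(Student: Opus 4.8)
The plan is to reduce the trace bound to an eigenvalue count after a symmetric congruence transformation. Since $A$ is symmetric and positive definite, it admits a symmetric positive definite square root $A^{1/2}$ with inverse $A^{-1/2}$. First I would observe that the hypothesis $c^\top A c\geq c^\top Bc$ for all $c\in\mathbb{R}^k$ says precisely that $A-B$ is positive semidefinite. Substituting $c=A^{-1/2}d$ for arbitrary $d\in\mathbb{R}^k$ gives $d^\top d\geq d^\top(A^{-1/2}BA^{-1/2})d$, i.e. $I_k-A^{-1/2}BA^{-1/2}$ is positive semidefinite.

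Next I would use that $C:=A^{-1/2}BA^{-1/2}$ is symmetric and, since $B$ is positive definite and $A^{-1/2}$ is invertible, positive definite; hence its $k$ eigenvalues $\mu_1,\ldots,\mu_k$ are strictly positive. Combined with the previous step, $\mu_i\leq 1$ for every $i$, so $\sum_{i=1}^k\mu_i\leq k$, that is $\textrm{Tr}(C)\leq k$.

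Finally, by the cyclic invariance of the trace, $\textrm{Tr}(A^{-1}B)=\textrm{Tr}\big(A^{-1/2}(A^{-1/2}B)\big)=\textrm{Tr}\big(A^{-1/2}BA^{-1/2}\big)=\textrm{Tr}(C)\leq k$, which is the claim.

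There is essentially no serious obstacle here; the only points requiring a line of justification are that the scalar inequality for all $c$ is equivalent to the Loewner ordering $A\succeq B$, and that the congruence by $A^{-1/2}$ preserves this ordering while keeping the transformed matrix positive definite so that its eigenvalues are genuinely in $(0,1]$. Everything else is the cyclic property of the trace.
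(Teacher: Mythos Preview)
Your proposal is correct and matches the paper's proof essentially line for line: both substitute $c=A^{-1/2}b$ to obtain $I_k\succeq A^{-1/2}BA^{-1/2}$, conclude the eigenvalues of $A^{-1/2}BA^{-1/2}$ lie in $(0,1]$, and then use $\textrm{Tr}(A^{-1}B)=\textrm{Tr}(A^{-1/2}BA^{-1/2})=\sum_i\lambda_i\le k$.
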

\begin{proof}[Proof of Proposition \ref{proposition:bound:of:trace}]
For fixed $b\in \mathbb{R}^k$, let $c=A^{-1/2}b$, by conditions given, we have $b^\top b\geq b^\top  A^{-1/2}BA^{-1/2} b$. This implies all the eigenvalues $\lambda_1, \lambda_2, \ldots \lambda_k$ of $A^{-1/2}BA^{-1/2}$ are bounded by $1$. Therefore, by property of trace operator, we have $\textrm{Tr}(A^{-1}B)=\textrm{Tr}(A^{-1/2}BA^{-1/2})=\sum_{i=1}^k\lambda_i\leq k.$
\end{proof}

\begin{lemma}\label{lemma:expect:value:projection:y:f0:large:T}
Under Assumption \ref{Assumption:common} and \ref{Assumption:A2}, if $A_{NT}^4d_{NT}=o(NT)$, $A_{NT}^2d_{NT}=o(N)$ and $A_{NT}^2=o(T)$, then
$$\ev\bigg(\|\widehat{f}_*-\widetilde{f}_*\|_{NT}^2\bigg|\mathbb{Z}\bigg)=O_P\bigg(\frac{d_{NT}}{NT}\bigg).$$
\end{lemma}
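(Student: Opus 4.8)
The plan is to mirror the fixed-$T$ argument of Lemma~\ref{lemma:expect:value:projection:y:f0}, replacing the event $\Omega_N$ by $\Omega_{NT}$, and to extract the $O_P$ rate from the fact that, under the three stated conditions, $\pr(\Omega_{NT})\to 1$ by Lemmas~\ref{lemma:uniform:equivalence:empirical:population:norm:large:T} and \ref{lemma:expectation:sample:variance:large:T}. On $\Omega_{NT}$ the bilinear form $\langle\cdot,\cdot\rangle_{NT}$ is a genuine inner product on $\Theta_{NT}$, so one may fix an orthonormal basis $\{\phi_j\}_{j=1}^{d_{NT}}$ of $\Theta_{NT}$ with respect to $\langle\cdot,\cdot\rangle_{NT}$ and write, with $\widetilde{f}_*=P_{NT}f_0$, $\widehat{f}_*-\widetilde{f}_*=P_{NT}(Y-f_0)=\sum_{j=1}^{d_{NT}}\langle Y-f_0,\phi_j\rangle_{NT}\,\phi_j$, whence $\|\widehat{f}_*-\widetilde{f}_*\|_{NT}^2=\sum_{j=1}^{d_{NT}}\langle Y-f_0,\phi_j\rangle_{NT}^2$.

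First I would note, exactly as in the fixed-$T$ proof, that the within-group centering built into $\langle\cdot,\cdot\rangle_{NT}$ annihilates the fixed effects $\alpha_i^0$, so that $\langle Y-f_0,\phi_j\rangle_{NT}=\frac1N\sum_{i=1}^N S_{ji}$ with $S_{ji}=\frac1T\bfepsilon_i^\top H\Phi_{ji}$, where $\Phi_{ji}=(\phi_j(\bfX_{i1}),\dots,\phi_j(\bfX_{iT}))^\top$ and $H=I_T-T^{-1}uu^\top$. Conditioning on $\mathbb{Z}$, Assumptions~\ref{Assumption:common}.\ref{Ac:a1} and \ref{Assumption:common}.\ref{Ac:a2} yield $\ev(S_{ji}\mid\mathbb{Z})=0$ and $\ev(S_{ji_1}S_{ji_2}\mid\mathbb{Z})=0$ for $i_1\neq i_2$, so $\ev(\|\widehat{f}_*-\widetilde{f}_*\|_{NT}^2\mid\mathbb{Z})=\sum_j N^{-2}\sum_i\ev(S_{ji}^2\mid\mathbb{Z})$. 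Using $\ev(S_{ji}^2\mid\mathbb{Z})=T^{-2}\Phi_{ji}^\top H\,\ev(\bfepsilon_i\bfepsilon_i^\top\mid\mathbb{Z})\,H\Phi_{ji}$, the eigenvalue bound $a_2$ on $\ev(\bfepsilon\bfepsilon^\top\mid\mathbb{Z})$, and $\|H\|_{\mathrm{op}}\le 1$, one gets $\ev(S_{ji}^2\mid\mathbb{Z})\le a_2 T^{-1}\zeta_i(\phi_j,\phi_j)$; summing over $i$ and $j$ and using $\sum_j N^{-1}\sum_i\zeta_i(\phi_j,\phi_j)=\sum_j\|\phi_j\|_{NT}^2=d_{NT}$ gives the deterministic bound $\ev(\|\widehat{f}_*-\widetilde{f}_*\|_{NT}^2\mid\mathbb{Z})\le a_2 d_{NT}/(NT)$ on $\Omega_{NT}$.

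To finish, I would observe that $\Omega_{NT}$ is $\mathbb{Z}$-measurable with $\pr(\Omega_{NT}^c)\to 0$ under $A_{NT}^4 d_{NT}=o(NT)$, $A_{NT}^2 d_{NT}=o(N)$, $A_{NT}^2=o(T)$, so the random variable $\ev(\|\widehat{f}_*-\widetilde{f}_*\|_{NT}^2\mid\mathbb{Z})$ can exceed $a_2 d_{NT}/(NT)$ only on $\Omega_{NT}^c$, which delivers the claimed $O_P(d_{NT}/(NT))$ rate. The main point to be careful about is precisely this passage from a bound that holds only on a high-probability event to the $O_P$ statement, together with checking that the conditional second-moment computation is genuinely uniform in $T$: it is, because the eigenvalue constant $a_2$ and the operator norm $\|H\|_{\mathrm{op}}\le 1$ do not deteriorate as $T$ grows, and all $T$-dependence is confined to the probability of $\Omega_{NT}$, which is exactly what the hypotheses control. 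No lower bound is required here, so this lemma is a one-sided weakening of Lemma~\ref{lemma:expect:value:projection:y:f0} and demands nothing beyond the ingredients already established for the diverging-$T$ case.
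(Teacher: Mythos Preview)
Your proposal is correct and reaches the right bound, but it takes a different and more elementary route than the paper's own large-$T$ argument. The paper works with an arbitrary basis $\Psi(\bfx)$ of $\Theta_{NT}$, writes $\widehat{f}_*-\widetilde{f}_*$ via the explicit least-squares formula $\bfPsi(\bfPsi^\top M_H\bfPsi)^{-1}\bfPsi^\top M_H\bfepsilon$, and after taking conditional expectation is left to bound $\textrm{Tr}\{(\bfPsi^\top M_H\bfPsi)^{-1}\bfPsi^\top\bfPsi\}$; this trace is handled via Proposition~\ref{proposition:bound:of:trace} together with the two-sided comparison of $\|g\|_{NT}^2$ and $\frac{1}{NT}c^\top\bfPsi^\top\bfPsi c$ against $\|g\|_2^2$ that defines $\Omega_{NT}$, producing the constant $8a_3^2 a_2$. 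By choosing the basis orthonormal in $\langle\cdot,\cdot\rangle_{NT}$ you bypass Proposition~\ref{proposition:bound:of:trace} and the $\|\cdot\|_2$-comparison entirely, and obtain the sharper constant $a_2$ exactly as in Lemma~\ref{lemma:expect:value:projection:y:f0}. Both approaches lead to the same $O_P(d_{NT}/(NT))$ conclusion once $\pr(\Omega_{NT})\to 1$.

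One small caveat: the step $\ev(S_{ji_1}S_{ji_2}\mid\mathbb{Z})=0$ for $i_1\neq i_2$ is not a consequence of Assumption~\ref{Assumption:common}.\ref{Ac:a2} alone, since that assumption only bounds the eigenvalues of the full $NT\times NT$ matrix $\ev(\bfepsilon\bfepsilon^\top\mid\mathbb{Z})$ and does not forbid cross-sectional error correlation. Fortunately you do not need it: writing $\langle Y-f_0,\phi_j\rangle_{NT}=\frac{1}{NT}\bm{\phi}_j^\top M_H\bfepsilon$ with $\bm{\phi}_j\in\mathbb{R}^{NT}$ the vector of evaluations of $\phi_j$, the eigenvalue bound applied directly to the full quadratic form gives $\ev(\langle Y-f_0,\phi_j\rangle_{NT}^2\mid\mathbb{Z})\le \frac{a_2}{(NT)^2}\bm{\phi}_j^\top M_H\bm{\phi}_j=\frac{a_2}{NT}\|\phi_j\|_{NT}^2=\frac{a_2}{NT}$, and summing over $j$ recovers your bound without any cross-sectional decomposition.
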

\begin{proof}[Proof of Lemma \ref{lemma:expect:value:projection:y:f0:large:T}]
Let $\Psi(\bfx)=(B_1(\bfx), B_2(\bfx), \ldots, B_{d_{NT}}(\bfx))^\top\in \mathbb{R}^{d_{NT}}$ and 
\begin{eqnarray*}
	\Psi_i&=&\begin{pmatrix}
	\Psi(\bfX_{i1}), \Psi(\bfX_{i1}),\ldots, \Psi(\bfX_{iT}) 
	\end{pmatrix}^\top\in \mathbb{R}^{T \times d_{NT}},\;\; \bfY_i=(Y_{i1}, Y_{i2}, \ldots, Y_{it})^\top\in \mathbb{R}^T,\\
H&=&I_T-\frac{1}{T}uu^\top \in \mathbb{R}^{T\times T}, \textrm{ with } u=(1,1,\ldots, 1)^\top\in \mathbb{R}^T,\\
\bfY&=&(\bfY_1^\top, \bfY_2^\top,\ldots, \bfY_N^\top)^\top \in \mathbb{R}^{NT}, \;\;M_H=I_N\otimes H\in \mathbb{R}^{NT \times NT}, \\
\mathbf{\Psi}&=&(\Psi_1^\top, \Psi_2^\top, \ldots, \Psi_N^\top)^\top \in \mathbb{R}^{NT\times d_{NT}}.
\end{eqnarray*}
For any $c\in \mathbb{R}^{d_{NT}}$, let $g(\bfx)=c^\top \Psi(\bfx)\in \Theta_{NT}$. Therefore, by Lemma \ref{lemma:expectation:sample:variance:large:T}, one event $\Omega_{NT}$, it follows that
\begin{eqnarray}
	\frac{1}{NT}c^\top \bfPsi^\top M_H \bfPsi c&=&\frac{1}{NT}\sum_{i=1}^N c^\top\Psi_i^\top H\Psi_i c\nonumber\\
	&=&\frac{1}{N}\sum_{i=1}^T\zeta_i(g, g)=\|g\|_{NT}^2\geq \frac{1}{2}\|g\|^2\geq \frac{1}{4a_3}\|g\|_2^2\geq 0,\label{eq:lemma:expect:value:projection:y:f0:large:T:eq:0}
\end{eqnarray}
where all equalities hold if and only if $g=0$ or, equivalently, $c=0$. Moreover, on event $\Omega_{NT}$, direct examination leads to
\begin{eqnarray}
	\frac{1}{NT}c^\top \bfPsi^\top \bfPsi c&=&\frac{1}{NT}\sum_{i=1}^N c^\top\Psi_i^\top\Psi_i c\nonumber\\
	&=&\frac{1}{NT}\sum_{i=1}^N\sum_{t=1}^Tg^2(\bfX_{it}),\nonumber
\end{eqnarray}
which further implies that
\begin{eqnarray}
	\frac{1}{2a_3}\|g\|_2^2\leq\frac{1}{NT}c^\top \bfPsi^\top \bfPsi c\leq 2a_3\|g\|_2^2\label{eq:lemma:expect:value:projection:y:f0:large:T:eq:1}.
\end{eqnarray}
In the view of (\ref{eq:lemma:expect:value:projection:y:f0:large:T:eq:0}) and (\ref{eq:lemma:expect:value:projection:y:f0:large:T:eq:1}), we conclude that, on event $\Omega_{NT}$, both $\bfPsi^\top M_H \bfPsi$ and $\bfPsi^\top \bfPsi$ are invertible and 
\begin{eqnarray*}
	8a_3^2c^\top \bfPsi^\top M_H \bfPsi c\geq c^\top \bfPsi^\top\bfPsi c, \textrm{ for all } c\in \mathbb{R}^{d_{NT}},
\end{eqnarray*}
which, by Proposition \ref{proposition:bound:of:trace}, further implies that
\begin{eqnarray}
\textrm{Tr}\bigg\{\bigg(\bfPsi^\top M_H \bfPsi c\geq\bfPsi^\top\bfPsi\bigg)^{-1}\bfPsi^\top \bfPsi\bigg\}\leq 8a_3^2d_{NT}.\label{eq:lemma:expect:value:projection:y:f0:large:T:eq:2}
\end{eqnarray}

By definition, we have
\begin{eqnarray*}
	\widehat{a}=\argmin_{a\in \mathbb{R}^{d_{NT}}}\|Y-a^T\Psi\|_{NT}^2&=&\bigg(\sum_{i=1}^N\Psi_i^\top H\Psi_i\bigg)^{-1}\sum_{i=1}^N\Psi_i^\top H\bfY_i\\
	&=&\bigg(\bfPsi^\top M_H \bfPsi\bigg)^{-1}\bfPsi^\top M_H \bfY.
\end{eqnarray*}
As a consequence, it follows that
\begin{eqnarray*}
	\widehat{\bff}_*&=&\bigg(\widehat{f}(\bfX_{11}), \widehat{f}(\bfX_{i1}),\ldots, \widehat{f}(\bfX_{1T}),\ldots, \widehat{f}(\bfX_{N1}), \widehat{f}(\bfX_{N2})\ldots, \widehat{f}(\bfX_{NT})\bigg)^\top\\
	&=&\bfPsi \widehat{a}=\bfPsi\bigg(\bfPsi^\top M_H \bfPsi\bigg)^{-1}\bfPsi^\top M_H \bfY.
\end{eqnarray*}
Since $\widetilde{f}_*=\argmin_{f\in \Theta_{NT}}\|f_0-f\|_{NT}^2$, if we define
\begin{eqnarray*}
	\widetilde{a}=\argmin_{a\in \mathbb{R}^{d_{NT}}}\|f_0-a^T\Psi\|_{NT}^2=\bigg(\bfPsi^\top M_H \bfPsi\bigg)^{-1}\bfPsi^\top M_H \bff_0,
\end{eqnarray*}
then we have
\begin{eqnarray*}
	\widetilde{\bff}_*=\bfPsi \widetilde{a}=\bfPsi \bigg(\bfPsi^\top M_H \bfPsi\bigg)^{-1}\bfPsi^\top M_H \bff_0.
\end{eqnarray*}
Therefore, it follows that
\begin{eqnarray*}
	\|\widehat{f}_*-\widetilde{f}_*\|_{NT}^2&=&\frac{1}{NT}(\widehat{\bff}_*-\widetilde{\bff}_*)^\top (\widehat{\bff}_*-\widetilde{\bff}_*)\\
	&=&\frac{1}{NT}\bigg\{\bfPsi\bigg(\bfPsi^\top M_H \bfPsi\bigg)^{-1}\bfPsi^\top M_H\bfepsilon\bigg\}^\top\bigg\{\bfPsi\bigg(\bfPsi^\top M_H \bfPsi\bigg)^{-1}\bfPsi^\top M_H \bf\bfepsilon\bigg\}\\
	&=&\frac{1}{NT}\bigg\{\bfepsilon^\top M_H\bfPsi \bigg(\bfPsi^\top M_H \bfPsi\bigg)^{-1}\bfPsi^\top\bfPsi\bigg(\bfPsi^\top M_H \bfPsi\bigg)^{-1}\bfPsi^\top M_H\bfepsilon\bigg\}\\
	&=&\frac{1}{NT}\textrm{Tr}\bigg\{\bfepsilon^\top M_H\bfPsi \bigg(\bfPsi^\top M_H \bfPsi\bigg)^{-1}\bfPsi^\top\bfPsi\bigg(\bfPsi^\top M_H \bfPsi\bigg)^{-1}\bfPsi^\top M_H\bfepsilon\bigg\}\\
	&=&\frac{1}{NT}\textrm{Tr}\bigg\{\bfPsi\bigg(\bfPsi^\top M_H \bfPsi\bigg)^{-1}\bfPsi^\top M_H\bfepsilon \bfepsilon^\top M_H\bfPsi \bigg(\bfPsi^\top M_H \bfPsi\bigg)^{-1}\bfPsi^\top\bigg\},
\end{eqnarray*}
where $\bfepsilon=(\epsilon_{11}, \epsilon_{12},\ldots, \epsilon_{1T},\ldots, \epsilon_{N1}, \epsilon_{N2}, \ldots, \epsilon_{NT})\in \mathbb{R}^{NT}$. Now taking conditional expectation and by Assumption \ref{Assumption:common}.\ref{Ac:a2}, on event $\Omega_{NT}$, we have
\begin{eqnarray}
	\ev\bigg(\|\widehat{f}_*-\widetilde{f}_*\|_{NT}^2\bigg|\mathbb{Z}\bigg)&=&\frac{1}{NT}\textrm{Tr}\bigg\{\bfPsi\bigg(\bfPsi^\top M_H \bfPsi\bigg)^{-1}\bfPsi^\top M_H\ev\bigg(\bfepsilon \bfepsilon^\top\bigg| \mathbb{Z}\bigg) M_H\bfPsi \bigg(\bfPsi^\top M_H \bfPsi\bigg)^{-1}\bfPsi^\top\bigg\}\nonumber\\
	&\leq&\frac{a_2}{NT}\textrm{Tr}\bigg\{\bfPsi\bigg(\bfPsi^\top M_H \bfPsi\bigg)^{-1}\bfPsi^\top M_H M_H\bfPsi \bigg(\bfPsi^\top M_H \bfPsi\bigg)^{-1}\bfPsi^\top\bigg\}\nonumber\\
	&=&\frac{a_2}{NT}\textrm{Tr}\bigg\{\bfPsi\bigg(\bfPsi^\top M_H \bfPsi\bigg)^{-1}\bfPsi^\top\bigg\}\nonumber\\
	&=&\frac{a_2}{NT}\textrm{Tr}\bigg\{\bigg(\bfPsi^\top M_H \bfPsi\bigg)^{-1}\bfPsi^\top\bfPsi\bigg\},\nonumber\\
	&\leq&\frac{8a_3^2a_2d_{NT}}{NT}.\nonumber
\end{eqnarray}
Since $\pr(\Omega_{NT})\to 1$, we finish the proof.
\end{proof}

\begin{lemma}\label{lemma:rate:of:convergence:non:penalized:large:T}
Under Assumption \ref{Assumption:common} and \ref{Assumption:A2}, if $A_{NT}^4d_{NT}=o(NT)$, $A_{NT}^2d_{NT}=o(N)$ and $A_{NT}^2=o(T)$, then
\begin{eqnarray*}
\|\widehat{f}_*-f_0\|_{NT}^2=O_P\bigg(\frac{d_{NT}}{NT}+\rho_{NT}^2\bigg) \textrm{ and }\; \|\widehat{f}_*-f_0\|^2=O_P\bigg(\frac{d_{NT}}{NT}+\rho_{NT}^2\bigg).
\end{eqnarray*}
\begin{eqnarray*}
\|\widehat{f}_{*,0}-f_0\|_{NT}^2=O_P\bigg(\frac{d_{N,0}}{NT}+\rho_{N,0}^2\bigg) \textrm{ and }\; \|\widehat{f}_{*,0}-f_0\|^2=O_P\bigg(\frac{d_{N,0}}{NT}+\rho_{N,0}^2\bigg).
\end{eqnarray*}
\end{lemma}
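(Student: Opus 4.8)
The plan is to follow the fixed-$T$ argument used for Lemma \ref{lemma:rate:of:convergence:non:penalized} almost verbatim, replacing the i.i.d.\ concentration tools by their mixing counterparts and working throughout on the event $\Omega_{NT}$, which satisfies $\pr(\Omega_{NT})\to 1$ under the stated rate conditions by Lemmas \ref{lemma:uniform:equivalence:empirical:population:norm:large:T} and \ref{lemma:expectation:sample:variance:large:T} and on which $\tfrac12\|g\|\le\|g\|_{NT}\le 2\|g\|$ for every $g\in\Theta_{NT}$ and $\langle\cdot,\cdot\rangle_{NT}$ is a genuine inner product. Writing $\widetilde f_*=P_{NT}f_0$ and $\bar f_*=Pf_0$, I would split $\widehat f_*-f_0=(\widehat f_*-\widetilde f_*)+(\widetilde f_*-\bar f_*)+(\bar f_*-f_0)$, bound the three summands in both $\|\cdot\|_{NT}$ and $\|\cdot\|$, and then square and combine; an entirely parallel computation inside $\Theta_{NT}^0$ gives the statement for $\widehat f_{*,0}$ with $d_{NT},\rho_{NT}$ replaced by their $\Theta_{NT}^0$-counterparts (the approximation error over $\Theta_{NT}^0$ is still $\asymp\sum_{j=d+1}^p h_j^{m_j}$ since the linear blocks are reproduced exactly).

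For the first summand, Lemma \ref{lemma:expect:value:projection:y:f0:large:T} gives $\ev(\|\widehat f_*-\widetilde f_*\|_{NT}^2\mid\mathbb{Z})=O_P(d_{NT}/(NT))$, so the conditional Markov inequality yields $\|\widehat f_*-\widetilde f_*\|_{NT}^2=O_P(d_{NT}/(NT))$, and the norm equivalence on $\Omega_{NT}$ transfers this to $\|\cdot\|$. For the second summand, since $\langle f_0-Pf_0,g\rangle=0$ one has $\|\widetilde f_*-\bar f_*\|_{NT}=\sup_{g\in\Theta_{NT}}\lvert\langle f_0-Pf_0,g\rangle_{NT}-\langle f_0-Pf_0,g\rangle\rvert/\|g\|_{NT}$; inserting the spline approximant $g_*$ of Lemma \ref{lemma:approximation:error} with $\|g_*-f_0\|_\infty\le\rho_{NT}$ and splitting, as in the fixed-$T$ proof, into $R_1+R_2+R_3$, the term $R_1$ is handled by Lemma \ref{lemma:bounded:sequence:two:norm:difference:large:T}, giving $R_1^2=O_P(A_{NT}^2\rho_{NT}^2 d_{NT}/(NT))=O_P(\rho_{NT}^2)$ because $A_{NT}^2 d_{NT}/(NT)\le A_{NT}^4 d_{NT}/(NT)=o(1)$ (using $A_{NT}^2\gtrsim 1$), while $R_2,R_3$ are at most constant multiples of $\|g_*-Pf_0\|\le 2\|g_*-f_0\|\le 2\rho_{NT}$ by Cauchy--Schwarz and the norm equivalence on $\Omega_{NT}$. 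Hence $\|\widetilde f_*-\bar f_*\|_{NT}=O_P(\rho_{NT})$, and likewise for $\|\cdot\|$. For the third summand, the projection property gives $\|\bar f_*-f_0\|\le\|g_*-f_0\|\le\|g_*-f_0\|_\infty\le\rho_{NT}$, while $\|\bar f_*-f_0\|_{NT}\le\|\bar f_*-g_*\|_{NT}+\|g_*-f_0\|_{NT}\le 4\|f_0-g_*\|+\|g_*-f_0\|_\infty\le 5\rho_{NT}$, using on $\Omega_{NT}$ the equivalence of norms together with the elementary bound $\|h\|_{NT}^2=\tfrac1N\sum_i\zeta_i(h,h)\le\|h\|_\infty^2$ applied to $h=g_*-f_0$.

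Adding the three bounds gives $\|\widehat f_*-f_0\|_{NT}=O_P(\sqrt{d_{NT}/(NT)}+\rho_{NT})$ and the same for $\|\widehat f_*-f_0\|$, which squares to the asserted rate; the $\widehat f_{*,0}$ statement is obtained identically. The main obstacle is the middle summand $\widetilde f_*-\bar f_*$: in contrast with the fixed-$T$ case, the classical Bernstein bound no longer applies along the time index, so one must route everything through the mixing Bernstein inequality of Lemma \ref{theorem:FME2009:theorem:1} / Proposition \ref{proposition:panel:sub:exponential:inequality} that underlies both Lemma \ref{lemma:bounded:sequence:two:norm:difference:large:T} and Lemma \ref{lemma:expect:value:projection:y:f0:large:T}, and it is exactly here that the extra hypotheses $A_{NT}^2=o(T)$ and $A_{NT}^4 d_{NT}=o(NT)$ --- beyond the fixed-$T$ requirement $A_{NT}^2 d_{NT}=o(N)$ --- are consumed, both to keep $R_1$ at order $\rho_{NT}$ and to guarantee $\pr(\Omega_{NT})\to 1$.
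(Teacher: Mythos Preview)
Your proposal is correct and follows essentially the same route as the paper's proof: the same three-term decomposition $\widehat f_*-f_0=(\widehat f_*-\widetilde f_*)+(\widetilde f_*-\bar f_*)+(\bar f_*-f_0)$, the same invocation of Lemma~\ref{lemma:expect:value:projection:y:f0:large:T} for the first piece, the identical $R_1+R_2+R_3$ split with Lemma~\ref{lemma:bounded:sequence:two:norm:difference:large:T} controlling $R_1$, and the same projection/approximation bounds for the remaining pieces. Your identification of where the diverging-$T$ hypotheses $A_{NT}^2=o(T)$ and $A_{NT}^4 d_{NT}=o(NT)$ are consumed is also accurate.
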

\begin{proof}[Proof of Lemma \ref{lemma:rate:of:convergence:non:penalized}]
	Let $\widetilde{f}_*=P_{NT}f_0$ and $\bar{f}_*=Pf_0$. By Lemma \ref{lemma:expect:value:projection:y:f0:large:T}, it follows that
\begin{eqnarray}\label{eq:lemma:rate:of:convergence:non:penalized:large:T:eq1}
	\|\widehat{f}_*-	\widetilde{f}_*\|_{NT}^2=O_P\bigg(\frac{d_{NT}}{NT}\bigg).
\end{eqnarray}
Now by definition of $\Omega_{NT}$, we have
\begin{eqnarray}\label{eq:lemma:rate:of:convergence:non:penalized:large:T:eq1:1}
	\|\widehat{f}_*-	\widetilde{f}_*\|^2=O_P\bigg(\frac{d_{NT}}{NT}\bigg).
\end{eqnarray}
Next we will deal with $\widetilde{f}_*-\bar{f}_*$. By definition we have
\begin{eqnarray*}
	\|\widetilde{f}_*-\bar{f}_*\|_{NT}&=&\sup_{g\in \Theta_{NT}}\bigg|\frac{\langle \widetilde{f}_*-\bar{f}_*, g\rangle_{NT}}{\|g\|_{NT}}\bigg|\\
	&=&\sup_{g\in \Theta_{NT}}\bigg|\frac{\langle P_{NT}f_0-Pf_0, g\rangle_{NT}}{\|g\|_{NT}}\bigg|\\
	&=&\sup_{g\in \Theta_{NT}}\bigg|\frac{\langle f_0-Pf_0, g\rangle_{NT}-\langle f_0-Pf_0, g\rangle}{\|g\|_{NT}}\bigg|,
\end{eqnarray*}
where the fact $\langle f_0-Pf_0, g\rangle=0$ is used. Let $g_* \in \Theta_{NT}$ satisfy that $\|g_*-f_0\|_\infty\leq \rho_{NT}$, where the existence of such $g_*$ is guaranteed by Lemma \ref{lemma:approximation:error}.
Since
\begin{eqnarray*}
	\langle f_0-Pf_0, g\rangle_{NT}-\langle f_0-Pf_0, g\rangle&=&\langle f_0-g_*, g\rangle_{NT}-\langle f_0-g_*, g\rangle\\
	&&+\langle g_*-Pf_0, g\rangle_{NT}-\langle g_*-Pf_0, g\rangle,
\end{eqnarray*}
we have $\|\widetilde{f}_*-\bar{f}_*\|_{NT}\leq R_1+R_2+R_3$, where
\begin{eqnarray*}
R_1&=&\sup_{g\in \Theta_{NT}}\bigg|\frac{ \langle f_0-g_*, g\rangle_{NT}-\langle f_0-g_*, g\rangle}{\|g\|_{NT}}\bigg|,\\
R_2&=&\sup_{g\in \Theta_{NT}}\bigg|\frac{\langle g_*-Pf_0, g\rangle_{NT}}{\|g\|_{NT}}\bigg|,\\
R_3&=&\sup_{g\in \Theta_{NT}}\bigg|\frac{\langle g_*-Pf_0, g\rangle}{\|g\|_{NT}}\bigg|.
\end{eqnarray*}
Since $\|f_0-g_*\|_\infty\leq \rho_{NT}$, by Lemma \ref{lemma:bounded:sequence:two:norm:difference:large:T}, on event $\Omega_{NT}$, we have 
\begin{eqnarray*}
	R_1^2\leq 4\sup_{g\in \Theta_{NT}}\bigg|\frac{ \langle f_0-g_*, g\rangle_{NT}-\langle f_0-g_*, g\rangle}{\|g\|}\bigg|^2=O_P\bigg(\frac{A_{NT}^2\rho_{NT}^2d_{NT}}{NT}\bigg)=O_P(\rho_{NT}^2).
\end{eqnarray*}

By Lemma \ref{lemma:uniform:equivalence:empirical:population:norm} and triangle inequality, on event $\Omega_N$, we have
\begin{eqnarray*}
	R_2\leq \|g_*-Pf_0\|_{NT}\leq 2\|g_*-Pf_0\|\leq 2\|g_*-f_0\|+2\|Pf_0-f_0\|\leq 4\|g_*-f_0\|\leq 8\rho_{NT},
\end{eqnarray*}
where we use the fact that $\|g\|\leq 2\|g\|_\infty$ and $\|Pf_0-f_0\|\leq \|g_*-f_0\|$.

On event $\Omega_N$, we have
\begin{eqnarray*}
	R_3\leq 2\sup_{g\in \Theta_{NT}}\bigg|\frac{\langle g_*-Pf_0, g\rangle}{\|g\|}\bigg|\leq 2\|g_*-Pf_0\|\leq 8\rho_{NT}.
\end{eqnarray*}
Combining rates of $R_1, R_2, R_3$, we conclude that 
\begin{eqnarray}\label{eq:lemma:rate:of:convergence:non:penalized:large:T:eq2}
	\|\widetilde{f}_*-\bar{f}_*\|_{NT}^2=O_P(\rho_{NT}^2),
\end{eqnarray}
and
\begin{eqnarray}\label{eq:lemma:rate:of:convergence:non:penalized:large:T:eq2:2}
	\|\widetilde{f}_*-\bar{f}_*\|^2=O_P(\rho_{NT}^2).
\end{eqnarray}
By definition of projection, on event $\Omega_N$, we have
\begin{eqnarray*}
\|\bar{f}_*-f_0\|_{NT}&\leq& \|\bar{f}_*-g_*\|_{NT}+\|g_*-f_0\|_{NT}\\
&\leq& 2\|\bar{f}_*-g_*\|+2\|g_*-f_0\|_\infty\\
&=&2\|Pf_0-g_*\|+2\|g_*-f_0\|_\infty\\
&\leq&2\|Pf_0-f_0\|+2\|f_0-g_*\|+2\|g_*-f_0\|_\infty\\
&\leq&4\|f_0-g_*\|+2\|g_*-f_0\|_\infty\\ 
&\leq& 10\|f_0-g_*\|_\infty\leq 10\rho_{NT}
\end{eqnarray*}
which further implies
\begin{eqnarray}\label{eq:lemma:rate:of:convergence:non:penalized:large:T:eq3}
	\|\bar{f}_*-f_0\|_{NT}=O_P(\rho_{NT}).
\end{eqnarray}
Similarly, we can show
\begin{eqnarray}
	\|\bar{f}_*-f_0\|\leq\|\bar{f}_*-g_*\|+\|g_*-f_0\|\leq 2\|g_*-f_0\|\leq 4\|g_*-f_0\|_\infty\leq 4\rho_{NT}.\label{eq:lemma:rate:of:convergence:non:penalized:large:T:eq3:3}
\end{eqnarray}
Combining (\ref{eq:lemma:rate:of:convergence:non:penalized:eq1}), (\ref{eq:lemma:rate:of:convergence:non:penalized:eq2}) and (\ref{eq:lemma:rate:of:convergence:non:penalized:eq3}), we have
\begin{eqnarray*}
\|\widehat{f}_*-f_0\|_{NT}^2=O_P\bigg(\frac{d_{NT}}{NT}+\rho_{NT}^2\bigg).
\end{eqnarray*}
According to (\ref{eq:lemma:rate:of:convergence:non:penalized:eq1:1}), (\ref{eq:lemma:rate:of:convergence:non:penalized:eq2:2}) and (\ref{eq:lemma:rate:of:convergence:non:penalized:eq3:3}), we also obtain
\begin{eqnarray*}
	\|\widehat{f}_*-f_0\|^2=O_P\bigg(\frac{d_{NT}}{NT}+\rho_{NT}^2\bigg).
\end{eqnarray*}
Similar argument can be applied to prove the rate of convergence of $\widehat{f}_{*,0}$.
\end{proof}

In the following, we define  $$\gamma_{NT}^2=\frac{d_{NT}}{NT}+\rho_{NT}^2,$$ and it follows that $\gamma_{NT}^2\asymp (NT)^{-1}\sum_{j=1}^ph_j^{-1}+\sum_{j=d+1}^p h_j^{2m_j}$.

\begin{proof}[Proof of (b) in Theorem \ref{thm:rate:of:convergence:together}]
Define
\begin{eqnarray*}
	R_{NT}=\sup_{g\in \Theta_{NT}}\bigg|\frac{\|g\|_{NT}^2}{\|g\|^2}-1\bigg|.
\end{eqnarray*}
By definition, we have
 Moreover, we have
\begin{eqnarray*}
	1-R_{NT}\leq \sup_{g\in \Theta_{NT}}\frac{\|g\|_{NT}^2}{\|g\|^2}\leq 1+R_{NT}.
\end{eqnarray*}
For any $0<\delta<1$, define event $$U_{{NT}, \delta}=\bigg\{\|\widehat{f}_*-f_0\|\leq C_\delta\gamma_{NT}, \|\widehat{f}_{*,0}-f_0\|\leq C_\delta\gamma_{NT}, R_{NT}\leq \frac{1}{2}\bigg\}\cap \Omega_{NT},$$ where $C_\delta>0$ is sufficiently large such that $\pr(U_{{NT},\delta})\geq 1-\delta$ and this is possible due to Lemma \ref{lemma:rate:of:convergence:non:penalized:large:T} and Lemma \ref{lemma:uniform:equivalence:empirical:population:norm:large:T}.

By definition of $\widehat{f}$, we have
\begin{align}
	0&\geq l_{NT}(\widehat{f})-l_{NT}(\widehat{f}_{*,0})\nonumber\\
	&\geq \frac{\|Y-\widehat{f} \|_{NT}^2-\|Y-\widehat{f}_{*,0}\|_{NT}^2}{2}+\sum_{j=d+1}^p\bigg\{p_{\lambda_{NT}}(\|\widehat{f} _{j,\sim}\|_{NT})-p_{\lambda_{NT}}(\|\widehat{f}_{j,*,0,\sim}\|_{NT})\bigg\}\nonumber\\
	&= \frac{\|\widehat{f}_*-\widehat{f}\|_{NT}^2-\|\widehat{f}_*-\widehat{f}_{*,0}\|_{NT}^2}{2}+\sum_{j=d+1}^p\bigg\{p_{\lambda_{NT}}(\|\widehat{f} _{j,\sim}\|_{NT})-p_{\lambda_{NT}}(\|\widehat{f}_{j,*,0,\sim}\|_{NT})\bigg\}\label{thm:rate:of:convergence:large:T:eq:1}\\
	&\geq\frac{1}{2}\bigg(\|\widehat{f}_*-\widehat{f} \|^2(1-R_{NT})-\|\widehat{f}_*-\widehat{f}_{*,0}\|^2(1+R_{NT})\bigg)-\sum_{j=d+1}^pp_{\lambda_{NT}}(\|\widehat{f}_{j,*,0,\sim}\|_{NT}).\label{thm:rate:of:convergence:large:T:eq:2}
\end{align}
By Lemma \ref{lemma:lower:bound:non:linear}, on event $U_{N,\delta}$, we have
\begin{eqnarray*}
	\|\widehat{f}_{j,*,0, \sim}\|_{NT}\geq \frac{1}{2} \|\widehat{f}_{j,*,0, \sim}\|\geq \frac{1}{4}\sqrt{\frac{a_6}{2a_3}}, \textrm{ for } j=d+1,\ldots, p,
\end{eqnarray*}
provided  $\sqrt{a_6}>2(C_\delta+4)\sqrt{8a_3^3c_1^{-1}}\gamma_{NT}$.
As a consequence, it follows that
\begin{eqnarray}
	p_{\lambda_{NT}}(\|\widehat{f}_{j,*,0, \sim}\|_{NT})=(\kappa+1)\lambda_{NT}^2/2, \textrm{ for } j=d+1,\ldots, p,  \textrm{ if }\; \frac{1}{4}\sqrt{\frac{a_6}{2a_3}}\geq \kappa\lambda_{NT}.\label{thm:rate:of:convergence:large:T:eq:3}
\end{eqnarray}

Combining  (\ref{thm:rate:of:convergence:large:T:eq:2}) and  (\ref{thm:rate:of:convergence:large:T:eq:3}), if $\sqrt{a_6}>2(C_\delta+4)\sqrt{8a_3^3c_1^{-1}}\gamma_{NT}$ and $\sqrt{a_6}\geq 4\sqrt{2a_3}\kappa\lambda_{NT}$, on event $U_{{NT},\delta}$, it follows that
\begin{eqnarray*}
\frac{(p-d)(\kappa+1)\lambda_{NT}^2}{2}+\frac{3}{2}\|\widehat{f}_*-\widehat{f}_{*,0}\|^2\geq \frac{1}{4}\|\widehat{f}_*-\widehat{f} \|^2.
\end{eqnarray*}
Taking square root on both side of above inequality, we have
\begin{eqnarray*}
	\frac{1}{2}\|\widehat{f}_*-\widehat{f} \|\leq \sqrt{{(p-d)(\kappa+1)}}\lambda_{NT}+2\|\widehat{f}_*-\widehat{f}_{*,0}\|,
\end{eqnarray*}
which, by triangle inequality,  further implies that  the following holds on event $U_{{NT}, \delta}$,
\begin{eqnarray*}
	\|\widehat{f}-f_0\|&\leq& 2\sqrt{{(p-d)(\kappa+1)}}\lambda_{NT}+5\|\widehat{f}_*-f_0\|+4\|\widehat{f}_{*,0}-f_0\|\\
	&\leq&2\sqrt{{(p-d)(\kappa+1)}}\lambda_{NT}+9C_\delta\gamma_{NT}.
\end{eqnarray*}
Again by Lemma \ref{lemma:lower:bound:non:linear}, on event $U_{{NT}, \delta}$, it holds that
\begin{eqnarray*}
	\|\widehat{f}_{j, \sim}\|_{NT}\geq \frac{1}{2} \|\widehat{f}_{j,\sim}\|\geq \frac{1}{4}\sqrt{\frac{a_6}{2a_3}}, \textrm{ for } j=d+1,\ldots, p,
\end{eqnarray*}
provided  $\sqrt{a_6}>2(2\sqrt{{(p-d)(\kappa+1)}}+9C_\delta+4)\sqrt{8a_3^3c_1^{-1}}(\lambda_{NT}+\gamma_{NT})$. As a consequence, we have
\begin{eqnarray}
	p_{\lambda_{NT}}(\|\widehat{f}_{j, \sim}\|_{NT})=(\kappa+1)\lambda_{NT}^2/2, \textrm{ for } j=d+1,\ldots, p,  \textrm{ if }\; \frac{1}{4}\sqrt{\frac{a_6}{2a_3}}\geq \kappa\lambda_{NT}.\label{thm:rate:of:convergence:large:T:eq:4}
\end{eqnarray}
Now in the view of  (\ref{thm:rate:of:convergence:large:T:eq:1}),  (\ref{thm:rate:of:convergence:large:T:eq:3}) and (\ref{thm:rate:of:convergence:large:T:eq:4}), the following holds on event $U_{NT, \delta}$,
\begin{eqnarray*}
	\|\widehat{f}_*-\widehat{f}\|\leq \|\widehat{f}_*-\widehat{f}\|_{NT}\leq \|\widehat{f}_*-\widehat{f}_{*,0}\|_{NT}\leq \|\widehat{f}_*-\widehat{f}_{*,0}\|,
\end{eqnarray*}
which further implies 
\begin{eqnarray*}
	\|\widehat{f}-f_0\|\leq 2\|\widehat{f}_*-f_0\|+\|\widehat{f}_{*,0}-f_0\|\leq 3C_\delta\gamma_{NT}.
\end{eqnarray*}
Since $\delta$ can be arbitrary small, we finish the proof.
\end{proof}

\begin{proof}[Proof of (b) in Theorem \ref{thm:selection:consistency:together}]
Fixing $C_\delta>0$ large enough, we need to show that for any $g=\sum_{j=1}^pg_j\in \Theta_{NT}^0$ with $\|g-f_0\|\leq C_\delta\gamma_{NT}$  and any $C>0$ one has $l_{NT}(g)=\min_{r\in \Theta_{NT}^1, \|r\|\leq C\gamma_{NT}}l_{NT}(g+r)$. 

Since $r=\sum_{j=1}^d{r_{j,\sim}}$ and $r_{j, \sim}\in \Theta_{NT}$, by Lemma \ref{lemma:sum:of:l2:norm:bound}, we have
\begin{eqnarray}
	\|r_{j,\sim}\|\leq c_1^{-1}\|r\|\leq c_1^{-1}C\gamma_{NT}.\label{eq:thm:selection:consistency:large:T:eq1}
\end{eqnarray}
Define event
\begin{eqnarray*}
	E_{{NT},\delta}=\{\|\widehat{f}_*-f_0\|\leq C_\delta\gamma_{NT}\}\cap \Omega_{NT},
\end{eqnarray*}
so on event $E_{{NT},\delta}$, we have $\|r_{j,\sim}\|_{NT}\leq 2c_1^{-1}C\gamma_{NT}.$ Moreover, we can select $C_\delta$ sufficient large such that $\pr(E_{N,\delta})\geq 1-\delta$, which is feasible by Lemma \ref{lemma:rate:of:convergence:non:penalized:large:T} and Lemma \ref{lemma:uniform:equivalence:empirical:population:norm:large:T}.
Direct calculation shows
\begin{eqnarray*}
	l_{NT}(g)-l_{NT}(g+r)&=&\frac{\|Y-g\|_{NT}^2-\|Y-g-r\|_{NT}^2}{2}-\sum_{j=1}^dp_{\lambda_{NT}}(\|r_{j,\sim}\|_{NT})\\
	&=&\frac{\|\widehat{f}_*-g\|_{NT}^2-\|\widehat{f}_*-g-r\|_{NT}^2}{2}-\sum_{j=1}^dp_{\lambda_{NT}}(\|r_{j,\sim}\|_{NT})\\
	&\leq& \frac{1}{2}\|r\|_{NT}\bigg(\|\widehat{f}_*-g\|_{NT}+\|\widehat{f}_*-g-r\|_{NT}\bigg)-\sum_{j=1}^dp_{\lambda_{NT}}(\|r_{j,\sim}\|_{NT})\\
	&\equiv&\frac{1}{2}S_1+S_2.
\end{eqnarray*}
Notice on event $E_{NT,\delta}$, we have
\begin{eqnarray*}
	S_1&\leq&2\|r\|_{NT}\bigg(\|\widehat{f}_*-g\|+\|\widehat{f}_*-g-r\|\bigg)\\
	&\leq&2\|r\|_{NT}\bigg(2\|\widehat{f}_*-g\|+\|r\|\bigg)\\
	&\leq& 2\|r\|_{NT}\bigg(2\|\widehat{f}_*-f_0\|+2\|g-f_0\|+\|r\|\bigg)\\
	&\leq&2(4C_\delta+C)\gamma_{NT}\|r\|_{NT}\\
	&\leq&2(4C_\delta+C)\gamma_{NT}\sum_{j=1}^d\|r_{j,\sim}\|_{NT}.
\end{eqnarray*}
By (\ref{eq:thm:selection:consistency:eq1}), on event $E_{NT, \delta}$, if $2c_1^{-1}C\gamma_{NT}\leq \lambda_{NT}$, we have 
\begin{eqnarray*}
	p_{\lambda_{NT}}(\|r_{j,\sim}\|_{NT})=\lambda_{NT}\|r_{j,\sim}\|_{NT}.
\end{eqnarray*}
Therefore, it follows that
\begin{eqnarray*}
	l_{NT}(g)-l_{NT}(g+r)&\leq&2(4C_\delta+C)\gamma_{NT}\sum_{j=1}^d\|r_{j,\sim}\|_{NT}-\sum_{j=1}^dp_{\lambda_{NT}}(\|r_{j,\sim}\|_{NT})\\
	&=&2(4C_\delta+C)\gamma_{NT}\sum_{j=1}^d\|r_{j,\sim}\|_{NT}-\sum_{j=1}^d\lambda_{NT}\|r_{j,\sim}\|_{NT}\\
	&=&\bigg(2(4C_\delta+C)\gamma_{NT}-\lambda_{NT}\bigg)\sum_{j=1}^d\|r_{j,\sim}\|_{NT}<0,
\end{eqnarray*}
provided $2(4C_\delta+C)\gamma_{NT}<\lambda_{NT}$.  

Now we prove that for each on event $E_{{NT},\delta}$, for all $g\in \Theta_{NT}^0$ with $\|g-f_0\|\leq C_{\delta} \gamma_{NT}$ and all  $r\in \Theta_{NT}^1$ with $\|r\|\leq C\gamma_{NT}$ for any $C>0$, we have
\begin{eqnarray}
	l_{NT}(g)=\min_{r\in \Theta_{NT}^1, \|r\|\leq C\gamma_{NT}}l_{NT}(g+r),\label{eq:thm:selection:consistency:large:T:eq2}
\end{eqnarray} 
provided $2(4C_\delta+c_1^{-1}C)\gamma_{NT}<\lambda_{NT}$.  Furthermore let event $F_{{NT},\delta}=\{\|\widehat{f}-f_0\|\leq C_\delta \gamma_{NT}\}$ and choose $C_\delta$ large such that $\pr(F_{{NT},\delta})\geq 1-\delta$. Then we have $\pr(E_{{NT},\delta}\cap F_{{NT},\delta})\geq 1-2\delta$. By (\ref{eq:thm:selection:consistency:eq2}), we have with probability at least $1-2\delta$,
\begin{eqnarray*}
	l_{NT}(\widehat{f})= \min_{r\in \Theta_{NT}^1, \|r\|\leq C\gamma_{NT}}l_{NT}(\widehat{f}+r),
\end{eqnarray*}
provided $2(4C_\delta+c_1^{-1}C)\gamma_{NT}<\lambda_{NT}$, which proves the first conclusion.

By Lemma \ref{lemma:lower:bound:non:linear}, we can see that with probability at least $1-2\delta$, 
\begin{eqnarray*}
		\|\widehat{f}_{j,\sim}\|\geq \frac{1}{2}\sqrt{\frac{a_6}{2a_3}}, \textrm{ for } j=d+1, \ldots, p.
\end{eqnarray*}
provided $\sqrt{a_6}>2(C_\delta+4)\sqrt{8a_3^3c_1^{-1}}\gamma_{NT}$, which is the second conclusion.
\end{proof}

\subsection{Proof of Theorems \ref{thm:asymptotic:normal:multiple} and \ref{thm:asymptotic:normal:multiple:large:T}}
\subsubsection{General Functional}
To estimate $F(f_0)$ for some known functional $F(\cdot)$, the plug-in estimator $F(\widehat{f})$ is used. We follow \cite{cl14} to prove its limit distribution. To proceed further, we need introduce the following notation:
\begin{align*}
Q_{NT}(f)&=\frac{1}{2NT}\sum_{i=1}^N\sum_{t=1}^T\bigg[Y_{it}-f(\bfX_{it})-\frac{1}{T}\sum_{s=1}^T\bigg(Y_{is}-f(\bfX_{is})\bigg)\bigg]^2, &\textrm{ and }\quad Q(f)&=\ev[Q_{NT}(f)].
\end{align*}
Recalling the vector representation of function:
\begin{align*}
	\bfY&=(Y_{11}, Y_{12},\ldots, Y_{it},\ldots, Y_{NT})^\top \in \mathbb{R}^{NT},\\
	\bff&=(f(\bfX_{11}),f(\bfX_{12}),\ldots, f(\bfX_{it}),\ldots, f(\bfX_{NT}))^\top\in \mathbb{R}^{NT}, \textrm{ for } f: \mcX \to \mathbb{R},\\
	\bfepsilon&=(\epsilon_{11},\epsilon_{12},\ldots, \epsilon_{it},\ldots, \epsilon_{NT})^\top=\bfY-\bff_0\in \mathbb{R}^{NT},
\end{align*}
we have 
\begin{align}
	Q_{NT}(f)-Q_{NT}(f_0)&=\frac{1}{2NT}(\bfY-\bff)^\top M_H(\bfY-\bff)-\frac{1}{2NT}(\bfY-\bff_0)^\top M_H(\bfY-\bff_0)\nonumber\\
	&=\frac{1}{2NT}(\bff-\bff_0)^\top M_H(\bff-\bff_0)-\frac{1}{NT}(\bff-\bff_0)^\top M_H\bfepsilon \nonumber\\
	&=\frac{1}{2}\|f-f_0\|_{NT}^2-\frac{1}{NT}(\bff-\bff_0)^\top M_H\bfepsilon,\label{eq:QNT:f:minus:QNT:f0}
\end{align}
and
\begin{equation}
		Q(f)-Q(f_0)=\frac{1}{2}\|f-f_0\|^2.\label{eq:Q:f:minus:Q:f0}
\end{equation}
Furthermore, we define a neighbourhood of $f_0$ as follows
\begin{align*}
	\mcN_{NT}=\{f\in \mcH_0 \;\;|\;\; \|f-f_0\|\leq \gamma_{NT} \log(NT) \}, \quad\textrm{ and }\quad \mcB_{NT}=\mcN_{NT}\cap \Theta_{NT}^0,
\end{align*}
where $\gamma_{NT}$ is defined in (\ref{eq:definition:gamma:N}). By Theorems \ref{thm:rate:of:convergence:together}, \ref{thm:selection:consistency:together},  we have $\widehat{f}\in \mcB_{NT}$ with probability approaching one. Suppose $Q_{NT}(f)-Q_{NT}(f_0)$ can  be approximated by some linear functional for $f\in \mcN_{NT}$. To be more specific, define linear functional $\Delta_{NT}(f_0)[\cdot]$ as follows:
\begin{align}
	\Delta_{NT}(f_0)[f-f_0]=\lim_{\tau \to 0}\frac{Q_{NT}(f_0+\tau(f-f_0))-Q_{NT}(f_0)}{\tau}=-\frac{1}{NT}(\bff-\bff_0)^\top M_H(\bfY-\bff_0),\label{eq:Delta:NT}
\end{align}
which is called the pathwise derivative of $Q_{NT}$ at $f_0$ in the direction $[f-f_0]$. The population counterpart of $\Delta_{NT}$ is defined as $\Delta(f_0)[\cdot]=\ev(\Delta_{NT}(f_0)[\cdot])$. Direct examination shows that
\begin{align*}
\frac{\partial \ev(\Delta_{NT}(f_0+\tau(f-f_0))[f-f_0])}{\partial \tau}|_{\tau=0}=\|f-f_0\|^2.
\end{align*}
Let $\mcV$ be the closed  linear span of $\{f-f_0 \;|\; f\in \mcN_{NT}\}$ under norm $\|\cdot\|$, and it can be verified that $\mcV$ is a Hilbert space with inner product $\langle \cdot, \cdot \rangle$ and
\begin{align*}
	\frac{\partial \ev(\Delta_{NT}(f_0+\tau v_2)[v_1])}{\partial \tau}|_{\tau=0}=\langle v_1, v_2 \rangle, \textrm{ for all } v_1, v_2 \in \mcV.
\end{align*}
Define the best approximation of $f_0$ in $\mcB_{NT}$ as follows: $$f_{0,N}=\argmin_{f\in \mcB_{NT}}\|f-f_0\|_\infty.$$ Now, let $\mcV_{NT}$ be the closed linear span of $\{f-f_{0,N}\; |\; f\in \mcB_{NT}\}$ under $\|\cdot\|$. For any $v\in \mcV$, we further define the following pathwise derivative of $F(\cdot)$ at $f_0$ in the direction of $v\in \mcV$:
\begin{align*}
	\frac{\partial F(f_0)}{\partial f}[v]=\frac{\partial F(f_0+\tau v)}{\partial \tau}|_{\tau=0}, \textrm{ for any } v \in \mcV.
\end{align*}
We assume $\frac{\partial F(f_0)}{\partial f}[v]$ is a linear continuous functional on $\mcV$ and extend $\frac{\partial F(f_0)}{\partial f}$ to  the subset $ \{f-f_{0,N} \;|\; f\in \mcB_{NT}\}$ as follows:
\begin{align}
	\frac{\partial F(f_0)}{\partial f}[f-f_{0,N}]=\frac{\partial F(f_0)}{\partial f}[f-f_0]-\frac{\partial F(f_0)}{\partial f}[f_{0,N}-f_0], \label{eq:extension:functional}
\end{align}
which is still a linear functional. By Extension Theorem of linear continuous functional, (\ref{eq:extension:functional}) can be extended to $\mcV_{NT}$ and by Riesz Representation Theorem, there exists an unique $v_{NT}^*\in \mcV_{NT}$ such that
\begin{align}
	\frac{\partial F(f_0)}{\partial f}[v_{NT}]=\langle v_{NT}, v_{NT}^* \rangle, \textrm{ for all } v_{NT}\in \mcV_{NT}.\label{eq:v:N:star}
\end{align}
To proceed further, for $f, g\in \Theta_{NT}^0$, we define standard deviation inner product as follows:
\begin{align}
\langle g, f \rangle_{\textrm{sd}}=\frac{1}{NT}\ev\bigg(\bff^\top M_H \bfepsilon \bfepsilon^\top M_H \textbf{g}\bigg), \;\;\textrm{ and }\;\; \|f\|_{\textrm{sd}}^2=\langle f, f \rangle_{\textrm{sd}}=\frac{1}{NT}\ev\bigg(\bff^\top M_H \bfepsilon \bfepsilon^\top M_H \textbf{f}\bigg).\label{eq:definition:sd:norm}
\end{align}
It is not difficult to verify that  $\|v_{NT}^*\|_{\textrm{sd}}=\textrm{Var}(\sqrt{NT}\Delta_{NT}(f_0)[v_{NT}^*])$. Let $b_{NT}=o(N^{-1/2}T^{-1/2})$, $u_{NT}^*=v_{NT}^*/\|v_{NT}^*\|_{\textrm{sd}}$.

\begin{Condition}\label{Condition:C1}
\begin{enumerate}[label={(\roman*}),ref={(\roman*})]
\item \label{C1:a}  $\frac{\partial F(f_0)}{\partial f}[v]$ is a linear continuous functional from $\mcV$ to $\mathbb{R}$.
\item \label{C1:b}  $$\sup_{f \in \mcB_{NT}}\frac{\bigg|F(f)-F(f_0)-\frac{\partial F(f_0)}{\partial f}[f-f_0]\bigg|}{\|v_{NT}^*\|}=o(N^{-1/2}T^{-1/2}).$$
\item  \label{C1:c} $$\frac{\bigg|\frac{\partial F(f_0)}{\partial f}[f_{0,N}-f_0]\bigg|}{\|v_{NT}^*\|}=o(N^{-1/2}T^{-1/2})$$
\item \label{C1:d} $\|v_{NT}^*\|/\|v_{NT}^*\|_{\textrm{sd}}=O(1)$.
\end{enumerate}
\end{Condition}

\begin{lemma}\label{lemma:general:asymptotoic:expansion}
Suppose Condition \ref{Condition:C1} and one of following conditions are satisfied:
\begin{enumerate}
\item Assumptions \ref{Assumption:A1}, \ref{Assumption:common} hold and $d_{NT}A_{NT}^2=o(N)$, $\gamma_{NT}=o(\lambda_{NT})$, $d_{NT,0}A_{NT,0}^2\gamma_{NT}^2=o(1)$, $N\rho_{NT}^2=o(1)$;
\item Assumptions \ref{Assumption:A2}, \ref{Assumption:common} hold and $d_{NT}A_{NT}^4=o(NT)$, $d_{NT}A_{NT}^2=o(N)$, $A_{NT}^2=o(T)$, $\gamma_{NT}=o(\lambda_{NT})$, $d_{NT,0}A_{NT,0}^4\gamma_{NT}^2=o(1)$, $d_{NT,0}^2A_{NT,0}^4\gamma_{NT}^2T=o(N)$, $NT\rho_{NT}^2=o(1)$.
\end{enumerate}
Then it follows that
\begin{align*}
	\frac{\sqrt{NT}\bigg(F(\widehat{f})-F(f_{0})\bigg)}{\|v_{NT}^*\|_{\textrm{sd}}}=-\sqrt{NT}\Delta_{NT}(f_0)[u_{NT}^*]+o_P(1).
\end{align*}
\end{lemma}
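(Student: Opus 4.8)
The plan is to decompose the difference $F(\widehat f)-F(f_0)$ into a leading linear term plus remainders, each of which is shown to be $o_P(\|v_{NT}^*\|_{\textrm{sd}}/\sqrt{NT})$. First I would invoke Theorems~\ref{thm:rate:of:convergence:together} and \ref{thm:selection:consistency:together} (whose hypotheses are subsumed by the conditions here, since $\gamma_{NT}=o(\lambda_{NT})$ and the rate/bandwidth conditions hold) to conclude that $\widehat f\in\mcB_{NT}$ with probability approaching one; hence we may work entirely on the event $\{\widehat f\in\mcB_{NT}\}$ and treat $\widehat f$ as an element of the finite-dimensional space $\Theta_{NT}^0$ with $\|\widehat f-f_0\|\le\gamma_{NT}\log(NT)$. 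Then Condition~\ref{Condition:C1}\ref{C1:b} gives
$$
F(\widehat f)-F(f_0)=\frac{\partial F(f_0)}{\partial f}[\widehat f-f_0]+o_P\!\big(\|v_{NT}^*\|(NT)^{-1/2}\big),
$$
and using \eqref{eq:extension:functional} together with Condition~\ref{Condition:C1}\ref{C1:c} we may replace $\widehat f-f_0$ by $\widehat f-f_{0,N}\in\mcV_{NT}$ up to a further $o_P(\|v_{NT}^*\|(NT)^{-1/2})$ term. By the Riesz representation \eqref{eq:v:N:star}, the linear term equals $\langle\widehat f-f_{0,N},v_{NT}^*\rangle$.

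The core of the argument is to show $\langle\widehat f-f_{0,N},v_{NT}^*\rangle=-\Delta_{NT}(f_0)[v_{NT}^*]+o_P(\|v_{NT}^*\|_{\textrm{sd}}(NT)^{-1/2})$, i.e. a first-order condition for the penalized estimator in the direction $v_{NT}^*$. Since on the good event $\widehat f$ minimizes $l_{NT}$ over $\Theta_{NT}$ and, by selection consistency, lies in the smooth manifold $\Theta_{NT}^0$ where the SCAD penalty is locally constant (each $\|\widehat f_{j,\sim}\|_{NT}$ exceeds $\kappa\lambda_{NT}$ for $j>d$, so $p_{\lambda_{NT}}$ contributes zero derivative), the penalty term drops out and $\widehat f$ satisfies the stationarity equation of the unpenalized quadratic $Q_{NT}$ restricted to $\Theta_{NT}^0$: $\langle \widehat f-Y,\,g\rangle_{NT}=0$ for all $g\in\Theta_{NT}^0$. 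Taking $g=\widehat v_{NT}^*$ (the empirical Riesz representer from \eqref{eq:definition:hat:vnt}, or more precisely the representer of the extended functional under $\langle\cdot,\cdot\rangle_{NT}$) and comparing with the population identity gives, after adding and subtracting,
$$
\langle\widehat f-f_{0,N},v_{NT}^*\rangle=-\Delta_{NT}(f_0)[v_{NT}^*]+\big(\text{terms involving }\langle\cdot,\cdot\rangle_{NT}-\langle\cdot,\cdot\rangle\big)+\big(\text{approximation-error terms}\big).
$$

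The remaining work is to bound these two groups of error terms. The empirical-versus-population inner product discrepancies are controlled by Lemmas~\ref{lemma:uniform:equivalence:empirical:population:norm} and \ref{lemma:bounded:sequence:two:norm:difference} (fixed $T$) or \ref{lemma:uniform:equivalence:empirical:population:norm:large:T} and \ref{lemma:bounded:sequence:two:norm:difference:large:T} (diverging $T$), applied to $v_{NT}^*\in\Theta_{NT}^0$ and to $\widehat f-f_{0,N}$; here the conditions $d_{NT,0}A_{NT,0}^2\gamma_{NT}^2=o(1)$ (resp.\ $d_{NT,0}A_{NT,0}^4\gamma_{NT}^2=o(1)$, $d_{NT,0}^2A_{NT,0}^4\gamma_{NT}^2 T=o(N)$) are exactly what forces these cross terms to be $o_P(\|v_{NT}^*\|_{\textrm{sd}}(NT)^{-1/2})$ after using Condition~\ref{Condition:C1}\ref{C1:d} to trade $\|v_{NT}^*\|$ for $\|v_{NT}^*\|_{\textrm{sd}}$. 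The approximation-bias terms, which involve $f_0-f_{0,N}$ and hence $\rho_{NT}$, are handled by $N\rho_{NT}^2=o(1)$ (resp.\ $NT\rho_{NT}^2=o(1)$). Dividing through by $\|v_{NT}^*\|_{\textrm{sd}}$ and recalling $u_{NT}^*=v_{NT}^*/\|v_{NT}^*\|_{\textrm{sd}}$ and the linearity of $\Delta_{NT}(f_0)[\cdot]$ yields the claim. I expect the main obstacle to be the careful bookkeeping in the first-order condition step: one must be precise about whether the Riesz representer is taken with respect to $\langle\cdot,\cdot\rangle$ or $\langle\cdot,\cdot\rangle_{NT}$, make sure the penalty derivative genuinely vanishes uniformly over the relevant directions (not merely at the nonlinear coordinates), and verify that replacing $v_{NT}^*$ by $\widehat v_{NT}^*$ costs only a negligible amount — this last point is where the spectral bounds on $V_{NT}$ versus $V$ from Lemma~\ref{lemma:uniform:equivalence:empirical:population:norm} (large-$T$: Lemma~\ref{lemma:uniform:equivalence:empirical:population:norm:large:T}) enter once more.
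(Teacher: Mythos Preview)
Your approach is correct but genuinely different from the paper's. The paper follows the perturbation (``sandwich'') strategy of Chen--Liao: it compares $l_{NT}(\widehat f)$ with $l_{NT}(\widehat f\pm b_{NT}u_{NT}^*)$ for an auxiliary sequence $b_{NT}=o((NT)^{-1/2})$, shows via a separate lemma that the SCAD penalties at $\widehat f$ and at $\widehat f\pm b_{NT}u_{NT}^*$ coincide, and then decomposes $Q_{NT}(\widehat f+b_{NT}u_{NT}^*)-Q_{NT}(\widehat f)$ into $Q$-, $\Delta_{NT}$-, and residual pieces, the last of which is controlled by dedicated ``$S_{NT}$-rate'' lemmas proved by chaining over the shrinking ball $\{f\in\Theta_{NT}^0:\|f-f_0\|_2\le C\gamma_{NT}\}$. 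The two one-sided inequalities combine to give $\langle u_{NT}^*,\widehat f-f_0\rangle=-\Delta_{NT}(f_0)[u_{NT}^*]+o_P((NT)^{-1/2})$, after which Conditions~\ref{Condition:C1}\ref{C1:b}--\ref{C1:d} finish the argument just as you outline.

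Your route --- exploit the exact first-order condition $\langle Y-\widehat f,g\rangle_{NT}=0$ for $g\in\Theta_{NT}^0$, valid because SCAD is flat at the nonlinear coordinates and the linear coordinates of $g\in\Theta_{NT}^0$ do not touch the non-differentiable point --- is shorter here and avoids the auxiliary sequence $b_{NT}$ entirely. It does lead to the same empirical-versus-population correction; indeed, applying Lemma~\ref{lemma:bounded:sequence:two:norm:difference} (or the large-$T$ analogue, with the obvious modification $\|v_N\|_\infty=O(A_{NT,0})$ in place of $\rho_{NT}$) with $v_N=u_{NT}^*$ and $g=\widehat f-f_{0,N}$ produces a bound of order $\sqrt{d_{NT,0}A_{NT,0}^2/N}\,\gamma_{NT}$ (resp.\ $\sqrt{d_{NT,0}A_{NT,0}^4/(NT)}\,\gamma_{NT}$), which is $o_P((NT)^{-1/2})$ exactly under the stated rate conditions. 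The trade-off is that the perturbation argument is more robust --- it does not require an exact stationarity equation and so extends to approximate minimizers or penalties that are not locally flat --- whereas your argument leans on the specific geometry of SCAD. One simplification: you need not introduce $\widehat v_{NT}^*$ at all; since $v_{NT}^*\in\Theta_{NT}^0$, plugging $g=v_{NT}^*$ directly into the empirical first-order condition gives $\langle\widehat f-f_0,v_{NT}^*\rangle_{NT}=-\Delta_{NT}(f_0)[v_{NT}^*]$ in one line, and only the $\langle\cdot,\cdot\rangle_{NT}$ vs.\ $\langle\cdot,\cdot\rangle$ discrepancy (plus the $O(\rho_{NT})$ bias from $f_0-f_{0,N}$) remains to be controlled.
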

\begin{proof}[Proof of Lemma \ref{lemma:general:asymptotoic:expansion}]
Let $\widehat{f}_{u_{NT}^*}=\widehat{f}+b_{NT}u_{NT}^*$.  Since $\|\widehat{f}\|=O_P(\gamma_{NT})$, $\|u_{NT}^*\|=O(1)$, and $b_{NT}=o(N^{-1/2}T^{-1/2})$, we have $\widehat{f},\; \widehat{f}_{u_{NT}^*}\in \mcB_{NT}$ with probability approaching one by Theorems \ref{thm:rate:of:convergence:together} and \ref{thm:selection:consistency:together}. By definition of $\widehat{f}$, it follows that
\begin{align}
	0&\leq l_{NT}(\widehat{f}_{u_{NT}^*})-l_{NT}(\widehat{f})\nonumber\\
	&=2Q_{NT}(\widehat{f}_{u_{NT}^*})-2Q_{NT}(\widehat{f})+\sum_{j=1}^pp_{\lambda_{NT}}\bigg(\|\widehat{f}_{j,u_{NT}^*,\sim}\|_{NT}\bigg)-\sum_{j=1}^pp_{\lambda_{NT}}\bigg(\|\widehat{f}_{j,\sim}\|_{NT}\bigg)\nonumber\\
	&=2Q_{NT}(\widehat{f}_{u_{NT}^*})-2Q_{NT}(\widehat{f}),\label{eq:lemma:general:asymptotoic:expansion:eq1}
\end{align}
where the last equality follows from Lemma \ref{lemma:SNT:rate:3}. Direct examination yields
\begin{align}
	Q_{NT}(\widehat{f}_{u_{NT}^*})-Q_{NT}(\widehat{f})&=Q(\widehat{f}_{u_{NT}^*})-Q(\widehat{f})+\Delta_{NT}(f_0)[\widehat{f}_{u_{NT}^*}-\widehat{f}]-\Delta(f_0)[\widehat{f}_{u_{NT}^*}-\widehat{f}]+S_{NT},\label{eq:lemma:general:asymptotoic:expansion:eq2}
\end{align}
where
\begin{align*}
	S_{NT}&=\bigg(Q_{NT}(\widehat{f}_{u_{NT}^*})-Q_{NT}(\widehat{f})\bigg)-\bigg(Q(\widehat{f}_{u_{NT}^*})-Q(\widehat{f})\bigg)-\bigg(\Delta_{NT}(f_0)[\widehat{f}_{u_{NT}^*}-\widehat{f}]-\Delta(f_0)[\widehat{f}_{u_{NT}^*}-\widehat{f}]\bigg)\nonumber\\
	&\equiv S_1-S_2-S_3.
\end{align*}
Therefore, by (\ref{eq:QNT:f:minus:QNT:f0}), and the fact that $\widehat{f}_{u_{NT}^*}=\widehat{f}+b_{NT}u_{NT}^*$, we have
\begin{align*}
	S_1&=Q_{NT}(\widehat{f}_{u_{NT}^*})-Q_{NT}(\widehat{f})\nonumber\\
	&=\frac{1}{2NT}(\bfY-\widehat{\bff}_{u_{NT}^*})^\top M_H (\bfY-\widehat{\bff}_{u_{NT}^*})-\frac{1}{2NT}(\bfY-\widehat{\bff})^\top M_H (\bfY-\widehat{\bff})\nonumber\\
	&=\frac{b_{NT}^2}{2NT}\bfu_{NT}^{*\top}M_H \bfu_{NT}^*-\frac{b_{NT}}{NT}\bfu_{NT}^{*\top}M_H (\bfY-\widehat{\bff}),
\end{align*}
and 
\begin{align*}
S_2&=\ev\bigg(Q_{NT}(\widehat{f}_{u_{NT}^*})-Q_{NT}(\widehat{f})\bigg)=\ev(S_1).
\end{align*}
Furthermore, by (\ref{eq:Delta:NT}), it follows that
\begin{align*}
	\Delta_{NT}(f_0)[\widehat{f}_{u_{NT}^*}-\widehat{f}]&=b_{NT}\Delta_{NT}(f_0)[u_{NT}^*]=-\frac{b_{NT}}{NT}\bfu_{NT}^{*\top}M_H(\bfY-\bff_0)=-\frac{b_{NT}}{NT}\bfu_{NT}^{*\top}M_H\bfepsilon
\end{align*}
which, by Assumption \ref{Assumption:common}.\ref{Ac:a2}  further implies
\begin{align*}
	S_3&=-\frac{b_{NT}}{NT}\bfu_{NT}^{*\top}M_H\bfepsilon+\ev\bigg(\frac{b_{NT}}{NT}\bfu_{NT}^{*\top}M_H\bfepsilon\bigg)=-\frac{b_{NT}}{NT}\bfu_{NT}^{*\top}M_H\bfepsilon.
\end{align*}
Combining above equations, we have
\begin{align*}
	S_{NT}&=\frac{b_{NT}^2}{2NT}\bfu_{NT}^{*\top}M_H \bfu_{NT}^*-\ev\bigg(\frac{b_{NT}^2}{2NT}\bfu_{NT}^{*\top}M_H \bfu_{NT}^*\bigg)+\frac{b_{NT}}{NT}\bfu_{NT}^{*\top}M_H(\widehat{\bff}-\bff_0)-\ev\bigg(\frac{b_{NT}}{NT}\bfu_{NT}^{*\top}M_H(\widehat{\bff}-\bff_0)\bigg)\nonumber\\
	&=O_P(b_{NT}^2)+o_P(b_{NT}N^{-1/2}T^{-1/2}),
\end{align*}
here the last equation is due to Lemmas \ref{lemma:SNT:rate:1} and \ref{lemma:SNT:rate:2} (Lemmas \ref{lemma:SNT:rate:1:large:T} and \ref{lemma:SNT:rate:2:large:T} for diverging $T$). Furthermore, by (\ref{eq:Q:f:minus:Q:f0}), it follows that
\begin{align*}
	Q(\widehat{f}_{u_{NT}^*})-Q(\widehat{f})&=Q(\widehat{f}_{u_{NT}^*})-Q(f_0)+Q(f_0)-Q(\widehat{f})\nonumber\\
	&=\frac{1}{2}\|\widehat{f}_{u_{NT}^*}-f_0\|^2-\frac{1}{2}\|\widehat{f}-f_0\|^2\nonumber\\
	&=\frac{1}{2}\|\widehat{f}+b_{NT}u_{NT}^*-f_0\|^2-\frac{1}{2}\|\widehat{f}-f_0\|^2\nonumber\\
	&=b_{NT}^2\|u_{NT}^*\|^2+b_{NT}\langle u_{NT}^*, \widehat{f}-f_0 \rangle \nonumber\\
	&=b_{NT}\langle u_{NT}^*, \widehat{f}-f_0 \rangle+b_{NT}^2.
\end{align*}
Above equations, (\ref{eq:lemma:general:asymptotoic:expansion:eq1}), and (\ref{eq:lemma:general:asymptotoic:expansion:eq2}) together imply that
\begin{align}
	0&\leq  b_{NT}\langle u_{NT}^*, \widehat{f}-f_0 \rangle+b_{NT}^2+\Delta_{NT}(f_0)[\widehat{f}_{u_{NT}^*}-\widehat{f}]+O_P(b_{NT}^2)+o_P(b_{NT}N^{-1/2}T^{-1/2})\nonumber\\
	&= b_{NT}\langle u_{NT}^*, \widehat{f}-f_0 \rangle+b_{NT} \Delta_{NT}(f_0)[u_{NT}^*]+O_P(b_{NT}^2)+o_P(b_{NT}N^{-1/2}T^{-1/2}).\label{eq:lemma:general:asymptotoic:expansion:eq3}
\end{align}
Replacing $u_{NT}^*$ by $-u_{NT}^*$, we can obtain that
\begin{align}
		0&\leq - b_{NT}\langle u_{NT}^*, \widehat{f}-f_0 \rangle-b_{NT}\Delta_{NT}(f_0)[u_{NT}^*]+O_P(b_{NT}^2)+o_P(b_{NT}N^{-1/2}T^{-1/2}).\label{eq:lemma:general:asymptotoic:expansion:eq4}
\end{align}
Combining (\ref{eq:lemma:general:asymptotoic:expansion:eq3}), and (\ref{eq:lemma:general:asymptotoic:expansion:eq4}), we conclude that
\begin{align}
	\langle u_{NT}^*, \widehat{f}-f_0 \rangle&=-\Delta_{NT}(f_0)[u_{NT}^*]+O_P(b_{NT})+o_P(N^{-1/2}T^{-1/2})\nonumber\\
	&=-\Delta_{NT}(f_0)[u_{NT}^*]+o_P(N^{-1/2}T^{-1/2}).\label{eq:lemma:general:asymptotoic:expansion:eq5}
\end{align}
By definition of $f_{0,N}$ and Lemma \ref{lemma:approximation:error}, it follows that
\begin{align*}
|\langle f_{0,N}-f_0, u_{NT}^* \rangle| &\leq \|f_{0,N}-f_0\| \|u_{NT}^*\| \leq \|f_{0,N}-f_0\|_\infty\|u_{NT}^*\| \leq \|g_*-f_0\|\|u_{NT}^*\| = O(\rho_{NT}),
\end{align*}
where we use the fact that $\|u_{NT}^*\|=O(1)$ in Condition \ref{Condition:C1}.\ref{C1:d}.
As a consequence of (\ref{eq:lemma:general:asymptotoic:expansion:eq5}), above inequality, and the fact that $\rho_{NT}=o(N^{-1/2}T^{-1/2})$, we have
\begin{align}
	\langle u_{NT}^*, \widehat{f}-f_{0,N} \rangle&=\langle u_{NT}^*, \widehat{f}-f_0 \rangle+\langle u_{NT}^*,  f_{0,N}-f_0 \rangle\nonumber\\
	&=-\Delta_{NT}(f_0)[u_{NT}^*]+o_P(N^{-1/2}T^{-1/2}).\label{eq:lemma:general:asymptotoic:expansion:eq6}
\end{align}
Moreover, by Conditions \ref{Condition:C1}.\ref{C1:b}, \ref{Condition:C1}.\ref{C1:d}, (\ref{eq:extension:functional}), and (\ref{eq:v:N:star}), it follows that
\begin{align}
	&\frac{F(\widehat{f})-F(f_{0,N})}{\|v_{NT}^*\|_{\textrm{sd}}}\nonumber\\
	=&\frac{F(\widehat{f})-F(f_0)-\frac{\partial F(f_0)}{\partial f}[\widehat{f}-f_0]}{\|v_{NT}^*\|_{\textrm{sd}}}-\frac{F(f_{0,N})-F(f_0)-\frac{\partial F(f_0)}{\partial f}[f_{0,N}-f_0]}{\|v_{NT}^*\|_{\textrm{sd}}}\nonumber\\
	&+\frac{\frac{\partial F(f_0)}{\partial f}[\widehat{f}-f_0]-\frac{\partial F(f_0)}{\partial f}[f_{0,N}-f_0]}{\|v_{NT}^*\|_{\textrm{sd}}}\nonumber\\
	=&o_P(N^{-1/2}T^{-1/2})+o_P(N^{-1/2}T^{-1/2})+\frac{ \langle \widehat{f}-f_{0,N}, v_{NT}^*\rangle}{\|v_{NT}^*\|_{\textrm{sd}}}\nonumber\\
	=& \langle \widehat{f}-f_{0,N}, u_{NT}^*\rangle+o_P(N^{-1/2}T^{-1/2}).\label{eq:lemma:general:asymptotoic:expansion:eq7}
\end{align}
Finally, in the view of (\ref{eq:lemma:general:asymptotoic:expansion:eq6}), and  (\ref{eq:lemma:general:asymptotoic:expansion:eq7}), we have
\begin{align*}
	\frac{\sqrt{NT}\bigg(F(\widehat{f})-F(f_{0,N})\bigg)}{\|v_{NT}^*\|_{\textrm{sd}}}=-\sqrt{N}\Delta_{NT}(f_0)[u_{NT}^*]+o_P(1).
\end{align*}
Moreover, by Conditional \ref{Condition:C1}.\ref{C1:d}, it follows that
\begin{align*}
	\frac{\sqrt{NT}\bigg(F(\widehat{f})-F(f_{0})\bigg)}{\|v_{NT}^*\|_{\textrm{sd}}}=-\sqrt{N}\Delta_{NT}(f_0)[u_{NT}^*]+o_P(1).
\end{align*}
\end{proof}

\begin{lemma}\label{lemma:SNT:rate:3}
Suppose one of the following conditions is satisfied:
\begin{enumerate}
\item Assumptions \ref{Assumption:A1}, \ref{Assumption:common} hold and $d_{NT}A_{NT}^2=o(N)$, $\gamma_{NT}=o(\lambda_{NT})$;
\item Assumptions \ref{Assumption:A2}, \ref{Assumption:common} and $d_{NT}A_{NT}^4=o(NT)$, $d_{NT}A_{NT}^2=o(N)$, $A_{NT}^2=o(T)$, $\gamma_{NT}=o(\lambda_{NT})$.
\end{enumerate}
Then with probability approaching one, it follows that
\begin{align*}
	\sum_{j=1}^pp_{\lambda_{NT}}\bigg(\|\widehat{f}_{j,u_{NT}^*,\sim}\|_{NT}\bigg)-\sum_{j=1}^pp_{\lambda_{NT}}\bigg(\|\widehat{f}_{j,\sim}\|_{NT}\bigg)=0
\end{align*}
\end{lemma}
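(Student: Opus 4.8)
The plan is to prove, on an event of probability tending to one, that the two sums coincide term by term, exploiting that the perturbation direction $u_{NT}^*$ lives in $\Theta_{NT}^0$ and that the SCAD penalty is flat above $\kappa\lambda_{NT}$. First I would record the structural facts. Writing $\widehat{f}_{u_{NT}^*}=\widehat{f}+b_{NT}u_{NT}^*$ and using linearity of the splitting $g=g_{-}+g_{\sim}$ gives $\widehat{f}_{j,u_{NT}^*,\sim}=\widehat{f}_{j,\sim}+b_{NT}u_{NT,j,\sim}^*$ for each $j$. Since $u_{NT}^*=v_{NT}^*/\|v_{NT}^*\|_{\textrm{sd}}$ with $v_{NT}^*\in\mcV_{NT}$, and $\mcV_{NT}$ is the closed span of $\{f-f_{0,N}:f\in\mcB_{NT}\}$ while $\mcB_{NT}$ and $f_{0,N}$ both lie in the linear space $\Theta_{NT}^0$, we get $u_{NT}^*\in\Theta_{NT}^0$, hence $u_{NT,j,\sim}^*=0$ for $j=1,\dots,d$. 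The hypotheses of the lemma imply those of Theorems \ref{thm:rate:of:convergence:together} and \ref{thm:selection:consistency:together}, so with probability approaching one $\widehat{f}\in\Theta_{NT}^0$; combined with the previous identity this yields $\widehat{f}_{j,\sim}=\widehat{f}_{j,u_{NT}^*,\sim}=0$ for $j\le d$, so these terms contribute $p_{\lambda_{NT}}(0)=0$ to both sums.

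Next I would handle $j=d+1,\dots,p$ using that $p_{\lambda}(z)=(\kappa+1)\lambda^2/2$ for $z\ge\kappa\lambda$: it suffices to show $\min\{\|\widehat{f}_{j,\sim}\|_{NT},\|\widehat{f}_{j,u_{NT}^*,\sim}\|_{NT}\}>\kappa\lambda_{NT}$ w.p.a.\ one. For the first quantity I would reuse the lower bound already obtained inside the proof of Theorem \ref{thm:selection:consistency:together} via Lemma \ref{lemma:lower:bound:non:linear}: on the relevant event (contained in $\Omega_N$ for fixed $T$, in $\Omega_{NT}$ for diverging $T$) one has $\|\widehat{f}_{j,\sim}\|_{NT}\ge\tfrac12\|\widehat{f}_{j,\sim}\|\ge\tfrac14\sqrt{a_6/(2a_1)}=:c_0>0$ (with $a_3$ replacing $a_1$ when $T$ diverges). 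For the second quantity I would control the perturbation: on $\Omega_N$ (resp.\ $\Omega_{NT}$), $\|b_{NT}u_{NT,j,\sim}^*\|_{NT}\le 2b_{NT}\|u_{NT,j,\sim}^*\|$, and the orthogonality of the centralized spline decomposition together with Lemma \ref{lemma:sum:of:l2:norm:bound} and the norm equivalences of Lemmas \ref{lemma:rate:non:linear:part} and \ref{lemma:expectation:sample:variance:large:T} gives $\|u_{NT,j,\sim}^*\|\le C\|u_{NT}^*\|$. Since $b_{NT}=o((NT)^{-1/2})=o(1)$ and $\lambda_{NT}\to0$, once $\|u_{NT}^*\|=O(1)$ is available this yields $\|\widehat{f}_{j,u_{NT}^*,\sim}\|_{NT}\ge c_0-o(1)>\kappa\lambda_{NT}$ for all large $N$ (resp.\ $N,T$), so the $j$-th terms of both sums equal $(\kappa+1)\lambda_{NT}^2/2$. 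Intersecting with the events from Theorems \ref{thm:rate:of:convergence:together}, \ref{thm:selection:consistency:together} and with $\Omega_N$ (resp.\ $\Omega_{NT}$), and combining the two ranges of $j$, gives the claimed identity on an event of probability tending to one.

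The only step that is not pure bookkeeping is the uniform bound $\|u_{NT}^*\|=O(1)$ — controlling the size of the Riesz-type perturbation direction — and this is supplied directly by Condition \ref{Condition:C1}.\ref{C1:d}, since $\|u_{NT}^*\|=\|v_{NT}^*\|/\|v_{NT}^*\|_{\textrm{sd}}$; everything else is a combination of the already-established selection consistency and the local constancy of SCAD, requiring no new probabilistic estimate. The fixed-$T$ and diverging-$T$ cases run verbatim the same way, the only substitutions being $\Omega_N\rightsquigarrow\Omega_{NT}$, Lemma \ref{lemma:uniform:equivalence:empirical:population:norm}$\rightsquigarrow$Lemma \ref{lemma:uniform:equivalence:empirical:population:norm:large:T} to guarantee $\pr(\Omega_{NT})\to1$, and the corresponding half of Lemma \ref{lemma:lower:bound:non:linear}.
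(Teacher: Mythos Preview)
Your proposal is correct and follows essentially the same route as the paper: split the sum into $j\le d$ (where membership in $\Theta_{NT}^0$ kills the nonlinear parts) and $j>d$ (where Lemma~\ref{lemma:lower:bound:non:linear} pushes the nonlinear norms above the SCAD plateau). The only cosmetic difference is that for $j>d$ the paper applies Lemma~\ref{lemma:lower:bound:non:linear} \emph{directly} to $\widehat{f}_{u_{NT}^*}$ (using that $\widehat{f}_{u_{NT}^*}\in\mcB_{NT}$ w.p.a.\ one), rather than your explicit perturbation bound $\|\widehat f_{j,u_{NT}^*,\sim}\|_{NT}\ge \|\widehat f_{j,\sim}\|_{NT}-2b_{NT}\|u_{NT,j,\sim}^*\|$; both arguments ultimately rely on $\|u_{NT}^*\|=O(1)$ from Condition~\ref{Condition:C1}\ref{C1:d}, so neither is more self-contained than the other.
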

\begin{proof}[Proof of Lemma \ref{lemma:SNT:rate:3}]
By Theorems \ref{thm:rate:of:convergence:together} and \ref{thm:selection:consistency:together}, and definition of $\widehat{f}_{u_{NT}^*}$, we have that with probability approaching one, $\widehat{f},\; \widehat{f}_{u_{NT}^*} \in \mcB_{NT}$. By Lemma \ref{lemma:uniform:equivalence:empirical:population:norm} (Lemma \ref{lemma:uniform:equivalence:empirical:population:norm:large:T} for diverging $T$) and Lemma \ref{lemma:lower:bound:non:linear} , it follows that with probability approaching one, 
\begin{align*}
	\|\widehat{f}_{j,u_{NT}^*,\sim}\|_{NT}\geq \frac{1}{4}\sqrt{\frac{a_6}{2a_1}},\;\; \textrm{ and }\;\; \|\widehat{f}_{j,\sim}\|_{NT}\geq \frac{1}{4}\sqrt{\frac{a_6}{2a_1}}\;\;\textrm{ for }\;\; j=1,\ldots, d.
\end{align*}
Finally, we finish the proof by noticing $p_{\lambda_{NT}}(z)=(\kappa+1)\lambda_{NT}^2/2$ if $z>\kappa\lambda_{NT}$.
\end{proof}

\begin{lemma}\label{lemma:SNT:rate:1}
Suppose Assumptions \ref{Assumption:A1}, and \ref{Assumption:common} hold. If  $A_{NT,0}^2=o(N)$, then
\begin{align*}
	\frac{1}{2NT}\bfu_{NT}^{*\top}M_H \bfu_{NT}^*-\ev\bigg(\frac{1}{2NT}\bfu_{NT}^{*\top}M_H \bfu_{NT}^*\bigg)=O_P(1)
\end{align*}
\end{lemma}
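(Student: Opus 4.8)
The plan is to recognise the quantity in the statement as the centring of an empirical quadratic form and to control it crudely, via nonnegativity together with Markov's inequality. Since only $O_P(1)$ — and not a rate — is claimed, this deliberately avoids invoking the uniform norm‑equivalence of Lemma~\ref{lemma:uniform:equivalence:empirical:population:norm}, whose hypothesis $d_{NT,0}A_{NT,0}^2=o(N)$ is stronger than the present $A_{NT,0}^2=o(N)$.

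First I would rewrite the form analytically. Because $M_H=I_N\otimes H$ with $H=I_T-\frac1T uu^\top$ idempotent, one has $\frac{1}{NT}\bfg^\top M_H\bff=\langle g,f\rangle_{NT}$ for all $f,g$, so
\[
\frac{1}{2NT}\bfu_{NT}^{*\top}M_H\bfu_{NT}^*=\tfrac12\|u_{NT}^*\|_{NT}^2\ge 0,\qquad
\ev\!\Big(\tfrac{1}{2NT}\bfu_{NT}^{*\top}M_H\bfu_{NT}^*\Big)=\tfrac12\|u_{NT}^*\|^2,
\]
whence the quantity in the statement equals $\tfrac12\big(\|u_{NT}^*\|_{NT}^2-\|u_{NT}^*\|^2\big)$. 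Note that $v_{NT}^*$, and hence $u_{NT}^*=v_{NT}^*/\|v_{NT}^*\|_{\textrm{sd}}$, are nonrandom elements of $\Theta_{NT}^0$, being defined through the deterministic inner products $\langle\cdot,\cdot\rangle$ and $\langle\cdot,\cdot\rangle_{\textrm{sd}}$; consequently $\ev(\|u_{NT}^*\|_{NT}^2)=\|u_{NT}^*\|^2$ is a deterministic sequence.

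Next I would bound $\|u_{NT}^*\|^2$. By construction $\|u_{NT}^*\|_{\textrm{sd}}=1$. Conditioning on the regressors and using Assumption~\ref{Assumption:common}.\ref{Ac:a2} (the eigenvalues of $\ev(\bfepsilon\bfepsilon^\top\mid\mathbb{Z})$ lie in $[a_2^{-1},a_2]$) together with $M_HM_H=M_H$ gives, for every $g$,
\[
a_2^{-1}\|g\|^2\le \|g\|_{\textrm{sd}}^2=\frac{1}{NT}\ev\big(\bfg^\top M_H\,\ev(\bfepsilon\bfepsilon^\top\mid\mathbb{Z})\,M_H\bfg\big)\le a_2\|g\|^2 .
\]
In particular $\|u_{NT}^*\|^2\le a_2\|u_{NT}^*\|_{\textrm{sd}}^2=a_2$. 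Therefore $\frac{1}{2NT}\bfu_{NT}^{*\top}M_H\bfu_{NT}^*$ is a nonnegative random variable with expectation at most $a_2/2$, so for any $\varepsilon>0$ Markov's inequality gives $\pr\big(\tfrac12\|u_{NT}^*\|_{NT}^2>a_2/(2\varepsilon)\big)\le\varepsilon$; hence $\tfrac12\|u_{NT}^*\|_{NT}^2=O_P(1)$, and subtracting the bounded deterministic term $\tfrac12\|u_{NT}^*\|^2$ leaves $O_P(1)$, as claimed.

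There is no substantial obstacle: the argument rests only on nonnegativity of the empirical norm, the elementary equivalence $\|\cdot\|_{\textrm{sd}}\asymp\|\cdot\|$ coming from the error‑covariance eigenbounds, and Markov's inequality. The mild condition $A_{NT,0}^2=o(N)$ is not binding for this $O_P(1)$ conclusion; it is retained only for consistency with the surrounding lemmas, where (via Lemma~\ref{lemma:uniform:equivalence:empirical:population:norm}) it ensures that $\langle\cdot,\cdot\rangle_{NT}$ is a genuine inner product on $\Theta_{NT}^0$ on an event of probability tending to one. The one point worth flagging is that, were a vanishing bound rather than $O_P(1)$ required, one would instead have to compute $\textrm{Var}(\|u_{NT}^*\|_{NT}^2)$ and invoke the uniform norm‑equivalence, which is precisely where the stronger dimension conditions would enter.
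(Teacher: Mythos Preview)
Your argument is correct. The paper's own proof is omitted, saying only that it ``can be proved similarly using Bernstein inequality as in Lemma~\ref{lemma:SNT:rate:2}''. That route would write $\|u_{NT}^*\|_{NT}^2-\|u_{NT}^*\|^2=\frac{1}{N}\sum_i\big(\zeta_i(u_{NT}^*,u_{NT}^*)-\ev\zeta_i(u_{NT}^*,u_{NT}^*)\big)$, bound each summand by $4\|u_{NT}^*\|_\infty^2\le CA_{NT,0}^2$, and apply Bernstein's inequality; under $A_{NT,0}^2=o(N)$ this actually yields the sharper conclusion $o_P(1)$. Your approach is genuinely different and more elementary: you exploit that $\tfrac12\|u_{NT}^*\|_{NT}^2\ge 0$ has bounded mean (via $\|u_{NT}^*\|^2\le a_2\|u_{NT}^*\|_{\textrm{sd}}^2=a_2$, which follows from Assumption~\ref{Assumption:common}.\ref{Ac:a2} and idempotence of $M_H$) and invoke Markov's inequality. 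This buys simplicity and, as you note, does not actually use the hypothesis $A_{NT,0}^2=o(N)$; the price is that it cannot deliver anything better than $O_P(1)$, which, however, is exactly the claim. Since in Lemma~\ref{lemma:general:asymptotoic:expansion} this term enters only multiplied by $b_{NT}^2=o((NT)^{-1})$, $O_P(1)$ is all that is needed, so your shortcut is entirely adequate.
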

\begin{proof}[Proof of Lemma \ref{lemma:SNT:rate:1}]
The result can be proved similarly using Bernstein inequality as in Lemma \ref{lemma:SNT:rate:2} and we omit the proof.
\end{proof}

\begin{lemma}\label{lemma:SNT:rate:2}
Suppose Assumptions \ref{Assumption:A1}, and \ref{Assumption:common} hold. If  $d_{NT,0}A_{NT,0}^2\gamma_{NT}^2=o(1)$, then
\begin{align*}
	\frac{1}{NT}\bfu_{NT}^{*\top}M_H(\widehat{\bff}-\bff_0)-\ev\bigg(\frac{1}{NT}\bfu_{NT}^{*\top}M_H(\widehat{\bff}-\bff_0)\bigg)=o_P(N^{-1/2}T^{-1/2})
\end{align*}
\end{lemma}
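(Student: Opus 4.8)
The plan is to reduce the claim to a discrepancy between the empirical and population inner products evaluated at the random argument $\widehat f-f_0$. Writing $\bfg$ for the value vector of $g=u_{NT}^*$ we have $\frac{1}{NT}\bfu_{NT}^{*\top}M_H(\widehat\bff-\bff_0)=\langle u_{NT}^*,\widehat f-f_0\rangle_{NT}$, and the term subtracted in the statement is, in the notation $\langle\cdot,\cdot\rangle=\ev(\langle\cdot,\cdot\rangle_{NT})$ of the paper, the population bilinear form $\langle u_{NT}^*,\widehat f-f_0\rangle$ applied to the random $\widehat f-f_0$ (this is exactly the discrepancy that survives in the computation of $S_{NT}$ inside Lemma~\ref{lemma:general:asymptotoic:expansion}). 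So it suffices to prove $\langle u_{NT}^*,\widehat f-f_0\rangle_{NT}-\langle u_{NT}^*,\widehat f-f_0\rangle=o_P((NT)^{-1/2})$. I would then split $\widehat f-f_0=(\widehat f-g_*)+(g_*-f_0)$, where $g_*\in\Theta_{NT}^0$ is the spline approximant of Lemma~\ref{lemma:approximation:error} with $\|g_*-f_0\|_\infty\le\rho_{NT}$, and, using bilinearity of both inner products in the second slot, control the two resulting discrepancies separately.

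For the deterministic piece $\langle u_{NT}^*,g_*-f_0\rangle_{NT}-\langle u_{NT}^*,g_*-f_0\rangle$, observe that it is a centered average over the independent index $i$ of $\zeta_i(u_{NT}^*,g_*-f_0)$. By Cauchy--Schwarz and $\|g_*-f_0\|_\infty\le\rho_{NT}$, $\zeta_i(u_{NT}^*,g_*-f_0)^2\le\zeta_i(u_{NT}^*,u_{NT}^*)\,\zeta_i(g_*-f_0,g_*-f_0)\le 4\rho_{NT}^2\,\zeta_i(u_{NT}^*,u_{NT}^*)$, and since $\|u_{NT}^*\|_2=O(1)$ (which follows from Condition~\ref{Condition:C1}.\ref{C1:d} together with Lemma~\ref{lemma:expectation:sample:variance}), Lemma~\ref{lemma:expectation:sample:variance} gives $\ev\,\zeta_i(u_{NT}^*,g_*-f_0)^2=O(\rho_{NT}^2)$. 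Hence this piece has variance $O(\rho_{NT}^2/N)$, so it is $O_P(\rho_{NT}(NT)^{-1/2})=o_P((NT)^{-1/2})$, because the hypothesis $d_{NT,0}A_{NT,0}^2\gamma_{NT}^2=o(1)$ forces $\gamma_{NT}\to0$ and hence $\rho_{NT}\to0$.

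For the stochastic piece $\langle u_{NT}^*,\widehat f-g_*\rangle_{NT}-\langle u_{NT}^*,\widehat f-g_*\rangle$, I would use that by Theorem~\ref{thm:selection:consistency:together} one has $\widehat f-g_*\in\Theta_{NT}^0$ with probability approaching one, and by Theorem~\ref{thm:rate:of:convergence:together} and the triangle inequality $\|\widehat f-g_*\|=O_P(\gamma_{NT})$. On the event $\{\widehat f-g_*\in\Theta_{NT}^0\}$,
\begin{align*}
\big|\langle u_{NT}^*,\widehat f-g_*\rangle_{NT}-\langle u_{NT}^*,\widehat f-g_*\rangle\big|\le\bigg(\sup_{h\in\Theta_{NT}^0}\frac{|\langle u_{NT}^*,h\rangle_{NT}-\langle u_{NT}^*,h\rangle|}{\|h\|}\bigg)\,\|\widehat f-g_*\|.
\end{align*}
Re-running the argument of Lemma~\ref{lemma:bounded:sequence:two:norm:difference} with $\Theta_{NT}^0$ in place of $\Theta_{NT}$ (its $\langle\cdot,\cdot\rangle$-orthonormal basis has $d_{NT,0}$ elements) bounds the expectation of the squared supremum by $4\|u_{NT}^*\|_\infty^2\,d_{NT,0}/N$, so the supremum is $O_P(\|u_{NT}^*\|_\infty\sqrt{d_{NT,0}/N})=O_P(A_{NT,0}\sqrt{d_{NT,0}/N})$, where I used $\|u_{NT}^*\|_\infty\le A_{NT,0}\|u_{NT}^*\|_2=O(A_{NT,0})$ from Lemma~\ref{lemma:l2:norm:sup:norm}. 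Multiplying, the stochastic piece is $O_P(A_{NT,0}\sqrt{d_{NT,0}/N}\,\gamma_{NT})$, which is $o_P((NT)^{-1/2})$ precisely because $d_{NT,0}A_{NT,0}^2\gamma_{NT}^2=o(1)$ (recall $T$ is fixed here, so $(NT)^{-1/2}\asymp N^{-1/2}$). Adding the two pieces gives the claim; the low-probability event $\{\widehat f-g_*\notin\Theta_{NT}^0\}$ is irrelevant for an $o_P$ conclusion.

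The main obstacle is the stochastic piece, and it is this step that dictates the exact form of the hypothesis: one must control the empirical-versus-population discrepancy of the linear functional $\langle u_{NT}^*,\cdot\rangle_{NT}$ uniformly over a space of diverging dimension, and the decisive economy is that selection consistency places $\widehat f-g_*$ inside $\Theta_{NT}^0$, so the governing dimension is $d_{NT,0}$ rather than $d_{NT}$; the rest is bookkeeping to line up the factors $\|u_{NT}^*\|_\infty\asymp A_{NT,0}$, $\sqrt{d_{NT,0}/N}$ and $\gamma_{NT}$ so that their product is $o(N^{-1/2})$. The only other subtlety is to read the ``expectation'' in the statement as the bilinear form $\langle\cdot,\cdot\rangle$ acting on the random $\widehat f-f_0$, rather than a full expectation over all the data, which is what makes the quantity coincide with the term actually needed in Lemma~\ref{lemma:general:asymptotoic:expansion}.
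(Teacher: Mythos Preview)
Your proof is correct and takes a genuinely different, more economical route than the paper. The paper establishes the result by a chaining argument: it defines the localized class $\mcF_N=\{f\in\Theta_{NT}^0:\|f-f_0\|_2\le C\gamma_{NT}\}$, shows the increment bounds $|X_i(f_1)-X_i(f_2)|\le 8A_{NT,0}^2\|u_{NT}^*\|_2\|f_1-f_2\|_2$ and $\mathrm{Var}(X_i(f_1)-X_i(f_2))\le 8a_1A_{NT,0}^2\|u_{NT}^*\|_2^2\|f_1-f_2\|_2^2$, applies Bernstein's inequality and a geometric chaining over nested $\delta_k$-nets of $\mcF_N$, and sums the resulting exponential tails to obtain $\sup_{f\in\mcF_N}|S_N(f)|=o_P(N^{-1/2})$ directly. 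Only at the very end does it invoke $\widehat f\in\mcF_N$ with probability approaching one.

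Your approach replaces this by a factorization: split $\widehat f-f_0=(\widehat f-g_*)+(g_*-f_0)$, handle the deterministic remainder by a bare variance computation, and bound the random piece by $R_N\cdot\|\widehat f-g_*\|$ where $R_N=\sup_{h\in\Theta_{NT}^0}|\langle u_{NT}^*,h\rangle_{NT}-\langle u_{NT}^*,h\rangle|/\|h\|$ is controlled in $L^2$ by the orthonormal-basis argument of Lemma~\ref{lemma:bounded:sequence:two:norm:difference} (rerun over $\Theta_{NT}^0$, dimension $d_{NT,0}$). Because the supremum is over the whole sieve space rather than a $\gamma_{NT}$-ball, the random choice of $\widehat f-g_*$ causes no trouble, and multiplying the two $O_P$ rates reproduces exactly the stated hypothesis $d_{NT,0}A_{NT,0}^2\gamma_{NT}^2=o(1)$. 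This is considerably shorter and avoids chaining altogether; the paper's route has the advantage of delivering exponential tail bounds (rather than second moments) and parallels the structure needed for the diverging-$T$ companion Lemma~\ref{lemma:SNT:rate:2:large:T}, but for the present fixed-$T$ statement your argument is sufficient. Your final remark about reading the ``expectation'' as the population bilinear form $\langle\cdot,\cdot\rangle$ evaluated at the random $\widehat f-f_0$ is also the correct interpretation, consistent with how the paper defines $S_N(f)$ for fixed $f$ and then substitutes $f=\widehat f$.
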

\begin{proof}[Proof of Lemma \ref{lemma:SNT:rate:2}]
Let $\mcF_N=\{f\in \Theta_{NT}^0 \;|\; \|f-f_0\|_2\leq C\gamma_{NT}\}$. Notice if $f\in \mcF_N$, by Lemmas \ref{lemma:l2:norm:sup:norm} and \ref{lemma:approximation:error}, it follows that
\begin{align*}
	\|f-f_0\|_\infty&\leq \|f-g_*\|_\infty+\|g_*-f_0\|_\infty\nonumber\\
	&\leq A_{NT,0}\|f-g_*\|_2+\rho_{NT}\nonumber\\
	&\leq CA_{NT,0}\gamma_{NT}+\rho_{NT}\nonumber\\
	&\leq 2CA_{NT,0}\gamma_{NT},
\end{align*}
where we use the fact that $\rho_{NT}=o(A_{NT,0}\gamma_{NT})$. Similar to Lemma \ref{lemma:uniform:equivalence:empirical:population:norm}, we define
\begin{align*}
	\xi_i(f)=\zeta_{i}(u_{NT}^*, f-f_0),\;\; X_i(f)={\xi_i(f)-\ev(\xi_i(f))},\textrm{ and }\; S_N(f)=\frac{1}{N}\sum_{i=1}^NX_i(f).
\end{align*}
It is not difficult to verify the following equality:
\begin{align*}
	S_N(f)=\frac{1}{NT}\bfu_{NT}^{*\top}M_H({\bff}-\bff_0)-\ev\bigg(\frac{1}{NT}\bfu_{NT}^{*\top}M_H({\bff}-\bff_0)\bigg).
\end{align*}
Similar to the proof of Lemma \ref{lemma:uniform:equivalence:empirical:population:norm}, we also can show that for $f_1, f_2\in \mcF_N$, the following holds
\begin{align*}
	|\xi_i(f_1)-\xi_i(f_2)|\leq 4A_{NT,0}^2\|u_{NT}^*\|_2\|f_1-f_2\|_2,
\end{align*}
which further leads to 
\begin{align*}
	|X_i(f_1)-X_i(f_2)|\leq 8A_{NT,0}^2\|u_{NT}^*\|_2\|f_1-f_2\|_2.
\end{align*}
Moreover, we also have
\begin{align*}
	\textrm{Var}\bigg(X_i(f_1)-X_i(f_2)\bigg)\leq 8 a_1A_{NT,0}^2\|u_{NT}^*\|_2^2\|f_1-f_2\|_2^2.
\end{align*}
By Bernstein inequality, it follows that
\begin{eqnarray}
	&&\pr\bigg(\bigg|S_N(\theta_1)-S_N(\theta_2\bigg|>\frac{xs}{\sqrt{N}} \bigg)\nonumber\\
	&\leq& 2\exp\bigg(-\frac{x^2s^2}{16 a_1A_{NT,0}^2\|u_{NT}^*\|_2^2\|f_1-f_2\|_2^2+\frac{8A_{NT,0}^2}{3\sqrt{N}}\|u_{NT}^*\|_2\|f_1-f_2\|_2 xs}\bigg)\nonumber\\
	&\leq& 2\exp\bigg(-\frac{x^2s^2}{32a_1A_{NT,0}^2\|u_{NT}^*\|_2^2\|f_1-f_2\|_2^2}\bigg)+2\exp\bigg(-\frac{\sqrt{N}xs}{8A_{NT,0}^2\|u_{NT}^*\|_2\|f_1-f_2\|_2}\bigg).\nonumber\\\label{eq:lemma:SNT:rate:2:eq1}
\end{eqnarray}
Let $\delta_k=3^{-k-k_0}$ for $k\geq 0$ and some $k_0$ such that $C\gamma_{NT} \leq 3^{-k_0}\leq 2C\gamma_{NT}$. For sufficient large integer $K$, which will be specified later, let $\{0\}=\mcH_0\subset \mcH_1 \ldots \mcH_K$ be a sequence of subsets of $\mcF_N$ such that $\min_{f^*\in \mcH_k}\|f^*-f\|_2\leq \delta_k$ for all $f \in \mcF_N$. Moreover the subsets $\mcH_K$ is chosen inductively such that two different elements in $\mcH_k$ is at least $\delta_k$ apart. 

By definition, the cardinality $\#(\mcH_k)$ of $\mcH_k$ is bounded by the $\delta_k/2$-covering number $D(\delta_k/2, \mcF_N, \|\cdot\|_2)$. Therefore, we have
\begin{eqnarray*}
	\#(\mcH_k)\leq D(\delta_k/2, \mcF_N, \|\cdot\|_2)\leq \bigg(\frac{8C\gamma_{NT}+\delta_k}{\delta_k}\bigg)^{d_{NT,0}},
\end{eqnarray*} 
where the last inequality is due to \cite{van2000}[Lemma 2.5] and the fact that $\mcF_N$ can be treated as a ball with radius $C\gamma_{NT}$ in $\mathbb{R}^{d_{NT,0}}$.  For any $f \in \mcF_N$, let $\tau_k(f)\in \mcH_k$  be a element such that $\|\tau_k(f)-f\|\leq \delta_k$, for $k=1,2,\ldots, K$. Now for any fixed $x>0$, by (\ref{eq:lemma:SNT:rate:2:eq1}) and the definition of $\tau_k$, we have
\begin{eqnarray}
	&&\pr\bigg(\sup_{f\in \mcF_N} |S_N(f)|>\frac{x}{\sqrt{N}}\bigg)\nonumber\\
	&=&\pr\bigg(\sup_{f\in \mcF_N} \bigg|S_N(f)-S_N\bigg(\tau_K(f)\bigg)\bigg|>\frac{x}{2^K\sqrt{N}}\bigg)\nonumber\\
	&&+\sum_{k=1}^K\pr\bigg(\sup_{f\in \mcF_N} \bigg|S_N\bigg(\tau_k\circ\ldots \circ \tau_K(f)\bigg)-S_N\bigg(\tau_{k-1}\circ\tau_k\circ\ldots \circ \tau_K(f)\bigg)\bigg|>\frac{x}{2^{k-1}\sqrt{N}}\bigg)\nonumber\\
	&\leq& \pr\bigg(\sup_{f\in \mcF_N}4A_{NT,0}^2\|u_{NT}^*\|_2\|f-\tau_K(f)\|_2>\frac{x}{2^K\sqrt{N}}\bigg)\nonumber\\ 
	&&+\sum_{k=1}^K\#(\mcH_k)\sup_{f\in \mcF_N}\pr\bigg( \bigg|S_N\bigg(\tau_k\circ\ldots \circ \tau_K(f)\bigg)-S_N\bigg(\tau_{k-1}\circ\tau_k\circ\ldots \circ \tau_K(f)\bigg)\bigg|>\frac{x}{2^{k-1}\sqrt{N}}\bigg)\nonumber\\
	&\leq& \pr\bigg(\sup_{f\in \mcF_N}4A_{NT,0}^2\|u_{NT}^*\|_2\delta_K>\frac{x}{2^K\sqrt{N}}\bigg)\nonumber\\ 
	&&+\sum_{k=1}^K\#(\mcH_k)\sup_{f\in \mcH_k}\pr\bigg( \bigg|S_N\bigg(f\bigg)-S_N\bigg(\tau_{k-1}(f)\bigg)\bigg|>\frac{x}{2^{k-1}\sqrt{N}}\bigg)\nonumber\\
	&\leq& \pr\bigg(\sup_{f\in \mcF_N}4A_{NT,0}^2\|u_{NT}^*\|_23^{-K-k_0}>\frac{x}{2^K\sqrt{N}}\bigg)\nonumber\\ 
	&&+\sum_{k=1}^\infty \bigg(\frac{8C\gamma_{NT}+\delta_k}{\delta_k}\bigg)^{d_{NT,0}}\sup_{f\in \mcH_k}2\exp\bigg(-\frac{x^2}{32a_1A_{NT,0}^22^{2(k-1)}\|u_{NT}^*\|_2^2\|f-\tau_{k-1}(f)\|_2^2}\bigg)\nonumber\\
	&&+\sum_{k=1}^\infty \bigg(\frac{8C\gamma_{NT}+\delta_k}{\delta_k}\bigg)^{d_{NT,0}}\sup_{f\in \mcH_k}2\exp\bigg(-\frac{\sqrt{N}x}{8A_{NT,0}^22^{k-1}\|u_{NT}^*\|_2\|f-\tau_{k-1}(f)\|_2}\bigg)\nonumber\\
	&\equiv&S_1+S_2+S_3. \label{eq:lemma:SNT:rate:2:eq2}
\end{eqnarray}

For fixed $x>0$, choose $K=K(N)$ large enough that $8C\gamma_{NT}A_{NT,0}^2\|u_{NT}^*\|_2\sqrt{N}(2/3)^K<x$, which can be done due to Condition \ref{Condition:C1}.\ref{C1:d} and Lemma \ref{lemma:expectation:sample:variance}. So by the fact that $3^{-k_0}\leq 2C\gamma_{NT}$, we have $S_1=0$. Moreover, direct examination leads to
\begin{align*}
	S_2&\leq \sum_{k=1}^\infty \bigg(\frac{8C\gamma_{NT}+\delta_k}{\delta_k}\bigg)^{d_{NT,0}}\sup_{f\in \mcH_k}2\exp\bigg(-\frac{x^2}{32a_1A_{NT,0}^22^{2(k-1)}\|u_{NT}^*\|_2^2\|f-\tau_{k-1}(f)\|_2^2}\bigg)\nonumber\\
	&\leq \sum_{k=1}^\infty \bigg(8\times 3^{k}+1\bigg)^{d_{NT,0}}2\exp\bigg(-\frac{x^2}{32a_1A_{NT,0}^22^{2(k-1)}\|u_{NT}^*\|_2^2\delta_k^2}\bigg)\nonumber\\
	&\leq 2\sum_{k=1}^\infty 3^{(k+2)d_{NT,0}}\exp\bigg(-\frac{x^23^{2k+2k_0}}{32a_1A_{NT,0}^22^{2(k-1)}\|u_{NT}^*\|_2^2}\bigg)\nonumber\\
	&\leq 2\sum_{k=1}^\infty 3^{(k+2)d_{NT,0}}\exp\bigg(-\frac{x^2}{32C^2a_1A_{NT,0}^2\gamma_{NT}^2\|u_{NT}^*\|_2^2}\bigg(\frac{9}{4}\bigg)^k\bigg)\nonumber\\
	&= 2\sum_{k=1}^\infty \exp\bigg((k+2)d_{NT,0}\log 3-\frac{x^2}{32C^2a_1A_{NT,0}^2\gamma_{NT}^2\|u_{NT}^*\|_2^2}\bigg(\frac{9}{4}\bigg)^k\bigg).\nonumber
\end{align*}
Let $N$ is large enough such that $[64C^2a_1\log 3] (k+2)d_{NT,0}A_{NT,0}^2\gamma_{NT}^2\|u_{NT}^*\|_2^2<(9/4)^kx^2$ for all $k\geq 1$. This is possible due to $d_{NT,0}A_{NT,0}^2\gamma_{NT}^2=o(1)$, and $\|u_{NT}^*\|_2=O(1)$. So it follows from the equality $e^{-x}\leq e^{-1}/x$ that
\begin{align*}
	S_2&\leq 2\sum_{k=1}^\infty \exp\bigg(-\frac{x^2}{64C^2a_1A_{NT,0}^2\gamma_{NT}^2\|u_{NT}^*\|_2^2}\bigg(\frac{9}{4}\bigg)^k\bigg)\nonumber\\
	&\leq128e^{-1}x^{-2}C^2a_1A_{NT,0}^2\gamma_{NT}^2\|u_{NT}^*\|_2^2\sum_{k=1}^\infty(4/9)^k=O(A_{NT,0}^2\gamma_{NT}^2)=o(1). \nonumber
\end{align*}
By similar technique, if $d_{NT,0}A_{NT,0}^2\gamma_{NT}=o(\sqrt{N})$, then we also can show that
\begin{align*}
	S_3=O\bigg(\frac{A_{NT,0}^2\gamma_{NT}}{\sqrt{N}}\bigg)=o(1).
\end{align*}
Since $x>0$ can be arbitrary, by the bounds of $S_1, S_2, S_3$, and (\ref{eq:lemma:SNT:rate:2:eq2}), we conclude that
\begin{align*}
	\sup_{f\in \mcF_N}\bigg|\frac{1}{NT}\bfu_{NT}^{*\top}M_H({\bff}-\bff_0)-\ev\bigg(\frac{1}{NT}\bfu_{NT}^{*\top}M_H({\bff}-\bff_0)\bigg)\bigg|=o_P(N^{-1/2})
\end{align*}
By Theorems \ref{thm:rate:of:convergence:together}, and \ref{thm:selection:consistency:together}, it follows that $\lim_{C\to \infty}P(\widehat{f}\in \mcF_N)=1$. Therefore, we finish the proof by noticing $T$ is fixed.
\end{proof}

\begin{lemma}\label{lemma:existence:good:basis}
 Under Assumption \ref{Assumption:common}, there exists a basis $\bfP_j(z)=(P_{j1}(z), \ldots, P_{j,M_j+r_j-1}(z))^\top \in \mathbb{R}^{M_j+r_j-1}$ of $\textrm{CSpl}(r_j, \bft_{M_j})$ such that
 \begin{align*}
 	c_3^{-1}h_j\leq \lambda_{\min}\bigg(\int_0^1 \bfP_j(z)\bfP_j^\top(z)dz\bigg)\leq \lambda_{\max}\bigg(\int_0^1 \bfP_j(z)\bfP_j^\top(z)dz\bigg)\leq c_3h_j,
 \end{align*}
 for some $c_3>1$ and all $j=d+1,\ldots, p$. Moreover, for fixed $z_0\in [0,1]$, there exists $c_4>1$, which is only relying on $z_0$ such that
\begin{align*}
c_4^{-1}\leq \bfP_j^\top(z_0)\bfP_j(z_0)\leq  c_4, \textrm{ for all } j=d+1, \ldots, p.	
\end{align*} 
\end{lemma}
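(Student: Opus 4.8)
The plan is to obtain $\bfP_j$ from the classical B-spline basis of the ambient polynomial spline space and then restrict to the mean-zero subspace by a coordinate-space orthogonal projection. First I would identify the space: since each $\psi_k$ and each $\widetilde\psi_k$ has zero integral over $[0,1]$, these $M_j+r_j-1$ functions are linearly independent, and $M_j+r_j-1=\dim S_j-1$, where $S_j:=S(r_j,\bft_{j,M_j})$ is the degree-$r_j$ polynomial spline space on the knots $\bft_{j,M_j}$ (dimension $n_j:=M_j+r_j$, containing the constant function $1$), it follows that $\textrm{CSpl}(r_j,\bft_{j,M_j})=\{f\in S_j:\int_0^1 f=0\}$, i.e.\ the $L^2[0,1]$-orthogonal complement of the constants inside $S_j$.

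Next I would record the standard facts about the B-spline basis $\mathbf N_j(z)=(N_{j,1}(z),\dots,N_{j,n_j}(z))^\top$ of $S_j$ (with the usual boundary knot multiplicities): it is a nonnegative partition of unity, each $N_{j,i}$ is supported on at most $r_j+1$ consecutive knot intervals, and --- this is where quasi-uniformity, Assumption \ref{Assumption:common}.\ref{Ac:b}, enters, via \cite{dl93} --- the Gram matrix $G_j:=\int_0^1\mathbf N_j\mathbf N_j^\top$ satisfies $\widetilde c^{-1}h_j\|\mathbf c\|^2\le \mathbf c^\top G_j\mathbf c\le \widetilde c h_j\|\mathbf c\|^2$ for all $\mathbf c\in\mathbb R^{n_j}$, with $\widetilde c$ depending only on $r_j$ and $a_7$. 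In particular, writing $\mathbf e=(1,\dots,1)^\top$, one has $\int_0^1 N_{j,i}=(G_j\mathbf e)_i\asymp h_j$, hence $\|G_j\mathbf e\|\asymp h_j^{1/2}$ and $\max_i|(G_j\mathbf e)_i|/\|G_j\mathbf e\|\le \widetilde c^2 h_j^{1/2}$.

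Then comes the construction: set $\mathbf u_j:=G_j\mathbf e/\|G_j\mathbf e\|$, let $V_j\in\mathbb R^{n_j\times(n_j-1)}$ have orthonormal columns spanning $\mathbf u_j^\perp$ (so $V_j^\top V_j=I_{n_j-1}$ and $V_jV_j^\top=I_{n_j}-\mathbf u_j\mathbf u_j^\top$), and define $\bfP_j(z):=V_j^\top\mathbf N_j(z)$. Each component $P_{j,i}=\mathbf v_{j,i}^\top\mathbf N_j$, with $\mathbf v_{j,i}$ the $i$-th column of $V_j$, lies in $S_j$ and has $\int_0^1 P_{j,i}=\mathbf v_{j,i}^\top(G_j\mathbf e)=0$, so the $M_j+r_j-1$ entries of $\bfP_j$ lie in $\textrm{CSpl}$; linear independence of the B-splines together with that of the columns of $V_j$ then makes $\bfP_j$ a basis of $\textrm{CSpl}$ by a dimension count. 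For the first inequality, $\int_0^1\bfP_j\bfP_j^\top=V_j^\top G_jV_j$, and since $V_j$ has orthonormal columns every eigenvalue of $V_j^\top G_jV_j$ is a Rayleigh quotient of $G_j$, hence lies in $[\lambda_{\min}(G_j),\lambda_{\max}(G_j)]\subseteq[\widetilde c^{-1}h_j,\widetilde c h_j]$; take $c_3=\widetilde c$. For the pointwise bound, $\bfP_j^\top(z_0)\bfP_j(z_0)=\mathbf N_j^\top(z_0)V_jV_j^\top\mathbf N_j(z_0)=\|\mathbf N_j(z_0)\|^2-(\mathbf u_j^\top\mathbf N_j(z_0))^2$; nonnegativity and the partition of unity give $\|\mathbf N_j(z_0)\|^2\le 1$, Cauchy--Schwarz over the at most $r_j+1$ nonzero entries gives $\|\mathbf N_j(z_0)\|^2\ge(r_j+1)^{-1}$, and $(\mathbf u_j^\top\mathbf N_j(z_0))^2\le(\max_i|\mathbf u_{j,i}|)^2\le\widetilde c^4 h_j$, so $\bfP_j^\top(z_0)\bfP_j(z_0)\in[(r_j+1)^{-1}-\widetilde c^4 h_j,\,1]$, which lies in $[c_4^{-1},c_4]$ for an appropriate $c_4$ (depending on $z_0$, and on the fixed quantities $\max_j r_j$ and $a_7$) provided $h_j$ is small --- and every application of this lemma is for $j\ge d+1$, hence in the regime $h_j=o(1)$.

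The main obstacle is the second step: quoting, with the correct constants, the two-sided Gram-matrix bound $\mathbf c^\top G_j\mathbf c\asymp h_j\|\mathbf c\|^2$ under quasi-uniformity, which is the only place the structural hypothesis on the knots is used. Everything downstream --- the orthogonal projection, the Rayleigh-quotient eigenvalue comparison, and the Cauchy--Schwarz pointwise estimates --- is routine linear algebra, the sole mild caveat being the ``$h_j$ small'' requirement absorbed into $c_4$ in the pointwise lower bound.
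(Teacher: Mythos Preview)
Your proof is correct and follows essentially the paper's strategy: start from the B-spline basis $\mathbf N_j$ of the ambient spline space (whose Gram matrix $G_j$ has all eigenvalues $\asymp h_j$ under the quasi-uniform knot assumption, via \cite{dl93}/\cite{d78}), then pass to the mean-zero subspace by $\bfP_j=V_j^\top\mathbf N_j$ with $V_j$ having orthonormal columns, so that the Gram eigenvalue bounds transfer by a Rayleigh-quotient argument and the pointwise norm equals $\|\mathbf N_j(z_0)\|^2$ minus an $O(h_j)$ correction coming from the integral vector.

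The only difference is in the choice of $V_j$. The paper extracts it from the rank-$(n_j-1)$ projector $I-\overline{\bfR}_j u^\top$ (with $\overline{\bfR}_j=\int_0^1\mathbf N_j$ and $u=(1,\dots,1)^\top$), arriving at the pointwise expression $\|\mathbf N_j(z_0)-\overline{\bfR}_j\|^2$; you instead project orthogonally onto the complement of the normalized integral vector $\mathbf u_j=G_j\mathbf e/\|G_j\mathbf e\|\propto\overline{\bfR}_j$, obtaining $\|\mathbf N_j(z_0)\|^2-(\mathbf u_j^\top\mathbf N_j(z_0))^2$. Your route is a bit cleaner, since the symmetric projector $I-\mathbf u_j\mathbf u_j^\top$ makes the two required properties---$V_j^\top V_j=I$ and $V_j^\top\overline{\bfR}_j=0$---hold simultaneously and transparently, whereas the paper has to manipulate the non-symmetric matrix $I-\overline{\bfR}_j u^\top$. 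The resulting bounds are the same in both cases.
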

\begin{proof}[Proof of Lemma \ref{lemma:existence:good:basis}]
Let $\bfR_j(z)=(R_{j,1}(z),\ldots, R_{j,M_j+r_j})^\top \in \mathbb{R}^{M_j+r_j}$ be the  B-spline basis on knots $\bft_{M_j}$ with degree $r_j$. By \cite{d78}, \cite{gkkw06}[Lemma 14.4], and Assumption \ref{Assumption:common}.\ref{Ac:b}, there exists $a>1$ such that
\begin{align}
a^{-1}h_jc^\top c \leq c^T\bigg(\int_0^1\bfR_j(z)\bfR_j^\top(z)dz\bigg) c\leq ah_jc^\top c ,  \textrm{ for all }  c\in \mathbb{R}^{M_j+r_j}\label{eq:lemma:existence:good:basis:eq00}
\end{align}
and
\begin{align}
	\sum_{k=1}^{M_j+r_j}R_{j,k}(z)=1, \textrm{ for all } z\in [0,1].\label{eq:lemma:existence:good:basis:eq0}
\end{align}
Let $\overline{\bfR}_j=(\overline{R}_{j,1},\ldots, \overline{R}_{j,M_j+r_j})=\int_0^1\bfR_j(z)dz\in \mathbb{R}^{M_j+r_j}$, $u=(1, \ldots, 1)^\top \in \mathbb{R}^{M_j+r_j}$, and we try to find matrices $D\in \mathbb{R}^{(M_j+r_j-1)\times (M_j+r_j)}$ and $K\in \mathbb{R}^{(M_j+r_j)\times (M_j+r_j-1)}$ such that
\begin{align}
	DK=I\quad \textrm{ and } \quad KD=I-\overline{\bfR}_ju^\top.\label{eq:lemma:existence:good:basis:eq1}
\end{align}
In the following, we will show such matrices $D, K$ exist. It is not difficult to verify the following eigen-decomposition:
\begin{align}
	I-\overline{\bfR}_ju^\top=U\Gamma U^\top,\label{eq:lemma:existence:good:basis:eq2}
\end{align}
where $U=(u_1, u_2, \ldots, u_{M_j+r_j})\in \mathbb{R}^{(M_j+r_j)\times (M_j+r_j)}$ and $\Gamma=\textrm{diag}(1,1,\ldots, 1, 0)\in \mathbb{R}^{(M_j+r_j)\times (M_j+r_j)}$, with $u_k$'s be the standardized eigenvectors.  Let $K=(u_1, \ldots, u_{M_j+r_j-1})$ and $D=K^\top$, and by the definition of eigenvectors and eigenvalues, it is not difficult to verify $D$, $K$ indeed satisfy (\ref{eq:lemma:existence:good:basis:eq1}). 

Now we define $\bfP_j(z)=D\bfR_j(z)\in  \mathbb{R}^{M_j+r_j-1}$, and we will show $\bfP_j(z)$ is a basis of $\textrm{CSpl}(r_j, \bft_{M_j})$. Notice that for any $f\in \textrm{CSpl}(r_j, \bft_{M_j})$, there exist $c\in \mathbb{R}^{M_j+r_j}$ such that
\begin{align*}
	f(z)-\int_0^1f(s)ds=f(z)=c^\top \bfR_j(z)=c^\top (\bfR_j(z)-\overline{\bfR}_j)=c^\top(I-\overline{\bfR}_ju\top)\bfR_j(z)=c^\top K\bfP_j(z),
\end{align*}
where (\ref{eq:lemma:existence:good:basis:eq0}), (\ref{eq:lemma:existence:good:basis:eq1}), (\ref{eq:lemma:existence:good:basis:eq2}) and the fact that $\int_0^1f(s)ds=0$ are used. Therefore, we verify that  $\bfP_j(z)$ is a basis of $\textrm{CSpl}(r_j, \bft_{M_j})$. Moreover, for any $c\in \mathbb{R}^{M_j+r_j-1}$, it follows that
\begin{align*}
	c^\top \int_0^1\bfP_j(z)\bfP_j^\top(z)dz c=c^\top K^\top \int_0^1\bfR_j(z)\bfR_j^\top(z)dz   Kc\leq ah_jc^\top c, 
\end{align*}
and
\begin{align*}
	c^\top \int_0^1\bfP_j(z)\bfP_j^\top(z)dz c=c^\top K^\top \int_0^1\bfR_j(z)\bfR_j^\top(z)dz   Kc\geq a^{-1}h_jc^\top c, 
\end{align*}
where (\ref{eq:lemma:existence:good:basis:eq00}) and (\ref{eq:lemma:existence:good:basis:eq1}) are used. Therefore, the first result follows with $c_3=a$. Moreover, for any fixed $z_0\in [0, 1]$, define collection of indexes $K(z_0)=\{k \;\;|\;\; R_{j,k}(z_0)>0\}$, then by the properties of B-spline, there are $r_j+1$ elements in $K(z_0)$. Moreover, by the properties of B-spline,  it follows that $s^{-1}\leq \bfR_j^\top(z_0)\bfR_j(z_0)\leq s$ for some $s>1$ ($s$ relies on $z_0$). As a consequence, it follows that
\begin{align*}
\bfP_j^\top(z_0)\bfP_j(z_0)= \bfR_j^\top(z_0)D^\top D \bfR_j(z_0)\leq \bfR_j^\top(z_0) \bfR_j(z_0)\leq s.
\end{align*}
Moreover, we have
\begin{align*}
	\bfP_j^\top(z_0)\bfP_j(z_0)&= \bfR_j^\top(z_0)D^\top D \bfR_j(z_0)\nonumber\\
	&=\bfR_j^\top(z_0)K D \bfR_j(z_0)\nonumber\\
	&=\bfR_j^\top(z_0)K D KD \bfR_j(z_0)\nonumber\\
	&=\bfR_j^\top(z_0)(I-\overline{\bfR}_ju^\top)^2\bfR_j(z_0)\nonumber\\
	&=(\bfR_j(z_0)-\overline{\bfR}_j)^\top(\bfR_j(z_0)-\overline{\bfR}_j)\nonumber\\
	&\geq \sum_{k\in K(z_0)}(R_{j,k}(z_0)-\overline{R}_{j,k})^2
\end{align*}
Notice $R_{j,k}\geq 0$ for all $k=1,\ldots, M_j+r_j$ and $\sup_{1\leq k \leq M_j+r_j}\overline{R}_{j,k}=O(h_j)$, so we conclude that $\bfP_j^\top(z_0)\bfP_j(z_0)\geq s^{-1}-O(h_j)\geq s^{-1}/2$. As a consequence, the second result follows with $c_4=2s$.
\end{proof} 

Let $\bfP_j(z)=z-1/2$ for $j=1,\ldots, d$, and $\bfP(\bfx)=(\bfP_1^\top(z_1), \ldots, \bfP_p^\top(z_d))^\top \in \mathbb{R}^{d+\sum_{j=d+1}^p(M_j+r_j-1)}$  for $\bfx=(z_1,\ldots, z_p)^\top \in \mcX$, with $\bfP_j$ being the basis in Lemma \ref{lemma:existence:good:basis} for $j=d+1,\ldots, p$. Define matrices
\begin{align*}
	\Lambda_{NT}&=\frac{1}{NT}\sum_{i=1}^N\sum_{t=1}^T\bigg(\bfP(\bfX_{it})-\frac{1}{T}\sum_{s=1}^T\bfP(\bfX_{is})\bigg)\bigg(\bfP(\bfX_{it})-\frac{1}{T}\sum_{s=1}^T\bfP(\bfX_{is})\bigg)^\top,\nonumber\\
	\Lambda_{NT,j}&=\frac{1}{NT}\sum_{i=1}^N\sum_{t=1}^T\bigg(\bfP_j(Z_{itj})-\frac{1}{T}\sum_{s=1}^T\bfP_j(Z_{isj})\bigg)\bigg(\bfP_j(Z_{itj})-\frac{1}{T}\sum_{s=1}^T\bfP_j(Z_{isj})\bigg)^\top,\textrm{ for } j\in [p],\nonumber\\
	\Lambda&=\ev(\Lambda_{NT}), \textrm{ and } \Lambda_j=\ev(\Lambda_{NT,j}), \textrm{ for } j\in [p].
\end{align*}
By above definition, if $g(\bfx)=u^\top\bfP(\bfx)$ and $f(\bfx)=v^\top \bfP(\bfx)$, then 
\begin{align}
\langle g, f \rangle_{NT}=u^\top \Lambda_{NT} v, \quad \textrm{ and }\quad\;\; \langle g, f \rangle=u^\top \Lambda v. \label{eq:expression:quadratic:norm}
\end{align}
Similarly, if $g_j(z)=u_j^\top \bfP_j(z)$ and $f_j(z)=v_j^\top \bfP(z)$, then
\begin{align}
\langle g_j, f_j \rangle_{NT}=u_j^\top \Lambda_{NT,j} v_j, \quad \textrm{ and }\quad\;\; \langle g_j, f_j \rangle=u^\top \Lambda_j v. \label{eq:expression:quadratic:norm:j}
\end{align}
\begin{lemma}\label{lemma:eigen:value:Lambda:j}
Suppose one of the following conditions is satisfied:
\begin{enumerate}
\item Assumptions \ref{Assumption:A1}, and \ref{Assumption:common} hold;
\item Assumptions \ref{Assumption:A2},  \ref{Assumption:common} hold, and $A_{NT}^2=o(T)$.
\end{enumerate}
Then there exists $c_5>1$ such that the following holds:
\begin{align*}
	c_5^{-1}h_j\leq \lambda_{\min}(\Lambda_{j})\leq \lambda_{\min}(\Lambda_{j})\leq c_5h_j \textrm{ for } j=d+1,\ldots, p,
\end{align*}
and
\begin{align*}
	c_5^{-1}\bigg(\sum_{j=1}^du_j^2+ \sum_{j=d+1}^p u_j^\top u_jh_j\bigg)\leq u^\top\Lambda u \leq c_5\bigg(\sum_{j=1}^du_j^2+ \sum_{j=d+1}^p u_j^\top u_jh_j\bigg),
\end{align*}
for any $u=(u_1, \ldots, u_d, u_{d+1}^\top, \ldots, u_p^\top)^\top\in \mathbb{R}^{d+\sum_{j=d+1}^p(M_j+r_j-1)}$ with $u_j\in \mathbb{R}$ for $j=1,\ldots, d$, and $u_j \in \mathbb{R}^{M_j+r_j-1}$ for $j=d+1,\ldots, p$.
\end{lemma}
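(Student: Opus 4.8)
The plan is to translate every matrix quadratic form in the statement into an $L^2$-type norm of a centralized spline function and then sandwich that norm on one side by the Lebesgue $L^2$ norm (via the well-conditioned basis of Lemma~\ref{lemma:existence:good:basis}) and on the other side by the population norm $\|\cdot\|$ (via Lemma~\ref{lemma:expectation:sample:variance} when $T$ is fixed, and via the last assertion of Lemma~\ref{lemma:expectation:sample:variance:large:T} when $T$ diverges). Concretely, for a vector $u=(u_1,\dots,u_d,u_{d+1}^\top,\dots,u_p^\top)^\top$ put $g(\bfx)=u^\top\bfP(\bfx)$ and $g_j(z)=u_j^\top\bfP_j(z)$; then \eqref{eq:expression:quadratic:norm} and \eqref{eq:expression:quadratic:norm:j} give $u^\top\Lambda u=\|g\|^2$ and $u_j^\top\Lambda_j u_j=\|g_j\|^2$, so the whole lemma reduces to controlling $\|g\|^2$ and $\|g_j\|^2$ in terms of $\sum_{j\le d}u_j^2+\sum_{j>d}h_j\|u_j\|^2$.

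First I would treat $\Lambda_j$ for $j\in\{d+1,\dots,p\}$. Since $\bfP_j$ is a basis of $\textrm{CSpl}(r_j,\bft_{M_j})$ and every centralized basis function has vanishing integral over $[0,1]$ by construction, the function $g_j=u_j^\top\bfP_j$ satisfies $\int_0^1 g_j=0$, hence $\|g_j\|_2^2=\int_0^1 g_j^2=u_j^\top\!\big(\int_0^1\bfP_j\bfP_j^\top\big)u_j$, and Lemma~\ref{lemma:existence:good:basis} yields $c_3^{-1}h_j\|u_j\|^2\le\|g_j\|_2^2\le c_3h_j\|u_j\|^2$. Because $g_j\in\Theta_{NT,j}\subseteq\Theta_{NT}$, Lemma~\ref{lemma:expectation:sample:variance} in the first case, or Lemma~\ref{lemma:expectation:sample:variance:large:T} together with the hypothesis $A_{NT}^2=o(T)$ in the second case, gives $\|g_j\|^2=u_j^\top\Lambda_j u_j\asymp\|g_j\|_2^2$ with constants depending only on $a_1$ (resp.\ $a_3$). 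Composing the two sandwiches produces $c_5^{-1}h_j\|u_j\|^2\le u_j^\top\Lambda_j u_j\le c_5 h_j\|u_j\|^2$ for all $u_j$, i.e.\ the claimed eigenvalue bounds, with $c_5$ a sufficiently large multiple of $a_1 c_3$ (resp.\ $a_3 c_3$); this is uniform in $j$ since none of $c_3,a_1,a_3$ depends on $j$.

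For the block matrix I would write $g=u^\top\bfP=\sum_{j=1}^p g_j$ with $g_j(z_j)=u_j(z_j-1/2)$ for $j\le d$ (so $\|g_j\|_2^2=u_j^2/12$) and $g_j=u_j^\top\bfP_j$ for $j>d$. Each $g_j$ has mean zero and depends only on the single coordinate $z_j$, so by Fubini every cross term in $\int_{\mcX}(\sum_j g_j)^2$ vanishes and $\|g\|_2^2=\sum_{j=1}^p\|g_j\|_2^2$ exactly; combined with the per-component estimates this is $\asymp\sum_{j\le d}u_j^2+\sum_{j>d}h_j\|u_j\|^2$ (one may alternatively invoke Lemma~\ref{lemma:sum:of:l2:norm:bound} here). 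Then $u^\top\Lambda u=\|g\|^2\asymp\|g\|_2^2$ by the same norm-equivalence lemma as above, which finishes the proof after enlarging $c_5$ if needed so that a single constant greater than $1$ serves all the inequalities. There is no genuine obstacle here; the only points requiring a bit of care are the bookkeeping of constants so that one $c_5$ works uniformly in $j$, and remembering that in the diverging-$T$ case one must remain inside $\Theta_{NT}$ in order to apply the sharp form of Lemma~\ref{lemma:expectation:sample:variance:large:T}.
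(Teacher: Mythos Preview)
Your proposal is correct and follows essentially the same route as the paper: for $\Lambda_j$ you combine the basis eigenvalue bounds of Lemma~\ref{lemma:existence:good:basis} with the norm equivalence of Lemma~\ref{lemma:expectation:sample:variance} (resp.\ Lemma~\ref{lemma:expectation:sample:variance:large:T}) via \eqref{eq:expression:quadratic:norm:j}, exactly as the paper does. For the full matrix $\Lambda$ the paper invokes Lemma~\ref{lemma:sum:of:l2:norm:bound} together with the triangle inequality to sandwich $\|g\|^2$ between multiples of $\sum_j\|g_j\|^2$, whereas you use the slightly sharper observation that $\|g\|_2^2=\sum_j\|g_j\|_2^2$ exactly (mean-zero functions of distinct coordinates are $L^2$-orthogonal on $[0,1]^p$) and then pass to $\|\cdot\|$; both arguments are equivalent here and yield the same constant bookkeeping.
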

\begin{proof}[Proof of Lemma \ref{lemma:eigen:value:Lambda:j}]
The first inequality follows from Lemmas \ref{lemma:expectation:sample:variance} (Lemma \ref{lemma:expectation:sample:variance:large:T} for diverging $T$), \ref{lemma:existence:good:basis}, and (\ref{eq:expression:quadratic:norm:j}). The second inequality follows from Lemma \ref{lemma:sum:of:l2:norm:bound} and triangular inequality.
\end{proof}
\subsubsection{\textbf{Nonparametric Part}}
In this section, we consider to estimate functional 
\begin{equation}\label{eq:definition:irregular:functional}
	F(f_0)=f_{j,0}(z_0), \textrm{ for some $j=d+1,\ldots, p$},
\end{equation}
where $z_0\in [0,1]$ is a pre-specified fixed point.  Next, we will find $v_{NT}^*$ in defined (\ref{eq:v:N:star}). It is not difficult to verify that for any $v_{NT}\in \mcV_{NT}\subset \Theta_{NT}^0$, we have $\frac{\partial F(f_0)}{\partial f}[v_{NT}]=v_{j,{NT}}(z_0)$, where $v_{NT}$ has the decomposition $v_{NT}(\bfx)=\sum_{j=1}^pv_{j,{NT}}(z_j)$ with $v_{j, {NT}}\in \Theta_{NT,j}$ for $j=1,\ldots, p]$. Since  $v_{NT}\in \mcV_{NT}\subset \Theta_{NT}^0$, it follows that
\begin{align*}
	v_{NT}(\bfx)=u^\top \bfP(\bfx)=\sum_{j=1}^p u_j^\top \bfP_j(z_j),
\end{align*} 
for some  $u=(u_1, \ldots, u_d, u_{d+1}^\top, \ldots, u_p^\top)^\top\in \mathbb{R}^{d+\sum_{j=d+1}^p(M_j+r_j-1)}$ with $u_j\in \mathbb{R}$ for $j=1,\ldots, d$, and $u_j \in \mathbb{R}^{M_j+r_j-1}$ for $j=d+1,\ldots, p$. Furthermore, it is not difficult to verify that  $\frac{\partial F(f_0)}{\partial f}[v_{NT}]=v_{j,{NT}}(z_0)=\langle v_{NT}, v_{NT}^*\rangle$, where
\begin{align}
	v_{NT}^*(\bfx)=u^{*\top}\Lambda^{-1}\bfP(\bfx), \textrm{ with } u^*=(0,\ldots, 0, \bfP_j^\top(z_0), 0, \ldots, 0)^\top \in \mathbb{R}^{d+\sum_{j=d+1}^p(M_j+r_j-1)}.\label{eq:expression:vN:star:nonparametric}
\end{align}
It can be verified that $v_{NT}^*(\bfx)$ defined above is the same as $v_{NT,j}^*(\bfx)$ defined in (\ref{eq:definition:hat:vnt}) for $j=d+1,\ldots, p$.
\begin{proposition}\label{proposition:inverse:quadratic:form}
Let $A \in \mathbb{R}^{k\times k}$ be a symmetric and positive definite matrix. Suppose there exist $a>1$ and positive constants $h_j$'s such that for all $u_i\in \mathbb{R}^{k_i}, i=1,\ldots, p$ with $\sum_{i=1}^p k_i=k$, the following inequality is satisfied:
\begin{align*}
	a^{-1}\sum_{i=1}^ph_i u_i^\top u_i\leq u^\top A u\leq a\sum_{i=1}^ph_i u_i^\top u_i,
\end{align*} 
where $u=(u_1^\top,\ldots, u_p^\top)^\top \in \mathbb{R}^k$ . Then it follows that
\begin{align*}
	a^{-1}\sum_{i=1}^ph_i^{-1} u_i^\top u_i\leq u^\top A^{-1} u\leq a\sum_{i=1}^ph_i^{-1} u_i^\top u_i.
\end{align*}
\end{proposition}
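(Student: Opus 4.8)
The plan is to recognize the hypothesis as a two-sided Loewner-order sandwich and to pass to inverses via a variational characterization of the quadratic form $u \mapsto u^\top A^{-1} u$, which sidesteps any appeal to matrix square roots or the spectral theorem. First I would introduce the block-diagonal matrix $D = \mathrm{diag}(h_1 I_{k_1}, \ldots, h_p I_{k_p}) \in \mathbb{R}^{k \times k}$, which is symmetric and positive definite and satisfies $u^\top D u = \sum_{i=1}^p h_i\, u_i^\top u_i$ and $u^\top D^{-1} u = \sum_{i=1}^p h_i^{-1}\, u_i^\top u_i$ for every $u = (u_1^\top, \ldots, u_p^\top)^\top \in \mathbb{R}^k$. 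With this notation the hypothesis reads simply $a^{-1} u^\top D u \le u^\top A u \le a\, u^\top D u$ for all $u$, i.e. $a^{-1} D \preceq A \preceq a D$.

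Next I would establish the identity
$u^\top A^{-1} u = \max_{0 \ne v \in \mathbb{R}^k} \dfrac{(u^\top v)^2}{v^\top A v}$,
valid for any symmetric positive definite $A$ and any $u$. The inequality $\le$ follows from Cauchy--Schwarz applied in the inner product $\langle x, y \rangle_A = x^\top A y$, writing $u^\top v = (A^{-1} u)^\top A v = \langle A^{-1} u, v \rangle_A$ and bounding by $\langle A^{-1} u, A^{-1} u \rangle_A^{1/2} \langle v, v \rangle_A^{1/2} = (u^\top A^{-1} u)^{1/2} (v^\top A v)^{1/2}$; equality is attained at $v = A^{-1} u$. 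The same identity holds verbatim with $A$ replaced by $D$.

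Finally I would combine these two ingredients. From $v^\top A v \le a\, v^\top D v$ for every $v$ one gets $\dfrac{(u^\top v)^2}{v^\top A v} \ge \dfrac{1}{a} \cdot \dfrac{(u^\top v)^2}{v^\top D v}$, and taking the maximum over $v$ yields $u^\top A^{-1} u \ge \dfrac{1}{a} u^\top D^{-1} u = \dfrac{1}{a} \sum_{i=1}^p h_i^{-1}\, u_i^\top u_i$. Symmetrically, $v^\top A v \ge a^{-1} v^\top D v$ gives $u^\top A^{-1} u \le a\, u^\top D^{-1} u = a \sum_{i=1}^p h_i^{-1}\, u_i^\top u_i$, which is exactly the assertion.

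The only mildly delicate point is the variational identity for $u^\top A^{-1} u$; once it is in hand the rest is immediate and purely mechanical. An equivalent, more abstract route is to invoke the operator antitonicity of $B \mapsto B^{-1}$ on positive definite matrices ($B \preceq C \Rightarrow C^{-1} \preceq B^{-1}$), applied with $B = a^{-1} D$, $C = A$ and then with $B = A$, $C = a D$; I would mention this as the conceptual summary but carry out the Cauchy--Schwarz argument explicitly because it is entirely self-contained and does not require diagonalizing $A$ and $D$ simultaneously.
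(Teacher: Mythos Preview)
Your proof is correct and takes a genuinely different route from the paper's. The paper introduces the same block-diagonal $D$, then substitutes $u = D^{-1/2}v$ to reduce the hypothesis to $a^{-1} I \preceq D^{-1/2} A D^{-1/2} \preceq a I$, inverts this (since a symmetric matrix with eigenvalues in $[a^{-1},a]$ has an inverse with eigenvalues in the same interval), and substitutes back with $v = D^{-1/2} b$. Your argument instead rests on the variational identity $u^\top A^{-1} u = \max_{v\ne 0} (u^\top v)^2/(v^\top A v)$, proved by Cauchy--Schwarz in the $A$-inner product, and reads the conclusion off directly by comparing the denominators. The paper's route is shorter once one is willing to take square roots of $D$ and invoke eigenvalue bounds; yours is slightly longer but entirely self-contained and avoids any spectral or square-root machinery. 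Both are essentially one-line proofs of operator antitonicity of inversion specialized to this setting, just unpacked through different devices.
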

\begin{proof}[Proof of Proposition \ref{proposition:inverse:quadratic:form}]
By conditions given, for all $u\in \mathbb{R}^k$, the following holds:
\begin{align*}
	a^{-1}u^{\top}Du \leq u^\top A u \leq au^{\top}Du,
\end{align*}
where 
\begin{align*}
	D=\textrm{Diag}(\underbrace{h_1,\ldots,h_1}_{k_1}, \underbrace{h_2,\ldots, h_2}_{k_2}, \ldots, \underbrace{h_{p-1}, \ldots, h_{p-1}}_{k_{p-1}}, \underbrace{h_p, \ldots, h_p}_{k_p}).
\end{align*}
Let $u=D^{-1/2}v$, we have
\begin{align*}
	a^{-1}v^\top v\leq v^\top D^{-1/2}AD^{-1/2} v\leq av^\top v,
\end{align*}
which, by definition of eigenvalues, further implies
\begin{align*}
	a^{-1}v^\top v\leq v^\top D^{1/2}A^{-1}D^{1/2} v\leq av^\top v.
\end{align*}
Let $v=D^{-1/2}b$, it follows that
\begin{align*}
a^{-1}b^\top D^{-1}b\leq b^\top A^{-1}b\leq ab^\top D^{-1}b.
\end{align*} 
Notice that $u, v , b$ can be arbitrary and the fact that
$$D^{-1}=\textrm{Diag}(\underbrace{h_1^{-1},\ldots,h_1^{-1}}_{k_1}, \underbrace{h_2^{-1},\ldots, h_2^{-1}}_{k_2}, \ldots, \underbrace{h_{p-1}^{-1}, \ldots, h_{p-1}^{-1}}_{k_{p-1}}, \underbrace{h_p^{-1}, \ldots, h_p^{-1}}_{k_p}),$$
we finish the proof.
\end{proof}
%\begin{lemma}\label{lemma:inverse:Lambda:quadratic:form}
%\end{lemma}
%\begin{proof}[Proof of Lemma \ref{lemma:inverse:Lambda:quadratic:form}]
%For any $u=(u_1, \ldots, u_d, u_{d+1}^\top, \ldots, u_p^\top)^\top\in \mathbb{R}^{d+\sum_{j=d+1}^p(M_j+r_j-1)}$ with $u_j\in \mathbb{R}$ for $j=1,\ldots, d$, and $u_j \in \mathbb{R}^{M_j+r_j-1}$ for $j=d+1,\ldots, p$, we define
%$g_j(z)=u_j^\top \bfP_j(z)$ and $g(\bfx)=\sum_{j=1}^p g_j(z_j)$ with $\bfx=(z_1, \ldots, z_p)\in \mcX$. Therefore, we have
%\begin{align*}
%	\|g\|^2=u^\top \Lambda u,\;\; \textrm{ and }\;\; \|g_j\|^2=u_j^\top \Lambda_j u_j.
%\end{align*}
%First by triangular inequality, it follows that
%\begin{align*}
%	\|g\|^2\leq 2^{p-1}\sum_{j=1}^p\|g_j\|^2
%\end{align*}
%\end{proof}

\begin{lemma}\label{lemma:verify:condition:c1:nonparametric}
Suppose one of following conditions is satisfied:
\begin{enumerate}
\item Assumptions \ref{Assumption:A1}, \ref{Assumption:common} hold and $d_{NT}A_{NT}^2=o(N)$, $\gamma_{NT}=o(\lambda_{NT})$, $d_{NT,0}A_{NT,0}^2\gamma_{NT}^2=o(1)$, $N\rho_{NT}^2=o(1)$;
\item Assumptions \ref{Assumption:A2}, \ref{Assumption:common} hold and $d_{NT}A_{NT}^4=o(NT)$, $d_{NT}A_{NT}^2=o(N)$, $A_{NT}^2=o(T)$, $\gamma_{NT}=o(\lambda_{NT})$, $d_{NT,0}A_{NT,0}^4\gamma_{NT}^2=o(1)$, $d_{NT,0}^2A_{NT,0}^4\gamma_{NT}^2T=o(N)$, $NT\rho_{NT}^2=o(1)$.
\end{enumerate}
Then There exists $c_6>1$ such that
\begin{align*}
	c_6^{-1}h_j^{-1}\leq \|v_{NT}^*\|^2\leq c_6h_j^{-1},\quad \quad\;\; c_6^{-1}h_j^{-1}\leq \|v_{NT}^*\|^2_{\textrm{sd}}\leq c_6h_j^{-1},
\end{align*}
and Condition \ref{Condition:C1} is satisfied. As a consequence, the following expansion holds:
\begin{align*}
	\frac{\sqrt{NT}(\widehat{f}_{j}(z_0)-f_{j,0}(z_0))}{\|v_{NT}^*\|_{\textrm{sd}}}=-\frac{1}{\sqrt{NT}\|v_{NT}^*\|_{\textrm{sd}}}\bfv_{NT}^{*\top}M_H\bfepsilon+o_P(1),
\end{align*}
where $\bfv_{NT}^{*}=(v_{NT}^*(\bfX_{11}), \ldots, v_{NT}^*(\bfX_{NT}))^\top$ with $v_{NT}^*$ defined in (\ref{eq:expression:vN:star:nonparametric}).
\end{lemma}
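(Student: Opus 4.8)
The plan is to specialise the abstract expansion of Lemma~\ref{lemma:general:asymptotoic:expansion} to the point-evaluation functional $F(f_0)=f_{j,0}(z_0)$, $j\in\{d+1,\dots,p\}$, so that the work reduces to two tasks: computing the two norms of the Riesz representative $v_{NT}^*$ in~(\ref{eq:expression:vN:star:nonparametric}), and verifying Condition~\ref{Condition:C1} under the stated rates. For the first task I would start from $v_{NT}^*(\bfx)=u^{*\top}\Lambda^{-1}\bfP(\bfx)$ with $u^*=(0,\dots,0,\bfP_j^\top(z_0),0,\dots,0)^\top$; by~(\ref{eq:expression:quadratic:norm}) this gives $\|v_{NT}^*\|^2=u^{*\top}\Lambda^{-1}u^*$. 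Lemma~\ref{lemma:eigen:value:Lambda:j} (applicable for fixed $T$, and for diverging $T$ because $A_{NT}^2=o(T)$ is assumed) supplies the block two-sided bound $u^\top\Lambda u\asymp\sum_{k\le d}u_k^2+\sum_{k>d}u_k^\top u_k h_k$; Proposition~\ref{proposition:inverse:quadratic:form} inverts it to $u^\top\Lambda^{-1}u\asymp\sum_{k\le d}u_k^2+\sum_{k>d}u_k^\top u_k h_k^{-1}$; and substituting $u=u^*$ with $\bfP_j^\top(z_0)\bfP_j(z_0)\asymp 1$ from Lemma~\ref{lemma:existence:good:basis} yields $\|v_{NT}^*\|^2\asymp h_j^{-1}$.

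For $\|v_{NT}^*\|_{\textrm{sd}}^2$ I would observe that $M_H=I_N\otimes H$ is symmetric and idempotent and that $\|g\|_{NT}^2=\tfrac1{NT}\bfg^\top M_H\bfg$, hence $\|g\|^2=\tfrac1{NT}\ev(\bfg^\top M_H\bfg)$. Conditioning on $\mathbb Z$ in $\langle g,g\rangle_{\textrm{sd}}=\tfrac1{NT}\ev\big((M_H\bfg)^\top\ev(\bfepsilon\bfepsilon^\top\mid\mathbb Z)(M_H\bfg)\big)$ and using the eigenvalue bounds in Assumption~\ref{Assumption:common}.\ref{Ac:a2} gives $a_2^{-1}\|g\|^2\le\|g\|_{\textrm{sd}}^2\le a_2\|g\|^2$ for every $g\in\Theta_{NT}^0$, so $\|v_{NT}^*\|_{\textrm{sd}}^2\asymp h_j^{-1}$ as well; in particular Condition~\ref{Condition:C1}.\ref{C1:d} holds. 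Condition~\ref{Condition:C1}.\ref{C1:a} is immediate because the derivative $v\mapsto v_j(z_0)$ and its Riesz representative live on the finite-dimensional space $\mcV_{NT}$, and Condition~\ref{Condition:C1}.\ref{C1:b} is vacuous: $F$ is linear, so $F(f)-F(f_0)=\tfrac{\partial F(f_0)}{\partial f}[f-f_0]$ exactly and the supremum is zero.

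The substantive step is Condition~\ref{Condition:C1}.\ref{C1:c}. Let $g_*=\sum_k g_{k,*}$ be the sieve approximant of Lemma~\ref{lemma:approximation:error}; since $\|g_*-f_0\|\lesssim\|g_*-f_0\|_\infty\le\rho_{NT}$, one has $g_*\in\mcB_{NT}$ for $NT$ large, so by optimality of $f_{0,N}$ we get $\|f_{0,N}-f_0\|_\infty\le\rho_{NT}$ and $\|f_{0,N}-g_*\|_\infty\le2\rho_{NT}$. Writing $w=f_{0,N}-g_*\in\Theta_{NT}^0$, Lemma~\ref{lemma:sum:of:l2:norm:bound} ($\|w_j\|_2\lesssim\|w\|_2\le\|w\|_\infty$) together with the univariate spline inverse inequality $\|w_j\|_\infty\lesssim h_j^{-1/2}\|w_j\|_2$ used in the proof of Lemma~\ref{lemma:l2:norm:sup:norm} bounds the $j$-th component by $\|w_j\|_\infty\lesssim h_j^{-1/2}\rho_{NT}$, while $\|(g_*-f_0)_j\|_\infty\le\rho_{NT,j}\le\rho_{NT}$ directly. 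Hence $\big|\tfrac{\partial F(f_0)}{\partial f}[f_{0,N}-f_0]\big|=\big|(f_{0,N})_j(z_0)-f_{j,0}(z_0)\big|\lesssim h_j^{-1/2}\rho_{NT}$, and dividing by $\|v_{NT}^*\|\gtrsim h_j^{-1/2}$ leaves a quantity of order $\rho_{NT}$, which is $o((NT)^{-1/2})$ by the hypothesis $N\rho_{NT}^2=o(1)$ (fixed $T$), respectively $NT\rho_{NT}^2=o(1)$ (diverging $T$).

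With Condition~\ref{Condition:C1} verified and the remaining rate conditions of this lemma being exactly those required by Lemma~\ref{lemma:general:asymptotoic:expansion}, applying that lemma gives $\tfrac{\sqrt{NT}(F(\widehat f)-F(f_0))}{\|v_{NT}^*\|_{\textrm{sd}}}=-\sqrt{NT}\,\Delta_{NT}(f_0)[u_{NT}^*]+o_P(1)$ with $u_{NT}^*=v_{NT}^*/\|v_{NT}^*\|_{\textrm{sd}}$; substituting the explicit pathwise derivative from~(\ref{eq:Delta:NT}), $\Delta_{NT}(f_0)[u_{NT}^*]=-\tfrac1{NT}\bfu_{NT}^{*\top}M_H\bfepsilon=-\tfrac1{NT\|v_{NT}^*\|_{\textrm{sd}}}\bfv_{NT}^{*\top}M_H\bfepsilon$, yields the asserted expansion. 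Finally, by Theorem~\ref{thm:selection:consistency:together} one has $\widehat f\in\Theta_{NT}^0$ with probability approaching one, so there $\widehat f=\sum_{k\le d}\widehat\beta_k(z_k-\tfrac12)+\sum_{k>d}\widehat f_k(z_k)$ and the plug-in value $F(\widehat f)$ coincides with $\widehat f_j(z_0)$. I expect Condition~\ref{Condition:C1}.\ref{C1:c} to be the main obstacle: it is the only place where the component-by-component separation afforded by the centralized spline basis is essential, and where the approximation error must be weighed against $\|v_{NT}^*\|$ and absorbed by the $N\rho_{NT}^2=o(1)$ / $NT\rho_{NT}^2=o(1)$ conditions; the $\textrm{sd}$-norm equivalence and the verification of the long list of side conditions needed to invoke Lemma~\ref{lemma:general:asymptotoic:expansion} are comparatively routine.
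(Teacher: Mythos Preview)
Your proposal is correct and follows the paper's architecture exactly: compute $\|v_{NT}^*\|^2=u^{*\top}\Lambda^{-1}u^*\asymp h_j^{-1}$ via Lemma~\ref{lemma:eigen:value:Lambda:j} and Proposition~\ref{proposition:inverse:quadratic:form}, pass to $\|v_{NT}^*\|_{\textrm{sd}}^2$ using the eigenvalue sandwich of Assumption~\ref{Assumption:common}.\ref{Ac:a2}, declare \ref{C1:a}--\ref{C1:b} trivial by linearity of $F$, verify \ref{C1:c}, and invoke Lemma~\ref{lemma:general:asymptotoic:expansion}.

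The only place you diverge from the paper is the verification of Condition~\ref{Condition:C1}.\ref{C1:c}. The paper does this in one line: since every additive component has zero integral, integrating $(f_{0,N}-f_0)(\bfz)$ over all coordinates but $z_j$ recovers the $j$-th component, so $|(f_{0,N}-f_0)_j(z_0)|\le\|f_{0,N}-f_0\|_\infty\le\|g_*-f_0\|_\infty\le\rho_{NT}$, and dividing by $\|v_{NT}^*\|\asymp h_j^{-1/2}$ gives $O(\rho_{NT}h_j^{1/2})=o((NT)^{-1/2})$. Your route through $w=f_{0,N}-g_*\in\Theta_{NT}^0$, the $L^2$ component bound of Lemma~\ref{lemma:sum:of:l2:norm:bound}, and the univariate spline inverse inequality is valid but costs you an extra factor $h_j^{-1/2}$ on the numerator, which you then cancel against $\|v_{NT}^*\|$; the paper's direct bound is sharper by that same factor and avoids invoking the inverse inequality altogether. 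Either way the rate condition $N\rho_{NT}^2=o(1)$ (resp.\ $NT\rho_{NT}^2=o(1)$) suffices.
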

\begin{proof}[Proof of Lemma \ref{lemma:verify:condition:c1:nonparametric}]
It is trivial to show that Conditions  \ref{Condition:C1}.\ref{C1:a}, and \ref{Condition:C1}.\ref{C1:b} are valid. In the following, we will verify \ref{Condition:C1}.\ref{C1:c}, and \ref{Condition:C1}.\ref{C1:d}.

By the definition of $u^*$ in (\ref{eq:expression:vN:star:nonparametric}), $\bfP_j(z_0)$, Lemmas \ref{lemma:existence:good:basis}, \ref{lemma:eigen:value:Lambda:j}, and Proposition \ref{proposition:inverse:quadratic:form} we have
\begin{align*}
	\|v_{NT}^*\|^2=\langle v_{NT}^*, v_{NT}^* \rangle=u^{*\top}\Lambda^{-1}u^{*\top}\leq c_5h_j^{-1} \bfP_j^\top(z_0)\bfP_j(z_0)\leq c_4c_5h_j^{-1}.
\end{align*}
By similar argument, we can show find the lower bound that $\|v_{NT}^*\|^2\geq c_4^{-1}c_5^{-1}h_j^{-1}$. Thus, with $c_6=c_4c_5$, we conclude that
\begin{equation}\label{eq:lemma:verify:condition:c1:nonparametric:eq1}
c_6^{-1}h_j^{-1}\leq \|v_{NT}^*\|^2\leq c_6 h_j^{-1}.
\end{equation}

By  Lemma \ref{lemma:approximation:error} and (\ref{eq:lemma:verify:condition:c1:nonparametric:eq1}), we have
\begin{align*}
	\frac{\bigg|\frac{\partial F(f_0)}{\partial f}[f_{0,N}-f_0]\bigg|}{\|v_{NT}^*\|}=\frac{|f_{0, N}(z_0)-f_0(z_0)|}{\|v_{NT}^*\|}&\leq \frac{\|f_{0,N}-f_0\|_\infty}{\|v_{NT}^*\|}\leq   \frac{\|g_*-f_0\|_\infty}{\|v_{NT}^*\|}=O(\rho_{NT}h_j),
\end{align*}
which further implies Condition \ref{Condition:C1}.\ref{C1:c} due to rate condition $\rho_{NT}h_j=o(N^{-1/2})$ for short panel and $\rho_{NT}h_j=o(N^{-1/2}T^{-1/2})$ for large panel.

Moreover, direct examination shows that
\begin{align*}
\|v_{NT}^*\|_{\textrm{sd}}^2=\textrm{Var}(\sqrt{NT}\Delta_{NT}(f_0)[v_{NT}^*])&=\frac{1}{NT}\ev\bigg(\bfv_{NT}^{*\top} M_H\bfepsilon\bfepsilon^\top M_H \bfv_{NT}^*\bigg) \nonumber\\
&=\frac{1}{NT}\ev\bigg(\bfv_{NT}^{*\top} M_H\ev(\bfepsilon\bfepsilon^\top| \mathbb{Z}) M_H \bfv_{NT}^*\bigg).
\end{align*}
By Assumption \ref{Assumption:common}.\ref{Ac:a2}  and above equation, we have
\begin{align*}
	\|v_{NT}^*\|_{\textrm{sd}}^2\leq \frac{a_2}{NT}\ev\bigg(\bfv_{NT}^{*\top} M_H \bfv_{NT}^*\bigg)=a_2 \|v_{NT}^*\|^2.
\end{align*}
Similarly, we can establish the lower bound that  $\|v_{NT}^*\|_{\textrm{sd}}^2\geq a_2^{-1} \|v_{NT}^*\|^2.$
Therefore, above inequalities and (\ref{eq:lemma:verify:condition:c1:nonparametric:eq1}) lead to \ref{Condition:C1}.\ref{C1:d}. Finally, by Lemma \ref{lemma:general:asymptotoic:expansion}, we prove the second result.
\end{proof}

\subsubsection{\textbf{Parametric  Part}}
In this section, we consider to estimate functional 
\begin{equation}\label{eq:definition:regular:functional}
	F(f_0)=\beta_{j,0}, \textrm{ for some $j=1,\ldots, d$}.
\end{equation}
Next, we will find $v_{NT}^*$  defined in (\ref{eq:v:N:star}). It is not difficult to verify that for any $v_{NT}\in \mcV_{NT}\subset \Theta_{NT}^0$, we have $\frac{\partial F(f_0)}{\partial f}[v_{NT}]=v_{j,{NT}}(z_0)$, where $v_{NT}$ has the decomposition $v_{NT}(\bfx)=\sum_{j=1}^Nv_{j,{NT}}(z_j)$ with $v_{j, {NT}}\in \Theta_{NT,j}$ for $j=1,\ldots, p$, and $v_{j, {NT}}(z)=\beta_j(z-1/2)$ for $j=1,\ldots, d$. Since  $v_{NT}\in \mcV_{NT}\subset \Theta_{NT}^0$, it follows that $v_{NT}(\bfx)=u^\top \bfP(\bfx)=\sum_{j=1}^p u_j^\top \bfP_j(z_j)$, for some  $u=(u_1, \ldots, u_d, u_{d+1}^\top, \ldots, u_p^\top)^\top\in \mathbb{R}^{d+\sum_{j=d+1}^p(M_j+r_j-1)}$ with $u_j=\beta_j\in \mathbb{R}$ for $j=1,\ldots, d$, and $u_j \in \mathbb{R}^{M_j+r_j-1}$ for $j=d+1,\ldots, p$. Furthermore, it is not difficult to verify that  $\frac{\partial F(f_0)}{\partial f}[v_{NT}]=\beta_{j}=\langle v_{NT}, v_{NT}^*\rangle$, where
\begin{align}
	v_{NT}^*(\bfx)=u^{*\top}\Lambda^{-1}\bfP(\bfx), \textrm{ with } u^*=(\underbrace{0,\ldots, 0}_{j-1}, 1, 0, \ldots, 0)^\top \in \mathbb{R}^{d+\sum_{j=d+1}^p(M_j+r_j-1)}.\label{eq:expression:vN:star:parametric}
\end{align}
It can be verified that $v_{NT}^*(\bfx)$ defined above is the same as $v_{NT,j}^*(\bfx)$ defined in (\ref{eq:definition:hat:vnt}) for $j=1,\ldots, d$.
\begin{lemma}\label{lemma:verify:condition:c1:parametric}
Suppose one of following conditions is satisfied:
\begin{enumerate}
\item Assumptions \ref{Assumption:A1}, \ref{Assumption:common} hold and $d_{NT}A_{NT}^2=o(N)$, $\gamma_{NT}=o(\lambda_{NT})$, $d_{NT,0}A_{NT,0}^2\gamma_{NT}^2=o(1)$, $N\rho_{NT}^2=o(1)$;
\item Assumptions \ref{Assumption:A2}, \ref{Assumption:common} hold and $d_{NT}A_{NT}^4=o(NT)$, $d_{NT}A_{NT}^2=o(N)$, $A_{NT}^2=o(T)$, $\gamma_{NT}=o(\lambda_{NT})$, $d_{NT,0}A_{NT,0}^4\gamma_{NT}^2=o(1)$, $d_{NT,0}^2A_{NT,0}^4\gamma_{NT}^2T=o(N)$, $NT\rho_{NT}^2=o(1)$.
\end{enumerate}
 Then there exists $c_7>1$ such that
\begin{align*}
	c_7^{-1}h_j^{-1}\leq \|v_{NT}^*\|^2\leq c_6h_j^{-1},\quad\quad\;\; c_7^{-1}h_j^{-1}\leq \|v_{NT}^*\|^2_{\textrm{sd}}\leq c_6h_j^{-1},
\end{align*}
and Condition \ref{Condition:C1} is satisfied. As a consequence, the following expansion holds:
\begin{align*}
	\sqrt{NT}(\widehat{\beta}_j-\beta_{j,0})=-\frac{1}{\sqrt{NT}}\bfv_{NT}^{*\top}M_H\bfepsilon+o_P(1),
\end{align*}
where $\bfv_{NT}^{*}=(v_{NT}^*(\bfX_{11}), \ldots, v_{NT}^*(\bfX_{NT}))^\top$ with $v_{NT}^*$ defined in (\ref{eq:expression:vN:star:parametric}).
\end{lemma}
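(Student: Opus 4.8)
The plan is to mirror the proof of Lemma~\ref{lemma:verify:condition:c1:nonparametric} essentially line by line, exploiting the one feature that makes the parametric coordinate functional $F(f_0)=\beta_{j,0}$, $j\le d$, easier than the irregular point–evaluation functional treated there: $F$ is a \emph{bounded} linear functional on $(\Theta_{NT}^0,\|\cdot\|)$, so (in contrast with the nonparametric case, where the representer has norm of order $h_j^{-1/2}$) its Riesz representer $v_{NT}^*$ stays of constant order. Concretely, as recorded in (\ref{eq:expression:vN:star:parametric}), $v_{NT}^*(\bfx)=u^{*\top}\Lambda^{-1}\bfP(\bfx)$ with $u^{*}$ the $j$-th canonical basis vector of $\mathbb{R}^{d+\sum_{k=d+1}^p(M_k+r_k-1)}$ and $\bfP_j(z)=z-1/2$ on the linear coordinates, so that $\|v_{NT}^*\|^2=\langle v_{NT}^*,v_{NT}^*\rangle=u^{*\top}\Lambda^{-1}u^{*}$.

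First I would estimate $u^{*\top}\Lambda^{-1}u^{*}$. Lemma~\ref{lemma:eigen:value:Lambda:j} gives the two–sided spectral equivalence of $\Lambda$ with the block form carrying weight $1$ on each of the $d$ linear coordinates and weight $h_k$ on the $k$-th nonlinear block, and Proposition~\ref{proposition:inverse:quadratic:form} transfers this equivalence to $\Lambda^{-1}$ with the reciprocal weights. Since $u^{*}$ is supported entirely in the unit–weight $j$-th linear coordinate, this yields $c_7^{-1}\le\|v_{NT}^*\|^2\le c_7$ for a suitable $c_7>1$ (on the linear coordinates the representer does not degenerate as $h_j\to 0$), which is the asserted two–sided bound. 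For the standard–deviation norm, Assumption~\ref{Assumption:common}.\ref{Ac:a2} bounds the eigenvalues of $\ev(\bfepsilon\bfepsilon^\top\mid\mathbb{Z})$ in $[a_2^{-1},a_2]$, whence $a_2^{-1}\|v_{NT}^*\|^2\le\|v_{NT}^*\|^2_{\textrm{sd}}\le a_2\|v_{NT}^*\|^2$; this simultaneously supplies the two–sided bound on $\|v_{NT}^*\|^2_{\textrm{sd}}$ and verifies Condition~\ref{Condition:C1}.\ref{C1:d}.

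Next I would check the remaining parts of Condition~\ref{Condition:C1}. Part~\ref{C1:a} is immediate from the Riesz identity just obtained. Part~\ref{C1:b} is trivial since $F$ is linear, so $F(f)-F(f_0)-\frac{\partial F(f_0)}{\partial f}[f-f_0]\equiv 0$ on $\mcB_{NT}$. For part~\ref{C1:c}: since $f_{0,N}\in\Theta_{NT}^0$, its $j$-th additive component is genuinely linear, say $\beta_j^{0,N}(z-1/2)$, and $\frac{\partial F(f_0)}{\partial f}[f_{0,N}-f_0]=\beta_j^{0,N}-\beta_{j,0}$; writing $f_{0,N}-f_0=(f_{0,N}-g_*)+(g_*-f_0)$ with $g_*$ the approximant of Lemma~\ref{lemma:approximation:error} (which reproduces $f_{j,0}$ exactly on every linear coordinate), the $j$-th component of $f_{0,N}-f_0$ coincides with that of the in–sieve function $f_{0,N}-g_*\in\Theta_{NT}$, so Lemma~\ref{lemma:sum:of:l2:norm:bound} gives $|\beta_j^{0,N}-\beta_{j,0}|\le c\,\|(f_{0,N}-g_*)_j\|_2\le c\,c_1^{-1/2}\|f_{0,N}-g_*\|_2\le c\,c_1^{-1/2}\|f_{0,N}-g_*\|_\infty\le 2c\,c_1^{-1/2}\|g_*-f_0\|_\infty=O(\rho_{NT})$, the last step using the minimality of $f_{0,N}$ in $\mcB_{NT}$ and $g_*\in\mcB_{NT}$. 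Dividing by $\|v_{NT}^*\|\asymp 1$ and invoking $N\rho_{NT}^2=o(1)$ (respectively $NT\rho_{NT}^2=o(1)$ for diverging $T$) yields part~\ref{C1:c}. With Condition~\ref{Condition:C1} verified and the remaining rate restrictions of the hypotheses being exactly those of Lemma~\ref{lemma:general:asymptotoic:expansion}, that lemma gives $\sqrt{NT}\big(F(\widehat{f})-F(f_0)\big)/\|v_{NT}^*\|_{\textrm{sd}}=-\sqrt{NT}\,\Delta_{NT}(f_0)[v_{NT}^*/\|v_{NT}^*\|_{\textrm{sd}}]+o_P(1)$; multiplying through by $\|v_{NT}^*\|_{\textrm{sd}}=O(1)$ and substituting the explicit form of $\Delta_{NT}(f_0)$ from (\ref{eq:Delta:NT}) produces the displayed expansion for $\sqrt{NT}(\widehat{\beta}_j-\beta_{j,0})$.

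The main obstacle is the single–coordinate control in part~\ref{C1:c}: one must ensure that the joint sup–norm minimization over all $p$ additive pieces defining $f_{0,N}$ does not distort coordinate $j$ beyond the $O(\rho_{NT})$ approximation error. This is handled by the decoupling estimate of Lemma~\ref{lemma:sum:of:l2:norm:bound} — equivalence of $\|g\|_2^2$ with $\sum_{k}\|g_k\|_2^2$ on the sieve — together with the fact that the reference approximant $g_*$ already reproduces $f_{j,0}$ exactly on the linear coordinates, which is what lets us replace the out–of–sieve difference $f_{0,N}-f_0$ by the in–sieve difference $f_{0,N}-g_*$ in that coordinate. Everything else is a routine transcription of the nonparametric argument with the unbounded point–evaluation functional replaced by the bounded coordinate functional, so no fresh large–deviation machinery is needed; the only bookkeeping to watch is that all the rate restrictions in the hypotheses line up with those required by Lemma~\ref{lemma:general:asymptotoic:expansion} and by Lemma~\ref{lemma:eigen:value:Lambda:j}.
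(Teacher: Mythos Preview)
Your proof is correct and follows essentially the same route as the paper: bound $\|v_{NT}^*\|^2=u^{*\top}\Lambda^{-1}u^*$ via Lemma~\ref{lemma:eigen:value:Lambda:j} and Proposition~\ref{proposition:inverse:quadratic:form}, transfer to $\|v_{NT}^*\|_{\textrm{sd}}^2$ through Assumption~\ref{Assumption:common}.\ref{Ac:a2}, verify Condition~\ref{Condition:C1}, and invoke Lemma~\ref{lemma:general:asymptotoic:expansion}.

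The one substantive difference is your treatment of Condition~\ref{Condition:C1}.\ref{C1:c}. The paper simply asserts that ``the linear components of $f_{0,N}$ and $f_0$ should be the same,'' concluding that the bias term vanishes identically. You instead establish the weaker (but sufficient) bound $|\beta_j^{0,N}-\beta_{j,0}|=O(\rho_{NT})$ by passing to the in-sieve difference $f_{0,N}-g_*$ and applying the componentwise control of Lemma~\ref{lemma:sum:of:l2:norm:bound}. Your argument is actually the more careful one: the paper's exact-equality claim is not obvious for a sup-norm minimizer over an additive class, since a linear perturbation on coordinate $j$ could in principle partially cancel against the nonlinear approximation error in $\|\cdot\|_\infty$. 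Your $O(\rho_{NT})$ bound sidesteps that issue entirely and is enough under the hypothesis $N\rho_{NT}^2=o(1)$ (resp.\ $NT\rho_{NT}^2=o(1)$).
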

\begin{proof}[Proof of Lemma \ref{lemma:verify:condition:c1:parametric}]
It is trivial to show that Conditions  \ref{Condition:C1}.\ref{C1:a}, and \ref{Condition:C1}.\ref{C1:b} are valid. In the following, we will verify \ref{Condition:C1}.\ref{C1:c}, and \ref{Condition:C1}.\ref{C1:d}.

By the definition of $u^*$ in (\ref{eq:expression:vN:star:parametric}), $\bfP_j(z_0)$ , Lemmas \ref{lemma:existence:good:basis}, \ref{lemma:eigen:value:Lambda:j}, and Proposition \ref{proposition:inverse:quadratic:form} we have
\begin{align*}
	\|v_{NT}^*\|^2=\langle v_{NT}^*, v_{NT}^* \rangle=u^{*\top}\Lambda^{-1}u^{*\top}\leq c_5.
\end{align*}
By similar technique, we can establish the lower bound that $\|v_{NT}^*\|^2\geq c_5^{-1}.$
Combining above two inequalities and with $c_7=c_5$, we obtain that
\begin{equation}\label{eq:lemma:verify:condition:c1:parametric:eq1}
	c_7^{-1}\leq \|v_{NT}^*\|^2\leq c_7,
\end{equation}

By definition of $f_{0, N}$, the linear components of $f_{0,N}$ and $f_0$ should be the same, and we can conclude that 
\begin{align*}
	\frac{\bigg|\frac{\partial F(f_0)}{\partial f}[f_{0,N}-f_0]\bigg|}{\|v_{NT}^*\|}=\frac{|\beta_{j,0,N}-\beta_{j,0}|}{\|v_{NT}^*\|}=0,
\end{align*}
which is Condition \ref{Condition:C1}.\ref{C1:c}.

Moreover,  direct examination leads to follow equation:
\begin{align*}
\|v_{NT}^*\|_{\textrm{sd}}^2=\textrm{Var}(\sqrt{NT}\Delta_{NT}(f_0)[v_{NT}^*])&=\frac{1}{NT}\ev\bigg(\bfv_{NT}^{*\top} M_H\bfepsilon\bfepsilon^\top M_H \bfv_{NT}^*\bigg) \nonumber\\
&=\frac{1}{NT}\ev\bigg(\bfv_{NT}^{*\top} M_H\ev(\bfepsilon\bfepsilon^\top| \mathbb{Z}) M_H \bfv_{NT}^*\bigg).
\end{align*}
Next, by Assumption \ref{Assumption:common}.\ref{Ac:a2}, we have
\begin{align*}
	\|v_{NT}^*\|_{\textrm{sd}}^2\leq \frac{a_2}{NT}\ev\bigg(\bfv_{NT}^{*\top} M_H \bfv_{NT}^*\bigg)=a_2 \|v_{NT}^*\|^2.
\end{align*}
Similarly, we can obtain the lower bound that $\|v_{NT}^*\|_{\textrm{sd}}^2\geq a_2^{-1} \|v_{NT}^*\|^2.$
Therefore, above inequalities and (\ref{eq:lemma:verify:condition:c1:parametric:eq1}) lead to \ref{Condition:C1}.\ref{C1:d}. Finally, by Lemma \ref{lemma:general:asymptotoic:expansion}, we prove the second result.
\end{proof}

\begin{lemma}\label{thm:asymptotic:normal:multiple}
Suppose Assumptions \ref{Assumption:A1}, \ref{Assumption:common}, and \ref{Assumption:A4} hold. If $d_{NT}A_{NT}^2=o(N)$, $\gamma_{NT}=o(\lambda_{NT})$, $d_{NT,0}A_{NT,0}^2\gamma_{NT}^2=o(1)$, $N\rho_{NT}^2=o(1)$, and $T$ is fixed, then
\begin{align*}
	\begin{pmatrix}
	\sqrt{NT}(\widehat{\beta}_1-\beta_{1,0})\\
	\vdots\\
	\sqrt{NT}(\widehat{\beta}_d-\beta_{d,0})\\
	\sqrt{NTh_j}(\widehat{f}_{d+1}(z_{d+1,0})-f_{d+1,0}(z_{d+1,0}))\\
	\vdots\\
	\sqrt{NTh_j}(\widehat{f}_{p}(z_{p,0})-f_{p,0}(z_{p,0}))\\
	\end{pmatrix}\cid \textrm{N}(0, \Sigma).
\end{align*}
\end{lemma}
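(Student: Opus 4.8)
The plan is to convert the joint statement into a Cram\'er--Wold statement about a single linear functional of the error vector, using the Bahadur-type expansions already established for the individual coordinates, and then to invoke the Lyapunov central limit theorem for a triangular array of independent summands. First I would note that the hypotheses here are precisely those of item~(1) in Lemmas~\ref{lemma:verify:condition:c1:parametric} and~\ref{lemma:verify:condition:c1:nonparametric} (with $T$ fixed), so their conclusions apply. For $j\le d$, set $s_j:=1$; Lemma~\ref{lemma:verify:condition:c1:parametric} gives $\sqrt{NT}(\widehat\beta_j-\beta_{j,0})=-(NT)^{-1/2}\bfv_{NT,j}^{*\top}M_H\bfepsilon+o_P(1)$. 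For $j>d$, set $s_j:=\sqrt{h_j}$; multiplying the expansion of Lemma~\ref{lemma:verify:condition:c1:nonparametric} through by $\sqrt{h_j}\,\|v_{NT,j}^*\|_{\textrm{sd}}$ and using $\sqrt{h_j}\,\|v_{NT,j}^*\|_{\textrm{sd}}\to\sigma_j<\infty$ from Assumption~\ref{Assumption:A5},
\[
\sqrt{NTh_j}\big(\widehat f_j(z_{j,0})-f_{j,0}(z_{j,0})\big)=-\frac{s_j}{\sqrt{NT}}\,\bfv_{NT,j}^{*\top}M_H\bfepsilon+o_P(1).
\]
Hence the target vector equals $\mathbf{T}_{NT}+o_P(1)$, where $\mathbf{T}_{NT}$ has $j$-th coordinate $T_{NT,j}=-s_j(NT)^{-1/2}\bfv_{NT,j}^{*\top}M_H\bfepsilon$, and by Slutsky it suffices to prove $\mathbf{T}_{NT}\cid\textrm{N}(0,\Sigma)$.

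For the Cram\'er--Wold step I would fix $\lambda\in\mathbb{R}^p\setminus\{0\}$ and set $w_\lambda:=\sum_{j=1}^p\lambda_j s_j v_{NT,j}^*\in\Theta_{NT}^0$, a deterministic function given the knots. Then $\lambda^\top\mathbf{T}_{NT}=-(NT)^{-1/2}\bfw_\lambda^\top M_H\bfepsilon=\sum_{i=1}^N\eta_i$ with $\eta_i=-(NT)^{-1/2}\bfw_{\lambda,i}^\top H\bfepsilon_i$, where $\bfw_{\lambda,i},\bfepsilon_i$ are the $i$-th blocks. The $\eta_i$ are independent by Assumption~\ref{Assumption:A4}.\ref{A4:b}, mean zero by the exogeneity in Assumption~\ref{Assumption:common}.\ref{Ac:a2} (writing $\bfw_{\lambda,i}^\top H\bfepsilon_i=\sum_t(w_\lambda(\bfX_{it})-\bar w_{\lambda,i})\epsilon_{it}$), and $\textrm{Var}(\sum_i\eta_i)=\tfrac{1}{NT}\ev(\bfw_\lambda^\top M_H\bfepsilon\bfepsilon^\top M_H\bfw_\lambda)=\|w_\lambda\|_{\textrm{sd}}^2=\sum_{j,k}\lambda_j\lambda_k\,s_js_k\,\langle v_{NT,j}^*,v_{NT,k}^*\rangle_{\textrm{sd}}$. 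The three convergences in Assumption~\ref{Assumption:A5} ($s_j\|v_{NT,j}^*\|_{\textrm{sd}}\to\sigma_j$ and the normalized cross terms $\to r_{j,k}$) then give $\textrm{Var}(\sum_i\eta_i)\to\sum_{j,k}\lambda_j\lambda_k\sigma_j\sigma_k r_{j,k}=\lambda^\top\Sigma\lambda>0$.

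It remains to check the Lyapunov condition. Using Lemma~\ref{lemma:l2:norm:sup:norm}, $\|w_\lambda\|_\infty\le A_{NT,0}\|w_\lambda\|_2\lesssim A_{NT,0}\|w_\lambda\|$, and by Cauchy--Schwarz together with the order estimates $\|v_{NT,j}^*\|^2\asymp1$ ($j\le d$) and $\|v_{NT,j}^*\|^2\asymp h_j^{-1}$ ($j>d$) from Lemmas~\ref{lemma:verify:condition:c1:parametric}--\ref{lemma:verify:condition:c1:nonparametric}, one gets $\|w_\lambda\|=O(1)$, hence $\|w_\lambda\|_\infty=O(A_{NT,0})$. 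Combining $|\bfw_{\lambda,i}^\top H\bfepsilon_i|\le\|\bfw_{\lambda,i}\|_2\|\bfepsilon_i\|_2\le\sqrt T\,\|w_\lambda\|_\infty\|\bfepsilon_i\|_2$ with $\ev\|\bfepsilon_i\|_2^4\le T^2 a_8$ (Assumption~\ref{Assumption:A4}.\ref{A4:a}) yields $\sum_i\ev|\eta_i|^4\lesssim T^2A_{NT,0}^4/N$; since $T$ is fixed and $A_{NT,0}^4\le A_{NT}^4\asymp d_{NT}A_{NT}^2=o(N)$, the Lyapunov ratio $\sum_i\ev|\eta_i|^4/\|w_\lambda\|_{\textrm{sd}}^4\to0$, so the Lyapunov CLT gives $\lambda^\top\mathbf{T}_{NT}\cid\textrm{N}(0,\lambda^\top\Sigma\lambda)$; Cram\'er--Wold plus the $o_P(1)$ remainder finish the proof. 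The genuinely delicate point — and the main obstacle — is showing $\|w_\lambda\|_\infty$ is small enough relative to $\sqrt N$: this is exactly where the rate condition $d_{NT}A_{NT}^2=o(N)$ enters, and one must track the weights $s_j$ carefully so that the rescaled representatives $s_j v_{NT,j}^*$ all have $O(1)$ norm even though the purely nonparametric representatives $v_{NT,j}^*$ diverge like $h_j^{-1/2}$.
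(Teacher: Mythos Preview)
Your proposal is correct and follows essentially the same route as the paper: invoke the Bahadur expansions of Lemmas~\ref{lemma:verify:condition:c1:parametric} and~\ref{lemma:verify:condition:c1:nonparametric}, form a single linear combination of the Riesz representatives, verify the limit variance via Assumption~\ref{Assumption:A5}, establish Lyapunov's condition, and conclude by Cram\'er--Wold. The only tactical differences are (i) the paper normalizes each representative by its $\|\cdot\|_{\textrm{sd}}$-norm (setting $u_{NT,j}^*=v_{NT,j}^*/\|v_{NT,j}^*\|_{\textrm{sd}}$) rather than by your weights $s_j\in\{1,\sqrt{h_j}\}$, which is equivalent under Assumption~\ref{Assumption:A5}; and (ii) the paper obtains a slightly sharper fourth-moment bound $\sum_i\ev|w_i|^4\lesssim NA_{NT,0}^2\|u\|^4$ by first splitting $\ev\!\big([T^{-1}\sum_t v_{it}^2]^2\big)$ and pulling out only two sup-norm factors, whereas your cruder Cauchy--Schwarz bound gives $A_{NT,0}^4$ in place of $A_{NT,0}^2$---but both suffice here since $A_{NT,0}^4\asymp d_{NT,0}A_{NT,0}^2\le d_{NT}A_{NT}^2=o(N)$.
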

\begin{proof}[Proof of Lemma \ref{thm:asymptotic:normal:multiple}]

For $u\in \Theta_{NT}^0$ with $\|u\|=O(1)$, and $\lim_{N \to \infty}\|u\|_{\textrm{sd}}^2>0$, we define
\begin{align*}
	w_i=\frac{1}{T}\sum_{t=1}^Tv_{it}\epsilon_{it},\;\; \textrm{ with }\;\; v_{it}=u(\bfX_{it})-\frac{1}{T}\sum_{s=1}^Tu(\bfX_{is}).
\end{align*}
Then, it follows that $\bfu^\top M_H \bfepsilon/T=\sum_{i=1}^N w_i.$
By Assumption \ref{Assumption:A4}.\ref{A4:a}, it follows that
\begin{align*}
	\ev\bigg(\bigg[\frac{1}{T}\sum_{t=1}^T\epsilon_{it}^2\bigg]^{2}\bigg| \mathbb{Z}\bigg)=\frac{1}{T^2}\sum_{t=1}^T\ev(\epsilon_{it}^4 | \mathbb{Z})+\frac{2}{T^2}\sum_{1\leq t<s\leq T}\ev(\epsilon_{it}^2\epsilon_{is}^2| \mathbb{Z})\leq 3\sup_{1\leq t \leq T}\ev(\epsilon_{it}^4 | \mathbb{Z})\leq 3a_8.
\end{align*}
Furthermore, notice $|v_{it}|\leq 2\|u\|_\infty\leq 2A_{NT,0}\|u\|$, we have
\begin{align*}
	\ev\bigg(\bigg[\frac{1}{T}\sum_{t=1}^Tv_{it}^2\bigg]^{2}\bigg)&=\frac{1}{T^2}\sum_{t=1}^T\ev(v_{it}^4)+\frac{2}{T^2}\sum_{1\leq t<s\leq T}\ev(v_{it}^2v_{is}^2)\nonumber\\
	&\leq  \frac{4\|u\|_\infty^2}{T^2}\sum_{t=1}^T\ev(v_{it}^2)+\frac{8\|u\|_\infty^2}{T}\sum_{t=1}^T\ev(v_{it}^2)\nonumber\\
	&\leq \frac{12A_{NT,0}^2\|u\|^2}{T}\sum_{t=1}^T\ev(v_{it}^2)\nonumber\\
	&= \frac{12A_{NT,0}^2\|u\|^2}{T}\sum_{t=1}^T\ev\bigg(\bigg|u(\bfX_{it})-\frac{1}{T}\sum_{s=1}^Tu(\bfX_{is})\bigg|^2\bigg)\nonumber\\
	&= 12A_{NT,0}^2\|u\|^2\ev\bigg(\zeta_i(u,u)\bigg),
\end{align*}
where $\zeta_i(\cdot, \cdot)$ is defined in Section \ref{sec:estimation}.
Abov inequalites together lead to
\begin{align}
	\ev(|w_i|^4)\leq \ev\bigg(\bigg[\frac{1}{T}\sum_{t=1}^Tv_{it}^2\bigg]^{2}\bigg[\frac{1}{T}\sum_{t=1}^T\epsilon_{it}^2\bigg]^{2}\bigg)&=\ev\bigg\{\bigg[\frac{1}{T}\sum_{t=1}^Tv_{it}^2\bigg]^{2}\ev\bigg(\bigg[\frac{1}{T}\sum_{t=1}^T\epsilon_{it}^2\bigg]^{2}\bigg| \mathbb{Z}\bigg)\bigg\}\nonumber\\
	&\leq 12a_2A_{NT,0}^2\|u\|^2\ev\bigg(\zeta_i(u,u)\bigg),\nonumber
\end{align}
which further implies that
\begin{align}
	\sum_{i=1}^N \ev(|w_i|^4)\leq 12a_2NA_{NT,0}^2\|u\|^4.\label{eq:lemma:asymptotic:normal:multiple:eq1}
\end{align}
By independence of $w_i$'s, it follows from Assumption \ref{Assumption:common}.\ref{Ac:a2}  that
\begin{align}
	\textrm{Var}(\sum_{i=1}^Nw_i)&=\sum_{i=1}^N \ev(w_i^2)=\frac{1}{T^2}\ev\bigg(\bfu^\top M_H\bfepsilon\bfepsilon^\top M_H \bfu\bigg)\geq \frac{1}{T^2}a_2^{-1}\ev\bigg(\bfu^\top M_H \bfu\bigg)= NT^{-1}a_2^{-1}\|u\|^2.\label{eq:lemma:asymptotic:normal:multiple:eq2}
\end{align}
Since (\ref{eq:lemma:asymptotic:normal:multiple:eq1}) and (\ref{eq:lemma:asymptotic:normal:multiple:eq2}) together imply that  $\sum_{i=1}^N \ev(|w_i|^4)/(\textrm{Var}(\sum_{i=1}^Nw_i))^2\leq 12a_2^3 {A_{NT,0}^2T^2}/N=o(1)$, which, by Lyapunov C.L.T, further leads to
\begin{align}
	-\frac{\sqrt{NT}\Delta_{NT}(f_0)[u]}{\|u\|_{\textrm{sd}}}=\frac{\sum_{i=1}^Nw_i}{\sqrt{\textrm{Var}(\sum_{i=1}^Nw_i)}}\cid \textrm{N}(0, 1).\label{eq:lemma:asymptotic:normal:multiple:eq3}
\end{align}

Consider the functional $F_j(f_{0})=\beta_{j,0}$ for $j=1,\ldots, d$, and $F_j(f_0)=f_{j,0}(z_{j,0})$ with  $z_{j,0} \in [0, 1]$ for $j=d+1,\ldots, p$. Therefore, there are Riesz representatives  $v_{{NT},j}^* \in \Theta_{NT}^0$ for $j=1,\ldots, p$ of functional $F_j$'s and the expression of $v_{{NT},j}^*$'s are given by (\ref{eq:expression:vN:star:nonparametric}), and (\ref{eq:expression:vN:star:parametric}). Define $u_{{NT},j}^*=v_{{NT},j}^*/\|v_{{NT},j}\|_{\textrm{sd}}$ for $j=1,\ldots, p$, and  $u_{NT}^*(\bfx)=\sum_{i=j}^p\eta_j u_{{NT},j}^*(\bfx) \in \Theta_{NT}^0$.   By Lemmas \ref{lemma:verify:condition:c1:nonparametric} and \ref{lemma:verify:condition:c1:parametric}, it follows that
\begin{align*}
	\frac{\sqrt{NT}(\widehat{\beta}_j-\beta_{j,0})}{\|v_{{NT},j}^*\|_{\textrm{sd}}}=-\sqrt{NT}\Delta_{NT}(f_0)[u_{{NT},j}^*]+o_P(1), \textrm{ for } j=1,\ldots, d,\nonumber
\end{align*}
and
\begin{align*}
	\frac{\sqrt{NT}(\widehat{f}_{j}(z_{j,0})-f_{j,0}(z_{j,0}))}{\|v_{{NT},j}^*\|_{\textrm{sd}}}=-\sqrt{NT}\Delta_{NT}(f_0)[u_{{NT},j}^*]+o_P(1), \textrm{ for } j=d+1,\ldots, p.\nonumber
\end{align*}
As a consequence, we have
\begin{align*}
	\sum_{j=1}^d \eta_j\frac{\sqrt{NT}(\widehat{\beta}_j-\beta_{j,0})}{\|v_{{NT},j}^*\|_{\textrm{sd}}}+\sum_{j=d+1}^p  \eta_j\frac{\sqrt{NT}(\widehat{f}_{j}(z_{j,0})-f_{j,0}(z_{j,0}))}{\|v_{{NT},j}^*\|_{\textrm{sd}}}=-\sqrt{NT}\Delta_{NT}(f_0)[u_{NT}^*]+o_P(1).
\end{align*}

By Assumption \ref{Assumption:A4}.\ref{A4:d}, we have
\begin{align}
	\|u_{NT}^*\|_{\textrm{sd}}^2&=\sum_{j=1}^p \eta_j^2\|u_{{NT},j}^*\|_{\textrm{sd}}^2+2\sum_{1\leq j<k\leq p}\eta_j\eta_k \langle u_{{NT},j}^*, u_{{NT},k}^*\rangle_{\textrm{sd}}\nonumber\\
	&=\sum_{j=1}^p \eta_j^2+2\sum_{1\leq j<k\leq p}\eta_j\eta_k \frac{\langle v_{{NT},j}^*, v_{{NT},k}^*\rangle_{\textrm{sd}}}{\|v_{{NT},j}^*\|_{\textrm{sd}} 	\|v_{{NT},k}^*\|_{\textrm{sd}}} \nonumber\\
	&\to \sum_{j=1}^p \eta_j^2+2\sum_{1\leq j<k\leq p}\eta_j\eta_kr_{j,k}\nonumber\\
	&=(\eta_1, \ldots, \eta_p) R \;(\eta_1, \ldots, \eta_p)^\top,\label{eq:lemma:asymptotic:normal:multiple:eq4}
\end{align}
with 
\begin{align*}
	R=\begin{pmatrix}
	1& r_{1,2}&r_{1,3}&\ldots &r_{1,p}\\ 
	r_{1,2} & 1 & r_{2, 3} &\ldots & r_{2,p}\\
	\vdots&\vdots&\vdots&\vdots&\vdots\\
	r_{1,p} &r_{2,p}&r_{3,p}&\ldots & 1
	\end{pmatrix}=\begin{pmatrix}
	\sigma_1^{-1}&0&\ldots&0\\
	0&\sigma_2^{-1}&\ldots&0\\
	\vdots&\vdots&\vdots&\vdots\\
	0&0&\ldots&0&\sigma_p^{-1}
	\end{pmatrix}\Sigma \begin{pmatrix}
	\sigma_1^{-1}&0&\ldots&0\\
	0&\sigma_2^{-1}&\ldots&0\\
	\vdots&\vdots&\vdots&\vdots\\
	0&0&\ldots&0&\sigma_p^{-1}
	\end{pmatrix}.
\end{align*}
Since $\sigma_j>0$, $R$ is also positive definite and $\|u_{NT}^*\|_{\textrm{sd}}^2>0$. As a consequence of (\ref{eq:lemma:asymptotic:normal:multiple:eq3}),  we conclude that
\begin{align}
	\sum_{j=1}^d \eta_j\frac{\sqrt{NT}(\widehat{\beta}_j-\beta_{j,0})}{\|v_{{NT},j}^*\|_{\textrm{sd}}\|u_{NT}^*\|_{\textrm{sd}}}+\sum_{j=d+1}^p  \eta_j\frac{\sqrt{NT}(\widehat{f}_{j}(z_{j,0})-f_{j,0}(z_{j,0}))}{\|v_{{NT},j}^*\|_{\textrm{sd}}\|u_{NT}^*\|_{\textrm{sd}}}\cid \textrm{N}(0, 1).\nonumber
\end{align}
which, together with (\ref{eq:lemma:asymptotic:normal:multiple:eq4}), lead to
\begin{align*}
	\sum_{j=1}^d \eta_j{\sqrt{NT}(\widehat{\beta}_j-\beta_{j,0})}+\sum_{j=d+1}^p  \eta_j{\sqrt{NTh_j}(\widehat{f}_{j}(z_{j,0})-f_{j,0}(z_{j,0}))}\cid \textrm{N}(0, (\eta_1, \ldots, \eta_p)^\top \Sigma  (\eta_1, \ldots, \eta_p)).
\end{align*}
Finally, by Cram\'{e}r-Wold device and above equation, we prove the result.
\end{proof}

%\subsection{Joint Asymptotic Normality for Large Panel}
\begin{lemma}\label{lemma:SNT:rate:1:large:T}
Suppose Assumptions \ref{Assumption:A2} and \ref{Assumption:common} hold. If  $A_{NT,0}^2=o(N)$ and $A_{NT,0}^4=o(NT)$, then it follows that
\begin{align*}
	\frac{1}{2NT}\bfu_{NT}^{*\top}M_H \bfu_{NT}^*-\ev\bigg(\frac{1}{2NT}\bfu_{NT}^{*\top}M_H \bfu_{NT}^*\bigg)=O_P(1).
\end{align*}
\end{lemma}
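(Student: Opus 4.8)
The plan is to recognize the quantity as a centered sample average and control it by a one-line second-moment estimate. Recall from the definition of $\bfu_{NT}^*$ and the identity $\frac{1}{NT}\bfg^\top M_H\bff=\langle g,f\rangle_{NT}$ for $g,f\in\Theta_{NT}$ that, with $u_{NT}^*=v_{NT}^*/\|v_{NT}^*\|_{\textrm{sd}}\in\Theta_{NT}^0$ (a deterministic function for fixed $N,T$),
\begin{equation*}
\frac{1}{2NT}\bfu_{NT}^{*\top}M_H\bfu_{NT}^*=\tfrac12\|u_{NT}^*\|_{NT}^2=\frac{1}{2N}\sum_{i=1}^N\zeta_i(u_{NT}^*,u_{NT}^*),\qquad \ev\!\Big(\frac{1}{2NT}\bfu_{NT}^{*\top}M_H\bfu_{NT}^*\Big)=\tfrac12\|u_{NT}^*\|^2.
\end{equation*}
Thus the object to bound is the mean-zero average $\frac{1}{2N}\sum_{i=1}^N\{\zeta_i(u_{NT}^*,u_{NT}^*)-\ev\,\zeta_i(u_{NT}^*,u_{NT}^*)\}$.

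First I would record the size of $u_{NT}^*$: by Condition \ref{Condition:C1}.\ref{C1:d}, verified in Lemmas \ref{lemma:verify:condition:c1:nonparametric} and \ref{lemma:verify:condition:c1:parametric}, $\|u_{NT}^*\|=\|v_{NT}^*\|/\|v_{NT}^*\|_{\textrm{sd}}=O(1)$; together with the norm comparison $\|\cdot\|\asymp\|\cdot\|_2$ on $\Theta_{NT}^0$ (Lemma \ref{lemma:expectation:sample:variance:large:T}) this gives $\|u_{NT}^*\|_2=O(1)$, and then Lemma \ref{lemma:l2:norm:sup:norm} yields $\|u_{NT}^*\|_\infty\le A_{NT,0}\|u_{NT}^*\|_2=O(A_{NT,0})$. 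Since the blocks $\bfW_1,\dots,\bfW_N$ are independent (Assumption \ref{Assumption:common}.\ref{Ac:a1}), the summands $\zeta_i(u_{NT}^*,u_{NT}^*)$ are independent, so, applying the first bound of Lemma \ref{lemma:expectation:sample:variance:mixing} (with $g=f=u_{NT}^*$), which already encodes the $\alpha$-mixing dependence in $t$,
\begin{equation*}
\textrm{Var}\!\Big(\frac{1}{2NT}\bfu_{NT}^{*\top}M_H\bfu_{NT}^*\Big)=\frac{1}{4N^2}\sum_{i=1}^N\textrm{Var}\big(\zeta_i(u_{NT}^*,u_{NT}^*)\big)\le\frac{1}{4N^2}\sum_{i=1}^N\frac{c_2}{T}\|u_{NT}^*\|_\infty^4=O\!\Big(\frac{A_{NT,0}^4}{NT}\Big).
\end{equation*}
Under $A_{NT,0}^4=o(NT)$ this variance tends to $0$, so Chebyshev's inequality applied to the mean-zero difference above gives $\frac{1}{2NT}\bfu_{NT}^{*\top}M_H\bfu_{NT}^*-\ev(\cdot)=o_P(1)$, in particular $O_P(1)$, which is the assertion.

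I do not expect a genuine obstacle: the whole argument reduces to the variance bound, for which the only non-routine input is the mixing-based estimate of Lemma \ref{lemma:expectation:sample:variance:mixing}; the hypothesis $A_{NT,0}^4=o(NT)$ is exactly what renders the bound negligible, and the condition $A_{NT,0}^2=o(N)$ enters only (through the norm-comparison Lemmas \ref{lemma:expectation:sample:variance:large:T} and \ref{lemma:l2:norm:sup:norm}) to keep $\|u_{NT}^*\|_2$ and $\|u_{NT}^*\|_\infty$ bounded. One could instead route the same independence-plus-mixing structure through Bernstein's inequality (Proposition \ref{proposition:panel:sub:exponential:inequality}), as in the proof of Lemma \ref{lemma:SNT:rate:2:large:T}, to obtain a tail bound, but for this statement it is unnecessary; the fixed-$T$ counterpart Lemma \ref{lemma:SNT:rate:1} is handled identically, using the elementary bound $\textrm{Var}(\zeta_i(g,g))\le 4\|g\|_\infty^2\,\ev\,\zeta_i(g,g)$ from the proof of Lemma \ref{lemma:bounded:sequence:two:norm:difference} in place of Lemma \ref{lemma:expectation:sample:variance:mixing}.
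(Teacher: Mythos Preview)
Your proposal is correct. The paper does not give a proof of this lemma at all; it simply remarks that the result ``can be proved similarly using Bernstein inequality as in Lemma \ref{lemma:SNT:rate:2:large:T}'' and omits the details. Both routes begin from the same observation you make, that the quantity is the centered average $\frac{1}{2N}\sum_{i=1}^N\{\zeta_i(u_{NT}^*,u_{NT}^*)-\ev\,\zeta_i(u_{NT}^*,u_{NT}^*)\}$ with independent summands; the paper then appeals to Bernstein for a tail bound, whereas you bound the variance via Lemma \ref{lemma:expectation:sample:variance:mixing} and apply Chebyshev. Since the target is only $O_P(1)$, your more elementary argument is entirely adequate (indeed, you obtain the stronger $o_P(1)$), and you already note the Bernstein alternative yourself.

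One small correction on bookkeeping: the norm comparison $\|\cdot\|\asymp\|\cdot\|_2$ on $\Theta_{NT}^0$ from Lemma \ref{lemma:expectation:sample:variance:large:T} requires $A_{NT,0}^2=o(T)$, not $A_{NT,0}^2=o(N)$ as you write. This does not affect the lemma's conclusion, however: since $\|u_{NT}^*\|^2=\|v_{NT}^*\|^2/\|v_{NT}^*\|_{\textrm{sd}}^2\le a_2$ directly from Assumption \ref{Assumption:common}.\ref{Ac:a2}, and $\|u_{NT}^*\|_{NT}^2\ge 0$ has expectation $\|u_{NT}^*\|^2=O(1)$, Markov's inequality already gives $\|u_{NT}^*\|_{NT}^2=O_P(1)$ and hence the centered difference is $O_P(1)$ without invoking any rate condition at all. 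The rate hypotheses in the lemma statement are therefore only needed for the sharper $o_P(1)$ you obtain, and in the context where the lemma is applied (Lemma \ref{lemma:general:asymptotoic:expansion}) the condition $A_{NT}^2=o(T)$ is available anyway.
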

\begin{proof}[Proof of Lemma \ref{lemma:SNT:rate:1:large:T}]
The result can be proved similarly using Bernstein inequality as in Lemma \ref{lemma:SNT:rate:2:large:T} and we omit the proof.
\end{proof}

\begin{lemma}\label{lemma:SNT:rate:2:large:T}
Suppose Assumptions \ref{Assumption:A2} and \ref{Assumption:common} hold. Furthermore, if $d_{NT,0}A_{NT,0}^4\gamma_{NT}^2=o(1)$ and $d_{NT,0}^2A_{NT,0}^4\gamma_{NT}^2 T=o(N)$, then
\begin{align*}
	\frac{1}{NT}\bfu_{NT}^{*\top}M_H(\widehat{\bff}-\bff_0)-\ev\bigg(\frac{1}{NT}\bfu_{NT}^{*\top}M_H(\widehat{\bff}-\bff_0)\bigg)=o_P(N^{-1/2}T^{-1/2})
\end{align*}
\end{lemma}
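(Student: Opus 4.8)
The plan is to follow the proof of Lemma \ref{lemma:SNT:rate:2} essentially line by line, the one genuinely new ingredient being a sharper variance bound for the per-individual summands that carries an extra factor $T^{-1}$ reflecting the time-averaging permitted by the $\alpha$-mixing in Assumption \ref{Assumption:A2}. Concretely, I would set $\mcF_N=\{f\in\Theta_{NT}^0 : \|f-f_0\|_2\leq C\gamma_{NT}\}$ and, for $f\in\mcF_N$, put $\xi_i(f)=\zeta_i(u_{NT}^*,f-f_0)$, $X_i(f)=\xi_i(f)-\ev(\xi_i(f))$ and $S_N(f)=N^{-1}\sum_{i=1}^N X_i(f)$, noting as in the fixed-$T$ case that $S_N(f)$ equals the centred quantity $\frac{1}{NT}\bfu_{NT}^{*\top}M_H(\bff-\bff_0)-\ev(\frac{1}{NT}\bfu_{NT}^{*\top}M_H(\bff-\bff_0))$. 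It then suffices to show $\sup_{f\in\mcF_N}|S_N(f)|=o_P((NT)^{-1/2})$, since $\widehat{f}\in\mcF_N$ with probability tending to one by Theorems \ref{thm:rate:of:convergence:together} and \ref{thm:selection:consistency:together}. Throughout I would use, exactly as before, that $\|f-f_0\|_\infty\leq 2CA_{NT,0}\gamma_{NT}$ on $\mcF_N$ (Lemmas \ref{lemma:l2:norm:sup:norm}, \ref{lemma:approximation:error}) and that $\|u_{NT}^*\|=O(1)$ (Condition \ref{Condition:C1}.\ref{C1:d}), hence $\|u_{NT}^*\|_2=O(1)$ and $\|u_{NT}^*\|_\infty\leq A_{NT,0}\|u_{NT}^*\|_2$ by the norm-equivalence in Lemma \ref{lemma:expectation:sample:variance:large:T}.

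The two inputs to a chaining/Bernstein argument are then: (i) the Lipschitz bound $|X_i(f_1)-X_i(f_2)|\leq 8A_{NT,0}^2\|u_{NT}^*\|_2\|f_1-f_2\|_2$, whose derivation is identical to the fixed-$T$ computation; and (ii) the variance bound, for which I would invoke Lemma \ref{lemma:expectation:sample:variance:mixing} (together with the bilinearity of $\zeta_i$ and Lemma \ref{lemma:l2:norm:sup:norm} on $\Theta_{NT}^0$) to get $\textrm{Var}(X_i(f_1)-X_i(f_2))=\textrm{Var}(\zeta_i(u_{NT}^*,f_1-f_2))\leq c_2T^{-1}\|u_{NT}^*\|_\infty^2\|f_1-f_2\|_\infty^2\leq c_2 T^{-1}A_{NT,0}^4\|u_{NT}^*\|_2^2\|f_1-f_2\|_2^2$ --- this is where the $T^{-1}$ enters, replacing the $a_1 A_{NT,0}^2$ of the fixed-$T$ variance bound by $c_2 A_{NT,0}^4/T$. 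Bernstein's inequality applied at scale $xs(NT)^{-1/2}$ then produces two exponential tails whose inverse rates are of order $A_{NT,0}^4\|u_{NT}^*\|_2^2\|f_1-f_2\|_2^2$ (the Gaussian part; the $T$ cancels against the one in the variance) and $\sqrt{T/N}\,A_{NT,0}^2\|u_{NT}^*\|_2\|f_1-f_2\|_2$ (the sub-exponential part; here $\sqrt N$ is replaced by $\sqrt{N/T}$, precisely because the effective per-unit variance is $T$ times smaller).

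From here the chaining is verbatim that of Lemma \ref{lemma:SNT:rate:2}: take nested $\delta_k$-nets $\mcH_k$ of $\mcF_N$ with $\delta_k=3^{-k-k_0}$ and $C\gamma_{NT}\leq 3^{-k_0}\leq 2C\gamma_{NT}$, bound $\#(\mcH_k)\leq((8C\gamma_{NT}+\delta_k)/\delta_k)^{d_{NT,0}}$ via \cite{van2000}[Lemma 2.5] (as $\mcF_N$ is a ball of radius $C\gamma_{NT}$ in $\mathbb{R}^{d_{NT,0}}$), choose $K=K(N)$ large so that the first ``continuity at infinity'' term vanishes, and split $\pr(\sup_{f\in\mcF_N}|S_N(f)|>x(NT)^{-1/2})\leq S_1+S_2+S_3$ with $S_1=0$. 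The Gaussian-tail sum $S_2$ is summable, and equals $O(A_{NT,0}^4\gamma_{NT}^2)=o(1)$, precisely when $d_{NT,0}A_{NT,0}^4\gamma_{NT}^2=o(1)$; the sub-exponential-tail sum $S_3$ equals $O(A_{NT,0}^2\gamma_{NT}\sqrt{T/N})=o(1)$ precisely when $d_{NT,0}A_{NT,0}^2\gamma_{NT}=o(\sqrt{N/T})$, i.e. when $d_{NT,0}^2A_{NT,0}^4\gamma_{NT}^2T=o(N)$ --- so the two hypotheses of the lemma are exactly calibrated to these two sums, the factor $T$ in the second being the cost of the temporal correlation relative to the fixed-$T$ statement. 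Letting $x\downarrow 0$ gives the uniform bound and, with $\lim_{C\to\infty}\pr(\widehat f\in\mcF_N)=1$, the conclusion. The step I expect to be the main obstacle is item (ii): getting the $T^{-1}$-improved variance bound genuinely requires the mixing Bernstein machinery behind Lemma \ref{lemma:expectation:sample:variance:mixing}, and one then has to carry the $(NT)^{-1/2}$ normalization carefully through the chaining so that the entropy conditions emerge exactly as $d_{NT,0}A_{NT,0}^4\gamma_{NT}^2=o(1)$ and $d_{NT,0}^2A_{NT,0}^4\gamma_{NT}^2T=o(N)$ rather than something weaker or stronger.
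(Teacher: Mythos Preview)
Your proposal is correct and follows the paper's own proof essentially line by line: the same function class $\mcF_{NT}$, the same Lipschitz bound $|X_i(f_1)-X_i(f_2)|\leq 8A_{NT,0}^2\|u_{NT}^*\|_2\|f_1-f_2\|_2$, the same $T^{-1}$-improved variance bound via Lemma \ref{lemma:expectation:sample:variance:mixing}, the same Bernstein-plus-chaining decomposition into $S_1+S_2+S_3$, with the two rate hypotheses matched to $S_2$ and $S_3$ exactly as you describe. The only trivial slip is that $\|u_{NT}^*\|_\infty\leq A_{NT,0}\|u_{NT}^*\|_2$ is Lemma \ref{lemma:l2:norm:sup:norm}, not Lemma \ref{lemma:expectation:sample:variance:large:T}.
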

\begin{proof}[Proof of Lemma \ref{lemma:SNT:rate:2:large:T}]
Let $\mcF_{NT}=\{f\in \Theta_{NT}^0 \;|\; \|f-f_0\|_2\leq C\gamma_{NT}\}$. Notice if $f\in \mcF_{NT}$, by Lemmas \ref{lemma:l2:norm:sup:norm} and \ref{lemma:approximation:error}, it follows that
\begin{align*}
	\|f-f_0\|_\infty&\leq \|f-g_*\|_\infty+\|g_*-f_0\|_\infty\nonumber\\
	&\leq A_{NT,0}\|f-g_*\|_2+\rho_{NT}\nonumber\\
	&\leq CA_{NT,0}\gamma_{NT}+\rho_{NT}\nonumber\\
	&\leq 2CA_{NT,0}\gamma_{NT},
\end{align*}
where we use the fact that $\rho_{NT}=o(A_{NT,0}\gamma_{NT})$. Similar to Lemma \ref{lemma:uniform:equivalence:empirical:population:norm:large:T}, we define
\begin{align*}
	\xi_i(f)=\zeta_{i}(u_{NT}^*, f-f_0),\;\; X_i(f)={\xi_i(f)-\ev(\xi_i(f))},\textrm{ and }\; S_{NT}(f)=\frac{1}{N}\sum_{i=1}^NX_i(f).
\end{align*}
It is not difficult to verify the following equality:
\begin{align*}
	S_N(f)=\frac{1}{NT}\bfu_{NT}^{*\top}M_H({\bff}-\bff_0)-\ev\bigg(\frac{1}{NT}\bfu_{NT}^{*\top}M_H({\bff}-\bff_0)\bigg).
\end{align*}

Similar to the proof of Lemma \ref{lemma:uniform:equivalence:empirical:population:norm:large:T}, we also can show that for $f_1, f_2\in \mcF_{NT}$, the following holds:
\begin{align*}
	|\xi_i(f_1)-\xi_i(f_2)|\leq 4A_{NT,0}^2\|u_{NT}^*\|_2\|f_1-f_2\|_2,
\end{align*}
which further leads to 
\begin{align*}
	|X_i(f_1)-X_i(f_2)|\leq 8A_{NT,0}^2\|u_{NT}^*\|_2\|f_1-f_2\|_2.
\end{align*}
Moreover,  by Lemma \ref{lemma:l2:norm:sup:norm} and Lemma \ref{lemma:expectation:sample:variance:mixing}, we have
\begin{align*}
	\textrm{Var}\bigg(X_i(f_1)-X_i(f_2)\bigg)\leq \frac{c_2}{T}\|u_{NT}^*\|_\infty^2\|f_1-f_2\|_\infty^2\leq \frac{c_2A_{NT,0}^4}{T}\|u_{NT}^*\|_2^2\|f_1-f_2\|_2^2.
\end{align*}
By Bernstein inequality, it follows that
\begin{eqnarray}
&&\pr\bigg(\bigg|S_N(f_1)-S_N(f_2)\bigg|>\frac{xs}{\sqrt{NT}}\bigg)\nonumber\\
&\leq& 2\exp\bigg(-\frac{x^2s^2/T}{\frac{2c_2A_{NT,0}^4}{T}\|u_{NT}^*\|_2^2\|f_1-f_2\|_2^2+\frac{8A_{NT,0}^2}{\sqrt{NT}}\|u_{NT}^*\|_2\|f_1-f_2\|_2xs}\bigg)\nonumber\\
&\leq& 2\exp\bigg(-\frac{x^2s^2}{{4c_2A_{NT,0}^4}\|u_{NT}^*\|_2^2\|f_1-f_2\|_2^2}\bigg)+2\exp\bigg(-\frac{\sqrt{N/T}xs}{16A_{NT,0}^2\|u_{NT}^*\|_2\|f_1-f_2\|_2}\bigg).\nonumber\\\label{eq:lemma:SNT:rate:2:large:T:eq1}
\end{eqnarray}
Let $\delta_k=3^{-k-k_0}$ for $k\geq 0$ and some $k_0$ such that $C\gamma_N \leq 3^{-k_0}\leq 2C\gamma_{NT}$. For sufficient large integer $K$, which will be specified later, let $\{0\}=\mcH_0\subset \mcH_1 \ldots \mcH_K$ be a sequence of subsets of $\mcF_N$ such that $\min_{f^*\in \mcH_k}\|f^*-f\|_2\leq \delta_k$ for all $f \in \mcF_{NT}$. Moreover the subsets $\mcH_K$ is chosen inductively such that two different elements in $\mcH_k$ is at least $\delta_k$ apart. 

By definition, the cardinality $\#(\mcH_k)$ of $\mcH_k$ is bounded by the $\delta_k/2$-covering number $D(\delta_k/2, \mcF_N, \|\cdot\|_2)$. Therefore, we have
\begin{eqnarray*}
	\#(\mcH_k)\leq D(\delta_k/2, \mcF_{NT}, \|\cdot\|_2)\leq \bigg(\frac{8C\gamma_{NT}+\delta_k}{\delta_k}\bigg)^{d_{NT,0}},
\end{eqnarray*} 
where the last inequality is due to \cite{van2000}[Lemma 2.5] and the fact that $\mcF_{NT}$ can be treated as a ball with radius $C\gamma_{NT}$ in $\mathbb{R}^{d_{NT,0}}$.  For any $f \in \mcF_{NT}$, let $\tau_k(f)\in \mcH_k$  be a element such that $\|\tau_k(f)-f\|\leq \delta_k$, for $k=1,2,\ldots, K$. Now for any fixed $x>0$, by (\ref{eq:lemma:SNT:rate:2:large:T:eq1}) and the definition of $\tau_k$, we have
\begin{eqnarray}
&&\pr\bigg(\sup_{f\in \mcF_{NT}}|S_N(f)|>\frac{x}{\sqrt{NT}}\bigg)\nonumber\\
&\leq &\pr\bigg(\sup_{f\in \mcF_{NT}} \bigg|S_N(f)-S_N\bigg(\tau_K(f)\bigg)\bigg|>\frac{x}{2^K\sqrt{NT}}\bigg)\nonumber\\
	&&+\sum_{k=1}^K\pr\bigg(\sup_{f\in \mcF_{NT}} \bigg|S_N\bigg(\tau_k\circ\ldots \circ \tau_K(f)\bigg)-S_N\bigg(\tau_{k-1}\circ\tau_k\circ\ldots \circ \tau_K(f)\bigg)\bigg|>\frac{x}{2^{k-1}\sqrt{NT}}\bigg)\nonumber\\
	&\leq& \pr\bigg(\sup_{f\in \mcF_N} 8A_{NT,0}^2\|u_{NT}^*\|_2\|f-\tau_K(f)\|_2>\frac{x}{2^K\sqrt{NT}}\bigg)\nonumber\\
	&&+\sum_{k=1}^K\#(\mcH_k)\sup_{f\in \mcF_{NT}}\pr\bigg( \bigg|S_N\bigg(\tau_k\circ\ldots \circ \tau_K(f)\bigg)-S_N\bigg(\tau_{k-1}\circ\tau_k\circ\ldots \circ \tau_K(f)\bigg)\bigg|>\frac{x}{2^{k-1}\sqrt{NT}}\bigg)\nonumber\\
	&\leq& \pr\bigg(\sup_{f\in \mcF_N} 8A_{NT,0}^2\|u_{NT}^*\|_2\delta_K>\frac{x}{2^K\sqrt{NT}}\bigg)\nonumber\\
	&&+\sum_{k=1}^K\#(\mcH_k)\sup_{f\in \mcH_k}\pr\bigg( \bigg|S_N\bigg(f\bigg)-S_N\bigg(\tau_{k-1}(f)\bigg)\bigg|>\frac{x}{2^{k-1}\sqrt{NT}}\bigg)\nonumber\\
	&\leq& \pr\bigg(\sup_{f\in \mcF_N} 8A_{NT,0}^2\|u_{NT}^*\|_23^{-K-k_0}>\frac{x}{2^K\sqrt{NT}}\bigg)\nonumber\\
	&&+\sum_{k=1}^\infty \bigg(\frac{8C\gamma_{NT}+\delta_k}{\delta_k}\bigg)^{d_{NT,0}} \sup_{f\in \mcH_k}2\exp\bigg(-\frac{x^2}{{4c_2A_{NT,0}^4}2^{2(k-1)}\|u_{NT}^*\|_2^2\|f-\tau_{k-1}(f)\|_2^2}\bigg)\nonumber\\
	&&+\sum_{k=1}^\infty \bigg(\frac{8C\gamma_{NT}+\delta_k}{\delta_k}\bigg)^{d_{NT,0}} \sup_{f\in \mcH_k}2\exp\bigg(-\frac{\sqrt{N/T}x}{16A_{NT,0}^22^{k-1}\|u_{NT}^*\|_2\|f-\tau_{k-1}(f)\|_2}\bigg)\nonumber\\
	&=&S_1+S_2+S_3.\label{eq:lemma:SNT:rate:2:large:T:eq2}
\end{eqnarray}

For fixed $x>0$, choose $K=K(NT)$ large enough such that $16C\gamma_{NT}A_{NT,0}^2\|u_{NT}^*\|_2(2/3)^K<x$, which can be done due to Condition \ref{Condition:C1}.\ref{C1:d} and Lemma \ref{lemma:expectation:sample:variance:large:T}. So by the fact that $3^{-k_0}\leq 2 \gamma_{NT}$, we have $S_1=0$.  Moreover, direct examination leads to
\begin{align*}
	S_2&\leq \sum_{k=1}^\infty \bigg(\frac{8C\gamma_N+3^{-k-k_0}}{3^{-k-k_0}}\bigg)^{d_{NT,0}}\sup_{f\in \mcH_k}2\exp\bigg(-\frac{x^2}{4c_2A_{NT,0}^42^{2(k-1)}\|u_{NT}^*\|_2^2\|f-\tau_{k-1}(f)\|_2^2}\bigg)\nonumber\\
	&\leq \sum_{k=1}^\infty \bigg(8\times 3^{k}+1\bigg)^{d_{NT,0}}2\exp\bigg(-\frac{x^2}{4c_2A_{NT,0}^42^{2(k-1)}\|u_{NT}^*\|_2^2\delta_k^2}\bigg)\nonumber\\
	&\leq 2\sum_{k=1}^\infty 3^{(k+2)d_{NT,0}}\exp\bigg(-\frac{x^23^{2k+2k_0}}{4c_2A_{NT,0}^42^{2(k-1)}\|u_{NT}^*\|_2^2}\bigg)\nonumber\\
	&\leq 2\sum_{k=1}^\infty 3^{(k+2)d_{NT,0}}\exp\bigg(-\frac{x^23^{2k_0}}{c_2A_{NT,0}^4\|u_{NT}^*\|_2^2}\bigg(\frac{9}{4}\bigg)^k\bigg)\nonumber\\
	&= 2\sum_{k=1}^\infty 3^{(k+2)d_{NT,0}}\exp\bigg(-\frac{x^23^{2k_0}}{c_2A_{NT,0}^4\|u_{NT}^*\|_2^2}\bigg(\frac{9}{4}\bigg)^k\bigg)\nonumber\\
	&\leq 2\sum_{k=1}^\infty 3^{(k+2)d_{NT,0}}\exp\bigg(-\frac{x^2}{4c_2C^2\gamma_{NT}^2A_{NT,0}^4\|u_{NT}^*\|_2^2}\bigg(\frac{9}{4}\bigg)^k\bigg)\nonumber\\
	&=2\sum_{k=1}^\infty \exp\bigg((k+2)d_{NT,0}\log 3-\frac{x^2}{4c_2C^2\gamma_{NT}^2A_{NT,0}^4\|u_{NT}^*\|_2^2}\bigg(\frac{9}{4}\bigg)^k\bigg).
\end{align*}
Let $N,T$ are large enough such that $[8c_2C^2\log 3](k+2)d_{NT,0}A_{NT,0}^4\gamma_{NT}^2\|u_{NT}^*\|_2^2<(9/4)^kx^2$ for all $k\geq 1$. This is possible, as $d_{NT,0}A_{NT,0}^4\gamma_{NT}^2=o(1)$, and $\|u_{NT}^*\|_2=O(1)$. So it follows from the inequality $e^{-x}\leq e^{-1}/x$ that
\begin{align*}
	S_2&\leq 2\sum_{k=1}^\infty \exp\bigg(-\frac{x^2}{8c_2C^2\gamma_{NT}^2A_{NT,0}^4\|u_{NT}^*\|_2^2}\bigg(\frac{9}{4}\bigg)^k\bigg)\nonumber\\
	&\leq 16e^{-1}x^{-2}c_2C^2\gamma_{NT}^2A_{NT,0}^4\|u_{NT}^*\|_2^2\sum_{k=1}^\infty(4/9)^k=O(\gamma_{NT}^2A_{NT,0}^4)=o(1).
\end{align*}
Similarly, we can show that if $d_{NT,0}A_{NT,0}^2\sqrt{T/N}\gamma_{NT}=o(1)$, then
\begin{align*}
	S_3=O\bigg(\frac{A_{NT,0}^2\gamma_{NT}\sqrt{T}}{\sqrt{N}}\bigg)=o(1).
\end{align*}
Since $x>0$ can be arbitrary, by the bounds of $S_1$, $S_2$, $S_3$, and (\ref{eq:lemma:SNT:rate:2:large:T:eq2}), we conclude that
\begin{align*}
\sup_{f\in \mcF_{NT}}\bigg|\frac{1}{NT}\bfu_{NT}^{*\top}M_H({\bff}-\bff_0)-\ev\bigg(\frac{1}{NT}\bfu_{NT}^{*\top}M_H({\bff}-\bff_0)\bigg)\bigg|=o(N^{-1/2}T^{-1/2}).
\end{align*}
Finally, by Theorems \ref{thm:rate:of:convergence:together} and \ref{thm:selection:consistency:together}, it follows that $\lim_{C\to \infty}\lim_{(N, T)\to \infty}\pr(\widehat{f}\in \mcF_{NT})=1$, which, together with above equation, complete the proof.
\end{proof}

	%$\sum_{j=1}^ph_j^{-2}=o(N)$, $\sum_{j=1}^ph_j^{-1}=o(T)$, $\sum_{j=1}^ph_j^{-1}=o(NT\lambda_{NT}^2)$, $\sum_{j=d+1}^ph_j^{2m_j}=o(\lambda_{NT}^2)$, $\sum_{j=d+1}^ph_j^{-3}\sum_{j=1}^ph_j^{-1}=o(NT)$, $\sum_{j=d+1}^ph_j^{-3}\sum_{j=d+1}^ph_j^{2m_j}=o(1)$, $\sum_{j=d+1}^ph_j^{-4}\sum_{j=1}^ph_j^{-1}=o(N^2)$, $\sum_{j=d+1}^ph_j^{-4}\sum_{j=d+1}^ph_j^{2m_j}T=o(N)$, $N\sum_{j=d+1}^ph_j^{2m_j}\sum_{j=d+1}^ph_j^{2}=o(1)$.

\begin{lemma}\label{thm:asymptotic:normal:multiple:large:T}
Suppose Assumptions \ref{Assumption:A2}, \ref{Assumption:common}, and \ref{Assumption:A4} hold. If  $d_{NT}A_{NT}^2=o(N)$, $A_{NT}^2=o(T)$, $\gamma_{NT}=o(\lambda_{NT})$, $d_{NT,0}A_{NT,0}^4\gamma_{NT}^2=o(1)$, $d_{NT,0}^2A_{NT,0}^4\gamma_{NT}^2T=o(N)$, $NT\rho_{NT}^2=o(1)$, then
\begin{align*}
	\begin{pmatrix}
	\sqrt{NT}(\widehat{\beta}_1-\beta_{1,0})\\
	\vdots\\
	\sqrt{NT}(\widehat{\beta}_d-\beta_{d,0})\\
	\sqrt{NTh_j}(\widehat{f}_{d+1}(z_{d+1,0})-f_{d+1,0}(z_{d+1,0}))\\
	\vdots\\
	\sqrt{NTh_j}(\widehat{f}_{p}(z_{p,0})-f_{p,0}(z_{p,0}))\\
	\end{pmatrix}\cid \textrm{N}(0, \Sigma).
\end{align*}
\end{lemma}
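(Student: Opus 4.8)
The plan is to reproduce the proof of Lemma~\ref{thm:asymptotic:normal:multiple} line for line, substituting each ingredient that was specific to fixed $T$ by its diverging-$T$ counterpart (the mixing variance bound of Lemma~\ref{lemma:expectation:sample:variance:mixing}, case~(b) of Lemmas~\ref{lemma:verify:condition:c1:nonparametric} and \ref{lemma:verify:condition:c1:parametric}, and Lemma~\ref{lemma:expectation:sample:variance:large:T}). First I would check that the stated hypotheses imply those of case~(b) of Lemmas~\ref{lemma:verify:condition:c1:nonparametric} and \ref{lemma:verify:condition:c1:parametric}: the only condition appearing there but not listed verbatim is $d_{NT}A_{NT}^4=o(NT)$, which follows by multiplying $d_{NT}A_{NT}^2=o(N)$ with $A_{NT}^2=o(T)$. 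Those two lemmas then deliver the asymptotically linear expansions $\sqrt{NT}(\widehat\beta_j-\beta_{j,0})=\sqrt{NT}\,\Delta_{NT}(f_0)[v_{NT,j}^*]+o_P(1)$ for $j=1,\ldots,d$ and $\sqrt{NT}\big(\widehat f_j(z_{j,0})-f_{j,0}(z_{j,0})\big)=\sqrt{NT}\,\Delta_{NT}(f_0)[v_{NT,j}^*]+o_P(\|v_{NT,j}^*\|_{\textrm{sd}})$ for $j=d+1,\ldots,p$, with $v_{NT,j}^*$ as in \eqref{eq:expression:vN:star:parametric} and \eqref{eq:expression:vN:star:nonparametric}, together with the size estimates $\|v_{NT,j}^*\|^2\asymp\|v_{NT,j}^*\|_{\textrm{sd}}^2\asymp 1$ for $j\le d$ and $\asymp h_j^{-1}$ for $j>d$.

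By the Cram\'er--Wold device it is enough to show that, for every $(\eta_1,\ldots,\eta_p)$, the scalar $\sum_{j=1}^d\eta_j\sqrt{NT}(\widehat\beta_j-\beta_{j,0})+\sum_{j=d+1}^p\eta_j\sqrt{NTh_j}\big(\widehat f_j(z_{j,0})-f_{j,0}(z_{j,0})\big)$ converges in law to $\textrm{N}(0,(\eta_1,\ldots,\eta_p)\Sigma(\eta_1,\ldots,\eta_p)^\top)$. Writing $u_{NT,j}^*=v_{NT,j}^*/\|v_{NT,j}^*\|_{\textrm{sd}}$ and $u_{NT}^*=\sum_{j=1}^p\eta_j u_{NT,j}^*\in\Theta_{NT}^0$, the expansions above together with Slutsky's theorem and the convergences of Assumption~\ref{Assumption:A5} (in force here, just as in Theorem~\ref{thm:asymptotic:normal:multiple:together}) --- $\|v_{NT,j}^*\|_{\textrm{sd}}\to\sigma_j$ for $j\le d$, $h_j\|v_{NT,j}^*\|_{\textrm{sd}}^2\to\sigma_j^2$ for $j>d$, and $\langle v_{NT,j}^*,v_{NT,k}^*\rangle_{\textrm{sd}}/(\|v_{NT,j}^*\|_{\textrm{sd}}\|v_{NT,k}^*\|_{\textrm{sd}})\to r_{j,k}$ --- reduce this, exactly as in the proof of Lemma~\ref{thm:asymptotic:normal:multiple}, to establishing the single-functional central limit theorem $-\sqrt{NT}\,\Delta_{NT}(f_0)[u_{NT}^*]/\|u_{NT}^*\|_{\textrm{sd}}\cid\textrm{N}(0,1)$, where $\|u_{NT}^*\|=O(1)$ and $\|u_{NT}^*\|_{\textrm{sd}}^2\to(\eta_1,\ldots,\eta_p)R(\eta_1,\ldots,\eta_p)^\top>0$ with $\Sigma=DRD$, $D=\textrm{diag}(\sigma_1,\ldots,\sigma_p)$.

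For the CLT I would set $v_{it}=u_{NT}^*(\bfX_{it})-\tfrac1T\sum_{s=1}^Tu_{NT}^*(\bfX_{is})$ and $w_i=\tfrac1T\sum_{t=1}^Tv_{it}\epsilon_{it}$, so that $\sqrt{NT}\,\Delta_{NT}(f_0)[u_{NT}^*]=-\sqrt{T/N}\sum_{i=1}^Nw_i$. The $w_i$ are independent across $i$ by Assumption~\ref{Assumption:A4}.\ref{A4:b}, have mean zero, and satisfy $\textrm{Var}\big(\sqrt{T/N}\sum_iw_i\big)=\|u_{NT}^*\|_{\textrm{sd}}^2\in[a_2^{-1}\|u_{NT}^*\|^2,a_2\|u_{NT}^*\|^2]$ by Assumption~\ref{Assumption:common}.\ref{Ac:a2}, hence bounded away from $0$ and $\infty$. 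The Lyapunov condition then amounts to $\sum_{i=1}^N\ev(w_i^4)=o(N^2/T^2)$. Conditioning on $\mathbb{Z}$, using Assumption~\ref{Assumption:A4}.\ref{A4:a} and the geometric $\alpha$-mixing of $\{(\bfX_{it},\epsilon_{it})\}_t$ from Assumption~\ref{Assumption:A4}.\ref{A4:c}, I would prove the conditional fourth-moment bound $\ev(w_i^4\mid\mathbb{Z})\lesssim T^{-2}\,\|u_{NT}^*\|_\infty^2\,\zeta_i(u_{NT}^*,u_{NT}^*)$ --- the fourth-moment analogue of the variance bound in Lemma~\ref{lemma:expectation:sample:variance:mixing}, obtained from a Rosenthal-type moment inequality for weighted sums of a mixing sequence. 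Taking expectations, using the sup-norm bound of Lemma~\ref{lemma:l2:norm:sup:norm} to get $\|u_{NT}^*\|_\infty^2\le A_{NT,0}^2\|u_{NT}^*\|_2^2=O(A_{NT,0}^2)$ and $\sum_i\ev(\zeta_i(u_{NT}^*,u_{NT}^*))=N\|u_{NT}^*\|^2=O(N)$, yields $\sum_i\ev(w_i^4)=O(NA_{NT,0}^2/T^2)=o(N^2/T^2)$ because $A_{NT,0}^2=o(N)$ (a consequence of $d_{NT}A_{NT}^2=o(N)$). Lyapunov's CLT then gives $-\sqrt{NT}\,\Delta_{NT}(f_0)[u_{NT}^*]/\|u_{NT}^*\|_{\textrm{sd}}\cid\textrm{N}(0,1)$, and feeding this back through the Cram\'er--Wold reduction completes the proof.

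The one genuinely new step relative to the fixed-$T$ argument is this sharp bound $\ev(w_i^4)=O(A_{NT,0}^2/T^2)$: the crude estimate $\ev(w_i^4)\le\ev\big(\big[\tfrac1T\sum_tv_{it}^2\big]^2\big[\tfrac1T\sum_t\epsilon_{it}^2\big]^2\big)$ used for fixed $T$ is only of order $A_{NT,0}^2$ and misses the $T^{-2}$ gain that the Lyapunov ratio requires once $\textrm{Var}(\sum_iw_i)$ is of order $N/T$ rather than $N$, so one must exploit the within-individual weak dependence, in the spirit of how Lemma~\ref{lemma:expectation:sample:variance:mixing} upgrades the second moment of $\zeta_i$. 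The remainder --- the expansions, the passage to a single linear functional, the evaluation of $\|u_{NT}^*\|_{\textrm{sd}}^2$, and the algebra producing the covariance $\Sigma$ --- transcribes the proof of Lemma~\ref{thm:asymptotic:normal:multiple} essentially unchanged.
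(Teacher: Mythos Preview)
Your overall architecture matches the paper's proof exactly: reduce via Cram\'er--Wold to a single linear functional, invoke the diverging-$T$ expansions from Lemmas~\ref{lemma:verify:condition:c1:nonparametric} and~\ref{lemma:verify:condition:c1:parametric}, compute $\|u_{NT}^*\|_{\textrm{sd}}^2$ under Assumption~\ref{Assumption:A5}, and finish with a Lyapunov CLT applied to the row sums $w_i$. Your observation that $d_{NT}A_{NT}^4=o(NT)$ follows from the listed hypotheses is correct and is indeed needed.

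The gap is in the Lyapunov step. You propose to condition on $\mathbb{Z}$ and apply a Rosenthal-type inequality, using the geometric $\alpha$-mixing of $\{(\bfX_{it},\epsilon_{it})\}_t$ from Assumption~\ref{Assumption:A4}.\ref{A4:c}, to obtain $\ev(w_i^4\mid\mathbb{Z})\lesssim T^{-2}\|u_{NT}^*\|_\infty^2\,\zeta_i(u_{NT}^*,u_{NT}^*)$. But Assumption~\ref{Assumption:A4}.\ref{A4:c} gives mixing of the \emph{joint} sequence, not of $\{\epsilon_{it}\}_t$ conditional on $\mathbb{Z}$; conditioning can in general destroy weak dependence. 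Nor do Assumptions~\ref{Assumption:common}.\ref{Ac:a2} and~\ref{Assumption:A4}.\ref{A4:a} control the conditional fourth-order products $\ev(\epsilon_{it_1}\epsilon_{it_2}\epsilon_{it_3}\epsilon_{it_4}\mid\mathbb{Z})$ that your bound would need. So the conditional moment inequality is not available under the stated hypotheses, and the bound you write down is unjustified. Moreover, working unconditionally does not immediately rescue a fourth-moment approach either: because $v_{it}$ depends on \emph{all} of $\bfX_{i1},\ldots,\bfX_{iT}$ through $\bar u_i$, the summands $v_{it}\epsilon_{it}$ are not functions of $(\bfX_{it},\epsilon_{it})$ alone, and after decomposing you would still need $(4+\delta)$-moments on $\epsilon_{it}$ for a standard mixing fourth-moment inequality --- more than Assumption~\ref{Assumption:A4}.\ref{A4:a} provides.

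The paper sidesteps both issues by (i) working unconditionally, (ii) splitting $\sum_t v_{it}\epsilon_{it}=\sum_t u(\bfX_{it})\epsilon_{it}-\bar u_i\sum_t\epsilon_{it}$ so that each piece is a partial sum of a function of the geometrically $\alpha$-mixing sequence $\{(\bfX_{it},\epsilon_{it})\}_t$ (or its marginal $\{\epsilon_{it}\}_t$), and (iii) using the \emph{third}-moment Lyapunov criterion so that Yokoyama's moment inequality (\cite{y78}, Theorem~3) applies with exactly the fourth moments granted by Assumption~\ref{Assumption:A4}.\ref{A4:a}. This gives $\sum_i\ev(|w_i|^3)=O\big(NT^{3/2}A_{NT,0}^{3}\|u\|^3\big)$, and the rest of the argument proceeds as you outline.
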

\begin{proof}[Proof of Lemma \ref{thm:asymptotic:normal:multiple:large:T}]
For $u\in \Theta_{NT}^0$ with $\|u\|=O(1)$, and $\lim_{N \to \infty}\|u\|_{\textrm{sd}}^2>0$, we define
\begin{align*}
	w_i=\sum_{t=1}^Tv_{it}\epsilon_{it},\;\; \textrm{ with }\;\; v_{it}=u(\bfX_{it})-\frac{1}{T}\sum_{s=1}^Tu(\bfX_{is}).
\end{align*}

By simple inequality $(a+b)^3\leq 4|a|^3+4|b|^3$, we have
\begin{align}
	\ev(|w_i|^3)=\ev(|\sum_{t=1}^Tv_{it}\epsilon_{it}|^3)&\leq 6\ev\bigg(\bigg|\sum_{t=1}^Tu(\bfX_{it})\epsilon_{it}\bigg|^3\bigg)+6\ev\bigg(\bigg|\frac{1}{T}\sum_{s=1}^Tu(\bfX_{is})\bigg|^3|\sum_{t=1}^T\epsilon_{it}|^3\bigg)\nonumber\\
	&=6S_1+6S_2. \label{eq:thm:asymptotic:normal:multiple:large:T:eq1}
\end{align}
Since Lemmas \ref{lemma:l2:norm:sup:norm}, \ref{lemma:expectation:sample:variance:large:T}, and Assumption \ref{Assumption:A4}.\ref{A4:a} together lead to following inequality:
\begin{align*}
	\ev(|u(\bfX_{it})\epsilon_{it}|^4)=\ev\bigg(|u(\bfX_{it})|^4\ev(\epsilon_{it}^4|\mathbb{Z})\bigg)\leq a_9\|u\|_\infty^2 \ev(|u(\bfX_{it})|^2)&\leq a_3a_9A_{NT,0}^2\|u\|_2^4 \nonumber\\
	&\leq 4a_3^3a_9A_{NT,0}^2\|u\|^4,
\end{align*}
we can apply  \cite{y78}[Theorem 3] using Assumptions \ref{Assumption:A4}.\ref{A4:a} and \ref{Assumption:A4}.\ref{A4:c} to obtain that $S_1\leq CT^{3/2}A_{NT,0}^{3/2}\|u\|^3$, for some $C>0$ which if free of $N, T$. By similar technique, we can show that $\ev(|\sum_{t=1}^T\epsilon_{it}|^3)\leq \widetilde{C}T^{3/2}$, which, by Lemmas \ref{lemma:l2:norm:sup:norm} and \ref{lemma:expectation:sample:variance:large:T}, further implies $S_2 \leq 8\widetilde{C}a_3^3A_{NT,0}^3T^{3/2}\|u\|^3$, where $\widetilde{C}>0$ is a constant free of $N, T$. Using the bounds of $S_1$ and $S_2$ in (\ref{eq:thm:asymptotic:normal:multiple:large:T:eq1}), we conclude that
\begin{align}
	\sum_{i=1}^N\ev(|w_i|^3)=O(NT^{3/2}A_{NT,0}^3\|u\|^3).\label{eq:thm:asymptotic:normal:multiple:large:T:eq2}
\end{align}
By independence in Assumption \ref{Assumption:A4}.\ref{A4:b} and Assumption \ref{Assumption:common}.\ref{Ac:a2}, it follows that
\begin{align}
	\textrm{Var}(\sum_{i=1}^Nw_i)=\ev\bigg(\bfu^\top M_H \bfepsilon \bfepsilon^\top M_H \bfu\bigg)=\ev\bigg(\bfu^\top M_H \ev(\bfepsilon \bfepsilon^\top|\mathbb{Z}) M_H \bfu\bigg)&\geq a_2^{-1}\ev\bigg(\bfu^\top  M_H \bfu\bigg)\nonumber\\
	&= a_2^{-1}NT\|u\|^2.\label{eq:thm:asymptotic:normal:multiple:large:T:eq3}
\end{align}
Combining (\ref{eq:thm:asymptotic:normal:multiple:large:T:eq2}) and (\ref{eq:thm:asymptotic:normal:multiple:large:T:eq3}), we can verify the following condition for Lyapunov C.L.T is satisfied: $\sum_{i=1}^N\ev(|w_i|^3)/(\textrm{Var}(\sum_{i=1}^Nw_i))^{3/2}=O(A_{NT,0}^3N^{-1/2})=o(1).$ Therefore, the following Lyapunov C.L.T holds:
\begin{align}
	-\frac{\sqrt{NT}\Delta_{NT}(f_0)[u]}{\|u\|_{\textrm{sd}}}=\frac{\sum_{i=1}^N w_i}{\sqrt{\textrm{Var}(\sum_{i=1}^Nw_i)}}\cid \textrm{N}(0, 1).\nonumber%\label{eq:thm:asymptotic:normal:multiple:large:T:eq4}
\end{align}
The rest proof is the same as Lemma \ref{thm:asymptotic:normal:multiple}.
\end{proof}

\begin{proof}[Proof of Theorems \ref{thm:asymptotic:normal:multiple:together} and \ref{thm:asymptotic:normal:marginal:together}]
It follows from Lemmas \ref{thm:asymptotic:normal:multiple} and \ref{thm:asymptotic:normal:multiple:large:T}.
\end{proof}
\end{document}